\newcommand{\wt}{\widetilde}
\newcommand{\wh}{\widehat}
\newcommand{\beqa}{\begin{eqnarray}}
\newcommand{\eeqa}{\end{eqnarray}}
\newcommand{\e}{\varepsilon}
\newcommand{\eps}{\varepsilon}
\newcommand{\pt}{\partial}
\newcommand{\rd}{{\rm d}}
\newcommand{\bR}{{\mathbb R}}
\newcommand{\bC}{{\mathbb C}}
\newcommand{\non}{\nonumber}
\newcommand{\wH}{{K}}
\newcommand{\tr}{\mbox{Tr\,}}
\newcommand{\ba}{{\bf{a}}}
\newcommand{\bx}{{\bf{x}}}
\newcommand{\bT}{{\T}}
\newcommand{\mg}{{m_N}}
\newcommand{\al}{\alpha}
\newcommand{\be}{\begin{equation}}
\newcommand{\ee}{\end{equation}}
\newcommand{\ga}{{\gamma}}
\newcommand{\la}{\lambda}
\newcommand{\om}{{\omega}}
\newcommand{\cL}{{\mathscr L}}
\newcommand{\cX}{{\mathcal X}}
\newcommand{\cY}{{\mathcal Y}}
\newcommand{\cN}{{\mathcal N}}
\newcommand{\cH}{{\mathcal H}}
\newcommand{\ov}{\overline}
\newcommand{\re}{{\mathfrak{Re} \, }}
\newcommand{\im}{{\mathfrak{Im} \, }}
\newcommand{\E}{{\mathbb E }}
\newcommand{\R}{{\mathbb R }}
\newcommand{\N}{{\mathbb N}}
\newcommand{\Ci}{{ C_{inf}}}
\newcommand{\Cs}{{ C_{sup}}}
\renewcommand{\P}{{\mathbb P}}
\newcommand{\C}{{\mathbb C}}
\renewcommand{\S}{\mathbb S}
\newcommand{\T}{\mathbb T}
\newcommand{\U}{\mathbb U}
\newcommand{\bS}{\bf  S}
\newtheorem{theorem}{Theorem}
\newtheorem{lemma}[theorem]{Lemma}
\newtheorem{definition}{Definition}
\newcommand{\qed}{\hfill\fbox{}\par\vspace{0.3mm}}
\newenvironment{proof}{{\bf Proof.}} {\hfill\qed}
\numberwithin{equation}{section}
\numberwithin{theorem}{section}
\numberwithin{definition}{section}
\numberwithin{remark}{section}
\title{Universality for generalized Wigner matrices \\ with Bernoulli 
distribution}
\author{
L\'aszl\'o Erd\H os${}^1$\thanks{Partially supported
by SFB-TR 12 Grant of the German Research Council}, 
Horng-Tzer Yau${}^2$\thanks{Partially supported
by NSF grants DMS-0757425, 0804279}  \; and Jun Yin${}^2$ \\ \\
Institute of Mathematics, University of Munich, \\
Theresienstr. 39, D-80333 Munich, Germany \\ lerdos@math.lmu.de ${}^1$ \\ \\
Department of Mathematics, Harvard University\\
Cambridge MA 02138, USA \\  htyau@math.harvard.edu,  jyin@math.harvard.edu ${}^2$ \\ \\
\\}
\begin{document}

\date{Aug 11, 2010}

\maketitle

\begin{abstract}

The universality for  the eigenvalue spacing 
statistics of  generalized  Wigner matrices  was established in our previous work \cite{EYY} 
under  certain  conditions on the 
probability distributions of the matrix elements. 
A major class of probability measures  excluded in \cite{EYY}
 are the Bernoulli  measures.  
In this paper, we extend the universality result of  \cite{EYY} 
 to include the Bernoulli  measures  so that the only 
restrictions on the probability distributions of the matrix elements 
are  the subexponential decay and the normalization condition
that the variances in each row sum up to one.
 The new ingredient is a strong 
local semicircle law which improves the error estimate on  the
 Stieltjes  transform of the empirical 
measure of the eigenvalues from the  order
  $(N \eta)^{-1/2}$ to  $(N \eta)^{-1}$. 
Here $\eta$ is the imaginary part  of the spectral parameter
in the definition of the Stieltjes
 transform and $N$ is the size of the matrix.

\end{abstract}

{\bf AMS Subject Classification:} 15A52, 82B44

\medskip


\medskip

{\it Keywords:}  Random band  matrix, Local semicircle law,
sine kernel.

\medskip


\setcounter{tocdepth}{2}

\newpage
\section{Introduction}

The universality of  local eigenvalue statistics  in the bulk  of the spectrum
of random matrices has been 
traditionally considered  only for invariant ensembles 
\cite{BI, DKMVZ1, DKMVZ2, PS}. For non-invariant ensembles, a new approach to prove 
the bulk universality 
was developed in  \cite{EPRSY, ESY4, ESYY, EYY}. It consists of  the following three  steps:

\begin{enumerate}  
\item Local semicircle law.

\item  Universality for Gaussian divisible ensembles.

\item  Approximation by Gaussian divisible ensembles. 
\end{enumerate}

\noindent 
In Step 2,  the universality of the 
local eigenvalue statistics for a large class of matrices, i.e., Gaussian
 divisible matrices,  was established.  Thus in order to prove the universality
 of a given ensemble,  it remains to 
approximate the matrix elements in  this  ensemble  by  Gaussian divisible distribution in such 
a way  that the local eigenvalue statistics are unchanged. This approximation is 
intrinsically a density theorem and it
can be achieved by perturbative expansions in several different ways.
 In  the most recent approach \cite{ESYY,EYY}, the universality 
for Gaussian divisible ensembles was proved via the Dyson Brownian motion and
the stability of eigenvalues in Step 3 was provided by the Green function comparison theorem. 
In  Step 2  a technical tool, the logarithmic Sobolev inequality  (LSI), was needed 
to estimate the fluctuations of  eigenvalue distribution.  This restriction 
could not be completely removed in Step 3 and thus the Bernoulli measures were 
excluded in \cite{EYY}. 
 In this paper, we will improve the local semicircle law so that the LSI
is no longer needed.  This will enable us to prove the universality for
 generalized Wigner matrices 
with Bernoulli distributions. 
As a byproduct of the new stronger form of local semicircle law, we
 also obtain much stronger estimates 
on the  eigenvalue density and on the matrix elements of
the resolvent.

Recall 
the Stieltjes  
transform of the empirical 
measure of the eigenvalues $\{\lambda_j\}_{j=1}^N$ is defined by 
$$ 
   m_N(z) = \frac{1}{N}\sum_{j=1}^N\frac{1}{\lambda_j - z}.
$$
We have proved in \cite{EYY} that the difference between 
 $ m_N(z)$ and $ m_{sc}(z)$,  the  Stieltjes  
transform of the semicircle law \eqref{temp2.8}, is  bounded  by
 $(N \eta)^{-1/2}$ where $\eta = \im z$. The main result of this paper 
states that the error can be improved to  $(N \eta)^{-1}$. 
 The improvement of  a factor $ (N \eta)^{-1/2}$
resembles the
usual $N^{-1/2}$ factor in the central limit theorem and it results from 
a new estimate  on the correlations of error terms. 
This estimate also implies that  the error between 
the  normalized empirical counting function  of the eigenvalues and the 
one given by the semicircle law is less than $N^{-1+ \e}$ in the bulk of the spectrum 
for any $\e> 0$.  This new input is sufficiently strong to replace the usage of 
the (LSI) in \cite{EYY}, see 
the discussion after Theorem \ref{mainsk} for more details.

Notice that this improvement of a factor $ (N \eta)^{-1/2}$ and the removal of the LSI need
 a  substantial amount of work.   
Our motivations to take on this endeavor are for the following two reasons: (1) The 
 distributions of the Bernoulli random matrices are 
very singular while  the Gaussian measures in GOE are very smooth.
 It is not a priori  clear that the universality holds 
for such singular distributions. (2) The adjacency
 matrices for random graphs are natural  examples of 
 symmetric random matrices. 
The matrix elements of these matrices take the values $0$ or $1$ and thus they form  Bernoulli 
random matrices.  Our  current results
 do not cover this case since we require the mean zero condition, but 
they represent  the first step toward the universality of  the  adjacency matrices of random graphs.

\section{Main results}

We now state the main results of this paper. Since all our results hold for both hermitian
 and symmetric ensembles, 
we will state the results for the hermitian case only. The modifications to the 
symmetric case are straightforward and they will be omitted. 
Let $H=(h_{ij})_{i,j=1}^N$  be an $N\times N$  hermitian matrix where the
 matrix elements $h_{ij}=\ov{h}_{ji}$, $ i \le j$, are independent 
random variables given by a probability measure $\nu_{ij}$ 
with mean zero and variance $\sigma_{ij}^2$. 
The variance of $h_{ij}$ for $i>j$ is $\sigma_{ij}^2 =\E\,  |h_{ij}|^2 = \sigma_{ji}^2$.
For simplicity of the presentation, 
we assume that for any fixed $1\leq i<j\leq N$, ${\rm Re}\,h_{ij}$ 
and ${\rm Im}\,h_{ij}$ are i.i.d.
with distribution $\om_{ij}$,  i.e., $\nu_{ij} = \om_{ij}\otimes\om_{ij}$ 
in the sense that $\nu_{ij}(\rd h) = \om_{ij}(\rd {\rm Re}\,h)
\om_{ij}(\rd {\rm Im}\,h)$, but this assumption is
not essential for the result.
The distribution $\nu_{ij}$ and its variance $\sigma_{ij}^2$ may depend on $N$,
 but we omit this fact in the notation.  We assume that for any $j$ fixed
\be
   \sum_{i} \sigma^2_{ij} = 1 \, .
\label{sum}
\ee
Matrices with independent, zero mean  entries and with the normalization  condition
\eqref{sum} will be called {\it universal  Wigner matrices.}
The basic parameter of such matrices is the quantity
\be\label{defM}
M:= \frac{1}{\max_{ij} \sigma_{ij}^2}.
\ee

Define $\Ci$ and $\Cs$ by
\be\label{defCiCs}
      \Ci:= \inf_{N, i,j}\{N\sigma^2_{ij}\}\leq \sup_{N, i,j}\{N\sigma^2_{ij}\}=:\Cs.
\ee
Note that $\Ci=\Cs(=1)$ corresponds to the standard Wigner matrices and
the conditions $0< \Ci \le \Cs <\infty$ define more general Wigner matrices
with comparable variances.

We will also consider an even more general case when $\sigma_{ij}$
for different $(i,j)$ indices are not comparable. 
A special case is the {\it  band matrix}, where  $\sigma_{ij}=0$ for $|i-j|>W$
with some parameter $W$.

Denote by $\Sigma:=\{ \sigma^2_{ij}\}_{i,j=1}^N$ the matrix of variances 
which is symmetric,  doubly stochastic by \eqref{sum}, and in particular  satisfies
$-1\leq \Sigma \leq 1$. 
Let the spectrum of $\Sigma$ be supported in 
\be\label{de-de+}
\mbox{Spec}(\Sigma)\subset [-1+\delta_-, 1-\delta_+]\cup\{1\}
\ee  
with some nonnegative constants $\delta_\pm$. 
We will always have the following spectral assumption
\be\label{speccond}
\mbox{\it  1 is a simple eigenvalue of $\Sigma$ and
$\delta_-$ is a positive constant,  independent of $N$.}
\ee

The local semicircle law will be proven under this general condition, but
the precision of the estimate near the spectral edge will also depend on $\delta_+$
in an explicit way. For the orientation of the reader, we 
mention two special cases that provided the main motivation
for our work.

\bigskip

One  important class of universal  Wigner matrices  is  the  {\it generalized  Wigner ensemble}  
which is defined by the extra condition that
\be\label{VV}
	0<\Ci\leq \Cs<\infty,
\ee
It is easy to check that
 \eqref{de-de+} holds with
\be\label{de-de+2}
	\delta_\pm \ge C_{inf}.
\ee 
Another example is the {\it band matrix ensemble} whose variances are given by 
\be\label{BM}
   \sigma^2_{ij} = W^{-1} f\Big(\frac{ [i-j]_N}{W}\Big),
\ee
where $W\ge 1$, $f:\bR\to \bR_+$ is a nonnegative symmetric function with 
$\int f =1$, $f \in  L^\infty (\bR)$,
and we defined $[i-j]_N\in \{1, 2,\ldots N\}$ by the property
that  $[i-j]_N\equiv i-j \; \mbox{mod}\; N$.
The bandwidth $M$ defined in \eqref{defM} satisfies $M\le  W/\|f\|_\infty$.
In Appendix A of \cite{EYY},  we have proved that \eqref{speccond} is
 satisfied for the choice of \eqref{BM}
if $W$ is large enough.

\bigskip

Define  the Stieltjes transform of the empirical 
eigenvalue distribution of  $H $ by 
$$
 m(z)=\mg (z): = \frac{1}{N} \tr\, \frac{1}{H-z}\,,\,\,\, z=E+i\eta.
$$
Define $m_{sc} (z)$ as the unique solution of
$$
m_{sc} (z)  + \frac{1}{z+m_{sc} (z)} = 0, 
$$
with positive imaginary part for all $z$ with $\text{Im } z > 0$, i.e.,
\be\label{temp2.8}
m_{sc}(z)=\frac{-z+\sqrt{z^2-4}}{2}.
\ee
Here the square root function is chosen with a branch cut in the segment
$[-2,2]$ so that asymptotically $\sqrt{z^2-4}\sim z$ at infinity.
This guarantees that the imaginary part of $m_{sc}$ is non
 negative for $\text{Im } z > 0$ and it is the 
Wigner semicircle distribution
\be
   \varrho_{sc}(E) : = \lim_{\eta\to 0+0}\frac{1}{\pi}\im \, m_{sc}(E+i\eta)
 = \frac{1}{2\pi}
  \sqrt{ (4-E^2)_+}.
\label{def:sc}
\ee
 The Wigner semicircle law \cite{W} states that  $m_N(z) \to m_{sc} (z) $
for any fixed $z$, i.e.,
 provided that $\eta$ is independent of $N$. 
 We have proved \cite{EYY} a  local version of this result for universal Wigner
matrices and the main result can be stated as the following probability estimate: 
$$
\P\left(|\mg(z)-m_{sc}(z)|\geq (\log N)^{C_2}
 \frac{1}{ \sqrt {M\eta}\,\kappa}\right)\leq CN^{-c(\log \log  N)}
$$
with some constant $C_2$.
The accuracy of this estimate can be improved from 
$ (M\eta)^{-1/2}\,\kappa^{-1}$ to   $(M\eta)^{-1}\,\kappa^{-1} $, 
which is the content of the next theorem. It summarizes
the results of Theorems~\ref{thm:detailed} and   \ref{prop:mmsc}.
Prior  to our result in \cite{EYY}, a central limit theorem  for
the semicircle law on macroscopic scale for band matrices was established
by Guionnet \cite{gui} and Anderson and Zeitouni \cite{AZ}; a
 semicircle law  for Gaussian band matrices  was proved 
by Disertori, Pinson and Spencer \cite{DPS}.
 For a review on band matrices, see  the recent  article \cite{Spe} by Spencer.

\begin{theorem}[Local semicircle law] \label{lsc}{}
Let $H$ be a hermitian  $N\times N$ random matrix
with $\E\, h_{ij}=0$, $1\leq i,j\leq N$,  and assume that the variances $\sigma_{ij}^2$ 
satisfy \eqref{sum} and \eqref{speccond}. 
 Suppose that the distributions of the matrix elements have a uniformly 
  subexponential decay
in the sense that  there exist constants $\al$, $\beta>0$, independent 
of $N$, such that for any $x> 0$  we have 
\be\label{subexp}
\P(|h_{ij}|\geq x^\al |\sigma_{ij}|)\leq \beta e^{- x}.
\ee
We consider universal Wigner matrices and its  special class, the generalized Wigner matrices 
in parallel. The parameter $A$ will distinguish between the two cases;
we set $A=2$ for universal Wigner matrices, and $A=1$ for generalized Wigner
matrices, where the results will be stronger.

Define the following domain in $\C$ 
\be\label{fakerelkaeta}
    D : = \Big\{ z=E+i\eta\in \C\; : \;  |E|\le 5, \;  0 < \eta\le 10, \; \; 
  \sqrt{M\eta}\ge (\log N)^{C_1} (\kappa +\eta)^{\frac{1}{4}-A}
\Big\}
\ee
where  $\kappa : = \big| \, |E|-2 \big|$.
Then there exist constants $C_1$, $C_2$, $C$ and $c>0$, 
depending only on $\al$, $\beta$ and $\delta_-$ in \eqref{speccond}, such that 
for any $\e>0$ and $K>0$  the Stieltjes transform of the empirical 
eigenvalue distribution of  $H $  satisfies 
\be\label{mainlsresultfake}
\P\left(\bigcup_{z\in D} \Big\{ |\mg(z)-m_{sc}(z)|\geq
 \frac{N^\e}{{M\eta}\,(\kappa+\eta)^A}\Big\}\right)\leq \frac{C(\e, K)}{N^K}
\ee
for sufficiently large $N$.
Furthermore,  the diagonal matrix elements of
the Green function $G_{ii}(z) = (H-z)^{-1}(i,i)$ satisfy that 
\be\label{Gii}
\P\left(\bigcup_{z\in D}\Big\{ \max_i | G_{ii}(z)-m_{sc}(z)|\geq 
 \frac{ (\log N)^{C_2}} {\sqrt{M\eta}} \, 
 (\kappa+\eta)^{\frac{1}{4}-\frac{A}{2}}\Big\}\right)\leq CN^{-c(\log \log  N)}
\ee
and for the off-diagonal elements we have
\be\label{Gij}
\P\left( \bigcup_{z\in D}\Big\{ \max_{i\ne j} | G_{ij}(z)|\geq
 \frac{ (\log N)^{C_2}} {\sqrt{M\eta}} \,  (\kappa+\eta)^{\frac{1}{4}} 
\Big\} \right)\leq CN^{-c(\log \log  N)}
\ee
for any sufficiently large $N$.
\end{theorem}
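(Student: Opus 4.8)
The plan is to run the standard self-consistent equation analysis for the resolvent $G(z)=(H-z)^{-1}$, $\mg(z)=\frac1N\tr G(z)$, and to inject a new fluctuation-averaging estimate that produces the improved $(M\eta)^{-1}$ accuracy in \eqref{mainlsresultfake}. Fix $z=E+i\eta\in D$. By the Schur complement formula, for each $i$,
$$
 G_{ii}=\frac{1}{h_{ii}-z-\sum_{k,l}^{(i)}h_{ik}G^{(i)}_{kl}h_{li}}\,,
$$
where the superscript $(i)$ signifies that the indices $k=i,\ l=i$ are excluded and $G^{(i)}=(H^{(i)}-z)^{-1}$ is the resolvent of the minor $H^{(i)}$ obtained from $H$ by deleting row and column $i$. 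Introduce the fluctuation of this quadratic form about its partial expectation,
$$
 Z_i:=\sum_{k,l}^{(i)}h_{ik}G^{(i)}_{kl}h_{li}-\sum_k\sigma_{ik}^2 G^{(i)}_{kk}\,.
$$
Using the subexponential decay \eqref{subexp}, the uniform bound $\sigma_{ik}^2\le M^{-1}$, and the Ward identity $\sum_l|G^{(i)}_{kl}|^2=\eta^{-1}\im G^{(i)}_{kk}$, a large deviation estimate for quadratic forms gives
$$
 |Z_i|\;\lesssim\;(\log N)^{C}\,(M\eta)^{-1/2}\,\bke{\max_l\im G^{(i)}_{ll}}^{1/2}
$$
off an event of probability at most $N^{-c\log\log N}$. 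Combining this with the minor-removal identities relating $\sum_k\sigma_{ik}^2 G^{(i)}_{kk}$ to $\sum_k\sigma_{ik}^2 G_{kk}$, and with the fact that the doubly stochastic matrix $\Sigma$, being spectrally gapped by \eqref{speccond} (and two-sidedly gapped by \eqref{de-de+2} in the sharper case $A=1$), forces $\sum_k\sigma_{ik}^2 G_{kk}$ to be close to $m=\mg(z)$, one reaches the perturbed self-consistent relation $G_{ii}^{-1}=-z-m+\Upsilon_i$ with an explicitly controlled error $\Upsilon_i$ whose dominant random part is $Z_i$.

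Next comes the stability analysis. Subtracting the exact identity $m_{sc}^{-1}=-z-m_{sc}$ from the $i$-th relation, the differences $v_i:=G_{ii}-m_{sc}$ obey a perturbed quadratic system whose stability is governed by the factor $|1-m_{sc}^2|^{-1}$, which is of order $(\kappa+\eta)^{-1/2}$ near the spectral edge. Iterating the resulting bound, starting from $\eta=10$ (where $\|G\|\le\eta^{-1}$ makes everything trivial) and propagating downward in $\eta$, produces the \emph{weak} estimate $\max_i|v_i|\lesssim(\log N)^{C}(M\eta)^{-1/2}(\kappa+\eta)^{1/4-A/2}$, which is exactly \eqref{Gii}; the off-diagonal bound \eqref{Gij} then follows from the identity expressing $G_{ij}$, $i\ne j$, through $G_{ii}$, $G^{(i)}_{jj}$ and a quadratic form in the entries of rows $i$ and $j$, estimated by one more large deviation bound. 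For the strong statement \eqref{mainlsresultfake}, the crucial observation is that $\mg(z)-m_{sc}(z)=\frac1N\sum_i v_i$ \emph{exactly}, and that averaging the $i$-th self-consistent relation shows this quantity equals $\frac{-m_{sc}^2}{1-m_{sc}^2}\cdot\frac1N\sum_i\Upsilon_i$ up to higher-order terms; thus everything is reduced to estimating an average $\frac1N\sum_i c_i Z_i$ with bounded, $z$- and $i$-dependent coefficients $c_i$.

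The fluctuation-averaging bound on $\frac1N\sum_i c_i Z_i$ is the genuinely new ingredient and the main obstacle: the point is that this average is much smaller than a single $Z_i$, so that plugging it into the stability identity turns the pointwise rate $(M\eta)^{-1/2}$ into the rate $(M\eta)^{-1}$ of \eqref{mainlsresultfake}. Since distinct $Z_i$ involve the resolvents $G^{(i)}$ of different minors they are not independent, so I would bound the high moments $\E\big|\frac1N\sum_i c_i Z_i\big|^{2p}$ directly: expand the product into monomials in the matrix entries, and repeatedly apply minor-removal resolvent expansions such as $G^{(i)}_{kl}=G^{(ij)}_{kl}+G^{(i)}_{kj}G^{(i)}_{jl}/G^{(i)}_{jj}$ to rewrite each factor over a common minor, so that the independence of distinct rows of $H$ can be exploited. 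After these expansions a term survives only if every row index that occurs has been matched with another occurrence — an unpaired row index vanishes upon taking the conditional expectation over that row — and the resulting paired structure carries the extra smallness; fed back through the a priori weak bounds on $\max_i|v_i|$ and $\max_{i\ne j}|G_{ij}|$ already in hand, this closes a bootstrap for the improved estimate. The final step is routine: $G(z)$ and $\mg(z)$ are Lipschitz in $z$ with constant $O(\eta^{-2})$, which is polynomially bounded on $D$, so a union bound over a polynomially fine net of $z$'s, combined with the same continuity (stopping-time) argument used in the downward propagation in $\eta$, upgrades all fixed-$z$ estimates to the uniform-in-$z$ statements \eqref{mainlsresultfake}, \eqref{Gii} and \eqref{Gij} over the whole domain $D$, with the stated probability bounds.
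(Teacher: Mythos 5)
Your proposal is correct and follows essentially the same route as the paper: the weak bounds \eqref{Gii}--\eqref{Gij} via the Schur-complement self-consistent equation, the quadratic-form large deviation estimate with the Ward identity, the stability operator $(1-m_{sc}^2\Sigma)^{-1}$ controlled by the spectral gap of $\Sigma$, and a continuity/bootstrap argument in $\eta$ together with an $N^{-10}$-net in $z$. Your fluctuation-averaging step is exactly the paper's key new ingredient (Lemma \ref{motN}, proved in Sections \ref{mo-est}--\ref{sec:gen}): high moments of $\frac1N\sum_i Z_i$ are bounded by decomposing each $Z_i$ via minor-removal resolvent expansions into pieces independent of prescribed rows, observing that terms with an unmatched row vanish under the conditional expectation $\mathbb{IE}_i$, and power-counting the off-diagonal factors using the a priori weak bounds, which yields the improved $(M\eta)^{-1}$ rate in \eqref{mainlsresultfake}.
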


  The subexponential decay condition \eqref{subexp} can also be
 easily weakened  if we are not aiming 
at error estimates faster than any power law of $N$. 
This can be easily carried out and we will not pursue  it in this paper.

\bigskip

Denote the eigenvalues of $H$ by $\lambda_1,  \ldots , \lambda_N$
and let $p_N(\lambda_1,  \ldots , \lambda_N)$ be 
their (symmetric) probability  density. 
For any
$k=1,2,\ldots, N$ the $k$-point correlation function  of the eigenvalues is defined by 
\be
 p^{(k)}_N(x_1, x_2,\ldots x_k):=
\int_{\bR^{N-k}} p_N(x_1, x_2, \ldots , x_N)\rd x_{k+1}\ldots \rd
x_N.
\label{corrfn}
\ee
We now state our main result concerning these correlation functions. 
The same result  was proved in \cite{EYY} under the  additional assumption \eqref{no3no4}.

\begin{theorem}[Universality for generalized  Wigner matrices] \label{mainsk}
Consider a generalized \, hermitian \\  Wigner ensemble  such that \eqref{sum},
 \eqref{speccond}
and  \eqref{VV} hold.
 Suppose that the distributions $\nu_{ij}$  of the matrix elements 
 have a uniformly   subexponential decay
in the sense of \eqref{subexp}.  Suppose that the
real and imaginary parts of $h_{ij}$ are i.i.d., distributed
according to $\om_{ij}$, i.e.,
$\nu_{ij}(\rd h) = \om_{ij}(\rd \im h)\om_{ij}(\rd \re h)$.
  Then for any $k\ge 1$ and for any compactly
 supported continuous test function
$O:\bR^k\to \bR$ we have 
\be
\begin{split}
 \lim_{b\to0}\lim_{N\to \infty} \frac{1}{2b}
\int_{E-b}^{E+b}\rd E' \int_{\R^k} & \rd\alpha_1 
\ldots \rd\alpha_k \; O(\alpha_1,\ldots,\alpha_k)  \\
&\times
 \frac{1}{\varrho_{sc}(E)^k} \Big ( p_{N}^{(k)}  - p_{GU\! E, N} ^{(k)} \Big )
  \Big (E'+\frac{\alpha_1}{N\varrho_{sc}(E)}, 
\ldots, E'+\frac{\alpha_k}{N \varrho_{sc}(E)}\Big) =0,
\label{matrixthm}
\end{split}
\ee
where  $p_{GU\! E, N} ^{(k)}$ is the $k$-point correlation
function for the GUE ensemble. The same statement holds for
symmetric matrices, with GOE replacing the GUE ensemble.
\end{theorem}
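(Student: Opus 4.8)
The plan is to follow the three-step strategy laid out in the introduction, using the newly available strong local semicircle law (Theorem~\ref{lsc}) as the crucial input to handle the Bernoulli measures. The result to be proven, Theorem~\ref{mainsk}, was already established in \cite{EYY} under an extra assumption (referred to here as \eqref{no3no4}, which in practice guaranteed that the target measure could be written as an intermediate step of a Dyson Brownian motion flow started from a measure with a density bounded below, i.e. the LSI was available). So the task is really to re-run the argument of \cite{EYY} while replacing every use of the logarithmic Sobolev inequality with an estimate derived from the $(N\eta)^{-1}$ error bound and its consequence for the counting function.

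First I would invoke Step~2 (universality for Gaussian divisible ensembles): for an ensemble of the form $\widehat H_t = e^{-t/2} H_0 + (1-e^{-t})^{1/2}\, V$ with $V$ a GUE (resp.\ GOE) matrix, the local correlation functions at energy $E$ in the bulk converge, after rescaling by $\varrho_{sc}(E)$, to the GUE (resp.\ GOE) sine-kernel correlation functions, provided $t \ge N^{-1+\delta}$ for some small $\delta>0$. The proof of this via the Dyson Brownian motion / local relaxation flow requires control of the quantity $\sum_j (\lambda_j - \gamma_j)^2$ (or an analogous entropy-type functional), where $\gamma_j$ are the classical locations; in \cite{EYY} this was furnished through the LSI. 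Here I would instead use the consequence of \eqref{mainlsresultfake} stated in the paragraph before Theorem~\ref{mainsk}: the normalized empirical counting function differs from the semicircle counting function by at most $N^{-1+\e}$ in the bulk, which directly gives $|\lambda_j - \gamma_j| \le N^{-1+\e}$ and hence $\sum_j (\lambda_j - \gamma_j)^2 \le N^{-1+\e}$ with overwhelming probability. This is precisely the rigidity input the relaxation-flow argument needs, and it is strong enough (in fact stronger than what the LSI gave) to run the argument with no lower-bound assumption on the density of the matrix elements, so Bernoulli measures are now admissible.

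Next I would carry out Step~3 (Green function comparison / approximation): given the Bernoulli (or general subexponential, mean-zero, variance-normalized) ensemble $H$, I choose $t = N^{-1+\delta}$ and construct an auxiliary matrix $H_0$ such that $\widehat H_t = e^{-t/2}H_0 + (1-e^{-t})^{1/2}V$ has the same first four moments (matching of second moments is automatic from the variance normalization, and the mean-zero assumption forces the odd moments to match up to the required order) as $H$, entry by entry; this is the standard moment-matching lemma and only requires that the fourth moment of $h_{ij}$ be of size $O(\sigma_{ij}^4)$, which follows from subexponential decay. Then the Green function comparison theorem — whose proof is a Lindeberg-type swapping estimate controlled by the local semicircle law and the bounds \eqref{Gii}, \eqref{Gij} on diagonal and off-diagonal resolvent entries — shows that the local correlation functions of $H$ and of $\widehat H_t$ agree in the sense of \eqref{matrixthm}. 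Combining with Step~2 applied to $\widehat H_t$ yields \eqref{matrixthm} with the GUE (resp.\ GOE) correlation functions on the right-hand side. One checks separately that $\widehat H_t$ is itself a generalized Wigner matrix (the variance profile is essentially preserved under this flow, up to rescaling and the addition of the flat GUE profile, and \eqref{sum}, \eqref{speccond}, \eqref{VV} are stable), so Theorem~\ref{lsc} applies to it as well.

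The main obstacle, and the reason the paper needs "a substantial amount of work," is establishing the strong local semicircle law itself — Theorem~\ref{lsc} — which I am permitted to assume here; once that is in hand, the passage to Theorem~\ref{mainsk} is a matter of assembling existing machinery. Within that assembly, the one genuinely delicate point is verifying that the $N^{-1+\e}$ rigidity estimate is quantitatively sufficient to replace the LSI in the local relaxation flow analysis: one must check that the relaxation time estimate and the resulting gap universality hold with the weaker-looking but in fact adequate deterministic bound $|\lambda_j-\gamma_j|\le N^{-1+\e}$, rather than the fluctuation bound the LSI produced, and that the time $t=N^{-1+\delta}$ can be taken small enough that moment matching to fourth order suffices in Step~3 while still being large enough for Step~2. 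I expect this interplay of scales to be where the care is concentrated, but no new conceptual device beyond Theorem~\ref{lsc} should be required.
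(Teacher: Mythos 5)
Your overall architecture is the same as the paper's (DBM universality with Assumptions II--IV, rigidity replacing the LSI, then Green function comparison), but there is a genuine gap at precisely the step that constitutes the new work in this paper: the verification of Assumption III \eqref{assum3}. You claim that the counting-function estimate ``directly gives $|\lambda_j-\gamma_j|\le N^{-1+\e}$ and hence $\sum_j(\lambda_j-\gamma_j)^2\le N^{-1+\e}$''. This is false as stated, because the control deteriorates at the spectral edges: the counting-function bound (Theorem \ref{prop:count}) degrades like $N^{\e}/(N\kappa_E)$ as $\kappa_E\to 0$, the square-root vanishing of $\varrho_{sc}$ converts even a perfect counting estimate into $|\lambda_j-\gamma_j|\lesssim N^{\e}/j$ only for bulk indices, and the extreme eigenvalues (which can lie outside $[-2,2]$ by $N^{-1/6+\e}$, cf.\ Lemma \ref{N-1/6}) must be handled separately. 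The paper's Section \ref{sec:loc} (Theorem \ref{prop:lambdagamma}) carries out exactly this edge analysis and obtains only $\sum_j\E|\lambda_j-\gamma_j|^2\le CN^{-\e_0}$ with $\e_0<1/7$ --- still sufficient for Assumption III, but far from $N^{-1+\e}$; also, passing from a high-probability bound to the expectation statement required in \eqref{assum3} uses the extreme-eigenvalue bounds, which you do not address.

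This overclaim then propagates into your scale bookkeeping in Step 3. With the rigidity actually available, the DBM theorem (Theorem \ref{thm:main}) yields universality only for $t\gtrsim N^{-2\e+\delta}$, i.e.\ $t$ a \emph{small} negative power of $N$, not $t=N^{-1+\delta}$ as you propose. Consequently exact four-moment matching between $H$ and $H_t$ is neither available nor needed: the Gaussian convolution at time $t\sim N^{-\e_0}$ shifts the fourth moment by $O(t)$, and the paper instead matches the first three moments exactly (which requires actively pre-adjusting the third moment of $\xi_0$ via Lemma \ref{fmam}; your remark that mean zero ``forces the odd moments to match'' is incorrect, since the third moments are generic) and the fourth moment only up to $O(t)\le N^{-\delta}$, which is exactly the tolerance \eqref{4match} permitted by the Green function comparison theorem (Theorem \ref{comparison}). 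So the missing content is the quantitative eigenvalue-location theorem with its edge analysis, and the correct interplay $t\sim N^{-\e_0}$ with approximate (not exact) fourth-moment matching; once these are supplied, your outline coincides with the paper's proof.
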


\medskip

\noindent 
{\bf Remark.} We can take $b = N^{-c}$ for some small constant
 $c>0$ so that there is no double limit taken. 
This is because  all our bounds have an effective error estimate 
 $N^{-c}$.  In  case of hermitian matrices there is no need
for averaging in the energy parameter $E'$.
The limit \eqref{matrixthm} holds even for any
fixed energy $E'$, with $|E'|<2$, since, 
instead of relying on the local relaxation flow of \cite{ESY4, ESYY},
  we can use the result of  
\cite{EPRSY} for Gaussian divisible 
ensembles at a fixed energy. 
\medskip

It is well-known that the limiting correlation functions
 of the GUE ensemble are given
by the sine kernel
$$
     \frac{1}{\varrho_{sc}(E)^k} p_{GU\! E, N} ^{(k)}
 \Big (E+\frac{\alpha_1}{N\varrho_{sc}(E)}, 
\ldots, E+\frac{\alpha_k}{N \varrho_{sc}(E)}\Big)
 \to \det\{ K(\al_i-\al_j)\}_{i,j=1}^k, \qquad K(x) = \frac{\sin \pi x}{\pi x},
$$
and a similar universal formula is available for the limiting gap distribution. 
The formulas for the GOE cases 
are more complicated and we refer the reader to standard references such as 
 \cite{AGZ, De2, For, M}.

We will prove Theorem~\ref{mainsk} using the approach of \cite{ESYY, EYY}. 
The logarithmic Sobolev inequality was an important tool  in these papers and 
 it was the main obstacle why the case of Bernoulli random matrices were not covered.
We note that the Bernoulli distribution satisfies the discrete version of the LSI
but it would not be sufficient for our purposes.  To   explain the necessity of LSI, 
we  now review the three basic ingredients of the approach of \cite{ESYY, EYY}.

\begin{enumerate}

\item[{Step 1.}]  {\it Local semicircle law:} It states that the density of eigenvalues
is  given by
the semicircle law down to short scales containing 
only $N^\e$ eigenvalues for all $\e> 0$, where $N$ is the size of the matrix. 

\item[{Step 2.}]  {\it  Local ergodicity of the Dyson Brownian motion:} 
 The Dyson Brownian motion is given by  the flow 
\be\label{matrixdbm}
H_t = e^{-t/2} H_0 + (1-e^{-t})^{1/2}\, V,
\ee
where $H_0$  is the initial  Wigner matrix, 
$V$ is an independent standard GUE (or GOE) matrix
and $t\ge 0$ is the time.
Here we have used the version that the dynamics of the matrix element 
is given by an Ornstein-Uhlenbeck (OU) process  on $\bC$. 
More precisely, let 
\be\label{H}
\mu=\mu_N(\rd{\bf x}):=
\frac{e^{-\cH({\bf x})}}{Z_\beta}\rd{\bf x},\qquad \cH({\bf x}) = \cH_N(\bx):=
N \left [ \beta \sum_{i=1}^N \frac{x_{i}^{2}}{4} -  \frac{\beta}{N} \sum_{i< j}
\log |x_{j} - x_{i}| \right ]
\ee
be the probability measure of the eigenvalues 
$\bx = (x_1, x_2, \ldots, x_N)$ 
of the general $\beta$ ensemble, $\beta\ge 1$ 
 ($\beta=2$ for the hermitian case and $\beta=1$ for the symmetric case). 
 Denote the distribution of 
the eigenvalues  of $H_t$ at  time $t$
by $f_t ({\bf x})\mu(\rd {\bf x})$.
Then $f_t=f_{t,N}$ satisfies \cite{Dy}
\be\label{dy}
\partial_{t} f_t =  \cL f_t.
\ee
where 
\be
\cL=\cL_N:=   \sum_{i=1}^N \frac{1}{2N}\partial_{i}^{2}  +\sum_{i=1}^N
\Bigg(- \frac{\beta}{4} x_{i} +  \frac{\beta}{2N}\sum_{j\ne i}
\frac{1}{x_i - x_j}\Bigg) \partial_{i}.
\label{L}
\ee

We now recall the following  theorem concerning the universality of 
the Dyson Brownian motion.   Following the convention in \cite{ESYY}, we
 label the assumptions as 
Assumptions II--IV since the Assumption I, a convexity property of the 
Hamiltonian for the invariant measure 
of the Dyson Brownian motions,  is automatically satisfied for any $\beta$ ensembles.

{\bf Assumption II.} For any fixed $a,b\in \bR$, we have 
\be
\lim_{N\to\infty}   \sup_{t\ge 0}  \Bigg|
\int \frac{1}{N}\sum_{j=1}^N {\bf 1} ( x_j \in [a, b]) f_t(\bx)\rd\mu(\bx)
- \int_a^b \varrho_{sc}(x) \rd x \Bigg| =0.
\label{assum1}
\ee
where $\varrho_{sc}$ is the density of the semicircle law \eqref{def:sc}.

Let $\gamma_j =\gamma_{j,N}$ denote the location of the $j$-th point
under the semicircle law, i.e., $\gamma_j$ is defined by
\be\label{def:gamma}
 N \int_{-\infty}^{\gamma_j} \varrho_{sc}(x) \rd x = j, \qquad 1\leq j\le N. \quad
\ee
We will call $\gamma_j$ the {\it classical location} of the $j$-th point.

\bigskip
{\bf Assumption III.} There exists an $\e>0$ such that
\be
 \sup_{t\ge 0} \int  \frac{1}{N}\sum_{j=1}^N(x_j-\gamma_j)^2
 f_t(\rd \bx)\mu(\rd \bx) \le CN^{-1-2\e}
\label{assum3}
\ee
with a constant $C$ uniformly in $N$.

\bigskip

The final assumption is an upper bound on the local density. 
For any $I\in \R$, let
$$
    \cN_I: = \sum_{i=1}^N {\bf 1}( x_i \in I)
$$
denote the number of eigenvalues in $I$.

\bigskip

{\bf Assumption IV.} For any compact subinterval $I_0\subset (-2,2)
=\{ E\;: \; \varrho_{sc}(E)>0\}$,
and for  any $\delta>0$,  $\sigma>0$
there are constant $C_n$, $n\in \N$, depending on $I_0$,
$\delta$ and $\sigma$ such that for any interval $I\subset I_0$ with
$|I|\ge N^{-1+\sigma}$ and for any $K\ge 1$, we have
\be
   \sup_{\tau \ge N^{-2\e+\delta}}
    \int {\bf 1}\big\{ \cN_I \ge KN|I| \big\}f_\tau \rd\mu
\le C_{n} K^{-n}, \qquad n=1,2,\ldots,
\label{ass4}
\ee
where $\e$ is the exponent from Assumption III and $\sigma$ 
and $\delta$ are arbitrarily small numbers. 
\bigskip

We have proved \cite{EYY} that  Assumption IV follows  from the local semicircle law
and    Assumption III also follows from  the local semicircle law provided that a 
uniform LSI for the distributions of the matrix elements is assumed.

\item[{Step 3.}]
  {\it Green function comparison theorem:} It  asserts  that the correlation functions 
of the eigenvalues of two matrix ensembles are identical up to  the
 scale $1/N$  provided that the first four moments 
of the matrix elements  of these two ensembles are almost identical.
 Given this theorem and the universality for the Dyson Brownian 
motion for $t \sim N^{-\e}$,  the universality  for a matrix ensemble $H$ holds if 
we can find  another matrix ensemble $H_0$ such that the first four moments
 of the matrix elements of $H$ and $H_t$ (given by \eqref{matrixdbm})
are almost the same. Furthermore, $H_0$ is required to satisfy
 a uniform LSI so that the Assumption III can be verified. 
This is possible if the first four moments of $H_0$ satisfy 
 \be\label{no3no4}
\inf_{N} \min_{1\leq i,j\leq N}\left\{\frac{m_4(i,j)}{(m_2(i,j))^2}-
\frac{(m_3(i,j))^2}{(m_2(i,j))^3}\right\}>1,
 \ee
where $m_k(i,j)$ is the $k$-th moment of 
 the $i, j$ matrix element in the symmetric case.
In the hermitian case, the moments of the real and
imaginary parts have to satisfy \eqref{no3no4}.

\end{enumerate} 

Combining these ingredients,  the universality of local eigenvalue 
statistics in the bulk was proved for all 
generalized  Wigner ensembles (see \eqref{VV} for  the definition)
  satisfying  \eqref{no3no4} and a subexponential decay technical condition.
The restriction  \eqref{no3no4} was needed to guarantee the existence
 of a matching matrix ensemble whose matrix element
distributions satisfy the LSI so that 
 the Assumption III can be verified.  
The local semicircle estimates in Theorem~\ref{lsc} imply that the empirical counting function of
the eigenvalues is close to the semicircle counting function (Theorem~\ref{prop:count})
and that the location of the eigenvalues are close to their classical 
location in mean square deviation sense (Theorem \ref{prop:lambdagamma}).
This provides a direct proof to the  Assumption III \eqref{assum3} and thus removes the usage of the LSI.

Finally we summarize the recent results related to the  bulk universality 
of local eigenvalue statistics. 
The local semicircle law for   Step 1  was first established for Wigner
 matrices in a series of papers  \cite{ESY1, ESY2, ESY3}. 
The  method  was based on a self-consistent equation for 
the Stieltjes  transform of the eigenvalues and the continuity of the 
imaginary part of the spectral parameter  in 
the Stieltjes  transform.  As a by-product, an
 eigenvector delocalization estimate was  proved.

The universality for  Gaussian divisible ensembles  was  proved by Johansson \cite{J} 
for {\it hermitian} Wigner ensembles. It was extended to {\it complex} sample 
covariance matrices by 
Ben Arous and  P\'ech\'e \cite{BP}. There were two major restrictions of this
 method: 1. The Gaussian component 
was fairly large, it was required  to be of order one independent of $N$. 
2. It relies on explicit formulas for the correlation functions of 
eigenvalues  which are valid only for Gaussian divisible ensembles with   unitary  invariant Gaussian component.  The size of the 
Gaussian component was reduced to $N^{-1+ \e}$ in \cite{EPRSY} by using 
an improved formula for correlation functions and 
the local semicircle law from  \cite{ESY1, ESY2, ESY3}. The Gaussian component
was then removed by a perturbation argument using the reverse heat flow.
Thus the three step strategy to prove the universality was 
 introduced and  it led to the 
first proof of the bulk universality for hermitian Wigner ensembles.
 Due to the reverse heat flow 
argument used in  Step 3, the universality class established in 
\cite{EPRSY} was 
restricted to matrices with 
 smooth distributions for the matrix elements. 
Shortly after, Tao and Vu  \cite{TV} proved the four moment theorem which 
in particular 
removes the  smoothness restriction in Step 3. 
It thus proved the universality for hermitian Wigner matrices whose matrix
 element distributions were supported on at least  three points. The last 
condition was removed in \cite{ERSTVY} by combining the arguments 
 of \cite{EPRSY, TV}. The 
result of \cite{TV}  also implies that  the local statistics of symmetric 
Wigner matrices
and  GOE are the same,   but under the restriction that 
the first four moments of the matrix elements match those of GOE.  
 Thus the universality class for the local correlation functions
established via the approach of combining  \cite{TV} and \cite{J}   was broader
for  the hermitian  ensembles  than for the symmetric ones.   This improvement
was due to  Johansson's result \cite{J},  which  provided
 the universality for 
 Gaussian divisible ensembles in  Step 2,  
was available only for hermitian ensembles.

A more general and conceptually  very appealing approach for Step 2  
is via the local ergodicity of
 Dyson Brownian motion. This approach,  initiated  in \cite{ESY4},
 was applied to prove  the
universality  for symmetric Wigner matrices with the three
 point  support condition.  In \cite{ESYY}, we formulated a general 
theorem for the bulk universality which applies to  all classical 
ensembles, i.e.,  real and complex Wigner matrices,   real and complex 
 sample covariance  matrices
 and quaternion Wigner matrices. Later on, Tao and Vu \cite{TV3} also 
extended their results to the sample covariance matrices 
with the three point support condition for complex covariance matrices
 and four moment matching conditions 
for real ones.  Shortly after \cite{TV3}, P\'ech\'e \cite{P} 
also  extended the approach \cite{EPRSY} 
to the complex sample covariance matrices and proved the 
universality in the bulk.

Most recently,  we introduced  \cite{EYY}
 the Green function comparison theorem and extended the local semicircle law 
  to include the matrix elements of the Green functions. This allows us to 
 remove the smoothness restriction from  the reverse heat flow
 argument in  Step 3 of our approach. 
 We remark that the comparison theorems in \cite{TV} 
concern  individual  eigenvalues with a fixed index,
while the Green function comparison theorem is at a  fixed energy. 
On the other hand,   in \cite{EYY} the variances of the matrix elements 
were allowed to vary, i.e., 
 the matrices belonged to  generalized Wigner ensembles. 
 The three step strategy can thus be applied and the universality was proved for
  generalized Wigner ensembles  with essentially
 only one class of measures,
 the Bernoulli measures,   excluded due to  the LSI used in verifying 
 Assumption III  in Step 2.  Finally, in the current paper, 
 Assumption III will be  shown to be  a consequence of a strong local 
semicircle law,  which will be proved for all ensembles  with a subexponential 
 decay property. In particular,   Bernoulli measures are now included in the
 universality class (in the sense of \eqref{matrixthm})
  for both hermitian and symmetric 
 generalized Wigner ensembles.   We have thus removed all restrictions except
 the subexponential decay  in our approach. A clear picture 
 of the three step strategy emerges:  Step 2 and 3 hold under very general 
conditions and are model independent.
 The main task of  proving the universality is to establish  a strong version
 of the local semicircle law---which can be model dependent.   
 We believe that our method  applies to generalized sample covariance
 matrices as well, but we will not pursue this direction in this paper.

\section{Proof of Universality}

We now prove the main universality theorem, Theorem \ref{mainsk}.

\noindent 
{Step 1.  \it Universality for Dyson Brownian Motion:} Under the Assumptions II--IV in the introduction, 
the universality for the Dyson Brownian Motion was proved in \cite{ESYY}.  We recall the statement in the 
following Theorem.

\begin{theorem}\label{thm:main}[Theorem 2.1 of \cite{ESYY}]  
Let  $\e>0$ be the exponent from Assumption III. 
Suppose that  the 
Assumptions II, III and IV  hold for the solution $f_t$ of the forward equation
\eqref{dy} for all time $ t \ge N^{-2 \e}$.  Let $E\in \bR$ be a point
where $\varrho(E)>0$. Then 
for any $k\ge 1$ and for any compactly supported continuous test function
$O:\bR^k\to \bR$, we have
\be
\begin{split}
\lim_{b\to 0}\lim_{N\to \infty} \sup_{t\ge N^{-2\e+\delta}}  \;
\frac{1}{2 b }\int_{E-b}^{E+b}\rd E'
\int_{\R^k} &  \rd\alpha_1
\ldots \rd\alpha_k \; O(\alpha_1,\ldots,\alpha_k) \\
&\times \frac{1}{\varrho(E)^k} \Big ( p_{t,N}^{(k)}  - p_{\mu, N} ^{(k)} \Big )
\Big (E'+\frac{\alpha_1}{N\varrho(E)},
\ldots, E'+\frac{\alpha_k}{ N\varrho(E)}\Big) =0 \, .
\label{abstrthm}
\end{split}
\ee
\end{theorem}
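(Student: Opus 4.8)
The statement is Theorem~2.1 of \cite{ESYY}, so I only outline the strategy one would follow. The plan is to exploit a \emph{local relaxation flow}: one augments the generator $\cL$ of \eqref{L} by a strongly confining single‑particle potential that pins each $x_j$ near its classical location $\gamma_j$, producing a dynamics that equilibrates far faster than $\cL$ itself, and then one argues that this added potential is invisible to the correlation functions on the microscopic scale $1/N$. The equilibrium of the true flow is exactly the measure $\mu$ appearing on the right‑hand side of \eqref{abstrthm}, so it suffices to show that $f_t\mu$ is close to $\mu$ on microscopic scales once $t\ge N^{-2\e+\delta}$.

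Concretely, I would fix a scale $R$ with $N^{-1}\ll R$, set $\widetilde\cH:=\cH+\tfrac{1}{2R^2}\sum_j(x_j-\gamma_j)^2$ with associated measure $\om\propto e^{-\widetilde\cH}$ and generator $\widetilde\cL$, and track the relative entropy $S_t$ of $f_t\mu$ with respect to $\om$ along the true Dyson flow. Since the log‑interaction in $\cH$ is convex and the added term has Hessian bounded below by $R^{-2}$, the Bakry--\'Emery criterion gives a logarithmic Sobolev inequality for $\om$ with constant of order $R^2$, so the $\widetilde\cL$‑dynamics relaxes to $\om$ in time of order $NR^2$. Differentiating $S_t$, one gets the negative Dirichlet‑form dissipation of $\widetilde\cL$ plus an error coming from $\cL-\widetilde\cL=-\tfrac{1}{2NR^2}\sum_j(x_j-\gamma_j)\pt_j$; by Cauchy--Schwarz this error is at most $\big(\sum_j(x_j-\gamma_j)^2\big)^{1/2}$ times the square root of the Dirichlet form, and Assumption~III bounds the first factor by $N^{-\e}$ uniformly in $t$. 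Absorbing half the error into the dissipation by Young's inequality and using the logarithmic Sobolev inequality, one finds that for $R$ chosen so that $NR^2$ is smaller than $t$ by a power of $N$ --- which is possible precisely because $\e>0$ --- the entropy $S_t$ is already superpolynomially small at times $t\ge N^{-2\e+\delta}$; thus $f_t\mu$ is extremely close to $\om$.

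It then remains to compare the local statistics of $\om$ with those of $\mu$. Here I would use $R\gg N^{-1}$: on any window of size $\eta$ with $N^{-1}\ll\eta\ll R$ the extra potential is essentially affine, so it only translates the local configuration rather than reshaping it, and an interpolation in the strength of the confinement (bounding the derivative of the smeared $k$‑point function) shows that the rescaled correlation functions of $\om$ and $\mu$ agree up to a negligible error. Assumptions~II and~IV are used both here and in the entropy step: Assumption~II fixes the macroscopic density along the flow (hence the correct targets $\gamma_j$), while Assumption~IV supplies the a priori upper bound on the number of points in short intervals that is needed to turn the entropic/Dirichlet‑form smallness into genuine control of the smeared correlation functions --- Pinsker's inequality alone is far too lossy on the microscopic scale, so instead one combines the smallness of the time‑averaged Dirichlet form with the density bound --- and also to make the $\om$‑versus‑$\mu$ comparison quantitative. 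Assembling these estimates and letting the $E'$‑average (and then $b$) absorb the residual errors gives \eqref{abstrthm}.

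The hard part will be the two‑sided squeeze on $R$: it must be small enough that the relaxation time $NR^2$ sits comfortably below $N^{-2\e+\delta}$, yet large compared with the spacing $N^{-1}$ so that $\om$ and $\mu$ share their microscopic statistics. That this window is nonempty is exactly the role of the strictly positive exponent in Assumption~III, and handling it --- together with the passage from Dirichlet‑form control to control of correlation functions via the local‑density input --- is the technical core of the argument.
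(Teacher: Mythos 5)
The paper itself offers no proof of this statement: it is quoted as Theorem 2.1 of \cite{ESYY} (with the remark that the initial-entropy hypothesis can be dropped), and your outline is indeed the local relaxation flow argument of that cited work --- pseudo-equilibrium measure $\omega\propto e^{-\cH-\frac{N}{2\tau}\sum_j(x_j-\gamma_j)^2}$, Bakry--\'Emery/LSI with constant of order the confinement scale, entropy dissipation along the true flow with the error term controlled by Assumption III, and a correlation-function comparison using the time-averaged Dirichlet form together with the density bound of Assumption IV and the $E'$-averaging. So at the level of strategy you are following essentially the same route as the source of the theorem.

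Two points in your sketch are off, though. First, the relative entropy $S(f_t\mu|\omega)$ does \emph{not} become superpolynomially small: the dissipation inequality saturates at the level set by the error term, and with Assumption III one only gets $S\lesssim N\sum_j\E(x_j-\gamma_j)^2/\tau\sim N^{1-2\e}/\tau$, which is a \emph{positive} power of $N$ for the relevant $\tau\lesssim N^{-2\e}$. What is actually small enough is the time-averaged Dirichlet form $D_\omega\big(\sqrt{f_t\mu/\omega}\big)\lesssim N^{1-2\e}\tau^{-2}$ relative to what the comparison theorem (with Assumption IV and the energy average over the window of size $b$) requires; your own correct remark that Pinsker is too lossy is in fact inconsistent with the superpolynomial-entropy claim, so that claim should simply be removed and the argument run through the Dirichlet-form criterion as in your final step. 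Second, in \cite{ESYY} the comparison of the local statistics of $\omega$ with those of $\mu$ is not done by an ``affine tilt'' interpolation on intermediate scales; it is obtained by applying the very same Dirichlet-form comparison with $q=\rd\mu/\rd\omega$, using the known rigidity of the equilibrium $\beta$-ensemble to bound $D_\omega(\sqrt{q})$. Your alternative for that step is plausible but would need its own proof and is not what the cited argument does.
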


Notice that the assumption on the initial entropy is not needed as was remarked in \cite{EYY}.

\medskip 
\noindent 
{Step 2 \it Universality for Gaussian divisible ensembles:}   The Dyson Brownian motion is generated by the matrix flow 
\eqref{matrixdbm}. Our task is to determine the initial ensemble $H_0$ so that the Assumptions II--IV of  Theorem \ref{thm:main} 
can be proved  for the flow. The Assumption IV is a direct consequence of the local semicircle law, i.e.,   Theorem \ref{thm:detailed}.
The Assumption III will be proved in  Proposition \ref{prop:lambdagamma}.  For the generalized Wigner matrices, 
the only assumption of  Theorem \ref{thm:detailed}
and Proposition \ref{prop:lambdagamma} is the subexponential decay property of  the distributions of the matrix elements. 
Since the evolution of the matrix element is given by an 
Ornstein-Uhlenbeck process, the subexponential property is preserved and we only have to check it for the initial data. 
We have thus proved the following theorem.

\begin{theorem}\label{gengenJT}
Suppose that the probability law for the initial matrix $H_0$ satisfies the
  assumptions of Theorem \ref{mainsk}. 
 Then there exists $\e_0>0$ such that for any  
$t\geq N^{-\e_0}$, the probability law for the eigenvalues of $H_t$ satisfies the universality equation  \eqref{matrixthm}.
\end{theorem}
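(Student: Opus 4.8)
The plan is to verify Assumptions II, III and IV of Theorem \ref{thm:main} for the forward flow $f_t$, i.e. the eigenvalue density of $H_t$, uniformly in $t\ge 0$, and then to invoke Theorem \ref{thm:main} together with the classical sine-kernel asymptotics of the equilibrium $\beta$-ensemble $\mu$.

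First I would observe that the matrix Ornstein--Uhlenbeck flow \eqref{matrixdbm} preserves everything needed to run the local semicircle law of Theorem \ref{lsc}. For every $t\ge 0$ the entry $h_{ij}(t)$ of $H_t=e^{-t/2}H_0+(1-e^{-t})^{1/2}V$ is a linear combination $a\,h_{ij}(0)+b\,v_{ij}$ with $a^2+b^2=1$ of a subexponential variable and an independent Gaussian $v_{ij}$ of the same variance $\sigma_{ij}^2$; hence the entries of $H_t$ remain independent, mean zero, with the same variance profile and i.i.d. real and imaginary parts, and they satisfy the subexponential bound \eqref{subexp} with constants $\al,\beta$ that depend neither on $t$ nor on $N$. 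Consequently \eqref{sum}, \eqref{speccond} and \eqref{VV} continue to hold for $H_t$, and Theorem \ref{lsc}---whose constants depend only on $\al,\beta,\delta_-$---applies simultaneously to the whole family $\{H_t\}_{t\ge 0}$: the estimates \eqref{mainlsresultfake}--\eqref{Gij} hold for $m_{N,t}$ and for the Green function of $H_t$, uniformly in $t$.

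Next I would extract Assumptions II--IV. Assumption II, the convergence of the empirical measure of $H_t$ to the semicircle law on order-one intervals, follows from \eqref{mainlsresultfake} by a standard Stieltjes-inversion argument, uniformly in $t$ since the bound is. Assumption IV, the polynomial upper bound on the number of eigenvalues of $H_t$ in intervals of length $\ge N^{-1+\sigma}$, follows directly from \eqref{mainlsresultfake} taken at small $\eta$, exactly as in \cite{EYY}, with no LSI involved. The essential point is Assumption III, namely $\int\frac1N\sum_j(x_j-\gamma_j)^2\,f_t\,\rd\mu\le N^{-1-2\e}$ uniformly in $t$. I would deduce this from rigidity of eigenvalue locations: the improved accuracy $N^\e\,(M\eta)^{-1}(\kappa+\eta)^{-A}$ in \eqref{mainlsresultfake}, via Stieltjes inversion, forces the empirical counting function of $H_t$ to lie within $N^{-1+\e}$ of the semicircle counting function in the bulk, hence $|\lambda_j-\gamma_j|\le N^{-1+\e}$ there, with weaker but still summable control near the edges; squaring and averaging over $j$ then yields the bound $N^{-1-2\e}$ of \eqref{assum3} for a suitable small $\e>0$. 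This is the main obstacle and the heart of the paper: with only the old $(M\eta)^{-1/2}$ accuracy one obtains rigidity merely on scale $N^{-1/2+\e}$, which is insufficient for \eqref{assum3} and is precisely why the logarithmic Sobolev inequality was needed in \cite{EYY}; here only subexponential decay---no smoothness, no LSI---enters, so the Bernoulli case is allowed.

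Finally, with Assumptions II--IV verified for $f_t$ for all $t\ge N^{-2\e}$, Theorem \ref{thm:main} yields that the averaged rescaled $k$-point functions of $H_t$ converge to those of $\mu$ for every $\delta>0$ and every $t\ge N^{-2\e+\delta}$. Since the local bulk statistics of the $\beta$-ensemble $\mu$ are the sine-kernel ones---equivalently, $p^{(k)}_{\mu,N}$ has the same local scaling limit as $p^{(k)}_{GU\!E,N}$ when $\beta=2$ and as $p^{(k)}_{GOE,N}$ when $\beta=1$---this is exactly the universality equation \eqref{matrixthm} with $H$ replaced by $H_t$. Taking $\e_0$ to be any positive number strictly less than $2\e$, with $\e$ the exponent from Assumption III, completes the proof.
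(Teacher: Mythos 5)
Your proposal is correct and follows essentially the same route as the paper: verify Assumptions II--IV for the flow $f_t$ by noting that the Ornstein--Uhlenbeck evolution preserves the subexponential decay and normalization hypotheses, derive Assumption IV from the strong local semicircle law (Theorem \ref{thm:detailed}), derive Assumption III from the counting-function rigidity (Theorems \ref{prop:count} and \ref{prop:lambdagamma}), and then invoke Theorem \ref{thm:main}. The paper's own proof is exactly this paragraph-length reduction, with the substantive work deferred to Sections \ref{sec:emp} and \ref{sec:loc} just as you indicate.
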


\medskip 
\noindent 
{Step 3 \it Green function comparison theorem:} We have proved the 
universality for all ensembles with the matrix element at $(i,j)$  distributed 
by $\sigma_{ij} \xi_t^{ij}$ with 
\be\label{matrixdbmfake}
\xi_t^{ij} = e^{-t/2} \xi_0^{ij} + (1-e^{-t})^{1/2}\, \xi^{ij}_G,
\ee
where $\xi_G^{ij}$ are independent  Gaussian random variables  with mean 
$0$ and variance $1$ and $t\sim N^{-\e}$. 
In order to prove  Theorem \ref{mainsk}, 
it remains to approximate all random variables with 
the subexponential property by $\xi_t$. 
The only requirement of $\xi_0$ is the subexponential decay property
 and the mean zero and variance 
one normalization. Our tool is the following Green function comparison
 theorem from \cite{EYY}. 
It  implies   that the correlation functions 
of the eigenvalues of two matrix ensembles at a fixed energy are identical
 up to  the scale $1/N$  provided that the first four moments 
of the matrix elements  of these two ensembles are almost identical. 
 Prior to this theorem,  it was  \cite{TV} 
proved that  the joint distribution of individual eigenvalues for Wigner
 ensembles is the same under the four moment assumption. 
Tao-Vu's theorem addresses the  distribution of individual 
eigenvalues\footnote{In a recent preprint \cite{TV5} (appeared after the current
preprint was first posted), it was pointed out that if the four moment condition is  violated, then the
differences between individual
eigenvalues of the two ensembles are bigger than the eigenvalue spacing.
Thus the four moment condition is also necessary for locating the individual eigenvalues.
This is in contrast with the main theme of this paper that   gap distribution and
correlation functions are even independent of  the second moments as long as they are nonzero.}
 while Theorem \ref{comparison} 
compares  Green functions (and thus eigenvalues) at a fixed energy.

\begin{theorem}\label{comparison}   Suppose that we have  
two generalized $N\times N$ Wigner matrices, $H^{(v)}$ and $H^{(w)}$, with matrix elements $h_{ij}$
given by the random variables $N^{-1/2} v_{ij}$ and 
$N^{-1/2} w_{ij}$, respectively, with $v_{ij}$ and $w_{ij}$ satisfying
the uniform subexponential decay condition \eqref{subexp}.
  Fix a bijective ordering map on the index set of
the independent matrix elements,
\[
\phi: \{(i, j): 1\le i\le  j \le N \} \to \Big\{1, \ldots, \gamma(N)\Big\} , 
\qquad \gamma(N): =\frac{N(N+1)}{2},
\] 
and denote by  $H_\gamma$  the generalized Wigner matrix whose matrix elements $h_{ij}$ follow
the $v$-distribution if $\phi(i,j)\le \gamma$ and they follow the $w$-distribution
otherwise; in particular $H^{(v)}= H_0$ and $H^{(w)}= H_{\gamma(N)}$. Let $\kappa>0$ be arbitrary
 and suppose that for any  small parameter $\tau>0$ 
and for any  $y \ge N^{-1 + \tau}$  we have 
the following estimate on the diagonal elements of the resolvent:
\be\label{basic}
\P\left(\max_{0 \le \gamma \le \gamma(N)} \max_{1 \le k \le N}  
 \max_{|E|\le 2-\kappa}\left |  \left (\frac 1 {  H_{\gamma}-E- i y} \right )_{k k } 
\right |\le N^{2\tau} \right)\geq 1-CN^{-c\log\log N}
\ee
with some constants $C, c$ depending only on $\tau, \kappa$.
 Moreover, we assume that the first three moments of
 $v_{ij}$ and $w_{ij}$ are the same, i.e.
$$
    \E \bar v_{ij}^s v_{ij}^{u} =  \E \bar w_{ij}^s w_{ij}^{u},
  \qquad 0\le s+u\le 3,
$$
 and the difference between the  fourth moments of 
 $v_{ij}$ and $w_{ij}$ is much less than 1, say
\be\label{4match}
\left|\E \bar v_{ij}^s v_{ij}^{4-s}- \E \bar w_{ij}^s w_{ij}^{4-s}
\right|\leq N^{-\delta}, \qquad s=0,1,2,3,4,
\ee
for some given $\delta>0$. 
 Let $\e>0$ be arbitrary and choose an 
$\eta$ with $N^{-1-\e}\le \eta\le N^{-1}$.
For any  sequence of positive integers $k_1, \ldots, k_n$, set  complex parameters
$z^m_j = E^m_j \pm i \eta$,   $j = 1, \ldots k_i$, $m = 1, \ldots, n$ with $ |E^m_j| \le2-2\kappa $
and with an arbitrary choice of the $\pm$ signs. 
Let  $G^{(v)}(z) =  ( H^{(v)}-z)^{-1}$ be the resolvent
and let    $F(x_1, \ldots, x_n)$ be a function such that for
any multi-index $\al=(\al_1, \ldots ,\al_n)$ with $1\le |\al|\le 5$
and for any $\e'>0$  sufficiently small, we have
\be\label{lowder}
\max \left\{|\partial^{\al}F(x_1, \ldots, x_n)|: 
\max_j|x_j|\leq N^{\e'}\right\}\leq N^{C_0\e'}
\ee
and
\be\label{highder}
\max\left\{|\partial^\al F(x_1, \ldots, x_n)|:
 \max_j|x_j|\leq N^2\right\}\leq N^{C_0}
\ee
for some constant $C_0$.

Then, there is a constant $C_1$,
depending on $\alpha, \beta$, $\sum_i k_i$ and $C_0$ such that for any $\eta$ with
$N^{-1-\e}\le \eta\le N^{-1}$
and  for any choices of the signs 
in  the imaginary part of $z^m_j$ 
\begin{align}\label{maincomp}
\Bigg|\E F  \left (  \frac{1}{N^{k_1}}\tr  
\left[\prod_{j=1}^{k_1} G^{(v)}(z^1_{j})\right ]  , \ldots, 
 \frac{1}{N^{k_n}} \tr \left [  \prod_{j=1}^{k_n} G^{(v)}(z^n_{j}) \right ]  \right )  &  -
  \E F\left ( G^{(v)} \to  G^{(w)}\right )  \Bigg| \non\\
\le & C_1 N^{-1/2 + C_1 \e}+C_1 N^{-\delta+ C_1 \e},
\end{align}
where in the second term the arguments of $F$ are changed from the 
Green functions of $H^{(v)}$
to $H^{(w)}$ and all other parameters remain unchanged.
\end{theorem}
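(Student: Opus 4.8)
The plan is to prove \eqref{maincomp} by a Lindeberg-type telescopic replacement combined with a resolvent expansion, following \cite{EYY}. Abbreviate $\Phi(H):=F\big(N^{-k_1}\tr[\prod_{j=1}^{k_1}G(z^1_j)],\dots,N^{-k_n}\tr[\prod_{j=1}^{k_n}G(z^n_j)]\big)$ with $G=(H-z)^{-1}$. Since $\Phi(H^{(v)})-\Phi(H^{(w)})=\sum_{\gamma=1}^{\gamma(N)}\big(\Phi(H_\gamma)-\Phi(H_{\gamma-1})\big)$ and $H_\gamma$, $H_{\gamma-1}$ differ only in the single independent entry indexed by $\phi^{-1}(\gamma)$ together with its conjugate partner, it suffices to show $|\E\Phi(H_\gamma)-\E\Phi(H_{\gamma-1})|\le C(N^{-5/2+C\e}+N^{-2-\delta+C\e})$ uniformly in $\gamma$ and then sum the $\gamma(N)=O(N^2)$ contributions. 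Before doing this I would record the a priori estimates actually needed. The hypothesis \eqref{basic} controls diagonal resolvent entries only at scales $y\ge N^{-1+\tau}$, but a dyadic decomposition in the eigenvalue locations promotes it to the following: on an event of probability $\ge 1-N^{-c\log\log N}$, for all $\gamma$ and all relevant spectral parameters (here and below $\eta$ denotes the imaginary part, $\eta\in[N^{-1-\e},N^{-1}]$), one has $|(G_\gamma)_{k\ell}|\le N^{C\e}$ for all $k,\ell$, $|(G_\gamma(z_1)\cdots G_\gamma(z_m))_{k\ell}|\le N^{m-1+C\e}$ for products of a bounded number of resolvents, and hence $|N^{-k_i}\tr[\prod_j G_\gamma(z^i_j)]|\le N^{C\e}$. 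A polynomial net in $z$ reduces the union over $z$ in \eqref{maincomp} to a single $z$ at the cost of a factor $N^{O(1)}$.

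For a single swap, fix $(a,b)=\phi^{-1}(\gamma)$ and write $H_\gamma=Q+N^{-1/2}\Delta_v$, $H_{\gamma-1}=Q+N^{-1/2}\Delta_w$, where $Q$ is the common matrix obtained by deleting the $(a,b)$ and $(b,a)$ entries and $\Delta_v=v_{ab}\,e_ae_b^*+\overline{v_{ab}}\,e_be_a^*$ (rank one when $a=b$). Expand each $G_\gamma(z)=(Q+N^{-1/2}\Delta_v-z)^{-1}$ as a Neumann series in $N^{-1/2}\Delta_v$ up to fourth order with fifth-order remainder, substitute into the traces of products, and Taylor-expand $F$ to fifth order. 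Because $Q$ is independent of $(v_{ab},w_{ab})$, this produces $\Phi(H_\gamma)=\sum_{s+u\le4}v_{ab}^u\,\overline{v_{ab}}^{\,s}\,\Xi_{s,u}(Q)+\mathcal R_5$ whose coefficients $\Xi_{s,u}$ depend only on $Q$ and the spectral parameters, with an analogous expansion $\Phi(H_{\gamma-1})=\sum_{s+u\le4}w_{ab}^u\,\overline{w_{ab}}^{\,s}\,\Xi_{s,u}(Q)+\mathcal R_5'$ sharing the same $\Xi_{s,u}$. Conditioning on $Q$ and taking expectations: the terms with $s+u\le3$ cancel between the two ensembles by the equality of the first three moments; the $s+u=4$ terms contribute $(\E\overline{v_{ab}}^{\,s}v_{ab}^u-\E\overline{w_{ab}}^{\,s}w_{ab}^u)\,\E\Xi_{s,u}(Q)$, which is $\le N^{-\delta}\E|\Xi_{s,u}(Q)|$ by \eqref{4match}; and the rest is $\E(\mathcal R_5-\mathcal R_5')$.

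The estimate then reduces to power counting. Using the structure of the Neumann series, each term of total degree $l$ in $\Delta$ in the expansion of $N^{-k_i}\tr[\prod_j G_\gamma(z^i_j)]$ is, up to an $O(1)$ number of summands, a product of $l-1$ entries of a single $G_\gamma(z^i_j)$ at indices in $\{a,b\}$ times one entry of a product of at most $k_i+1$ resolvents at indices in $\{a,b\}$, multiplied by $N^{-l/2}$ (from the $N^{-1/2}$ in front of $\Delta$) and by the $N^{-k_i}$ trace normalization. On the good event the resolvent entries are $\le N^{C\e}$, the entry of the product of $\le k_i+1$ resolvents is $\le N^{k_i+C\e}$ which absorbs the $N^{-k_i}$, and the derivatives of $F$ are $\le N^{C\e}$ by \eqref{lowder} since its arguments are $O(N^{C\e})$; hence the degree-$l$ contribution is $\le N^{-l/2+C\e}(|v_{ab}|+|w_{ab}|)^l$. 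Taking expectations and using \eqref{subexp} (so all moments of $|v_{ab}|$, $|w_{ab}|$ are $O(1)$) gives $\E|\Xi_{s,u}|\le N^{-2+C\e}$ for $s+u=4$ and $\E|\mathcal R_5-\mathcal R_5'|\le N^{-5/2+C\e}$, so $|\E\Phi(H_\gamma)-\E\Phi(H_{\gamma-1})|\le N^{-2-\delta+C\e}+N^{-5/2+C\e}$ on the good event; on the complementary event of probability $\le N^{-c\log\log N}$ one uses the crude deterministic bounds $\|G_\gamma(z)\|\le\eta^{-1}\le N^{1+\e}$ together with \eqref{highder} to see the contribution is negligible. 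Summing over the $O(N^2)$ swaps and undoing the net in $z$ gives \eqref{maincomp}.

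The step I expect to be the main obstacle is the power counting, specifically establishing and then exploiting the sharp a priori control of \emph{all} resolvent matrix elements that appear: not merely the diagonal ones handled directly by \eqref{basic}, but off-diagonal entries and entries of short products of resolvents, all with indices pinned to the switched pair $(a,b)$, at the small scale $\eta$, and uniformly over the $O(N^2)$ interpolation steps and over a net of spectral parameters, so that the union bounds in \eqref{maincomp} survive. The dyadic reduction from \eqref{basic} supplies these, and once they are in hand the cancellation bookkeeping (degrees $\le3$ die by moment matching, degree $4$ produces the $N^{-\delta}$, degree $\ge5$ produces the $N^{-1/2}$ thanks to the $N^{-l/2}$ gain per matrix element) is routine but notationally heavy; a secondary complication is that $v_{ab},w_{ab}$ are only subexponential rather than bounded, which forces the rare large-field configurations to be split off and controlled separately via \eqref{highder}.
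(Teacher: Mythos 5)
Your proposal is essentially the proof the paper relies on: Theorem \ref{comparison} is not reproved here but imported from \cite{EYY}, and the argument there is exactly your Lindeberg-type entry-by-entry swap with a fourth-order resolvent expansion around the matrix with the swapped entry removed, cancellation of orders $\le 3$ by moment matching, the $N^{-\delta}$ contribution from the fourth-order difference \eqref{4match}, an $N^{-5/2}$ fifth-order remainder per swap summed over $O(N^2)$ swaps, a priori control of all entries of resolvents and their short products at scale $\eta\sim N^{-1}$ deduced from the diagonal bound \eqref{basic} at $y\ge N^{-1+\tau}$ by spectral-decomposition/monotonicity arguments, and the exceptional set handled by the trivial bound $\|G\|\le \eta^{-1}$ together with \eqref{highder}. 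So the attempt is correct in outline and takes essentially the same approach as the cited proof.
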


Given this theorem, for any matrix ensemble $H$ 
whose  matrix element at $(i,j)$ are   distributed 
according to $\sigma_{ij} \zeta^{ij}$, 
we need to find $\xi_0^{ij}$  such that the first four moments of 
 $\zeta^{ij}$ and $\xi^{ij}_t$ 
are almost the same and $\xi_0^{ij}$  has a subexponential decay.
Since the real and imaginary parts are i.i.d., it is sufficient
to match them individually.
  This is the content of the following lemma which is
stated for  real random variables normalized to variance
one. With this lemma,
 we have proved  Theorem \ref{mainsk}. This lemma is essentially the same 
 as  Lemma 28 in \cite{TV}.

\begin{lemma}\label{fmam} Let  $m_3$ and $m_4$ be two real numbers  such that 
$$
m_4-m_3^2-1\ge 0,\,\,\, m_4\leq C_2
$$
for some positive constant  $C_2$. Let $\xi^G$ be a real Gaussian random variable 
 with mean $0$ and variance $1$. 
Then for any sufficient small $\gamma>0$ (depending on  $C_2$),
 there exists a real  random 
variable $\xi_\gamma$ with subexponential decay 
and independent of $\xi^G$, such that the first four moments of 
$$
\xi'=(1-\gamma)^{1/2}\xi_\gamma+\gamma^{1/2}\xi^G
$$ 
are $m_1(\xi')=0$, $m_2(\xi')=1$, $m_3(\xi')=m_3$ and $m_4(\xi')$, and 
\be\label{m4m4}
|m_4(\xi')-m_4|\leq C\gamma
\ee 
for some positive constant $C$ depending on $C_2$.
\end{lemma}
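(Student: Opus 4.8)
\emph{Plan.} I will take $\xi_\gamma$ to be a random variable supported on at most three points---hence bounded, hence trivially of subexponential decay---with first four moments $(0,1,a_3,a_4)$, and choose $a_3,a_4$ so that $\xi'=(1-\gamma)^{1/2}\xi_\gamma+\gamma^{1/2}\xi^G$ has the prescribed moments. The first step is pure bookkeeping: with $\xi_\gamma$ (mean $0$, variance $1$, third moment $a_3$, fourth moment $a_4$) independent of the standard Gaussian $\xi^G$, expanding $\bigl((1-\gamma)^{1/2}\xi_\gamma+\gamma^{1/2}\xi^G\bigr)^k$ for $k=1,2,3,4$ and using $\E\xi^G=\E(\xi^G)^3=0$, $\E(\xi^G)^2=1$, $\E(\xi^G)^4=3$ gives
$$
m_1(\xi')=0,\quad m_2(\xi')=1,\quad m_3(\xi')=(1-\gamma)^{3/2}a_3,\quad m_4(\xi')=(1-\gamma)^2a_4+6\gamma(1-\gamma)+3\gamma^2.
$$
So the first two moments are automatically right, and $m_3(\xi')=m_3$ forces $a_3:=(1-\gamma)^{-3/2}m_3$, which is bounded in terms of $C_2$ since $m_3^2\le m_4-1\le C_2$. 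The fourth moment cannot be matched exactly---this is exactly why the lemma only demands the tolerance \eqref{m4m4}---so $a_4$ is left free for now.

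\emph{Construction and the fourth-moment estimate.} A probability measure on $\R$ with moment data $(1,0,1,a_3,a_4)$ exists if and only if the $3\times 3$ Hankel matrix of these moments is positive semidefinite, which for this data reduces to the single inequality $a_4-a_3^2-1\ge 0$; when it holds strictly the classical solution of the truncated Hamburger moment problem produces such a measure on exactly three points, and a standard moment-body estimate lets one choose it with support in a $C_2$-dependent interval (uniformly in the one remaining free parameter). I therefore set $a_4:=\max\{m_4,\,a_3^2+1+\gamma\}$, so that $a_4-a_3^2-1\ge\gamma>0$ automatically, and let $\xi_\gamma$ carry the resulting bounded three-point law, independent of $\xi^G$; its support is then bounded uniformly in $\gamma$, so the subexponential decay holds with constants independent of $\gamma$ (hence of $N$ in the application). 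Since $a_3^2-m_3^2=\bigl((1-\gamma)^{-3}-1\bigr)m_3^2=O(\gamma)$ uniformly (using $m_3^2\le C_2$) and $m_3^2+1-m_4=-(m_4-m_3^2-1)\le 0$ by hypothesis, one gets $0\le a_4-m_4\le C\gamma$; feeding this, together with $|a_4|\le C(C_2)$, into the formula above for $m_4(\xi')$ yields $|m_4(\xi')-m_4|\le C\gamma$ with $C=C(C_2)$, which is \eqref{m4m4}. The remaining moment identities hold by construction.

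\emph{Where the difficulty is.} The mechanical part---the expansion in Step 1 and the classical moment-problem facts---is routine; the one genuinely non-mechanical point is securing positivity of the Hankel matrix, i.e.\ $a_4-a_3^2-1\ge 0$, \emph{after} the forced dilation $m_3\mapsto(1-\gamma)^{-3/2}m_3$ has already perturbed the third moment, while still keeping $m_4(\xi')$ within $O(\gamma)$ of $m_4$. This is exactly where both hypotheses are used and cannot be dropped: $m_4\le C_2$ bounds $m_3^2$---hence $a_3$ and the support of $\xi_\gamma$---uniformly, and $m_4-m_3^2-1\ge 0$, crucially allowed to be non-strict, guarantees that the choice $a_4=\max\{m_4,\,a_3^2+1+\gamma\}$ is simultaneously admissible (Hankel-positive) and within $C\gamma$ of $m_4$.
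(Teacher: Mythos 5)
Your proposal is correct and follows essentially the same route as the paper: the same moment identities for the Gaussian mixture force $a_3=(1-\gamma)^{-3/2}m_3$, and an admissible fourth moment within $O(\gamma)$ of $m_4$ is then fed into an existence statement for bounded (hence subexponential) laws with prescribed first four moments. The only cosmetic differences are that the paper takes $m_4(\xi_\gamma)=m_3(\xi_\gamma)^2+(m_4-m_3^2)$, so the constraint $m_4-m_3^2-1\ge 0$ is preserved exactly, and simply asserts the existence step \eqref{a1}, whereas you take $a_4=\max\{m_4,\,a_3^2+1+\gamma\}$ and justify existence through Hankel positivity and an explicit three-point law with support bounded in terms of $C_2$.
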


\noindent 
{\it Proof.} It is easy to see by an
explicit construction that the following holds: 
\begin{align}\label{a1}
& \text{For  any given numbers $m_3, m_4$, with $m_4-m_3^2-1\ge 0$
there is a random}  \nonumber  \\ & 
  \text{ variable $X$ with  first four moments
$0, 1, m_3, m_4$ and with subexponential decay.}
\end{align}
For any real random variable
$\zeta$, independent of $\xi^G$, and with the first 4 moments being
 $0$, $1$, $m_3(\zeta)$
 and $m_4(\zeta)<\infty$, the first 4 moments of 
$$
\zeta'=(1-\gamma)^{1/2}\zeta+\gamma^{1/2}\xi^G
$$
are $0$, $1$, 
\be\label{relm3}
m_3(\zeta')=(1-\gamma)^{3/2}m_3(\zeta)
\ee
 and 
 \be\label{relm4}
m_4(\zeta')=(1-\gamma)^{2}m_4(\zeta) +6\gamma-3\gamma^2.
 \ee

Using \eqref{a1}, we obtain that for any $\gamma>0$
there exists a real random variable $\xi_\gamma$ such that the 
first four moments are $0$, $1$, 
$$
m_3(\xi_\gamma)=(1-\gamma)^{-3/2} m_3
$$
 and
 $$
 m_4(\xi_\gamma)=m_3(\xi_\gamma)^2+(m_4-m^2_3).
 $$ 
With $m_4\leq C_2$, we have $m_3^2\leq C_2^{3/2}$, thus   
$$
 |m_4(\xi_\gamma)-m_4 |\leq C\gamma
$$
for some positive constant  $C$ depending on $C_2$. 
Hence with \eqref{relm3} and \eqref{relm4}, we obtain that $\xi'
=(1-\gamma)^{1/2}\xi_\gamma+\gamma^{1/2}\xi^G$
 satisfies $m_3(\xi')=m_3$ and \eqref{m4m4}.
This completes the proof of Lemma \ref{fmam}.
\qed

\section{Large Deviation of  Local Semicircle Law}\label{ld-semi}

We first reprove the large deviation of local semicircle law
 given in \cite{EYY}.  The result of  this section is relevant only for
$\eta \ge M^{-1}$.

\begin{theorem}\label{thm:detailed}
  Assume the $N\times N$ random matrix $H$ satisfies \eqref{sum}, \eqref{de-de+}, 
\eqref{speccond} and \eqref{subexp}, $\E\, h_{ij}=0$, for any $1\leq i,j\leq N$.
 Let $z=E+i\eta$ $(\eta>0)$ and let $\theta(z)$ be a non-negative function  defined  by
\be\label{q}
\theta =\theta(z):=\frac{1}{|1-m_{sc}(z)^2|}+
\frac{1}{\max\big\{\delta_+\,\,, \,\left|\re  m_{sc}^2(z)-1\right|\big\}}.
 \ee
Let  $\kappa\equiv ||E|-2|$. Then for all 
 $z=E+i\eta$ with 
\be\label{relkaeta}
|E|\leq 5,  \qquad  \frac{1}{N}< \eta \le 10,
\qquad  \sqrt{M\eta}\ge (\log N)^{12+3\al}\theta^2(z)(\kappa+\eta)^{1/4}
\ee 
we have 
\be\label{mainlsresult}
\P\left\{\max_{i}|G_{ii}(z)-m_{sc}(z)|\geq (\log N)^{6+2\al} 
\frac{(\kappa+\eta)^{1/4}}{\sqrt{M\eta}}\,\,\theta(z)\right\}\leq CN^{-c(\log \log  N)}
\ee
and
\be\label{mainlsresult2}
\P\left\{\max_{i\ne j}|G_{ij}(z)|\geq (\log N)^{6+2\al} 
\frac{(\kappa+\eta)^{1/4}}{\sqrt{M\eta}}\right\}\leq CN^{-c(\log \log  N)}
\ee
for sufficiently large N with positive some  constants $c$ and $C>0$ that depend
 only $\al$ and $\beta$ in 
\eqref{subexp} and $\delta_-$ in \eqref{de-de+} and \eqref{speccond}.
\end{theorem}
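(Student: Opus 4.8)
The plan is to run the standard self-consistent equation argument for the resolvent, but controlling the relevant random error terms via high-moment/large-deviation bounds rather than the second-moment methods that would only yield the $(M\eta)^{-1/2}$ accuracy. Write $G = (H-z)^{-1}$ and let $G^{(i)}$ denote the resolvent of the minor $H^{(i)}$ obtained by deleting the $i$-th row and column. The Schur complement formula gives the identity
\be\label{plan:schur}
\frac{1}{G_{ii}} = h_{ii} - z - \sum_{k,l\ne i} h_{ik}\, G^{(i)}_{kl}\, h_{li} = -z - m_{sc} + \Upsilon_i,
\ee
where the error $\Upsilon_i$ collects $h_{ii}$, the diagonal part $\sum_{k\ne i}\sigma_{ik}^2 (G^{(i)}_{kk} - m_{sc})$, the fluctuation of $\sum_k |h_{ik}|^2 G^{(i)}_{kk}$ around its conditional expectation, and the genuinely off-diagonal sum $\sum_{k\ne l} h_{ik} G^{(i)}_{kl} h_{li}$. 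First I would set up a bootstrap: assume on a large-probability event that $\max_i |G_{ii} - m_{sc}| \le (\log N)^{C}(\kappa+\eta)^{1/4}(M\eta)^{-1/2}\theta$ and $\max_{i\ne j}|G_{ij}|$ is comparably small (the target bounds \eqref{mainlsresult}--\eqref{mainlsresult2}), valid at a slightly larger $\eta$ or from a crude initial estimate, and then show the same bound propagates — indeed improves — using \eqref{plan:schur}.

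The key analytic input is the estimate on $\Upsilon_i$. Conditioning on $H^{(i)}$, the off-diagonal sum $Z_i := \sum_{k\ne l\ne i} h_{ik} G^{(i)}_{kl} h_{li}$ has zero conditional mean, and its fluctuation is governed by a large-deviation lemma for quadratic forms in independent subexponential variables (of the type proved in \cite{EYY}): with overwhelming probability $|Z_i| \le (\log N)^{C}\big(\sum_{k\ne l}\sigma_{ik}^2\sigma_{il}^2 |G^{(i)}_{kl}|^2\big)^{1/2}$. By the Ward identity $\sum_l |G^{(i)}_{kl}|^2 = \eta^{-1}\im G^{(i)}_{kk}$ and $\max_k\sigma_{ik}^2 \le M^{-1}$, this is bounded by $(\log N)^C (M\eta)^{-1/2}(\im m_{sc} + \text{error})^{1/2}$, which near the edge gives the $(\kappa+\eta)^{1/4}$ factor. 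The diagonal fluctuation $\sum_k(|h_{ik}|^2 - \sigma_{ik}^2)G^{(i)}_{kk}$ is handled similarly. The one genuinely new point — the reason for the improved $(M\eta)^{-1}$ accuracy in \eqref{mainlsresultfake} — is that one must not merely bound each $\Upsilon_i$ but exploit cancellation when averaging $\frac1N\sum_i$; this requires estimating correlations between the $\Upsilon_i$'s for different $i$, which is where the "central-limit-type" gain of a further $(M\eta)^{-1/2}$ comes from. For the present theorem (Theorem \ref{thm:detailed}), however, only the weaker per-entry bound \eqref{mainlsresult} is claimed, so correlations are not yet needed here.

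Next, I would insert the $\Upsilon_i$ bound into \eqref{plan:schur}: writing $v_i := G_{ii} - m_{sc}$, the stability of the self-consistent equation $m_{sc} + (z+m_{sc})^{-1} = 0$ — quantified by the function $\theta(z)$ in \eqref{q}, which measures $1/|1-m_{sc}^2|$ and the distance from the edge — converts the smallness of $\Upsilon_i$ into the smallness of $v_i$. Concretely, from $G_{ii}^{-1} = -z - m_{sc} + \Upsilon_i = G_{ii,\mathrm{approx}}^{-1}(1 + O(\theta\,|\Upsilon_i| + \theta\,|\bar v|))$ with $\bar v = \frac1N\sum v_i$, one derives a self-improving inequality $\max_i |v_i| \lesssim \theta\,\max_i|\Upsilon_i| + \theta^2(\max_i|v_i|)^2 + \dots$; under the constraint $\sqrt{M\eta}\ge (\log N)^{12+3\al}\theta^2(\kappa+\eta)^{1/4}$ in \eqref{relkaeta} the quadratic term is subleading and the bound \eqref{mainlsresult} follows. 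The off-diagonal estimate \eqref{mainlsresult2} comes from the identity $G_{ij} = -G_{ii} G^{(i)}_{jj}\big(h_{ij} - \sum_{k,l} h_{ik}G^{(ij)}_{kl}h_{lj}\big)$ plus the same large-deviation bound on the resulting quadratic form, now with no diagonal contribution, which accounts for the absence of the $\theta$ factor in \eqref{mainlsresult2}. Finally, the continuity/lattice argument: prove the estimate first at $\eta$ of order one (where it is elementary), then decrease $\eta$ in small steps, at each step using Lipschitz continuity of $G_{ij}(z)$ in $z$ (with Lipschitz constant $\le \eta^{-2}$) to transfer the bound from a dense net of $z$-values, and use the bootstrap to upgrade the crude input to the sharp output. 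A union bound over the $N^{O(1)}$ net points costs only a polynomial factor, absorbed by the $N^{-c(\log\log N)}$ probability.

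\textbf{Main obstacle.} The crux is the large-deviation estimate for the quadratic forms $Z_i$ with the correct variance: one needs the full strength of the Ward identity to replace a naive $\sum|G^{(i)}_{kl}|$ bound (which would lose the factor $\eta^{-1/2}$) by $(\eta^{-1}\im G^{(i)}_{kk})^{1/2}$, and one must simultaneously control $\im G^{(i)}_{kk}$ — but that is exactly the quantity being estimated, so the argument is genuinely circular and must be closed by the bootstrap. Keeping the $(\kappa+\eta)^{1/4}$ edge factor sharp through all these steps, and verifying that the stability function $\theta(z)$ behaves as in \eqref{q} uniformly on the domain $D$ (including the delicate dependence on $\delta_+$ near the edge), is the technically demanding part; the rest is bookkeeping of logarithmic factors and the continuity argument.
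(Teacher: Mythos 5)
Your outline follows the paper's proof almost step for step: the Schur-complement self-consistent equation \eqref{mainseeq}, the large-deviation bound on the quadratic forms $Z_i$ and $Z_{ij}^{(ij)}$ combined with the Ward identity as in \eqref{B14} to produce the $(\kappa+\eta)^{1/4}(M\eta)^{-1/2}$ error, the separate treatment of the off-diagonal entries via \eqref{GijHij} (which explains the absent $\theta$ factor in \eqref{mainlsresult2}), the initial estimate at $\eta=10$, and the net-plus-continuity bootstrap in $\eta$ are exactly the content of Lemmas \ref{selfeq1}, \ref{thm: stepone} and \ref{lm:initial}.

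The one genuine gap is in the stability step. You treat the self-consistent equation as essentially scalar, writing $G_{ii}^{-1}=-z-m_{sc}+\Upsilon_i$ and closing the estimate through $\bar v=\frac1N\sum_i v_i$. For standard Wigner matrices this suffices because $\sum_j\sigma_{ij}^2v_j=\bar v$ exactly; here the equation is $G_{ii}^{-1}=-z-\sum_j\sigma_{ij}^2G_{jj}+\Upsilon_i$ and $\sum_j\sigma_{ij}^2v_j$ is a genuinely $i$-dependent average. After linearization one must invert the operator $1-m_{sc}^2\Sigma$ acting on the vector $(v_i-\bar v)_i$, and the required input is a bound on the $\ell^\infty\to\ell^\infty$ norm of $Q(1-m_{sc}^2\Sigma)^{-1}$, where $Q$ projects off the Perron eigenvector of $\Sigma$. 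The paper proves $\|Q(1-m_{sc}^2\Sigma)^{-1}\|_{\infty\to\infty}\le C\log N/q(z)$ with $q(z)=\max\{\delta_+,|1-\re m_{sc}^2|\}$ by a shifted-contraction argument (Lemma \ref{msc} and \eqref{inftynorm}): one expands the resolvent as a geometric series, bounds the first $n_0\sim\log N/q(z)$ terms using the trivial contraction $\|\frac{\zeta\Sigma+\tau}{1+\tau}Q\|_{\infty\to\infty}\le 1$, and controls the tail in $\ell^2$ using the spectral gap \eqref{de-de+}. This is precisely where the second term in the definition \eqref{q} of $\theta(z)$ and the $\delta_+$-dependence originate; without this ingredient your self-improving inequality does not close for general variance profiles. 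The rest of your plan is sound and matches the paper.
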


The theorem will be proved at the end of the section after collecting several lemmas.
The first lemma describes the
behavior of $m_{sc}$ in the various regimes, its proof is
elementary calculus. We use the notation $f\sim g$  for two positive functions
in some domain $D$  if
there is a positive universal constant $C$ 
such that $C^{-1}\le f(z)/g(z) \le C$ holds for all $z\in D$.
\begin{lemma}  We have for all $z$ with $\im z>0$ that
\be
   |m_{sc}(z)| = |m_{sc}(z)+z|^{-1}\le 1.
\label{zmsc2}
\ee

{F}rom  now on, let $z=E+i\eta$ with $|E|\le 5$ and $\eta>0$.
If $\eta\ge 10$, then we have
\be\label{largez}
   \im m_{sc}(z) \sim \eta^{-1}, \qquad 
  |m_{sc}(z)|\sim \eta^{-1},\qquad |1-m_{sc}^2(z)|\sim 1,\qquad |1-\re m_{sc}^2(z)|\sim 1.
\ee
If $\eta\le 10$, then we  have
\be
   |m_{sc}(z)|\sim 1, \qquad   |1-m_{sc}^2(z)|\sim \sqrt{\kappa+\eta}.
\label{smallz}
\ee
For the behavior of $|1-\re m_{sc}^2(z)|$ and $\im m_{sc} (z)$
we distinguish two cases.

{\it Case 1.} For $|E|\ge 2$ we have

\be\label{esmallfake}
\im m_{sc} (z)\sim\left\{\begin{array}{cc} 
 \frac{\eta}{\sqrt{\kappa+\eta}} & \mbox{if  $\kappa\ge\eta$} \\  & \\
\sqrt{\kappa+\eta} & \mbox{if $\kappa\le \eta$}
\end{array}
\right.
\ee
$$
   |1-\re m_{sc}^2(z)|\sim \sqrt{\kappa+\eta}.
$$

{\it Case 2.} For $|E|\le 2$ we have
$$
    \im m_{sc}(z)\sim \sqrt{\kappa+\eta},
$$
\be\label{esmall}
|1-\re m_{sc}^2 (z)|\sim\left\{\begin{array}{cc} 
 \kappa+\frac{\eta}{\sqrt{\kappa+\eta}} & \mbox{if  $\eta\le \kappa$} \\  & \\
\sqrt{\kappa+\eta} & \mbox{if $\kappa\le \eta$}
\end{array}
\right.
\ee
\qed
\end{lemma}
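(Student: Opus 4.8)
The plan is to reduce every assertion to the defining quadratic $m_{sc}^2 + zm_{sc} + 1 = 0$ together with the explicit formula $m_{sc}(z) = \tfrac12\big({-}z + \sqrt{z^2-4}\big)$. First, uniqueness of the root with positive imaginary part yields the reflection identity $m_{sc}(E+i\eta) = -\overline{m_{sc}(-E+i\eta)}$ (if $w$ solves $X^2+(-E+i\eta)X+1=0$ then $-\bar w$ solves $Y^2+(E+i\eta)Y+1=0$ and has $\im(-\bar w)=\im w>0$). Hence $|m_{sc}|$, $\im m_{sc}$, $|1-m_{sc}^2|$ and $|1-\re m_{sc}^2|$ all depend on $E$ only through $|E|$, so I may assume $E\ge 0$; with $\kappa=|E-2|$ write $E=2+s$, so $s=\kappa$ when $E\ge 2$ and $s=-\kappa$ when $E\le 2$.

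Next the algebraic identities. From $m_{sc} = -(z+m_{sc})^{-1}$ one gets $|m_{sc}|\,|z+m_{sc}|=1$, which is \eqref{zmsc2}; taking imaginary parts of the same relation gives $\im m_{sc} = |m_{sc}|^2(\eta+\im m_{sc})$, whence (as $\im m_{sc}>0$ for $\im z>0$) both $|m_{sc}|<1$ and the useful identity $1-|m_{sc}|^2 = \eta/(\eta+\im m_{sc})$. From $2m_{sc}+z = \sqrt{z^2-4}$ one gets $m_{sc}\sqrt{z^2-4} = 2m_{sc}^2+zm_{sc} = m_{sc}^2-1$, hence the factorization $|1-m_{sc}^2| = |m_{sc}|\,|z-2|^{1/2}|z+2|^{1/2}$. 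These settle most of the claims. Since $\big||z+m_{sc}|-|z|\big|<1$ and $|z|\in[\eta,\eta+5]$: for $\eta\ge 10$ we get $|m_{sc}|\sim\eta^{-1}$, then $\im m_{sc}\sim|m_{sc}|^2\eta\sim\eta^{-1}$ (the $\im m_{sc}$ on the right of the imaginary‑part identity is lower order), and $|1-m_{sc}^2|\sim 1$, $|1-\re m_{sc}^2|\sim 1$ because $|m_{sc}|^2\le\tfrac1{100}$; this is \eqref{largez}. For $\eta\le 10$, $|z+m_{sc}|$ is bounded above by an absolute constant, so $|m_{sc}|\sim 1$; and $|z+2|\sim 1$ while $|z-2| = \sqrt{\kappa^2+\eta^2}\sim\kappa+\eta$ for $E\ge 0$, $|E|\le 5$, so the factorization gives $|1-m_{sc}^2|\sim\sqrt{\kappa+\eta}$, which together with $|m_{sc}|\sim 1$ is \eqref{smallz}.

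It remains to estimate $\im m_{sc}$ for $\eta\le 10$; then $|1-\re m_{sc}^2|$ is free, since $\re m_{sc}^2 = (\re m_{sc})^2-(\im m_{sc})^2\le|m_{sc}|^2<1$ gives $|1-\re m_{sc}^2| = 1-\re m_{sc}^2 = (1-|m_{sc}|^2)+2(\im m_{sc})^2 = \eta/(\eta+\im m_{sc}) + 2(\im m_{sc})^2$, and substituting the forthcoming estimates for $\im m_{sc}$ reproduces exactly \eqref{esmall} and the $|1-\re m_{sc}^2|$ line of Case 1. For $|E|$ bounded away from $2$ the equivalences for $\im m_{sc} = \tfrac12\im\sqrt{z^2-4}$ are read off directly by inspecting the location of $z^2-4$ relative to the positive real axis. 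Near an edge, say $E=2+s$ with $|s|$ small, set $\delta:=z-2=s+i\eta$; then $z^2-4=\delta(4+\delta)$ and a Taylor expansion gives $m_{sc}+1 = \sqrt\delta - \tfrac12\delta + O(|\delta|^{3/2})$, so $\im m_{sc} = \im\sqrt{s+i\eta} - \tfrac\eta2 + O(|\delta|^{3/2})$ and $1-\re m_{sc}^2 = 2\re\sqrt{s+i\eta} - 2s + O(|\delta|^{3/2})$. Passing to polar coordinates for $\sqrt{s+i\eta}$ and splitting into the regimes $|s|\ge\eta$ and $|s|\le\eta$ yields the stated equivalences, where for $|E|\le 2$ the term $-2s=2\kappa$ supplies the $\kappa$‑contribution in \eqref{esmall}. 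The only step that needs genuine care—and which I expect to be the main nuisance—is this regime‑by‑regime bookkeeping near $E=\pm 2$: one must track the near‑cancellation of $\im\sqrt\delta$ against $\eta$ (and of $\re\sqrt\delta$ against $s$) and carry the expansion one order past the leading $\sqrt\delta$ term to see the $\kappa$‑term in $|1-\re m_{sc}^2|$; everything else is immediate from the three algebraic identities.
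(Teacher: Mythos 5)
Your overall strategy is sound, and most of it checks out: the reflection identity $m_{sc}(E+i\eta)=-\overline{m_{sc}(-E+i\eta)}$ justifying the reduction to $E\ge 0$ is correct; the three algebraic identities ($|m_{sc}||z+m_{sc}|=1$, $\im m_{sc}=|m_{sc}|^2(\eta+\im m_{sc})$, hence $1-|m_{sc}|^2=\eta/(\eta+\im m_{sc})$, and $1-m_{sc}^2=m_{sc}\sqrt{z^2-4}$) do give \eqref{zmsc2}, \eqref{largez} and \eqref{smallz} exactly as you say; and the observation $1-\re m_{sc}^2=\eta/(\eta+\im m_{sc})+2(\im m_{sc})^2$ is a genuinely clean way to reduce \eqref{esmall} and the $|1-\re m_{sc}^2|$ line of Case 1 to the $\im m_{sc}$ asymptotics (I verified that substituting the claimed $\im m_{sc}$ bounds into this identity does reproduce both cases on the bounded domain). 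Note the paper itself offers no argument beyond ``elementary calculus,'' so the only question is completeness of your edge analysis.

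There is, however, one step that fails as written: the near-edge estimate of $\im m_{sc}$ \emph{outside} the spectrum, i.e.\ Case 1 with $\eta\le\kappa$ and $\kappa$ small. There the target is $\im m_{sc}\sim \eta/\sqrt{\kappa+\eta}$, while your expansion $m_{sc}+1=\sqrt\delta-\tfrac12\delta+O(|\delta|^{3/2})$ only controls the \emph{modulus} of the remainder by $(\kappa+\eta)^{3/2}$, which exceeds the main term whenever $\eta\lesssim(\kappa+\eta)^2$, e.g.\ $\eta\ll\kappa^2$. Your proposed remedy, carrying the expansion one order further, does not repair this: any truncation with a modulus bound $O(|\delta|^{j})$ is swamped by arbitrarily small $\eta$, because for real $z>2$ the quantity $\im m_{sc}$ vanishes while $|\delta|^{j}$ does not. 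The missing ingredient is that the remainder $R(\delta)=\sqrt\delta\,(\sqrt{1+\delta/4}-1)=\delta^{3/2}h(\delta)$ (with $h$ analytic, real on the positive real axis) is itself real for $\delta=s>0$, so its imaginary part is $O(\eta\sup|R'|)=O(\eta\sqrt{|\delta|})\ll\eta/\sqrt{\kappa+\eta}$; equivalently, work with the exact factorization $\im m_{sc}=\im\bigl(\sqrt\delta\,\sqrt{1+\delta/4}\bigr)-\tfrac\eta2$ and the closed formulas $\im\sqrt w=\sqrt{(|w|-\re w)/2}$, $\re\sqrt w=\sqrt{(|w|+\re w)/2}$, which give the claimed two-regime behavior with no remainder at all. (A related small slip: away from the edge you write $\im m_{sc}=\tfrac12\im\sqrt{z^2-4}$, omitting the $-\eta/2$; for $|E|\ge 2+\delta_0$ and small $\eta$ both terms are of order $\eta$, and the conclusion $\im m_{sc}\sim\eta$ survives only because $E/\sqrt{E^2-4}-1$ is bounded below on $|E|\le 5$, so keep the subtraction explicit there too.) With the imaginary part of the remainder bounded by $C\eta\sqrt{|\delta|}$, or with the exact polar formulas, the rest of your bookkeeping goes through and the lemma follows.
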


Thus the  control function $\theta(z)$ has the following behavior 
\be\label{defgz2}
\theta(z)\sim \left\{\begin{array}{cc} 
1 & \mbox{if} \;\; \eta\ge 10, \\  \\
\min\left\{\delta_+^{-1},\,\,\,\,\sqrt{\kappa}/\eta,\,\,\, 
\kappa^{-1}\right\} & \qquad \mbox{if} \;\; \eta\le 10, \quad
 |E|\leq 2 {\rm\,\,\, and \,\,\,}\kappa\geq \eta, 
\\
\\(\kappa+\eta)^{-1/2} &\mbox{if} \;\; \eta\le 10, \quad
\mbox{and}\;\; \big\{ 2\le |E|\le 10 {\rm\,\,\, or \,\,\,}\kappa\leq \eta\big\}.
\end{array}
\right.
\ee

Note that the precise formula \eqref{q} for $\theta(z)$ is not important, only
 its asymptotic behavior for small $\kappa$, $\eta$ and $\delta_+$ is relevant.
 The theorem remains valid if $\theta(z)$ is replaced by $\widetilde \theta(z)$
 with $\widetilde\theta(z)\leq C \theta(z)$. In particular, $\theta(z)$ 
can be chosen to be order one when $E$ is not near the edges of the 
spectrum. If we are only concerned with the 
generalized Wigner ensemble $\eqref{VV}$, then by \eqref{de-de+2}
we can choose
 $\theta(z)=(\kappa+\eta)^{-1/2}$ for any $z=E+i\eta$ ($\eta>0$).
For universal Wigner matrices we have $\theta(z)\le C(\kappa+\eta)^{-1}$
for $|z|\le 10$,
i.e., using the parameter $A$ introduced in  Theorem \ref{lsc}, we have
\be
  \theta(z) \le \frac{C}{(\kappa +\eta)^{A/2}}, \qquad A=1,2, \quad |E|, \eta\le 10.
\label{thetaA}
\ee
Based upon these formulas, we also have, for any $z=E+i\eta$ with  $\eta>0$,
\be
 \im m_{sc}(z) + \frac{1}{\theta(z)} \le C\min\{ 1, \sqrt{\kappa+\eta}\}.
\label{upper}
\ee
\bigskip

First, we introduce some notations. Recall that $G_{ij}=G_{ij}(z)$ 
denotes the  matrix element
$$
G_{ij}=\left(\frac1{H-z}\right)_{ij}
$$
and  
  $$m(z)=m_N(z)=\frac1N\sum_{i=1}^NG_{ii}(z).
$$

\begin{definition}\label{basicd}
 Let ${\T}=\{k_1$, $k_2$, $\ldots$, $k_t\}\subset \{1,2, \ldots ,N\}$
 be  an unordered set of $|\bT|=t$ elements and let 
 $H^{(\bT)}$ be the $N-t$ by $N-t$ minor of $H$ after removing the
 $k_i$-th $(1\leq i\leq t)$ rows and columns. For $\bT=\emptyset$, we have $H^{(\emptyset)}=H$.
Similarly, we define $\ba^{(\ell;\,\,{\T})}$ the $\ell$-th column with 
$k_i$-th $(1\leq i\leq t)$
 elements removed. Sometimes, we just use the short notation 
$\ba^{\ell}$=$\ba^{(\ell;\,\,{\T})}$.
 For any ${\T}\subset \{ 1, 2, \ldots , N\}$ we introduce the following notations:
 \begin{align}
 G^{({\T})}_{ij}:=&(H^{({\T})}-z)^{-1}(i,j) \non\\
 Z^{({\T})}_{ij}:=&\ba^{i}\cdot(H^{({\T})}-z)^{-1}\ba^{j}=\sum_{k,l\notin {\T}}
\overline{\ba^{\,i}_k} G^{({\T})}_{k\,l}\ba^{j}_{l\,} \non\\
\wH^{({\T})}_{ij}:= & h_{ij}-z\delta_{ij}-Z^{({\T})}_{ij}. \non
 \end{align}
These quantities depend on $z$, but we mostly neglect this dependence in the notation. 
\end{definition}
\bigskip

The following two results were proved in our previous work
(Lemma 4.2 and Corollary B.3 of \cite{EYY}) and they will 
be our key inputs. We start with the  self-consistent perturbation formulas.

\begin{lemma}\label{basicIG} [Self-consistent Perturbation Formulas] Let 
$\bT\subset \{ 1, 2, \ldots, N\}$. For  simplicity, we use the
 notation $(i \,{\T})$ for $(\{i\}\cup {\T})$ and $(i j \,{\T})$
 for $(\{i,j\}\cup {\T})$. 
 Then we have the following identities:
\par\begin{enumerate}    
\item For any  $i\notin {\T}$ 
\be\label{GiiHii} 
 G^{({\T})}_{ii}=(\wH^{(i\,{\T})}_{ii})^{-1}.
 \ee
 \item For $i\neq j$ and $i,j\notin {\T}$
\be\label{GijHij} 
 G^{({\T})}_{ij}=-G^{({\T})}_{jj}G_{ii}^{(j\,{\T})}\wH^{(ij\,\,{\T})}_{ij}=
-G^{({\T})}_{ii}G_{jj}^{(i\,{\T})}\wH^{(ij\,\,{\T})}_{ij}.
 \ee
 	\item  For $i\neq j$ and $i,j\notin {\T}$
  \be\label{GiiGjii}
  G^{({\T})}_{ii}-G^{(j\,\,{\T})}_{ii}=
G^{({\T})}_{ij}G^{({\T})}_{ji}(G^{({\T})}_{jj})^{-1}.
  \ee
  \item  For any indices  $i$, $j$ and $k$ that are different  and 
  $i,j,k \notin {\T}$
 \be\label{GijGkij}
G^{({\T})}_{ij}-G^{(k\,\,{\T})}_{ij}=G^{({\T})}_{ik}G^{({\T})}_{kj}
(G^{({\T})}_{kk})^{-1} . \ee 
 \end{enumerate}
 
\end{lemma}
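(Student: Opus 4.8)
The plan is to derive all four identities from the Schur complement formula for block matrix inversion; no probabilistic input enters, only the elementary fact that for $\im z>0$ every submatrix $H^{(\bT)}-z$ of $H-z$ is invertible (it has strictly negative imaginary part, hence $\|(H^{(\bT)}-z)^{-1}\|\le(\im z)^{-1}$), and consequently so is every Schur complement produced below. Throughout I would freely decompose $H^{(\bT)}-z$ into blocks with one or two distinguished indices placed last; since this is a simultaneous permutation of rows and columns it leaves the entries of the inverse unchanged once matched back to the original indices. I would also use that the off-diagonal part of the $i$-th column of $H^{(\bT)}-z$, read inside the complementary block $H^{(i\bT)}-z$, is exactly the restricted column $\ba^{(i;\,i\bT)}$, so that expressions such as $\ba^{(i;\,ij\bT)}\cdot(H^{(ij\bT)}-z)^{-1}\ba^{(j;\,ij\bT)}$ are precisely the quantities $Z^{(ij\bT)}_{ij}$ of Definition \ref{basicd}.

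First I would prove \eqref{GiiHii}: isolating the index $i$ in $H^{(\bT)}-z$ produces the diagonal scalar $h_{ii}-z$, the off-diagonal vector $\ba^{(i;\,i\bT)}$, and the complementary block $H^{(i\bT)}-z$, and the Schur complement formula for the $(i,i)$ entry of the inverse gives $G^{(\bT)}_{ii}=\bigl(h_{ii}-z-\ba^{(i;\,i\bT)}\cdot(H^{(i\bT)}-z)^{-1}\ba^{(i;\,i\bT)}\bigr)^{-1}=(\wH^{(i\bT)}_{ii})^{-1}$. For \eqref{GijHij} I would isolate the two indices $i\ne j$ simultaneously: the $2\times2$ lower-right block of $H^{(\bT)}-z$ has entries $h_{ab}-z\delta_{ab}$ for $a,b\in\{i,j\}$, the rectangular block has columns $\ba^{(i;\,ij\bT)},\ba^{(j;\,ij\bT)}$, and the complementary block is $H^{(ij\bT)}-z$; computing the $2\times2$ Schur complement entrywise, its $(a,b)$ entry is $h_{ab}-z\delta_{ab}-Z^{(ij\bT)}_{ab}=\wH^{(ij\bT)}_{ab}$, so
\be
\begin{pmatrix} G^{(\bT)}_{ii} & G^{(\bT)}_{ij}\\ G^{(\bT)}_{ji} & G^{(\bT)}_{jj}\end{pmatrix}=\begin{pmatrix} \wH^{(ij\bT)}_{ii} & \wH^{(ij\bT)}_{ij}\\ \wH^{(ij\bT)}_{ji} & \wH^{(ij\bT)}_{jj}\end{pmatrix}^{-1}.
\ee
Inverting the $2\times2$ matrix and substituting \eqref{GiiHii} at the levels $j\bT$ and $i\bT$ (i.e. $\wH^{(ij\bT)}_{ii}=(G^{(j\bT)}_{ii})^{-1}$, $\wH^{(ij\bT)}_{jj}=(G^{(i\bT)}_{jj})^{-1}$) yields $G^{(\bT)}_{ij}=-G^{(\bT)}_{jj}G^{(j\bT)}_{ii}\wH^{(ij\bT)}_{ij}=-G^{(\bT)}_{ii}G^{(i\bT)}_{jj}\wH^{(ij\bT)}_{ij}$, which is \eqref{GijHij}.

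For \eqref{GiiGjii} and \eqref{GijGkij} I would isolate only the index $j$ in $H^{(\bT)}-z$, writing $H^{(j\bT)}-z$ for the complementary block and $\bv:=\ba^{(j;\,j\bT)}$ for the off-diagonal part of the $j$-th column. The rank-one update formula for the inverse gives, for $i,k\notin j\bT$,
\be
G^{(\bT)}_{ik}-G^{(j\bT)}_{ik}=\bigl((H^{(j\bT)}-z)^{-1}\bv\bigr)_i\;G^{(\bT)}_{jj}\;\bigl(\bv^{*}(H^{(j\bT)}-z)^{-1}\bigr)_k,
\ee
and the same block decomposition also gives the off-diagonal entries $G^{(\bT)}_{ij}=-\bigl((H^{(j\bT)}-z)^{-1}\bv\bigr)_i\,G^{(\bT)}_{jj}$ and $G^{(\bT)}_{jk}=-G^{(\bT)}_{jj}\,\bigl(\bv^{*}(H^{(j\bT)}-z)^{-1}\bigr)_k$. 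Substituting the last two relations into the first cancels $\bv$ and leaves $G^{(\bT)}_{ik}-G^{(j\bT)}_{ik}=G^{(\bT)}_{ij}G^{(\bT)}_{jk}(G^{(\bT)}_{jj})^{-1}$; taking $k=i$ gives \eqref{GiiGjii}, and keeping $i,j,k$ pairwise distinct (with the isolated index now called $k$ and the remaining pair $i,j$) gives \eqref{GijGkij}.

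I do not expect a genuine obstacle here: the entire content is linear algebra, and the only points requiring care are (i) checking that every matrix and Schur complement that gets inverted is in fact invertible, which always reduces to $\im z>0$; and (ii) keeping the restricted-column notation $\ba^{(\ell;\,\bT')}$ consistent across the nested levels $\bT\subset i\bT\subset ij\bT$, so that the Schur complements really assemble into the $\wH$-quantities of Definition \ref{basicd}. The identities \eqref{GiiGjii}--\eqref{GijGkij} can alternatively be obtained by multiplying out $(H^{(\bT)}-z)(H^{(j\bT)}-z)^{-1}$ in block form, which sidesteps the update formula at the cost of the same bookkeeping.
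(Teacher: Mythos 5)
Your proof is correct, and it is the standard argument: the paper itself does not reprove this lemma but imports it from Lemma 4.2 of \cite{EYY}, where the identities are obtained by exactly the same Schur-complement/block-inversion computations (isolating one or two rows and columns, identifying the Schur complement entries with the $\wH^{(\cdot)}$ quantities, and using $\im z>0$ for invertibility). So your proposal matches the intended proof in both substance and bookkeeping; no gaps.
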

\begin{lemma}\label{generalHWT}
Let $a_i$ ($1\leq i\leq N$) be $N$ independent random complex  variables with mean zero, 
variance $\sigma^2$  and having the uniform  subexponential decay \eqref{subexp}. 
Let $A_i$, $B_{ij}\in \C$ ($1\leq i,j\leq N$). 
 Then we have that 
 \begin{align}
\P\left\{\left|\sum_{i=1}^N a_iA_i\right|\geq 
(\log N)^{\frac32+\al} \sigma \,\Big(\sum_{i}|A_i|^2\Big)^{1/2}\right\}\leq & CN^{-\log\log N},
\label{resgenHWTD} \\
\P\left\{\left|\sum_{i=1}^N\overline a_iB_{ii}a_i-\sum_{i=1}^N\sigma^2 B_{ii}\right|\geq 
(\log N)^{\frac32+2\al} \sigma^2 \Big( \sum_{i=1}^N|B_{ii}|^2\Big)^{1/2}\right\}\leq &
 CN^{-\log\log N},\label{diaglde}\\
\P\left\{\left|\sum_{i\neq j}\overline a_iB_{ij}a_j\right|\geq (\log N)^{3+2\al} \sigma^2 
\Big(\sum_{i\ne j} |B_{ij}|^2 \Big)^{1/2}\right\}\leq & CN^{-\log\log N}, \label{resgenHWTO}
\end{align}
for some constants $C$ depending on $\al$ and $\beta$ in \eqref{subexp}.  
\end{lemma}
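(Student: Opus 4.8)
My plan is to prove all three estimates by the same two–step scheme: first truncate the $a_i$ at a scale $R$ so large that the truncation fails with probability already below $N^{-\log\log N}$, and then apply a classical concentration inequality to the truncated variables, which are bounded and hence sub-Gaussian with a controlled $\psi_2$-norm. By homogeneity I may assume $\sigma=1$ and, dividing through, that the relevant coefficient sum ($\sum_i|A_i|^2$ in \eqref{resgenHWTD}, $\sum_i|B_{ii}|^2$ in \eqref{diaglde}, $\sum_{i\ne j}|B_{ij}|^2$ in \eqref{resgenHWTO}) equals $1$. Set $x_0:=(1+\log\log N)\log N$, $R:=x_0^{\al}$, and $\Om_0:=\{\max_i|a_i|\le R\}$. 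By \eqref{subexp} and a union bound $\P(\Om_0^c)\le N\beta e^{-x_0}=\beta N^{-\log\log N}$, and on $\Om_0$ each $a_i$ equals its truncation $\tilde a_i:=a_i{\bf 1}(|a_i|\le R)$. The $\tilde a_i$ are independent, bounded by $R$, have $|\E\tilde a_i|\le N^{-10}$, $\big|\E|\tilde a_i|^2-1\big|\le N^{-10}$ and $\E|\tilde a_i|^4\le C$; hence replacing $a_i$ by $\tilde a_i$ inside any of the three sums changes it by a deterministic amount of size at most $N^{-5}$, negligible against the thresholds.

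First I would treat the linear form \eqref{resgenHWTD} by Bernstein's inequality applied to the independent, centred, bounded summands $\tilde a_iA_i-\E(\tilde a_iA_i)$, whose ranges are at most $2R|A_i|\le 2R$ and whose variances sum to at most $\sum_i|A_i|^2=1$. Since $R(\log N)^{3/2+\al}\gg1$, Bernstein sits in its exponential regime and gives, at threshold $\tfrac12(\log N)^{3/2+\al}$, a bound $2\exp\big(-c(\log N)^{3/2+\al}/R\big)=2\exp\big(-c(\log N)^{3/2}(\log\log N)^{-\al}\big)$, which is $\le N^{-\log\log N}$ for large $N$ because $(\log N)^{1/2}\gg(\log\log N)^{1+\al}$; combining with the bound on $\P(\Om_0^c)$ proves \eqref{resgenHWTD}. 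The diagonal form \eqref{diaglde} is the same argument applied to $Y_i:=|\tilde a_i|^2-\E|\tilde a_i|^2$, which are centred, bounded by $2R^2$, with variances $\le\E|\tilde a_i|^4\le C$: Bernstein at threshold $\tfrac12(\log N)^{3/2+2\al}$ yields failure probability $\le 2\exp\big(-c(\log N)^{3/2+2\al}/R^2\big)=2\exp\big(-c(\log N)^{3/2}(\log\log N)^{-2\al}\big)\le N^{-\log\log N}$.

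For the bilinear form \eqref{resgenHWTO}, on $\Om_0$ the sum equals $\tilde a^{*}B^{\mathrm{od}}\tilde a$ with $B^{\mathrm{od}}$ the off-diagonal part of $B$, so $\|B^{\mathrm{od}}\|_{\mathrm{HS}}=\big(\sum_{i\ne j}|B_{ij}|^2\big)^{1/2}=1$ and $\|B^{\mathrm{od}}\|_{\mathrm{op}}\le\|B^{\mathrm{od}}\|_{\mathrm{HS}}$. Since the $\tilde a_i$ are independent and bounded by $R$ (so $\|\tilde a_i\|_{\psi_2}\le CR$) and $|\E\,\tilde a^{*}B^{\mathrm{od}}\tilde a|=|\sum_{i\ne j}\overline{\E\tilde a_i}\,B_{ij}\,\E\tilde a_j|\le N^{-5}$, the Hanson--Wright inequality gives, at threshold $\tfrac14(\log N)^{3+2\al}$, a failure probability $\le 2\exp\big(-c\min\{(\log N)^{6+4\al}R^{-4},\,(\log N)^{3+2\al}R^{-2}\}\big)$. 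Because $R^{2}\sim(\log N)^{2\al}(\log\log N)^{2\al}\ll(\log N)^{3+2\al}$, the minimum exceeds $c(\log N)^{3}(\log\log N)^{-2\al}\gg(\log\log N)\log N$, so this is $\le N^{-\log\log N}$; together with $\P(\Om_0^c)$ this gives \eqref{resgenHWTO}. (To avoid quoting a complex Hanson--Wright bound one may instead truncate $\re a_i$ and $\im a_i$ separately and apply the real version to the resulting real quadratic form on $\R^{2N}$, whose diagonal blocks vanish; the independence of the real and imaginary parts assumed in Theorem~\ref{mainsk} makes those $2N$ coordinates independent.)

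I do not foresee a genuine obstacle along this route: the only work is the routine verification that the truncation corrections are negligible and that one sits in the exponential regime of Bernstein/Hanson--Wright. An alternative proof, more internal to the present paper, is the direct moment method: bound $\E|\,\cdot\,|^{2p}$ for $p\sim(\log N)\log\log N$ by expanding, using independence and mean zero so that every index appears at least twice, the single-variable estimate $\E|a_i|^{k}\le(Ck)^{\al k}$ that follows from \eqref{subexp}, and the embedding $\ell^2\hookrightarrow\ell^m$ ($m\ge2$) to collapse the coefficient sums, and then apply Markov's inequality. For \eqref{resgenHWTD} and \eqref{diaglde} this is short; for \eqref{resgenHWTO} the one real difficulty of that approach is the graph-combinatorial bookkeeping needed to show that the surviving terms are controlled by $\big(\sum_{i\ne j}|B_{ij}|^2\big)^{p}$ alone (without the operator norm of $B$), which one extracts by repeated Cauchy--Schwarz along the cycles of the index multigraph.
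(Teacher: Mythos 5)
Your argument is correct, but it is not the route the paper takes: the paper does not prove Lemma \ref{generalHWT} here at all, it quotes it from Corollary B.3 of \cite{EYY}, where these estimates are obtained by the high-moment method --- one bounds $\E|\cdot|^{p}$ for $p\sim \log N\cdot\log\log N$ by expanding the sum, using independence, the mean-zero condition and the single-variable bound $\E|a_i/\sigma|^k\le (Ck)^{\alpha k+1}$ implied by \eqref{subexp}, and then applies Markov's inequality. That is essentially the ``alternative'' you sketch in your last paragraph. Your primary route (truncate at $R=((1+\log\log N)\log N)^{\alpha}$, then apply Bernstein to the bounded truncated variables, respectively Hanson--Wright to the off-diagonal quadratic form) is a genuinely different and more modular argument, and your bookkeeping of exponents is right: the subexponential regime of Bernstein yields $\exp\bigl(-c(\log N)^{3/2}(\log\log N)^{-O(1)}\bigr)\le N^{-\log\log N}$ precisely because $(\log N)^{1/2}\gg(\log\log N)^{1+2\alpha}$, and the truncation corrections to the mean and second moment are of size $N^{-c\log\log N}$, hence negligible against the thresholds. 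What the moment method buys is self-containedness and insensitivity to any structure of the complex law beyond independence, mean zero and \eqref{subexp}; what your route buys is brevity for \eqref{resgenHWTD} and \eqref{diaglde} and the avoidance of the graph-combinatorial bookkeeping that you correctly identify as the only real work in the moment-method proof of \eqref{resgenHWTO}.

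The one step to tighten is the complex Hanson--Wright application. The lemma is invoked in the proofs of Theorems \ref{lsc} and \ref{thm:detailed}, which do \emph{not} assume that the real and imaginary parts of the entries are independent (that hypothesis appears only in Theorem \ref{mainsk}); so your fallback of passing to a real quadratic form on $\R^{2N}$ with independent coordinates is not available at the stated level of generality, since the two coordinates within a pair $(\re a_i,\im a_i)$ may be dependent. You need either a version of Hanson--Wright valid for complex vectors with independent entries (obtainable, e.g., by decoupling the bilinear form $\sum_{i\ne j}\bar a_iB_{ij}a_j$ at the level of the index $i$ and conditioning on one copy), or you can bypass it entirely by writing $\sum_{i\ne j}=\sum_{i<j}+\sum_{i>j}$ and $\sum_{i<j}\bar a_iB_{ij}a_j=\sum_j\bigl(\sum_{i<j}\bar a_iB_{ij}\bigr)a_j$, applying \eqref{resgenHWTD} conditionally together with a second application of \eqref{resgenHWTD} to control $\sum_j\bigl|\sum_{i<j}\bar a_iB_{ij}\bigr|^2$ by $\sum_{i\ne j}|B_{ij}|^2$. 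This is a fixable citation issue rather than a gap in the strategy, but as written the parenthetical remedy imports an assumption the lemma does not make.
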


We start with determining  a system of self-consistent  equations for the
diagonal matrix elements of the resolvent.
We can write  $G_{ii}$  as follows, 
$$
   G_{ii} = (\wH^{(i)}_{ii})^{-1}=\frac{1}{\E_{\ba^i}\wH^{(i)}_{ii}+\wH^{(i)}_{ii}-
\E_{\ba^i}\wH^{(i)}_{ii}},
$$
where $\E_{\ba^i} = \E_i$ denotes the expectation with respect to 
the elements in the $i$-th column of the matrix $H$, i.e., w.r.t.
$\ba^i = (h_{1i}, h_{2i}, \ldots, h_{Ni})^t$. 
Introduce the notations 
\be
   A_i: =\sigma^2_{ii}G_{ii}+\sum_{j\neq i}\sigma^2_{ij}\frac{G_{ij}G_{ji}}{G_{ii}}
\label{defA}
\ee
and
$$
  Z_i:=  \sum_{k,l\ne i} 
\Big[ \ov{\ba^i_k} G^{(i)}_{k\,l}\ba^i_l-\E_{\ba^i} \ov{\ba^i_k} 
G^{(i)}_{k\,l}\ba^i_l\Big]=Z_{ii}^{(i)}-\E_i Z_{ii}^{(i)}.
$$

Using the fact that $G^{(i)}=(H^{(i)}-z)^{-1}$ is  independent of $\ba^i$ and $\E_{\ba^i}
 \overline {\ba^i_k}\ba^i_l=\delta_{k\,l}\sigma^2_{ik}$, we obtain
$$
\E_{\ba^i}\wH^{(i)}_{ii}=-z- \sum_{j\neq i}\sigma^2_{ij}G^{(i)}_{jj}
$$
and
$$
  \wH^{(i)}_{ii}-\E_{\ba^i}\wH^{(i)}_{ii}
  = h_{ii} - Z_i . \qquad
$$
Denote by 
\be\label{seeqerror}
\Upsilon_i=\Upsilon_i(z):=A_i +
\left(\wH^{(i)}_{ii}-\E_{\ba^i}\wH^{(i)}_{ii}\right) = A_i +h_{ii}-Z_i
\ee
and we  have the identity 
\be\label{mainseeq}
G_{ii} = \frac{1}{-z- \sum_{j}\sigma^2_{ij}G_{jj}+\Upsilon_i}.
\ee
Let
$$
  v_i := G_{ii}-m_{sc}, \qquad m:=\frac{1}{N}\sum_i G_{ii}, \qquad \bar v:=\frac{1}{N}
  \sum_i v_i = \frac{1}{N}\sum_i (G_{ii}-m_{sc}).
$$

\bigskip

We will estimate the following key quantities
\be\label{defLambda}
  \Lambda_d:=\max_k |v_k| = \max_k |G_{kk}-m_{sc}|, \qquad
 \Lambda_o:=\max_{k\ne \ell} |G_{k\ell}|,
\ee
where the subscripts refer to ``diagonal'' and ``offdiagonal'' matrix elements.
All the quantities defined so far depend on the spectral parameter $z=E+i\eta$, but
we will mostly omit this fact from the notation. The real part $E$ will always be kept
fixed. For the imaginary part we will use a continuity argument at the end of the proof
and then the dependence of $\Lambda_{d,o}$ on $\eta$ will be indicated.

\bigskip

Both quantities $\Lambda_d$ and $\Lambda_o$ will be typically small, 
eventually we will prove that their size is less than $(M\eta)^{-1/2}$,
modulo logarithmic  corrections and a factor involving the distance to the
edge. We thus define the exceptional event
\be
   \Omega_\Lambda =\Omega_\Lambda(z):= \Big\{ \Lambda_d(z) + \Lambda_o(z)
\ge \frac{(\log N)^{-3/2}}{\theta(z)} \Big\}.
\label{defOm}
\ee
We will always work in $\Omega_\Lambda^c$, and, in particular,  we will have
$$
   \Lambda_d(z) + \Lambda_o(z)
\le C(\log N)^{-3/2}
$$
since $1/\theta(z)\le C$ by \eqref{upper}.
Define the set
$$
 S:=\{ z=E+i\eta\; : \;
 |E|\leq 5,  \quad  N^{-1}< \eta \le 10\}.
$$
We thus have
\be
c\le |G_{ii}(z)|\le  C
\qquad \mbox{in $\Omega_\Lambda^c$}
\label{lowerbound}
\ee
for any $z\in S$
with some universal constant $c>0$.
 Here we estimated $\big| |G_{ii}|- |m_{sc}|\big|\le \Lambda_d$, and we  
used from \eqref{largez}--\eqref{smallz} that $m_{sc}(z)$ satisfies
$|m_{sc}(z)|\sim 1$  for $z\in S$.

Thus, a special case of \eqref{GijGkij} or \eqref{GiiGjii},
$$
   G_{k\,l}^{(i)}= G_{k\,l} - \frac{G_{ki}G_{il}}{G_{ii}}, \qquad  i\ne l,k,
$$
together with \eqref{lowerbound}
implies that for any $i$ and with a sufficiently large constant $C$
\be
    \max_{k\ne l} |G_{k\,l}^{(i)}|\le \Lambda_o + C\Lambda_o^2 \le C\Lambda_o
 \qquad \mbox{in $\Omega_\Lambda^c$},
\label{Gkl}
\ee
\be
   C^{-1}\le |G_{kk}^{(i)}|\le C,  \qquad \mbox{for all 
$k\ne i$ and in $\Omega_\Lambda^c$}
\label{Gkk}
\ee
\be
 |G_{kk}^{(i)}-m_{sc}|\le \Lambda_d +  C\Lambda_o^2
 \qquad \mbox{for all $k\ne i$ and in $\Omega_\Lambda^c$}
\label{Gkkm}
\ee
 and 
\be
   |A_i|\le \frac{C}{M} + C\Lambda^2_o \qquad \mbox{in $\Omega_\Lambda^c$}.
\label{Aest}
\ee
Here we have used that 
$$
\Big | \frac{G_{ki}G_{il}}{G_{ii}} \Big | \le c^{-1} \Lambda^2_o 
\qquad \mbox{in $\Omega_\Lambda^c$}
$$
with $c$ being the constant in  \eqref{lowerbound}
and we also used that $\sum_{j} \sigma_{ij}^2 =1$.
Similarly, with one more expansion step, we get
\be
  \max_{ij}\max_{k\ne l} |G_{k\,l}^{(ij)}|\le C\Lambda_o,  \qquad
   \max_{ij}\max_{k} |G_{kk}^{(ij)}|\le C \qquad \mbox{in $\Omega_\Lambda^c$}
\label{Gklij}
\ee
and
\be
 |G_{kk}^{(ij)}-m_{sc}|\le \Lambda_d +  C\Lambda_o^2
 \qquad \mbox{for all $k\ne i,j$ and in $\Omega_\Lambda^c$}.
\label{Gkkim}
\ee
\medskip

Using these estimates, 
the following lemma shows that $Z_i$ and  $Z_{ij}^{(ij)}$ are small
assuming  $\Lambda_d+\Lambda_o$ is small and the $h_{ij}$'s are not
too large. The control parameter for the $Z$'s is $\Phi=\Phi(z)$,
defined below \eqref{defphithe}.
These bounds hold uniformly in $S$.

\begin{lemma}\label{selfeq1} 
Denote by 
\be\label{defphithe}
 \Phi:= \Phi(z)= \frac{ \sqrt{\Lambda_d} +\Lambda_o+(\kappa+\eta)^{1/4}}{ \sqrt{M\eta} } ,
\ee
and define the exceptional events 
\begin{align} 
\Omega_1 &:= \left \{  \max_{1\leq i,j\leq N}|h_{ij}|\geq
 (\log N)^{2\al}|\sigma_{ij}| \right \} \non \\
\Omega_d(z) &:= \left \{   \max_{i}|Z_i(z)|\ge (\log N)^{5+2\al}\Phi (z)\right \} 
 \non \\
\Omega_o(z) &:= \left \{    \max_{i\ne j}|Z_{ij}^{(ij)}(z)| \ge (\log N)^{5+2\al}
 \Phi (z)\right \} \non  
\end{align}
and we let 
\be
  \Omega:= 
\Omega_1 \cup \bigcup_{z\in S}
\Big[\big( \Omega_d(z) \cup \Omega_o(z) \big)\cap \Omega_\Lambda^c(z) \Big]
\label{defOmega}
\ee 
to be the set of all exceptional events.
Then we have 
\be\label{B12}
\P (\Omega)    \le CN^{-c(\log \log  N)}.
\ee
\end{lemma}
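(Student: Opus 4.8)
The plan is to decompose the exceptional event $\Omega$ into its three constituents and bound each one separately, then take a union bound, using the large deviation estimates from Lemma \ref{generalHWT} as the workhorse. First I would dispose of $\Omega_1$: by the uniform subexponential decay \eqref{subexp} with $x = (\log N)^{2}$ (so that $x^\al = (\log N)^{2\al}$), we have $\P(|h_{ij}| \ge (\log N)^{2\al}|\sigma_{ij}|) \le \beta e^{-(\log N)^2}$, which after a union bound over the at most $N^2$ pairs $(i,j)$ is far smaller than $N^{-c(\log\log N)}$. The point of conditioning on $\Omega_1^c$ is that it provides the deterministic pointwise control $|h_{ij}| \le (\log N)^{2\al}|\sigma_{ij}| \le (\log N)^{2\al} M^{-1/2}$, which will be needed to bound stray terms in the quadratic forms defining $Z_i$ and $Z_{ij}^{(ij)}$.

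Next I would handle $\Omega_d(z) \cap \Omega_\Lambda^c(z)$ for a fixed $z \in S$. Recall $Z_i = \sum_{k,l \ne i}[\ov{\ba^i_k} G^{(i)}_{kl}\ba^i_l - \E_i(\cdots)]$, which splits into a diagonal part $\sum_{k\ne i}(|\ba^i_k|^2 - \sigma^2_{ik})G^{(i)}_{kk}$ and an off-diagonal part $\sum_{k\ne l,\ i}\ov{\ba^i_k}G^{(i)}_{kl}\ba^i_l$. Since $G^{(i)}$ is independent of the column $\ba^i$, I would apply \eqref{diaglde} to the diagonal part and \eqref{resgenHWTO} to the off-diagonal part, conditionally on $H^{(i)}$. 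The relevant variances are controlled by $\max_{ik}\sigma^2_{ik} \le M^{-1}$; the sum $\sum_{k}|G^{(i)}_{kk}|^2$ is bounded using \eqref{Gkk} together with the standard identity $\sum_k |G^{(i)}_{kl}|^2 = (\im G^{(i)}_{ll})/\eta$ (Ward identity), and similarly $\sum_{k\ne l}|G^{(i)}_{kl}|^2 \le \sum_l (\im G^{(i)}_{ll})/\eta$. On $\Omega_\Lambda^c$ one has $\im G^{(i)}_{ll} \le \im m_{sc} + \Lambda_d + C\Lambda_o^2 \le C(\sqrt{\kappa+\eta} + \Lambda_d + \Lambda_o^2)$ via \eqref{Gkkm} and \eqref{upper}, so these sums are at most $CN(\sqrt{\kappa+\eta}+\Lambda_d+\Lambda_o^2)/\eta$; plugging in, each large-deviation bound produces a factor $(\log N)^{O(1)} M^{-1/2}\sqrt{N(\sqrt{\kappa+\eta}+\Lambda_d+\Lambda_o^2)/\eta}$ — wait, the $M$-normalization interacts with $N$ through $\sum_k |B_{kk}|^2 \lesssim (N/M)\cdot(\dots)$ — and after collecting the powers of $M$, $N$, $\eta$ one arrives at a bound of the form $(\log N)^{5+2\al}(\sqrt{\Lambda_d}+\Lambda_o+(\kappa+\eta)^{1/4})/\sqrt{M\eta} = (\log N)^{5+2\al}\Phi(z)$, which is exactly the threshold in $\Omega_d(z)$. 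The same argument applies verbatim to $Z_{ij}^{(ij)}$ using \eqref{Gklij} and \eqref{Gkkim}, giving $\Omega_o(z)$. In each case the conditional failure probability is $CN^{-\log\log N}$, and a union bound over $i$ (resp. $i\ne j$) costs only polynomial factors.

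The main obstacle, and the one requiring genuine care, is upgrading these fixed-$z$ estimates to the uniform-in-$z$ statement $\bigcup_{z\in S}$, since $S$ is a continuum. The standard device — which I would use — is a net argument: choose a finite $N^{-10}$-net (say) of $S$, of cardinality $N^{O(1)}$, apply the union bound over net points, and then control the fluctuation of all relevant quantities ($Z_i$, $Z_{ij}^{(ij)}$, $G_{ij}$, $m_{sc}$) between a net point and a nearby $z$ using the crude a priori Lipschitz bound $\|\pt_z (H-z)^{-1}\| = \|(H-z)^{-2}\| \le \eta^{-2} \le N^2$ valid on all of $S$; since the threshold functions $\Phi(z)$ and $1/\theta(z)$ vary by at most a constant factor over such a tiny neighborhood, the discretization error is negligible. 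One subtlety: the definition of $\Omega$ intersects with $\Omega_\Lambda^c(z)$ at each $z$, and $\Omega_\Lambda^c$ is not monotone in $z$, so I must be careful to carry the conditioning pointwise through the net argument — but since I am only ever upper-bounding $\P(\Omega)$, it suffices to bound $\P\big(\Omega_1 \cup \bigcup_{z\in \mathcal{N}}[(\Omega_d(z)\cup\Omega_o(z))\cap\Omega_\Lambda^c(z)]\big)$ on the net plus a deterministic continuity correction, and this is harmless. Collecting everything, $\P(\Omega) \le \P(\Omega_1) + N^{O(1)}\cdot CN^{-\log\log N} \le CN^{-c(\log\log N)}$, which is \eqref{B12}. \qed
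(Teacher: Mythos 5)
Your overall architecture (dispose of $\Omega_1$ by \eqref{subexp}, fixed-$z$ large deviation bounds for $Z_i$ and $Z_{ij}^{(ij)}$ via Lemma \ref{generalHWT}, then an $N^{-10}$-net plus crude Lipschitz bounds to pass to all $z\in S$) is the same as the paper's, and the net argument and the role of $\Omega_1^c$ are described essentially correctly. But there is a genuine gap in the central variance estimate, and it is exactly at the spot where you wrote ``wait''. You bound the quadratic-form variance by pulling out $\max_{ik}\sigma_{ik}^2\le M^{-1}$ and then summing the unweighted Green function entries, $\sum_{k\ne l}|G^{(i)}_{kl}|^2\le \sum_l \im G^{(i)}_{ll}/\eta \le CN(\sqrt{\kappa+\eta}+\Lambda_d+\Lambda_o^2)/\eta$. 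Feeding this into \eqref{resgenHWTO} gives a bound of order $(\log N)^{O(1)}M^{-1}\sqrt{N(\sqrt{\kappa+\eta}+\Lambda_d+\Lambda_o^2)/\eta}$, which exceeds the target $(\log N)^{5+2\al}\Phi(z)$ by a factor $\sqrt{N/M}$. This is harmless only when $M\sim N$ (generalized Wigner), but Lemma \ref{selfeq1} is used for \emph{universal} Wigner matrices, where $M$ can be much smaller than $N$ (band matrices), and there your bound simply does not give $\Phi$. Your subsequent ``after collecting the powers of $M$, $N$, $\eta$ one arrives at $\Phi$'' is an assertion, not a derivation, and as written it is false in the band case.

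The missing idea is to exploit the row normalization $\sum_k\sigma_{ik}^2=1$ rather than only $\sigma_{ik}^2\le M^{-1}$: apply the large deviation lemma to the rescaled variables $\ba^i_k/\sigma_{ik}$ with the weights absorbed into the matrix, i.e.\ with $B_{kl}=\sigma_{ik}G^{(i)}_{kl}\sigma_{li}$, and then estimate
$$
\sum_{k,l\ne i}\big|\sigma_{ik}G^{(i)}_{kl}\sigma_{li}\big|^2
\le \frac1M\sum_{k\ne i}\sigma_{ik}^2\,\big(|G^{(i)}|^2\big)_{kk}
=\frac1M\sum_{k\ne i}\sigma_{ik}^2\,\frac{\im G^{(i)}_{kk}}{\eta}
\le \frac{\Lambda_d+C\Lambda_o^2+\im m_{sc}}{M\eta}\le C\Phi^2,
$$
where only one factor $\sigma_{li}^2$ is bounded by $M^{-1}$ and the other is kept inside the sum so that the Ward identity is averaged against the stochastic weights $\sigma_{ik}^2$; this is \eqref{B14} in the paper and it is what eliminates the spurious factor $N$. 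A secondary, more minor point: the conditioning on $\Omega_\Lambda^c(z)$ at an off-net point $z$ must be transferred to the net point $\widetilde z$, which the paper does by proving the fixed-$z$ estimate under the relaxed event $\wt\Omega_\Lambda^c$ with doubled threshold and using $\Omega_\Lambda^c(z)\subset\wt\Omega_\Lambda^c(\widetilde z)$; your phrase ``a deterministic continuity correction'' gestures at this but should be made explicit, since the quantity $\theta(z)^{-1}(\log N)^{-3/2}$ defining the event also moves with $z$.
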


\bigskip

{\it Proof}:
 Under the assumption of \eqref{subexp}, we have 
\be\label{resboundhij}
\P\left(\Omega_1 \right)\leq CN^{-c\log\log N},
\ee
therefore we can work on the complement set $\Omega_1^c$. Define
the event 
$$
 \wt\Omega_\Lambda(z):= \Big\{ \Lambda_d(z) + \Lambda_o(z)
\ge 2\frac{(\log N)^{-3/2}}{\theta(z)} \Big\}.
$$
Notice that the estimates \eqref{Gkl}--\eqref{Gkkim}
also hold on $\wt\Omega^c_\Lambda$, maybe with  different constants $C$.
We now prove that for any fixed $z\in S$, we have
\be\label{POdz}
\P\Big(\wt\Omega_\Lambda^c(z)\cap  \left \{   \max_{i}|Z_i(z)|\ge
  C(\log N)^{2+2\al}\Phi (z)\right \}  \Big)\leq CN^{-c\log\log N}
\ee
and
\be\label{Poz}
\P\Big(\wt\Omega_\Lambda^c(z)\cap 
 \left \{    \max_{i\ne j}|Z_{ij}^{(ij)}(z)| \ge (\log N)^{4+2\al}
 \Phi (z)\right \}\Big)\leq CN^{-c\log\log N}.
\ee

To see \eqref{POdz}, we
apply the estimate \eqref{diaglde} from the large 
deviation  Lemma \ref{generalHWT}, and we obtain that 
 \be\label{BI23}
|Z_{i}|
 \leq C(\log N)^{\frac{3}{2}+2\al}\sqrt{\sum_{k,l\not= i }\left|\sigma_{ik}
G^{(i)}_{k\,l}\sigma_{li}\right|^2}
\ee
holds with a probability larger than $1-CN^{-c(\log \log  N)}$ for sufficiently large $N$.

Denote by $u^{({i})}_\alpha$ and $\lambda_\al^{({i})}$ ($\al=1,2,\ldots ,N-1$)
 the eigenvectors and eigenvalues of $H^{({i})}$. Let $u^{({i})}_\alpha(l)$ denote the
 $l$-th coordinate of $u^{({i})}_\alpha$. 
 Then, using $\sigma_{il}^2 \le 1/ M$ and \eqref{Gkkm}, we have
\begin{align} \label{B14}
\sum_{k, l\not= i }\left|\sigma_{ik}
G^{(i)}_{k\,l}\sigma_{li}\right|^2 &\le \frac1M \sum_{k\not= i } \sigma_{ik}^2
 \left(|G^{(i\,)}|^2\right)_{kk}  \nonumber \\
&=
\frac1M \sum_{k\not= i } \sigma_{ik}^2 \sum_{\alpha}
\frac{|u^{(i\,)}_\alpha(k)|^2}{|\lambda_\alpha^{(i\,)}-z|^2}
\le 
\frac1M \sum_{k\not= i } \sigma_{ik}^2 \frac{\im G^{(i\,)}_{kk}(z)}{\eta}\non\\
& \le 
\frac{\Lambda_d + C\Lambda_o^2+  {\im m_{sc}(z) }}{M\eta} \non\\
&\le C \Phi^2 \qquad \mbox{in $\wt\Omega_\Lambda^c$}.
\end{align}
Here we defined $|A|^2:= A^* A$ for any matrix $A$ and we used  \eqref{upper}
to estimate $\im  m_{sc}(z) $.
Together with \eqref{BI23} we have proved \eqref{POdz} for a fixed $z$.

For the offdiagonal estimate \eqref{Poz}, for  $i\ne j$,
 we have from \eqref{resgenHWTO} that 
 \be\label{BI234}
|Z_{ij}^{(ij)}|
 \leq C(\log N)^{3+2\al}\sqrt{\sum_{k,l\not= i, j }\left|\sigma_{ik}
G^{(ij)}_{k\,l}\sigma_{lj}\right|^2}
\ee
holds with a probability larger than $1-CN^{-c(\log \log  N)}$ for sufficiently large $N$.
Similarly to \eqref{B14}, by using \eqref{Gkkim}, we get
$$
   \sum_{k,l\not= i, j }\left|\sigma_{ik}
G^{(ij)}_{k\,l}\sigma_{lj}\right|^2 \le C \Phi^2 \qquad \mbox{in $\wt\Omega_\Lambda^c$}.
$$
This proves \eqref{Poz}.

\bigskip

Now we start proving \eqref{B12}. First we
 choose an $N^{-10}$-net $\mathcal N$ in the set $S$,
i.e., a collection of points, 
$\{z_n\}_{n\in I}\subset S$, such that for any $z\in S$ there is $\widetilde z\in \mathcal N$
such that $|z-\widetilde z|\leq N^{-10}$. 
The net can be chosen such that $|I|\leq CN^{20}$.  
 Then \eqref{POdz} and \eqref{Poz} imply that
\be\label{znet}
\P \left( \exists \,\widetilde z\in \mathcal N \; , \; \mbox{s.t. 
$\wt\Omega_\Lambda^c(\widetilde z)$ holds and}
\max_{i}|Z_i(\widetilde z)|+\max_{i\ne j}|Z_{ij}^{(ij)}(\widetilde z)|\geq
 2(\log N)^{4+2\al} \Phi(\widetilde z) 
\right)\leq CN^{-c\log\log N} .
\ee
Now let $z\in S$ be arbitrary and choose
  $\widetilde z\in \mathcal N$ such that $|z-\widetilde z|\leq N^{-10}$. For any fixed $i\ne j$, 
we have
\be\label{Poooz}
\left||Z_{ij}^{(ij)}(z)|-|Z_{ij}^{(ij)}(\widetilde z)|\right|\leq
  |z-\widetilde z|\max_{\xi\in S}\left|\frac{\pt Z_{ij}^{(ij)}}{\partial z}(\xi)\right|.
\ee
By $\partial Z_{ij}^{(ij)}/\partial z= -\sum_{s,k,l\notin {(ij)}}
\overline{\ba^{\,i}_k} G^{(ij)}_{ks}G^{({ij})}_{sl}\ba^{j}_{l\,}$ and $\max_{ab}|G^{(ij)}_{ab}|
\leq \eta^{-1}$,  we have 
$$
\max_{\xi\in S}\left|\frac{\partial Z_{ij}^{(ij)}}{\partial z}(\xi)\right|\leq 
\frac{(\log N)^{6\al}}{M\eta^2}N^3\leq N^6, \qquad \mbox{in $\Omega_1^c$}.
$$
In the last inequality, we used 
 the assumption $\eta \geq N^{-1}$. Thus
$$
  \left||Z_{ij}^{(ij)}(z)|-|Z_{ij}^{(ij)}(\widetilde z)|\right|\leq N^{-4}  
\qquad \mbox{in $\Omega_1^c$}.
$$

Since $\Phi\ge M^{-1/2}\eta^{-1/4}\ge cN^{-1}$ for $z\in S$,
 we obtain 
$$
\left||Z_{ij}^{(ij)}(z)|-|Z_{ij}^{(ij)}(\widetilde z)|\right|\leq 
\Phi( z)  \quad \mbox{in $\Omega_1^c$},
$$ 
and exactly in the same way, we have
$$
\Big| \; |Z_i(z)|-|Z_i(\widetilde z)|\;\Big|\leq \Phi(\widetilde z)  \quad\mbox{in $\Omega_1^c$}.
$$
Moreover, by estimating $|\pt_z G|\le N^2$
in $S$, we see that $\Lambda_d(z)$, $\Lambda_o(z)$,
and $\Phi(z)$ are Lipschitz continuous functions in $S$ with a Lipschitz constant
bounded by $CN^3$. Therefore $\Phi (\widetilde z)$ can be replaced with $\Phi(z)$
in the lower bound on $|Z_{ij}^{(ij)}(\widetilde z)|$ and $|Z_i(\widetilde z)|$ 
obtained from \eqref{znet},
and, furthermore, $\Omega_\Lambda^c(z)\subset \wt\Omega_\Lambda^c(\widetilde z)$
using a trivial upper bound $\theta(z)\le N$.
Thus we get
$$
\P \left( \exists  z\in S  \; \mbox{s.t. 
$\Omega_\Lambda^c(z)$ and $\Omega_1^c$ hold and}
\max_{i}|Z_i( z)|+\max_{i\ne j}|Z_{ij}^{(ij)}( z)|\geq
 (\log N)^{5+2\al} \Phi(\widetilde z) 
\right)\leq CN^{-c\log\log N} .
$$
Combining  this with \eqref{resboundhij},
we obtain   \eqref{B12} and thus  Lemma \ref{selfeq1}.   
\qed

\bigskip 

Our goal is to  show that   $\Lambda_o(z) +\Lambda_d(z)$ is smaller
than $(M\eta)^{-1/2}$ (modulo edge and logarithmic corrections)
for any $z\in S$
in  the event $\Omega^c(z)$. We will use a continuity argument. In Lemma
\ref{thm: stepone} we show for any $z\in S$ that if $\Lambda_o(z) +\Lambda_d(z)$ is
smaller than $(\log N)^{-3/2}$, then it is actually also smaller than 
 $(M\eta)^{-1/2}$. In Lemma \ref{lm:initial} we show that this input condition
holds at least for $\im z = \eta=10$. Then reducing $\eta$, we show
by a continuity argument  that it holds for each $z\in S$.

\begin{lemma} [Bootstrap] \label{thm: stepone}
Let $z=E+i\eta$ and satisfy \eqref{relkaeta}, in particular $z\in S$.
Recall $\Lambda_d$, $\Lambda_o$ and $\Omega$ defined in  \eqref{defLambda} 
and \eqref{defOmega}. 
 Then we have that, in the event $\Omega^c$, if 
\be\label{alamb}
\Lambda_o(z) +\Lambda_d(z)\leq \frac{(\log N)^{-3/2}}{\theta(z)},
\ee
then we have
\be 
\label{alamb2}
\Lambda_o(z)+\Lambda_d(z)\leq (\log N)^{6+2\al} 
\frac{ (\kappa+\eta)^{1/4} }{\sqrt{M\eta}\,\,}\theta(z)
\ee
and we also have a stronger bound for the off-diagonal terms:
\be\label{alamb3}
 \Lambda_o(z) \leq (\log N)^{5+2\al} 
\frac{ (\kappa+\eta)^{1/4} }{\sqrt{M\eta}\,\,}.
\ee
\end{lemma}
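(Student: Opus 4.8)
The plan is to run a self-consistent analysis on the diagonal resolvent entries, starting from the identity \eqref{mainseeq}, $G_{ii} = (-z - \sum_j \sigma_{ij}^2 G_{jj} + \Upsilon_i)^{-1}$, and comparing it with the defining equation $m_{sc} = (-z - m_{sc})^{-1}$. First I would work throughout in the event $\Omega^c$ (so that the conclusions of Lemma \ref{selfeq1} apply, i.e. $|Z_i|, |Z_{ij}^{(ij)}| \le (\log N)^{5+2\al}\Phi$ and $|h_{ii}|\le (\log N)^{2\al}|\sigma_{ii}|$) and under the hypothesis \eqref{alamb}, which in particular (via \eqref{upper}) forces $\Lambda_d + \Lambda_o \le C(\log N)^{-3/2}$, so that all the a priori bounds \eqref{lowerbound}--\eqref{Gkkim} are available. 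Then $\Upsilon_i = A_i + h_{ii} - Z_i$ with $|A_i| \le CM^{-1} + C\Lambda_o^2$ by \eqref{Aest}, $|h_{ii}| \le (\log N)^{2\al}|\sigma_{ii}| \le (\log N)^{2\al} M^{-1/2}$, and $|Z_i| \le (\log N)^{5+2\al}\Phi$; hence $\max_i |\Upsilon_i| \le C(\log N)^{5+2\al}\Phi$ since $\Phi$ dominates $M^{-1/2}$, $\Lambda_o^2$, and the $h_{ii}$ term (the latter because $\sqrt{M\eta}\ge 1$ on $S$ and $\Phi \ge (\kappa+\eta)^{1/4}(M\eta)^{-1/2}$; a small check is needed that $(\log N)^{2\al}M^{-1/2} \lesssim (\log N)^{5+2\al}\Phi$, which holds as $M^{-1/2} \le (M\eta)^{-1/2}$ for $\eta\le 10$ up to constants, noting $\eta \le 10$).

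Next I would subtract the two fixed-point equations: writing $v_i = G_{ii} - m_{sc}$ and using \eqref{mainseeq} together with $m_{sc}^{-1} = -z - m_{sc}$, one obtains, after multiplying through and isolating $v_i$, a relation of the schematic form
\be\label{planstab}
v_i = m_{sc}^2\Big(\sum_j \sigma_{ij}^2 v_j\Big) + m_{sc}^2(\Upsilon_i - v_i^{\,2}\cdot(\text{bounded})) + O\big((\Lambda_d+\Lambda_o)\Lambda_d\big),
\ee
i.e. $(1 - m_{sc}^2 \Sigma)$ applied to the vector $(v_i)$ equals a term controlled by $\max_i|\Upsilon_i|$ plus quadratic errors $O(\Lambda_d^2 + \Lambda_o^2)$. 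The operator $1 - m_{sc}^2\Sigma$ must be inverted, and this is where the spectral assumption \eqref{speccond}, \eqref{de-de+} and the control function $\theta(z)$ enter: on the subspace orthogonal to the top eigenvector of $\Sigma$ one gains a factor bounded by $(1-(1-\delta_-))^{-1}|1-m_{sc}^2|^{-1}$-type quantities plus the $\delta_+$-contribution, which is precisely what $\theta(z)$ in \eqref{q} packages, while on the top eigenvector (essentially the average $\bar v$) one needs a separate, more careful treatment using $|1 - m_{sc}^2| \sim \sqrt{\kappa+\eta}$ from \eqref{smallz}. I would thus split $v = \bar v \cdot \mathbf{1} + v^\perp$, bound $\|v^\perp\|$ by $\theta \cdot \max_i|\Upsilon_i| + \theta\cdot O(\Lambda_d^2+\Lambda_o^2)$, and for $\bar v$ derive a self-consistent scalar (approximately quadratic) equation $|1-m_{sc}^2|\,|\bar v| \lesssim \max_i|\Upsilon_i| + |\bar v|^2 + \ldots$, from which, in the regime \eqref{relkaeta} where $\sqrt{M\eta}$ is large enough, one extracts $\Lambda_d \le (\log N)^{6+2\al}(\kappa+\eta)^{1/4}(M\eta)^{-1/2}\theta$, giving \eqref{alamb2}. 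For the off-diagonal bound \eqref{alamb3}, I would use \eqref{GijHij}: $G_{ij} = -G_{ii}G_{jj}^{(i)}\wh H_{ij}^{(ij)}$ with $|\wh H_{ij}^{(ij)}| = |h_{ij} - Z_{ij}^{(ij)}| \le (\log N)^{2\al}|\sigma_{ij}| + (\log N)^{5+2\al}\Phi \le C(\log N)^{5+2\al}\Phi$, and $|G_{ii}|, |G_{jj}^{(i)}| \le C$ by \eqref{lowerbound}--\eqref{Gkk}; then once \eqref{alamb2} is established one has $\Lambda_d \lesssim (\log N)^{6+2\al}(\kappa+\eta)^{1/4}(M\eta)^{-1/2}\theta$ and $\Lambda_o$ appearing inside $\Phi$ can be absorbed, so $\Phi \le C(\kappa+\eta)^{1/4}(M\eta)^{-1/2}$ up to the already-controlled pieces, yielding \eqref{alamb3} without the extra $\theta$ factor.

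The main obstacle is the treatment of the average component $\bar v$: inverting $1-m_{sc}^2\Sigma$ naively would cost $|1-m_{sc}^2|^{-1} \sim (\kappa+\eta)^{-1/2}$, and near the edge a single such factor combined with the size of $\Upsilon_i$ is not good enough to close the bootstrap — this is exactly why one needs the stability analysis of the scalar equation for $\bar v$ to be genuinely quadratic (the $\bar v^2$ term, coming from $\sum_i v_i^2$ contributions to the averaged equation, must be retained and shown to be subleading in the regime \eqref{relkaeta}). Getting the $(\kappa+\eta)^{1/4}$ power and the single power of $\theta(z)$ (rather than $\theta^2$) correct in \eqref{alamb2}, and ensuring the condition \eqref{relkaeta} with its $\theta^2(z)$ factor is precisely what makes the quadratic term negligible, is the delicate bookkeeping at the heart of the argument; everything else is routine propagation of the a priori bounds and the large-deviation estimates of Lemma \ref{selfeq1}.
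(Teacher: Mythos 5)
Your outline follows the paper's own route: work in $\Omega^c$ under \eqref{alamb}, bound $\Upsilon_i=A_i+h_{ii}-Z_i$, expand \eqref{mainseeq} around $m_{sc}$, split $v_i$ into the average $\bar v$ and the fluctuation, invert $1-m_{sc}^2\Sigma$ with $\theta(z)$ controlling both $|1-m_{sc}^2|^{-1}$ and the $Q$-restricted inverse, absorb the quadratic error using the bootstrap hypothesis, and get the off-diagonal bound from \eqref{GijHij}. There is, however, one concrete step that fails in the order you present it: you bound $\Upsilon$ \emph{before} improving $\Lambda_o$, asserting that $(\log N)^{5+2\al}\Phi$ dominates $\Lambda_o^2$. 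Under \eqref{alamb} alone this is false: the hypothesis only gives $\Lambda_o\le (\log N)^{-3/2}/\theta\le C\sqrt{\kappa+\eta}\,(\log N)^{-3/2}$, so in the bulk with $\eta\sim 1$ and $M\sim N$ you only know $\Lambda_o^2=O((\log N)^{-3})$, while $(\log N)^{5+2\al}\Phi=O((\log N)^{5+2\al}N^{-1/2})$ is far smaller. The paper's proof avoids this by running the off-diagonal estimate \emph{first}: $|G_{ij}|\le C(|h_{ij}|+|Z_{ij}^{(ij)}|)\le C(\log N)^{5+2\al}\Phi$, which needs only \eqref{Giibound} and $\Omega^c$ (not \eqref{alamb2}), and only then bounding $A_i$, where now $\Lambda_o^2\le C(\log N)^{10+4\al}\Phi^2\ll \Phi$ by \eqref{Thetaimmsc}. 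Since your proposal contains exactly this off-diagonal step, the fix is a reordering; note also that \eqref{alamb3} does not require \eqref{alamb2}, because \eqref{alamb} already yields $\sqrt{\Lambda_d}+\Lambda_o\le C(\kappa+\eta)^{1/4}$ and hence $\Phi\le C(\kappa+\eta)^{1/4}(M\eta)^{-1/2}$.

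Two further remarks on the diagonal part. The treatment of $\bar v$ is simpler than your ``genuinely quadratic'' worry suggests: the paper keeps the error $O((\Lambda_d+\Upsilon)^2)$ and absorbs the $\Lambda_d^2$ piece \emph{linearly}, since \eqref{alamb} guarantees $C\theta(z)\Lambda_d\log N\le 1/2$; no quadratic equation has to be solved, and the full-strength bound $\Lambda_d\le C\theta\Upsilon\log N$ follows at once. Finally, the one input you gloss over is how the spectral gap of $\Sigma$ is converted into a bound usable for $\max_i|v_i-\bar v|$: the gap gives only an $\ell^2$ estimate, and the paper needs the $\ell^\infty\to\ell^\infty$ bound $\|Q(1-m_{sc}^2\Sigma)^{-1}\|_{\infty\to\infty}\le C\log N/q(z)$, proved via Lemma \ref{msc} and a Neumann-series argument (see \eqref{inftynorm}); this is where the extra factor $\log N$ (exponent $6+2\al$ rather than $5+2\al$) in \eqref{alamb2} originates.
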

\bigskip

\textit{Proof of Lemma \ref{thm: stepone}}. 
First note that condition \eqref{alamb} is equivalent
assuming the  event $\Omega_\Lambda^c(z)$ and we have
\be
    \Omega^c \cap \Omega_\Lambda^c(z) 
\subset  \Omega_d^c(z)\cup \Omega_o^c(z),
\label{OME}
\ee
so the event $\Omega_d^c(z)\cup \Omega_o^c(z)$ holds.
We recall from \eqref{upper}
that 
\be
   \frac{1}{\theta(z)}\le C\sqrt{\kappa+\eta} \le C, \qquad z\in S.
\label{thetaest}
\ee
With the assumption \eqref{alamb} we have (see \eqref{lowerbound}, \eqref{Gkk})
\be
   c\le |G_{ii}|\le C, \qquad c\le |G_{jj}^{(i)}|\le C
\label{Giibound}
\ee
and by \eqref{thetaest}
\be\label{lambdall1}
\Lambda_d(z)+\Lambda_o(z)\le \frac{C\sqrt{\kappa+\eta}}{(\log N)^{3/2}} \le 
 \frac{C}{(\log N)^{3/2}} 
\ee 
and thus, by \eqref{relkaeta} and \eqref{thetaest},
\be\label{Thetaimmsc}
  \frac{(\kappa+\eta)^{1/4}}{\sqrt{M\eta}\,\,}\le
  \Phi(z)\le C \frac{(\kappa+\eta)^{1/4}}{\sqrt{M\eta}\,\,}\le 
C(\log N)^{-12-3\al}.
\ee

We first estimate  the offdiagonal term $G_{ij}$. From \eqref{GijHij} we have
\be
  |G_{ij}|  = |G_{ii}| |G_{jj}^{(i)}| |K_{ij}^{(ij)}|
  \le C\left(|h_{ij}| + |Z_{ij}^{(ij)}|\right), \qquad i\ne j,
\label{estoff}
\ee
where we used \eqref{Giibound}.

By the remark after \eqref{OME}
we have 
$$
|G_{ij}|\le \frac{C(\log N)^{2\al}}{\sqrt{M}}+
C(\log N)^{5+2\al} \Phi\le C(\log N)^{5+2\al} \Phi,
$$
where we used
\eqref{Thetaimmsc} 
 to show that the
first term can be absorbed into the second.
 {F}rom the second       inequality in
\eqref{Thetaimmsc} we also have
\be
\Lambda_o=  \max_{i\ne j}|G_{ij}|\le 
 C(\log N)^{5+2\al}  \frac{(\kappa+\eta)^{1/4} }{\sqrt{M\eta}\,\,}.
\label{gij}
\ee
 This
proves the estimate \eqref{alamb3}. Using \eqref{thetaest}, we also see 
that \eqref{alamb2} holds for the summand $\Lambda_o$.

\bigskip

Now we estimate the diagonal terms. 
 Recalling $\Upsilon_i= A_i +h_{ii}-Z_i$ from \eqref{seeqerror}, 
with \eqref{Aest},
\eqref{Thetaimmsc}, \eqref{gij}
 we have,
\be
\Upsilon=\Upsilon(z):=\max_i   |\Upsilon_i(z)|\le  
 C\frac{(\log N)^{2\al}}{\sqrt{M}}+C(\log N)^{5+2\al}\Phi
\qquad \mbox{in  $\Omega^c\cap\Omega_\Lambda^c(z)$} . 
\label{defUpsilon}
\ee
Again, the first term can be absorbed into the second, 
 so we have proved
\be\label{41}
\Upsilon\le
(\log N)^{5+2\al}\Phi \le  C(\log N)^{-6}\qquad
 \mbox{in   $\Omega^c\cap\Omega_\Lambda^c(z)$}. 
\ee
 In the last step we used
\eqref{Thetaimmsc}.

{F}rom \eqref{mainseeq} we have the identity 
\be
  v_i = G_{ii} - m_{sc} 
 = \frac{1}{-z- m_{sc}- \Big(\sum_{j}\sigma^2_{ij}v_j-\Upsilon_i\Big)}
  - m_{sc} .
\label{idd}
\ee
Using $(m_{sc}+z) = -m_{sc}^{-1}$,  and the fact that $|m_{sc}+z|\ge 1$, so
with  $\Lambda_d +\Upsilon\le \frac{1}{10}|m_{sc}+z| $
(see in \eqref{lambdall1} and \eqref{41}),
 we can expand \eqref{idd} as
\be
   v_i = m_{sc}^2\cdot\Big(\sum_{j}\sigma^2_{ij}v_j-\Upsilon_i\Big)
  + O\Big(\sum_{j}\sigma^2_{ij}v_j-\Upsilon_i\Big)^2
  = m_{sc}^2\cdot\Big(\sum_{j}\sigma^2_{ij}v_j-\Upsilon_i\Big) + 
 O\Big((\Lambda_d+\Upsilon)^2 \Big).
\label{expand}
\ee
Summing up this formula for all $i$ and
recalling the definition
 $\bar v\equiv\frac1N\sum_{i}v_i=m-m_{sc}$ 
yield
$$
   \bar v = m_{sc}^2 \,\bar v - \frac{m_{sc}^2}{N}\sum_i \Upsilon_i 
  +  O\Big((\Lambda_d+\Upsilon)^2 \Big).
$$
Introducing the notations $\zeta: = m_{sc}^2(z)$, $\ov\Upsilon:= \frac{1}{N}\sum_i\Upsilon_i$
 for simplicity, we have
(using $\Lambda_d\le 1$)
\be\label{vi09}
\bar v=-\frac{\zeta}{1-\zeta}\ov\Upsilon
+O\Big(\frac{\zeta}{1-\zeta}(\Lambda_d+\Upsilon)^2 \Big)
=O\left( \left|\frac{\zeta}{1-\zeta}\right| \left(\Lambda^2_d+\Upsilon\right)\right).
\ee
Recall that $\Sigma$ denotes the matrix of covariances,
 $\Sigma_{ij}=\sigma^2_{ij}$, and we know that $1$ is 
a simple eigenvalue with the constant vector  ${\bf e}=N^{-1/2}(1,1,\ldots,1)$ as the eigenvector.
Let $Q:=I-|{\bf e}\rangle\langle {\bf e}|$
be the projection onto the orthogonal complement of ${\bf e}$, note that $\Sigma$ and $Q$ commute.
  Let  $\| \cdot \|_{\infty\to \infty}$ denote the $\ell^\infty\to \ell^\infty$ matrix norm.
With these notations, \eqref{expand}  can be written as
$$
   v_i-\bar v = \zeta \sum_{j}\Sigma_{ij} (v_j-\bar v) -
\zeta\Big( \Upsilon_i -\ov\Upsilon\Big)
+O\left( |\zeta| \left(\Lambda^2_d+\Upsilon\right)\right)
 + O\Big((\Lambda_d+\Upsilon)^2 \Big),
$$
and the error terms for each $i$ sums up to zero. Therefore, with $\Upsilon\le 1$, we have
\begin{align}
    v_i-\bar v = &  -\sum_j\Big(\frac{\zeta}{1-\zeta\Sigma}\Big)_{ij} 
\left(\Upsilon_j- \ov\Upsilon\right) + 
   O\Bigg( \Big\| \frac{\zeta Q}{1-\zeta \Sigma}\Big\|_{\infty\to \infty}
 (\Lambda_d^2+\Upsilon)\Bigg) \label{vi10} \\
= & \Big\| \frac{\zeta Q}{1-\zeta \Sigma}\Big\|_{\infty\to \infty}O\big( \Lambda^2_d+\Upsilon\big). \non
\end{align}

Combining \eqref{vi09} with \eqref{vi10}, we have 
\be\label{temp3.49}
\max_{i}|v_i|\leq C \left(\Big\| \frac{\zeta Q}{1-\zeta \Sigma}
\Big\|_{\infty\to \infty}+\left|\frac{\zeta}{1-\zeta}\right|\right)(\Lambda_d^2+\Upsilon).
\ee 
To estimate the norm of the resolvent, we recall  
the following elementary lemma (Lemma 5.3  in \cite{EYY}).

\begin{lemma} \label{msc}
Let $\delta_->0$ be a given constant. Then there exist small real numbers 
$\tau\geq 0$ and $c_1>0$, depending only on $\delta_-$, such that 
 for any positive number 
$\delta_+$, we have 
\be
    \max_{x\in [-1+\delta_-,1-\delta_+]}\left\{\Big| \tau + x\, m_{sc}^2(z)\Big|^2 
\right\}\le \left(1-c_1\, q(z) \right)(1+\tau)^2
\label{al}
\ee
with 
\be\label{defq}
q(z):=\max\{\delta_+,|1- \re \, m_{sc}^{2}(z)|\}. 
\ee
\qed
\end{lemma}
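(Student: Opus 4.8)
The plan is an elementary estimate; I would not expect anything deep. Write $\zeta=\zeta(z):=m_{sc}^2(z)$; by \eqref{zmsc2} we have $|\zeta|\le 1$, hence $\re\zeta\le|\zeta|\le 1$, so the absolute value in \eqref{defq} is harmless and $q(z)=\max\{\delta_+,\,1-\re\zeta\}\in[0,2]$. First I would note that for fixed $z$ the map $x\mapsto|\tau+x\zeta|^2=\tau^2+2\tau x\,\re\zeta+x^2|\zeta|^2$ is a convex (quadratic) function of the real variable $x$, so its maximum over $[-1+\delta_-,1-\delta_+]$ is attained at one of the two endpoints. Thus it suffices to bound $|\tau-(1-\delta_-)\zeta|^2$ and $|\tau+(1-\delta_+)\zeta|^2$ by $(1-c_1q(z))(1+\tau)^2$, which I would treat separately; I may also assume $\delta_+\le\tfrac12$, the complementary case being immediate since then the whole interval lies in $[-(1-\delta_-),\,\tfrac12]$, so $\max_x|\tau+x\zeta|^2\le(\tau+1-\delta_-)^2$, which beats $(1-c_1q(z))(1+\tau)^2$ for $c_1$ small.

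For the left endpoint I would use
\be
(1+\tau)^2-|\tau-(1-\delta_-)\zeta|^2=\big(1-(1-\delta_-)^2|\zeta|^2\big)+2\tau\big(1+(1-\delta_-)\re\zeta\big).
\ee
Since $|\zeta|\le 1$ and $\re\zeta\ge-1$, the first bracket is at least $1-(1-\delta_-)^2=\delta_-(2-\delta_-)\ge\delta_-$ and the second is at least $\delta_-\ge0$, so this difference is at least $\delta_-$, a positive constant depending only on $\delta_-$. As $q(z)(1+\tau)^2\le 8$ for $\tau\le1$, it follows that $|\tau-(1-\delta_-)\zeta|^2\le(1-c_1q(z))(1+\tau)^2$ whenever $c_1\le\delta_-/8$.

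For the right endpoint I would use
\be
(1+\tau)^2-|\tau+(1-\delta_+)\zeta|^2=\big(1-(1-\delta_+)^2|\zeta|^2\big)+2\tau\big(1-(1-\delta_+)\re\zeta\big),
\ee
bound the first bracket below by $\delta_+(2-\delta_+)\ge\delta_+$, and then prove the pointwise inequality
\be
1-(1-\delta_+)\re\zeta\ \ge\ \tfrac12\,q(z).
\ee
This is the one step I expect to require a little care, and it is exactly where the $\delta_+$ appearing in the definition of $q(z)$ is used: I would verify it by a short case distinction according to whether $\re\zeta<0$, $\re\zeta\in[0,\tfrac12]$, or $\re\zeta\in[\tfrac12,1]$ (using $\delta_+\le\tfrac12$). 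Granting it, the right-endpoint difference is at least $\delta_++\tau q(z)\ge\tau q(z)$, which beats $c_1q(z)(1+\tau)^2$ provided $c_1\le\tau/4$.

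Finally I would fix $\tau$ to be a sufficiently small positive constant — concretely one may take $\tau=\delta_-$, assuming without loss of generality $\delta_-\le1$ — and take $c_1$ to be a small positive constant depending only on $\delta_-$ (in particular $c_1\le\tfrac12$, so that $1-c_1q(z)\ge0$); the displays above then give \eqref{al} for every $z$ with $\im z>0$. The only thing to monitor throughout is that $\tau$ and $c_1$ must not depend on $\delta_+$, which is precisely why the shift $\tau>0$ is kept rather than taking $\tau=0$: with $\tau=0$ the right-endpoint bound reads $(1-\delta_+)^2|\zeta|^2\le 1-c_1q(z)$, which fails as $\delta_+\to0$ near $z=0$, where $\zeta=m_{sc}^2(z)\to-1$ (so $|\zeta|^2\to1$) while $q(z)$ stays bounded away from $0$.
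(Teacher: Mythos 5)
Your proof is correct, and there is nothing in this paper to compare it against: the lemma is quoted verbatim from Lemma 5.3 of \cite{EYY} and no proof is reproduced here, so your self-contained elementary argument stands on its own. The key steps all check out: convexity of $x\mapsto \tau^2+2\tau x\,\re\zeta+x^2|\zeta|^2$ reduces the maximum to the two endpoints; at the left endpoint the fixed margin $1-(1-\delta_-)^2\ge \delta_-$ beats $c_1q(z)(1+\tau)^2\le 8c_1$; and at the right endpoint the pointwise bound $1-(1-\delta_+)\re\zeta\ge\tfrac12\max\{\delta_+,1-\re\zeta\}$ indeed holds (for $\re\zeta\in[0,1]$ the left side dominates both $1-\re\zeta$ and $\delta_+$ separately, and for $\re\zeta<0$ it is at least $1$), so $\tau=\delta_-$ and $c_1=\min\{\delta_-/8,\tau/4\}$ work, uniformly in $\delta_+$. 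Two cosmetic points you should make explicit: the bound $q(z)\le 2$, which you use twice, requires $\delta_+\le 2$, and this is automatic precisely when the interval $[-1+\delta_-,1-\delta_+]$ is nonempty (otherwise the statement is vacuous); similarly the crude bound $(\tau+1-\delta_-)^2$ in your $\delta_+>\tfrac12$ case uses nonemptiness of the interval or your stated reduction to $\delta_-\le\tfrac12$. Your closing observation is also accurate and shows why the shift is essential: near $z=0$ one has $m_{sc}^2(z)\to -1$ while $q(z)$ stays of order one, so with $\tau=0$ the right endpoint would force $c_1$ to degenerate with $\delta_+$, contradicting the required uniformity.
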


\begin{lemma} Suppose that $\Sigma$ satisfies \eqref{de-de+}, i.e.,
$\mbox{Spec}(Q\Sigma)\subset [-1+\delta_-, 1-\delta_+]$. Then we have
\be
\Big\| \frac{Q}{1- m_{sc}^2(z) \Sigma}\Big\|_{\infty\to\infty} 
 \le \frac{C(\delta_-) \log N}{ q(z)} 
\label{inftynorm}
\ee
with some constant $C(\delta_-)$ depending on $\delta_-$ and 
with $q$ defined in \eqref{defq} 
\end{lemma}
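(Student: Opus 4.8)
The plan is to bootstrap the $\ell^2\to\ell^2$ contraction of the relevant resolvent, which Lemma~\ref{msc} already provides, up to the $\ell^\infty\to\ell^\infty$ norm, by expanding into a Neumann series and splitting it into a short ``head'' controlled by the trivial $\ell^\infty\to\ell^\infty$ estimate and a ``tail'' controlled by $\ell^2$-decay together with the dimensional inequality $\|A\|_{\infty\to\infty}\le\sqrt N\,\|A\|_{2\to2}$. First I would fix notation: put $\zeta:=m_{sc}^2(z)$, $P:=|{\bf e}\rangle\langle{\bf e}|$ and $Q=I-P$; since $\Sigma{\bf e}={\bf e}$ by double stochasticity, $\Sigma$, $P$, $Q$ pairwise commute, $\Sigma$ restricted to ${\bf e}^\perp=\mathrm{Ran}(Q)$ is self-adjoint with spectrum in $[-1+\delta_-,1-\delta_+]$, and we may assume $q(z)>0$ (the bound is vacuous otherwise). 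Let $\tau\ge 0$ and $c_1>0$ be as in Lemma~\ref{msc} (depending only on $\delta_-$, and $c_1$ small enough that $c_1 q(z)\le 1$), and set $R:=(\tau I+\zeta\Sigma)/(1+\tau)$. Since $(1+\tau)I-(\tau I+\zeta\Sigma)=I-\zeta\Sigma$, on ${\bf e}^\perp$ one has the expansion
\[
  \frac{Q}{1-\zeta\Sigma}=\frac{1}{1+\tau}\sum_{k\ge 0}R^kQ,
\]
which also holds trivially on ${\bf e}$ (both sides kill ${\bf e}$); moreover $R^kQ=(RQ)^k$ for $k\ge1$, since $R$ and $Q$ commute and $Q^2=Q$.

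For the $\ell^2\to\ell^2$ estimate I would note that on ${\bf e}^\perp$ the normal operator $\tau I+\zeta\Sigma$ is a function of the self-adjoint operator $\Sigma|_{{\bf e}^\perp}$, so its norm is $\sup\{|\tau+\zeta x|:x\in\mathrm{Spec}(\Sigma|_{{\bf e}^\perp})\}\le\max_{x\in[-1+\delta_-,1-\delta_+]}|\tau+x\,m_{sc}^2(z)|\le(1+\tau)\sqrt{1-c_1 q(z)}$ by Lemma~\ref{msc}. Hence, with $\rho:=\sqrt{1-c_1 q(z)}<1$, we get $\|R^kQ\|_{2\to2}\le\|RQ\|_{2\to2}^k\le\rho^k$ for $k\ge1$. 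For the $\ell^\infty\to\ell^\infty$ estimate I would use that $\Sigma$ has nonnegative entries and unit row sums, so $\|\Sigma\|_{\infty\to\infty}=1$, whence $\|R\|_{\infty\to\infty}\le(\tau+|\zeta|)/(1+\tau)\le1$ (as $|m_{sc}(z)|\le1$), and together with $\|Q\|_{\infty\to\infty}\le1+\|P\|_{\infty\to\infty}=2$ this gives $\|R^kQ\|_{\infty\to\infty}\le2$ for every $k$.

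Now split the series at $K:=\lceil 2(\log N)/(c_1 q(z))\rceil$, using the bound $2$ for $0\le k\le K$ and the bound $\sqrt N\,\rho^k$ (from the dimensional inequality and the $\ell^2$ estimate) for $k>K$:
\[
  \Big\|\frac{Q}{1-\zeta\Sigma}\Big\|_{\infty\to\infty}\le 2(K+1)+\sqrt N\,\frac{\rho^{K+1}}{1-\rho}.
\]
Then I would finish with $1-\rho\ge c_1 q(z)/2$ and $\rho^{K+1}\le e^{-c_1 q(z)(K+1)/2}\le N^{-1}$ by the choice of $K$: the first term is $\le C(\delta_-)(\log N)/q(z)$ and the second is $\le 2N^{-1/2}/(c_1 q(z))\le C(\delta_-)/q(z)$. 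This is exactly \eqref{inftynorm}.

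The crux is the $\ell^2\to\ell^2$ contraction, i.e.\ Lemma~\ref{msc}: without the $\tau$-shift one only has $\|\zeta\Sigma Q\|_{2\to2}=|m_{sc}^2(z)|(1-\delta_+)$, which degenerates to $1$ when $\delta_+$ is small and $z$ is near the spectral edge, so a bare Neumann series would diverge (or converge at a useless rate). The shift trades the possibly vanishing gap $\delta_+$ for the gap $c_1 q(z)$, which is positive whenever $q(z)>0$. The only remaining care is the bookkeeping in the head/tail split: the dimensional factor $\sqrt N$ has to be killed by $\rho^{K}$, which forces the cutoff $K$ to be of order $(\log N)/q(z)$, and this in turn is responsible for the $(\log N)/q(z)$ size of the final bound rather than $O(1/q(z))$.
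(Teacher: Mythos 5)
Your proposal is correct and follows essentially the same route as the paper: the $\tau$-shift from Lemma \ref{msc} to get the $\ell^2\to\ell^2$ contraction $(1-c_1 q(z))^{1/2}$, a Neumann expansion of $Q(1-\zeta\Sigma)^{-1}$, and a head/tail split at a cutoff of order $(\log N)/q(z)$, with the head controlled by the trivial $\ell^\infty\to\ell^\infty$ bound and the tail by $\|A\|_{\infty\to\infty}\le\sqrt N\,\|A\|_{2\to 2}$. Your write-up is in fact slightly more careful than the paper's (explicit $\|Q\|_{\infty\to\infty}\le 2$ and justification of the series identity), but there is no substantive difference in the argument.
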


{\it Proof}: 
Let $\| \cdot \|$ denote the usual $\ell^2\to\ell^2$ matrix norm
and introduce $\zeta=m_{sc}^2(z)$. Rewrite 
$$
  \Big\| \frac{Q}{1-\zeta \Sigma}\Big\| 
  = \frac{1}{1+\tau} \Big\| \frac{Q}{1- \frac{\zeta \Sigma +\tau}{1+\tau}}\Big\|
$$
with $\tau$ given in \eqref{al}.  
By \eqref{al}, we have 
$$
   \Big\| \frac{\zeta \Sigma +\tau}{1+\tau}Q\Big\|  \le     \sup_{x\in[-1+\delta_-,
1-\delta_+]}
  \Big| \frac{\zeta x+\tau}{1+\tau} \Big|
\le  (1- c_1q(z))^{1/2} .
$$
To estimate the  $\ell^\infty\to\ell^\infty$ norm of this matrix,  recall that 
$|\zeta| =  |m_{sc}|^{2}\le 1$  and 
$\sum_j | \Sigma_{ij}| =\sum_j \Sigma_{ij}=\sum_{j}\sigma_{ij}^2=1$.
Thus we have 
$$
   \Big\|  \frac{\zeta \Sigma +\tau}{1+\tau} Q\Big\|_{\infty\to\infty}
    = \max_i \sum_j 
   \Big| \Big( \frac{\zeta \Sigma +\tau}{1+\tau} \Big)_{ij} \Big| 
   \le \frac{1}{1+\tau} \max_i \sum_j |\zeta \Sigma_{ij} + \tau \delta_{ij}|
  \le \frac{ |\zeta |+\tau}{1+\tau}\le 1.
$$

To see \eqref{inftynorm}, we can expand
\begin{align*}
\Big\| \frac{1}{1- \frac{\zeta \Sigma +\tau}{1+\tau}}Q\Big\|_{\infty\to
  \infty} 
  & \le \sum_{n< n_0} 
 \Big\| \frac{\zeta \Sigma +\tau}{1+\tau}Q\Big\|_{\infty\to\infty}^{n}
 + \sum_{n\ge n_0}  \Big\| \Big( \frac{\zeta \Sigma +\tau}{1+\tau}\Big)^nQ
 \Big\|_{\infty\to\infty} \\
 &  \le n_0 + \sqrt{N} \sum_{n\ge n_0}  \Big\| \Big( \frac{\zeta \Sigma +\tau}{1+\tau}\Big)^n
 Q\Big\|
  = n_0 + \sqrt{N} \sum_{n\ge n_0} (1-c_1 q(z))^{n/2}\\
&  = n_0 + C\sqrt{N} \frac{ (1-c_1q(z))^{n_0/2}}{q(z)} \le \frac{C\log N}{q(z)}.
\end{align*}
Choosing $n_0= C\log N/q(z)$ with a large $C$,
we have  proved the Lemma. 
\qed

\bigskip

We now return to the proof of Lemma \ref{thm: stepone},
 recall that we are in the set $\Omega^c\cap \Omega^c_\Lambda(z)$.
  First, inserting  \eqref{zmsc2}  and 
\eqref{inftynorm} into \eqref{temp3.49}, and using $1/q\le \theta$, we obtain
$$
  \Lambda_d= \max_i |v_i|
   \le   C\theta(z)  (\Lambda^2_d+\Upsilon)\log N .
$$
By the  assumption  \eqref{alamb}, we have $C\theta(z)\Lambda_d \log N\le 1/2$,
for large enough $N$,
therefore we get
$$
\Lambda_d   \le   C\theta(z)\Upsilon\log N.
$$
Using the bound on  $\Upsilon$ in \eqref{41}  and \eqref{Thetaimmsc}, we obtain 
$$
 \Lambda_d \leq C\theta(z)(\log N)^{6+2\al}
\frac{(\kappa+\eta)^{1/4}}{\sqrt{M\eta}},
$$
which, together with \eqref{gij},  completes the proof of \eqref{alamb2}.
\qed

\bigskip 

\begin{lemma}[Initial step]\label{lm:initial} 
Define
$$
\Omega_H  : =\left\{ \|H\|\ge 3 \right\},
$$
recall the  definitions of $\Omega_1$, $\Omega_d$ and $\Omega_o$
from \eqref{defOmega} and define
\be
   \wh\Omega : =\Omega_H\cup \Omega_1 \cup \bigcup
 \Big\{ \Omega_o(z) \cup \Omega_d(z)\; :\;
z=E+10i, |E|\le 5\Big\}.
\label{defwhOm}
\ee
Then we have
\be
  \P (\wh \Omega) \le CN^{-c\log\log N}.
\label{whomest}
\ee
Furthermore, in the set $\wh\Omega^c$ we have
\be\label{alambinit}
\Lambda_o(z) +\Lambda_d(z)\leq \frac{(\log N)^{-3/2}}{\theta(z)}
\ee
for $z= E+10i$, $|E|\le 5$.
\end{lemma}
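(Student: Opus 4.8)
The plan is to handle the probability bound \eqref{whomest} first and then to verify \eqref{alambinit} deterministically on the complement $\wh\Omega^c$. For \eqref{whomest}, the event $\wh\Omega$ is a finite union of three types of events. The bound $\P(\Omega_1)\le CN^{-c\log\log N}$ is already recorded in \eqref{resboundhij}, and $\P(\Omega_o(z)\cup\Omega_d(z))\le CN^{-c\log\log N}$ for each fixed $z$ follows exactly as in the proof of Lemma \ref{selfeq1} (estimates \eqref{POdz} and \eqref{Poz}); since $z=E+10i$ ranges over a compact segment I would use the same $N^{-10}$-net argument as in that proof, so no new idea is needed there. For $\P(\Omega_H)=\P(\|H\|\ge 3)$ I would invoke a standard norm bound for matrices with independent subexponential entries satisfying \eqref{subexp} and the normalization \eqref{sum}: the operator norm concentrates around $2$ with probability $1-N^{-c\log\log N}$ (this is a routine moment-method or $\epsilon$-net argument, and is essentially contained in earlier work, e.g. \cite{EYY}). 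Taking the union over the three types of events — each estimated by $CN^{-c\log\log N}$, the last one over $O(N^{20})$ net points — yields \eqref{whomest}.

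For the deterministic part \eqref{alambinit}, fix $z=E+10i$ with $|E|\le 5$, so $\eta=10$ and by \eqref{largez} we have $|m_{sc}(z)|\sim\eta^{-1}$, $|1-m_{sc}^2(z)|\sim1$, $|1-\re m_{sc}^2(z)|\sim1$, hence $\theta(z)\sim1$ by \eqref{defgz2}. Thus the claim \eqref{alambinit} amounts to showing $\Lambda_d(z)+\Lambda_o(z)\le c(\log N)^{-3/2}$. On $\Omega_H^c$ we have $\|H\|\le3$, so $\mathrm{dist}(z,\mathrm{Spec}(H))\ge\eta=10$, which gives the crude bounds $|G_{ij}(z)|\le\eta^{-1}=1/10$ for all $i,j$; in particular $\Lambda_o\le1/10$ trivially. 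This crude bound also shows that the event $\Omega_\Lambda^c(z)$ in \eqref{defOm} automatically holds at $\eta=10$ once $N$ is large (since $(\log N)^{-3/2}/\theta(z)$ is bounded below by a constant there is actually nothing to check — but I should be careful: $1/10$ need not be smaller than $c(\log N)^{-3/2}$, so the crude bound alone is insufficient and one genuinely needs the self-consistent equation). So instead I would run the argument of Lemma \ref{thm: stepone} directly at $\eta=10$: on $\wh\Omega^c$ the events $\Omega_d^c(z)$ and $\Omega_o^c(z)$ hold, hence $\max_i|Z_i|+\max_{i\ne j}|Z_{ij}^{(ij)}|\le(\log N)^{5+2\al}\Phi(z)$, and at $\eta=10$ one has $\Phi(z)=(\sqrt{\Lambda_d}+\Lambda_o+(\kappa+10)^{1/4})/\sqrt{10M}\le CM^{-1/2}$ using the crude bound $\Lambda_d,\Lambda_o\le1/10$. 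Feeding this into \eqref{estoff} and the self-consistent expansion \eqref{expand}–\eqref{temp3.49} with $\theta(z)\sim1$ gives $\Lambda_d+\Lambda_o\le C(\log N)^{5+2\al}M^{-1/2}\le C(\log N)^{5+2\al}N^{-1/2}$, which is far smaller than $(\log N)^{-3/2}$ for large $N$. This establishes \eqref{alambinit}.

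I do not expect any serious obstacle: the large-$\eta$ regime is the easy end of the continuity argument precisely because the resolvent is uniformly bounded by $\eta^{-1}$ and $\theta\sim1$, so every error term carries a factor $M^{-1/2}$ or better. The only point requiring a little care is that the crude bound $\Lambda_o\le1/10$ is not by itself strong enough to conclude \eqref{alambinit}, so one must actually invoke the self-consistent equation once (as in Lemma \ref{thm: stepone}) to gain the extra power of $M^{-1/2}$; the mildly delicate ingredient outside the paper's own machinery is the operator-norm bound $\P(\|H\|\ge3)\le CN^{-c\log\log N}$, which for generalized Wigner matrices with subexponential tails is standard but should be cited precisely.
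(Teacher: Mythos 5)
Your treatment of the probability bound \eqref{whomest} is essentially the paper's: $\Omega_H$ is controlled by the spectral--edge estimate \eqref{resnlambda3} (Lemma \ref{N-1/6}, i.e.\ Lemma 7.2 of \cite{EYY}), and the bounds on $\Omega_d(z)\cup\Omega_o(z)$ at $\eta=10$ are obtained by rerunning the proof of Lemma \ref{selfeq1} together with the $N^{-10}$-net. One point you gloss over: \eqref{POdz}--\eqref{Poz} only bound the probability of $\Omega_{d}(z),\Omega_o(z)$ \emph{intersected with} $\wt\Omega_\Lambda^c(z)$, whereas \eqref{defwhOm} contains the unrestricted events. The paper removes this conditioning by observing that on $\Omega_H^c$ at $\eta=10$ one has the two-sided bound $c\le |G_{kk}|\le C$ directly, the lower bound coming from $\im G_{kk}=\eta\sum_\alpha|u_\alpha(k)|^2/((\lambda_\alpha-E)^2+\eta^2)\ge c/\eta$ when all $|\lambda_\alpha|\le 3$; this is the only thing $\Omega_\Lambda^c$ was used for in deriving \eqref{Gkl}--\eqref{Gkkim}. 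You record only the upper bound $|G_{ij}|\le 1/10$. The lower bound is genuinely needed, both to rerun Lemma \ref{selfeq1} and later to control $A_i$ (hence $\Upsilon$) via \eqref{defA}, where one divides by $G_{ii}$.

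The deterministic step for $\Lambda_d$ is where your argument does not close. Feeding the crude bounds into \eqref{expand}--\eqref{temp3.49} yields $\Lambda_d\le C\theta(z)\log N\,(\Lambda_d^2+\Upsilon)$, and to absorb the quadratic term one needs $C\theta\Lambda_d\log N\le 1/2$; in Lemma \ref{thm: stepone} this comes from the hypothesis \eqref{alamb}, which at $\eta=10$ is exactly what you are trying to prove. The crude bound $\Lambda_d\le 1/5$ is useless against the factor $\log N$ coming from \eqref{inftynorm}, so the resulting inequality is consistent with $\Lambda_d$ of order one and the argument is circular. The paper avoids \eqref{temp3.49} entirely at $\eta=10$: it rewrites the self-consistent equation as \eqref{temp1.47} and uses that $|z+m_{sc}(z)|>2$ there, together with $|v_i|\le 2/\eta=1/5$, to obtain the explicit contraction $\Lambda_d\le\Lambda_d/(|z+m_{sc}|-\Lambda_d)+O(\Upsilon)$ with contraction factor at most $1/1.8$, hence $\Lambda_d\le C\Upsilon\le C(\log N)^{5+2\al}M^{-1/2}$ with no logarithmic loss. (One could alternatively note that $|m_{sc}^2(z)|\le C\eta^{-2}$ makes the Neumann series for $(1-m_{sc}^2\Sigma)^{-1}$ converge in $\ell^\infty\to\ell^\infty$ norm without any $\log N$, but that is a different estimate from \eqref{inftynorm} and you would have to supply it.) Your treatment of $\Lambda_o$ via \eqref{estoff} is fine, since only upper bounds on the diagonal resolvent entries enter there.
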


\textit{Proof.} 
 The exceptional event $\Omega_H$
is controlled by  Lemma 7.2 of \cite{EYY}. For convenience,
we will recall this result in
Lemma~\ref{N-1/6}, Eq.  \eqref{resnlambda3}, and we note that the condition of this lemma,
$M\ge (\log N)^9$, is implied by \eqref{relkaeta}
and \eqref{upper}). Thus we have $\P (\Omega_H)\le  CN^{-c(\log \log  N)}$.

Denote by $u_\alpha$ and $\lambda_\al$ 
 the eigenvectors and eigenvalues of $H$. 
 On the set $\Omega_H^c$ all eigenvalues
are bounded, $|\lambda_\al|\le 3$.
In this set we have, with $|E|\le 5$, 
 \be
\im G_{kk} =   \eta   \sum_{\alpha}
\frac{|u_\alpha(k)|^2}{  (\lambda_\alpha - E )^2 + \eta^2} \ge  
 \frac c {  \eta }  \sum_{\alpha}
|u_\alpha(k)|^2  =  \frac c {  \eta }
\ee
with some positive constant $c>0$. We also have the upper
bound $|G_{kk}|\le\eta^{-1}$ and $\Lambda_o+\Lambda_d \le C/\eta$.
In particular, for $\eta=10$, we have
\be
   c\le |G_{kk}|\le C, \qquad  \mbox{in $\Omega_H^c$},
\label{lowerbound1}
\ee
with some positive constants.
Inspecting the proof of Lemma \ref{selfeq1}, 
notice that the restriction to the set $\Omega_\Lambda^c$
was used only to obtain the estimate
\eqref{lowerbound}. Once this estimate is obtained independently,
as in \eqref{lowerbound1} in the set  $\Omega_H^c$, 
all the estimates \eqref{Gkl}--\eqref{Gkkim} hold
and these are the necessary inputs for Lemma \ref{selfeq1}.
Thus, following  the proof of  \eqref{POdz}--\eqref{Poz},
and replacing $\Omega_\Lambda^c$ with $\Omega_H^c$,
 we obtain
that $\P\big\{\Omega_H^c\cap(\Omega_o(z)\cup \Omega_d(z))\big\}\le CN^{-c\log\log N}$
for each fixed $z=E+10i$, $|E|\le 5$. Finally, this estimate can be extended
 to hold simultaneously for all $z=E+10i$, $|E|\le5$
using an $N^{-10}$-net as for the proof of  \eqref{B12}.
This proves \eqref{whomest}.

Similarly, the argument \eqref{estoff}--\eqref{gij} shows that in the set $\wh \Omega^c$,
we have
\be
   \Lambda_o(z)\le \frac{C(\log N)^{5+2\al}}{\sqrt{M}}, \qquad z= E+10i,
\label{Lambdaoff}
\ee
and the argument
\eqref{defUpsilon}--\eqref{41} guarantees that 
\be
  \Upsilon(z)\le  \frac{C(\log N)^{5+2\al}}{\sqrt{M}}, \qquad z= E+10i,
\label{upsi}
\ee
in  $\wh \Omega^c$.
Finally, to control $\Lambda_d$, we use that from the
self consistent equation \eqref{idd} and the 
definition of $m_{sc}$, we have
\be\label{temp1.47}
v_n=\frac{\sum_{i}\sigma^2_{ni}v_i+O(\Upsilon)}{(z+m_{sc}+\sum_{i}\sigma^2_{ni}
v_i+O(\Upsilon))(z+m_{sc})}, \,\,\,1\leq n\leq N.
\ee
For $\eta=10$, with \eqref{temp2.8}, we have  $|z+m_{sc}(z)|>2$. Using
 $|G_{ii}|\leq \eta^{-1}=\frac{1}{10}$ and $|m_{sc}|\leq \eta^{-1}=\frac{1}{10}$,  we obtain 
 \be\label{vileq2}
 |v_i|\leq 2/\eta\leq 1/5, \qquad 1\leq i\leq N.
 \ee 
Using \eqref{upsi},
together with $|z+m_{sc}(z)|>2$
 and  \eqref{vileq2}, we obtain that   the absolute value of the r.h.s  
of \eqref{temp1.47} is less than 
\be
\frac{\sup_i|v_i|}{|z+m_{sc}(z)|-\sup_i|v_i|}+O(\Upsilon).
\ee
Taking the absolute value of \eqref{temp1.47} and maximizing over $n$, we have  
\be\label{temp1.49}
\Lambda_d = \sup_n |v_n|\leq \frac{\Lambda_d }{|z+m_{sc}|-\Lambda_d }+O(\Upsilon).
\ee
Since the  denominator satisfies $|z+m_{sc}(z)|-\sup_i|v_i|\geq 2-1/5$,
\be
    \Lambda_d\le C\Upsilon
\label{Lambdad}
\ee
 follows from the last equation. Combining it with \eqref{Lambdaoff} and \eqref{upsi},
we obtain  \eqref{alambinit}, and this completes the proof of Lemma
\ref{lm:initial}. \qed

\bigskip

\textit{Proof of Theorem \ref{thm:detailed}}. 
 Lemma \ref{thm: stepone} states that, in the event $\Omega^c$,
if $\Lambda_d(z)+\Lambda_o(z) \le R(z)$ then  $\Lambda_d(z)+\Lambda_o(z) \le S(z)$
with 
\[
R(z): =  (\log N)^{-3/2} (\theta(z))^{-1}, \qquad S(z) := 
 (\log N)^{6+2\al} 
\frac{ (\kappa+\eta)^{1/4} }{\sqrt{M\eta}\,\,}\theta(z).
\]
By   assumption \eqref{relkaeta} of Theorem \ref{thm:detailed}, 
 we have $S(z)< R(z)$ for any $z\in S$ and these functions are
continuous.
Lemma \ref{lm:initial} states that in the set $\wh \Omega^c$
the bound $\Lambda_d(z)+\Lambda_o(z)
 \le R(z)$  holds for $\eta=10$.

 Thus by a continuity argument,
 $\Lambda_d(z)+ \Lambda_o(z)\le S(z)$ in the set $\Omega^c\cap \wh \Omega^c$
 as long as the condition
 \eqref{relkaeta} is satisfied. Finally, once $ \Lambda_o(z)\le S(z)$ is proven,
we can use $S(z) \le R(z)$ (in the domain $D$) and Lemma \ref{thm: stepone} once more
to conclude the stronger bound on $\Lambda_o(z)$.
This proves  Theorem \ref{thm:detailed}.

We  record that combining the bound on $\Lambda_d, \Lambda_o$
with \eqref{41},
we also proved that under the assumption  \eqref{relkaeta} we have
\be
  \Lambda_d(z)+ \Lambda_o(z) + \Upsilon(z)\le C (\log N)^{16+4\al} 
\frac{ (\kappa+\eta)^{1/4} }{\sqrt{M\eta}\,\,}\theta(z)  
\qquad\mbox{in $\Omega^c\cap\wh\Omega^c$}.
\label{LLU}
\ee
\qed

\section{Local semicircle law}

In this section we strengthen the estimate of Theorem \ref{thm:detailed} for
the Stieltjes transform $m(z) = \frac{1}{N}\sum_i G_{ii}$. The key improvement is
that $|m-m_{sc}|$ will be estimated with a precision $(M\eta)^{-1}$
while the $|G_{ii}-m_{sc}|$ was controlled by a precision $(M\eta)^{-1/2}$ only
(modulo logarithmic terms and terms expressing the deterioration 
of the estimate near the edge).

\begin{theorem}\label{prop:mmsc} Assume the conditions of Theorem \ref{thm:detailed}
and recall the notations $\kappa= \kappa_E:= \big| |E|-2\big|$  and $\theta(z)$ from \eqref{q}.
 Define the domain
\be\label{domainD}
   D^*:= \Big\{ z=E+i\eta \in \C\; : \; |E|\le 5,  \;\;\frac{1}{N}\le \eta
\le 10\; , 
  \quad M\eta  \ge (\log N)^{24+6\al}\theta^4(z)(\kappa+\eta)^{1/2}\Big\}.
\ee
 Then for any $\e>0$  and $K>0$
there exists a constant $C=C(\e,K)$ such that
\be
   \P \Big( \bigcup_{z\in D^*}\Big\{ |m(z)-m_{sc}(z)|\ge 
 \frac{N^\e\theta^2(z) }{M\eta } \Big\}
\Big) \le \frac{C(\e, K)}{N^K}.
\label{eq:mmsc}
\ee
\end{theorem}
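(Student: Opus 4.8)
The plan is to turn the self-consistent equation into an equation for the \emph{averaged} error $\bar v:=m-m_{sc}=\frac1N\sum_i v_i$ and to exhibit the extra gain of a factor $\sqrt{M\eta}$ as a cancellation in that average. Starting from \eqref{mainseeq} (equivalently, expanding $G_{ii}^{-1}-m_{sc}^{-1}$ and using $\sum_j\sigma_{ij}^2=1$) one obtains, for each $i$, the expansion $v_i=m_{sc}^2\big(\sum_j\sigma_{ij}^2v_j-\Upsilon_i\big)+O\!\big(\Lambda_d^2+\Lambda_d\Upsilon\big)$ with $\Upsilon_i=A_i+h_{ii}-Z_i$ as in \eqref{seeqerror} and $\Upsilon:=\max_i|\Upsilon_i|$. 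Averaging over $i$, the $v_j$-term reproduces $\bar v$, so $(1-m_{sc}^2)\bar v=-m_{sc}^2\,\ov\Upsilon+O\!\big(\Lambda_d^2+\Lambda_d\Upsilon\big)$, i.e.\ $|m-m_{sc}|\le C|1-m_{sc}^2|^{-1}\big(|\ov\Upsilon|+\Lambda_d^2+\Lambda_d\Upsilon\big)$. On the good event $\Omega^c\cap\wh\Omega^c$ of Section~\ref{ld-semi} the quantities $\Lambda_d,\Lambda_o,\Upsilon$ are all bounded by $(\log N)^{C}(\kappa+\eta)^{1/4}(M\eta)^{-1/2}\theta(z)$ via Theorem~\ref{thm:detailed} and \eqref{LLU}; using the sharp asymptotics $|1-m_{sc}^2|^{-1}\sim(\kappa+\eta)^{-1/2}$ from \eqref{smallz} and the inequality defining $D^*$ to control the various powers of $\theta$ and $\kappa+\eta$, the two quadratic terms are already $\le N^\e\theta^2(z)/(M\eta)$. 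Thus the theorem reduces to showing $|\ov\Upsilon|\le N^\e\theta^2(z)/(M\eta)$ on $\Omega^c\cap\wh\Omega^c$ with probability $\ge 1-CN^{-K}$.

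Writing $\ov\Upsilon=\frac1N\sum_i A_i+\frac1N\sum_i h_{ii}-\frac1N\sum_i Z_i$, the first two sums are routine. Each $|A_i|\le C/M+C\Lambda_o^2$ by \eqref{Aest}, which on the good event is $\le N^\e\theta^2/(M\eta)$ since $\theta\gtrsim 1$ and $M\eta\le 10N$. The $h_{ii}$'s are independent, mean zero and uniformly subexponential with $\sum_i\sigma_{ii}^2\le N/M$, so a standard large-deviation/high-moment bound (as in Lemma~\ref{generalHWT}) gives $\big|\frac1N\sum_i h_{ii}\big|\le N^\e(NM)^{-1/2}$ with probability $\ge 1-CN^{-K}$, and $(NM)^{-1/2}\le(M\eta)^{-1}$ because $M\eta^2\lesssim N$. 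What remains is the crucial term $\frac1N\sum_i Z_i$: the \emph{individual} bound $|Z_i|\le(\log N)^{C}\Phi$ of Lemma~\ref{selfeq1} is off by a whole factor $\Phi$, and since $\Phi^2\le(\log N)^{C}(\kappa+\eta)^{1/2}(M\eta)^{-1}\le(\log N)^{C}(M\eta)^{-1}$ on $\Omega^c\cap\wh\Omega^c$ it suffices to prove $\big|\frac1N\sum_i Z_i\big|\le N^\e\Phi^2$ there.

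This is the heart of the argument: a fluctuation-averaging (high-moment) estimate. The key structural input is that $Z_i=(1-\E_i)Z_{ii}^{(i)}$, where $\E_i$ is the expectation over the $i$-th column of $H$, so $\E_i Z_i=0$. One estimates $\E\big[\mathbf 1_{\Omega^c\cap\wh\Omega^c}\,\big|\tfrac1N\sum_i Z_i\big|^{2p}\big]$ for a large even power $2p$, expanding into $N^{-2p}\sum_{\mathbf i}\E\prod_{k=1}^{2p}(1-\E_{i_k})(\,\cdots\,)$. Inside each factor the resolvent minors $G^{(i_k)}$ are expanded via the identities \eqref{GiiGjii}--\eqref{GijGkij} until every $Z_{i_k}$ is expressed through the single common minor $G^{(\{i_1,\dots,i_{2p}\})}$, the columns $\ba^{i_1},\dots,\ba^{i_{2p}}$, and explicit correction factors $G_{i_ki_m}$ ($k\neq m$), each of which is $\le(\log N)^{C}\Phi$ by \eqref{alamb3}. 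In the fully reduced leading term $Z_{i_k}$ depends only on the common minor and on $\ba^{i_k}$, so if some index occurs only once among $i_1,\dots,i_{2p}$ the operator $1-\E_{i_k}$ annihilates the term; hence only configurations in which every index occurs at least twice survive, giving at most $p$ distinct indices, at most $N^p$ choices, and a contribution $\le(\log N)^{Cp}\Phi^{2p}/N^p$. Each correction factor supplies an extra $\Phi$ but relaxes the ``appears twice'' constraint; a careful accounting of this trade-off shows that every term, leading or with corrections, is $\le(Cp\log N)^{Cp}\Phi^{4p}$, where the inequality $\Phi^2\ge c/N$ (valid throughout $S$, hence on $D^*$) is used to absorb $\Phi^{2p}/N^p\le\Phi^{4p}$. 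Markov's inequality with $p=p(\e,K)$ large then yields $\P\big(\{\big|\tfrac1N\sum_iZ_i\big|\ge N^\e\Phi^2\}\cap\Omega^c\cap\wh\Omega^c\big)\le CN^{-K}$.

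Combining the three pieces gives \eqref{eq:mmsc} for each fixed $z\in D^*$; to obtain it simultaneously for all $z\in D^*$ one repeats the $N^{-10}$-net argument from the proof of \eqref{B12}, using that $m(z)$, $\Phi(z)$ and $\theta(z)$ are Lipschitz on $S$ with constant $N^{C}$, so that control on a polynomially fine net transfers to all of $D^*$ at the cost of a harmless factor $N^{C}$ in the probability. The step I expect to be by far the hardest is the fluctuation-averaging estimate of the third paragraph: the naive bound on $\frac1N\sum_i Z_i$ loses exactly the factor that the theorem needs, and recovering it forces one to track precisely how the resolvent-expansion corrections interact with the ``each index appears twice'' constraint — this is the ``new estimate on the correlations of the error terms'' promised in the introduction.
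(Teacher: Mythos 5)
Your proposal follows essentially the same route as the paper: reduce $|m-m_{sc}|$ via the averaged self-consistent equation to bounding $\ov\Upsilon$, dispose of the $A_i$ and $h_{ii}$ contributions routinely, and gain the extra factor $(M\eta)^{-1/2}$ on $\frac1N\sum_i Z_i$ by a high-moment fluctuation-averaging estimate exploiting $\E_i Z_i=0$, the resolvent-minor expansions, and the vanishing of terms in which some index appears only once (this is exactly Lemma \ref{motN} and its proof in Sections \ref{mo-est}--\ref{sec:gen}). The combinatorial trade-off you describe between extra off-diagonal factors and the relaxed repetition constraint is precisely the paper's dichotomy \eqref{stisss}/\eqref{istiss}, so the argument is correct and matches the paper's.
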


{\it Proof of Theorem \ref{prop:mmsc}.}
We will work in the set $\Omega^c\cap \wh\Omega^c$, which has almost
full probability by \eqref{B12}
and \eqref{whomest}. Note that the set
 $D^*$ is included in the domain defined by \eqref{relkaeta}, therefore
we can use the estimates from Section~\ref{ld-semi}.

As in \eqref{vi09}, where $\bar v=m(z)-m_{sc}(z)$, we have that 
$$
m-m_{sc}=-\frac{\zeta}{1-\zeta}\frac1N \sum_{i}\Upsilon_i
+O\Big(\frac{\zeta}{1-\zeta}(\Lambda_d+\Upsilon)^2 \Big)
$$
holds with a very high probability. Recall that $\zeta = m_{sc}^2(z)$
and we mostly omit the argument $z$ from the notations.
The quantities $\Lambda_d$, $\Upsilon_i$ and $\Upsilon$ 
were defined in \eqref{defLambda},  \eqref{seeqerror} and \eqref{defUpsilon}.
Then with \eqref{LLU} we have 
$$
m(z)-m_{sc}(z)=O\left(\frac{\zeta}{1-\zeta} \;\frac1N \sum_{j}\Upsilon_j\right)
+O\Big(\frac{N^\e}{|1-\zeta|}\frac{\theta^2(z)\sqrt{\kappa+\eta}}{M\eta} \Big)
$$
holds with a very high probability for any small $\e>0$. 
Recall that $\Upsilon_i = A_i + h_{ii} -Z_i$. 
We have, from \eqref{defA}, \eqref{lowerbound} and $\sigma_{ij}^2\le M^{-1}$,
$$
    A_j \le \frac{C}{M} + C\Lambda_o^2 \le CN^\e\frac{\theta^2\sqrt{\kappa+\eta}}{M\eta },
$$
where we used \eqref{LLU} to bound $\Lambda_o$ and \eqref{thetaest} to control the $C/M$ term.

We thus obtain that
\be\label{tempd28}
m-m_{sc}=O\left(\frac{\zeta}{1-\zeta}\left(
\frac1N\sum_{i}Z_i-
\frac1N\sum_{i}h_{ii}\right)\right)
+O\Big(\frac{N^\e}{|1-\zeta|}\frac{\theta^2\sqrt{\kappa+\eta}}{M\eta } \Big)
\ee
holds with a very high probability. 
Since $h_{ii}$'s are independent,  applying the 
first estimate in the large deviation Lemma \ref{generalHWT}, we have 
\be\label{sumhii}
\P\left(\Big|\frac1N\sum_{i}h_{ii}\Big|\geq (\log N)^{3/2+\al} \frac{1}{\sqrt{MN}} \right)\leq CN^{-c\log\log N}. 
\ee
On the complement event, the estimate $ (\log N)^{3/2+\al}(MN)^{-1/2}$ can be included in
the last error term in \eqref{tempd28}.
It only remains to bound 
$$
\frac1N\sum_{i=1}^NZ_i,
$$
whose moment is bounded in the next lemma which will be proved in Sections  \ref{mo-est} and \ref{sec:gen}.

\begin{lemma}\label{motN} 
For fixed $z$ in domain $D^*$ \eqref{domainD} and any even number $p$, we have
\be\label{52}
\E\left|\frac1N\sum_{i=1}^NZ_i\right|^{p}\leq C_{p}\left((\log N)^{3+2\al}X^2 \right)^p
\ee
for sufficiently large $N$, where 
\be\label{defX}
X=X(z):=(\log N)^{10+2\al} \frac{(\kappa+\eta)^{1/4}}{\sqrt{M\eta}}, \qquad z=E+i\eta, \quad
\kappa = \big| |E|-2\big|.
\ee
\end{lemma}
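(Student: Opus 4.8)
\noindent\textit{Proof plan.}
The plan is to bound the $p$-th moment by a direct expansion in which the centering built into each $Z_i$ upgrades the ``naive'' size $\Phi\sim X$ of a single term to $X^2$ after averaging. Since $p$ is even, I would write
\be
\E\Big|\frac1N\sum_{i=1}^N Z_i\Big|^{p}=\frac1{N^{p}}\sum_{i_1,\dots,i_p}\E\Big[\prod_{a=1}^{p/2}Z_{i_a}\prod_{a=p/2+1}^{p}\overline{Z_{i_a}}\Big].
\ee
All estimates of Section~\ref{ld-semi} are available on the event $\Omega^c\cap\wh\Omega^c$ of Lemmas~\ref{selfeq1} and \ref{lm:initial}, whose complement has probability at most $CN^{-c\log\log N}$ by \eqref{B12} and \eqref{whomest}; on the complement I would use the trivial bound $|Z_i|\le CN^{2}$ (valid for $z\in D^*$ because $\eta\ge N^{-1}$), so that the bad event contributes far less than the right-hand side of \eqref{52}. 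Thus it suffices to bound the expectation of the product restricted to the good event, on which $|h_{ij}|\le(\log N)^{2\al}|\sigma_{ij}|$ and the a priori bounds \eqref{LLU} on $\Lambda_d,\Lambda_o,\Upsilon$ (hence on the entries of $G^{(\bT)}$ through Lemma~\ref{basicIG}) may be treated as deterministic inputs, and in particular $|Z_{i_a}|\le(\log N)^{5+2\al}\Phi\le(\log N)^{c}X$ for each factor.

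The combinatorial core is the estimate of $\E[\prod_a Z_{i_a}]$ for a fixed multi-index. Recalling that $\E_{\ba^i}\overline{\ba^i_k}\ba^i_l=\delta_{kl}\sigma^2_{ki}$ one has
\be
Z_i=\sum_{k\ne i}\big(|h_{ki}|^2-\sigma^2_{ki}\big)G^{(i)}_{kk}+\sum_{\substack{k\ne l\\k,l\ne i}}\overline{h_{ki}}\,h_{li}\,G^{(i)}_{kl},
\ee
with $G^{(i)}$ independent of the $i$-th column $\ba^i$. Expanding each of the $p$ factors this way produces a sum over index configurations, each factor contributing two ``column'' indices and one entry of $G^{(i_a)}$. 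Since the matrix entries are independent with mean zero, a configuration vanishes unless every distinct $h_{kl}$ appearing in the resulting monomial occurs there at least twice; to exploit this I would peel off, via the resolvent identities \eqref{GijGkij} of Lemma~\ref{basicIG}, the dependence of each $G^{(i_a)}$ on the remaining columns $\ba^{i_b}$ ($b\ne a$), so that every resolvent entry that survives can be replaced by its a priori bound ($|G^{(i_a)}_{kk}-m_{sc}|$ or $|G^{(i_a)}_{kl}|$ from \eqref{LLU}) while each $h$-monomial is completely paired. The resulting partition of the column indices dictates the number of free summation indices.

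Then comes the power counting, and the point where the gain appears. Each pairing of an $h$-variable identifies two indices and contributes $\sigma^2\le M^{-1}$; each off-diagonal resolvent entry is $O(X(\log N)^{-c})$; and a summed square of off-diagonal entries is converted by the Ward identity $\sum_l|G^{(i)}_{kl}|^2=\eta^{-1}\im G^{(i)}_{kk}$, exactly as in \eqref{B14}, into a factor $\eta^{-1}(\im m_{sc}+\Lambda_d+C\Lambda_o^2)\lesssim(\kappa+\eta)^{1/2}\eta^{-1}$ up to logarithms. These ingredients already give each single $Z_i$ its size $\Phi\sim X$; the improvement to $X^2$ comes from the centering. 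Indeed, the diagonal configurations $i_1=\dots=i_p$ contribute only $N^{1-p}((\log N)^{c}X)^p$, which is dominated by the claimed bound because $X\ge cN^{-3/4}$ on $D^*$; and in any other configuration a column index $i$ occurring in exactly one factor would, after $\E_{\ba^i}$, annihilate that factor, so at least two factors must be linked through shared $h$-variables, and each such link forces an extra off-diagonal pairing, i.e. an extra factor $X$ per pair of factors beyond the naive $\Phi^p$ bound. Collecting the counting gives $N^{-p}\sum_{i_1,\dots,i_p}|\E[\prod_a Z_{i_a}]|\le(C(\log N)^{3+2\al}X^2)^p$, which is \eqref{52}. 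The main obstacle will be precisely this bookkeeping: because $G^{(i)}$ depends on essentially all of $H$, the $Z_i$ are far from independent, so the resolvent expansion used to expose the effective independence must be organized to terminate after a number of terms bounded in $p$ while still letting one check that the would-be leading terms cancel. I would carry this out first for the flat (generalized Wigner) variance profile in Section~\ref{mo-est}, where the combinatorics is cleanest, and then transfer it to an arbitrary admissible variance matrix $\Sigma$ in Section~\ref{sec:gen} by comparing the two variance profiles.
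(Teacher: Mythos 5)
You have correctly identified the mechanism behind the gain from $X$ to $X^2$: each $Z_i=\mathbb{IE}_i Z_{ii}^{(i)}$ is centered in its own column, so a product of $Z$'s survives the expectation only if each factor retains the part of its Green function that genuinely depends on some other relevant column, and that part is smaller by extra off-diagonal resolvent entries. This is indeed the idea of the paper's proof. But the step you defer as ``the main obstacle'' --- organizing the peeling of the dependence of $G^{(i_a)}$ on the other columns so that it terminates --- is precisely the content of the argument, and your proposed substitute (an $h$-entry-level pairing expansion) would not run: the entries $G^{(i)}_{kl}$ are not polynomials in the $h_{kl}$, so the rule ``every $h_{kl}$ must occur at least twice'' has no direct meaning, and a Neumann-type expansion in $h$ neither terminates nor converges for $\eta\sim N^{-1}$ inside the spectrum. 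The paper avoids any expansion in $h$: it introduces the exact finite decomposition $G^{(i)}_{kl}=\sum_{\T\subset\S}\mathcal G^{(i),\S,(\T)}_{kl}$ with $\mathcal G^{(i),\S,(\T)}_{kl}=\sum_{\U:\,\T\subset\U\subset\S\setminus\{k,l\}}(-1)^{|\U|-|\T|}G^{(i\,\U)}_{kl}$ (Definition \ref{defGiST}), where each piece is \emph{exactly} independent of the rows in $\T\cup\{i\}$, and then proves by induction (Lemma \ref{exiF}) that each piece is a finite rational expression with at least $|\S\setminus(\T\cup\{k,l\})|+1$ off-diagonal factors in the numerator, whence $|\mathcal G^{(i),\S,(\T)}_{kl}|\le CX^{|\S\setminus(\T\cup\{k,l\})|+1}$ (Lemma \ref{decomG}). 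This finite, set-indexed inclusion--exclusion is the missing idea.

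The power counting is also insufficient as stated. You claim ``at least two factors must be linked'' with ``an extra factor $X$ per pair of factors,'' which yields at best $X^{p}\cdot X^{p/2}$, short of $X^{2p}$. What must be shown --- and what Lemma \ref{boundATS} establishes through the combinatorial inequality $\sum_j|\T_j|\le sp-2s$ for non-vanishing configurations --- is that \emph{every} factor must retain a column-dependent piece, each costing a full extra factor of $X$, so that a configuration with $s$ distinct indices contributes at most $X^{2s}$, converted to $N^{p-s}X^{2p}$ via $X^2\ge N^{-1}$ (this is \eqref{XX1N}; note your bound $X\ge cN^{-3/4}$ is too weak to absorb the diagonal configuration when $p=2$). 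Finally, Sections \ref{mo-est} and \ref{sec:gen} are organized by moment order ($p=2,4$ versus general $p$), not by variance profile; no comparison between a flat and a general $\Sigma$ occurs anywhere in the argument.
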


Using Lemma \ref{motN}, we have that for any $\e>0$ and $K>0$, 
$$
\P\left(\frac1N\left|\sum_{i=1}^NZ_i\right|\geq N^{\e}\frac{\sqrt{\kappa+\eta}}{M\eta }\right)\leq N^{-K}
$$
for sufficiently large $N$. Combining this with \eqref{sumhii} and \eqref{tempd28}
and noting that $|1-\zeta|\sim \sqrt{\kappa +\eta}$, see \eqref{smallz}, 
 we obtain \eqref{eq:mmsc} and complete the proof of Theorem~\ref{prop:mmsc}. 

\qed

\section{Empirical counting function}\label{sec:emp}

In this section we translate the information on the Stieltjes transform obtained
in Theorem~\ref{prop:mmsc} to an asymptotic on the empirical counting
function.  The main ingredient for the first step is
the following lemma based upon the Helffer-Sj\"ostrand formula.
We will formulate this lemma for general signed measures, but we will apply it
to the  Stieltjes  transform $m^\Delta=m-m_{sc}$ of the difference
between the empirical density and the semicircle law.
A similar statement was already proven in Lemma B.1 in \cite{ERSY}
and Lemma 7.7 in \cite{EYY}.

\begin{lemma}\label{lm:HS1} 
Let $\varrho^\Delta$ be a signed measure on the real line with
 $\mbox{supp} \;\varrho^\Delta \subset
[-K,K]$ for some fixed constant $K\geq 4$.
For any $E_1, E_2 \in [-3,3]$ and $\eta\in (0, 1/2]$ we
define $f(\lambda)=f_{E_1,E_2,\eta}(\lambda)$
to be a characteristic function of $[E_1, E_2]$ smoothed on scale $\eta$, i.e.,
$f\equiv 1$ on $[E_1, E_2]$, $f\equiv 0$ on $\R\setminus [E_1-\eta, E_2+\eta]$
and $|f'|\le C\eta^{-1}$, $|f''|\le C\eta^{-2}$. For any $x\in \R$, 
set $\kappa_x:= \big||x|-2\big|$.
Let $m^\Delta$ be the Stieltjes transform of $\varrho^\Delta$.
Suppose for some positive $U$, and  non-negative constant $A$ we have
\be
   |  m^\Delta (x+iy)|\le \frac{CU}{ y(\kappa_x+y)^A} \qquad \mbox{for} \qquad
1\geq y>0, \quad |x| \le K+1,
\label{trivv1}
\ee
and in case of $A>0$ we additionally assume $\eta \le 
 \frac{1}{2} \min\{\kappa_{E_1},\kappa_{E_2}\}$. 
Then
\be
   \left|\int f_{E_1,E_2,\eta}(\lambda)\varrho^\Delta(\lambda)\rd\lambda \right|  \le
  \frac{CU|\log \eta|}{\big[\min\{\kappa_{E_1},\kappa_{E_2}\}\big]^A}
\label{genHS1}
\ee
with some constant $C$ depending on $K$ and  $A$. 
\end{lemma}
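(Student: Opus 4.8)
\emph{Proof proposal.} The plan is to apply the Helffer--Sj\"ostrand functional calculus to $f$ and then estimate the resulting two--dimensional integral against $m^\Delta$, cutting the $y$--integral at the scale $|y|=\eta$. Fix an even $\chi\in C_c^\infty(\R)$ with $\chi\equiv 1$ on $[-1/2,1/2]$, $\mbox{supp}\,\chi\subset[-1,1]$ and $|\chi'|\le C$, and let $\widetilde f(x+iy):=\bke{f(x)+iyf'(x)}\chi(y)$ be the associated almost--analytic extension, so that $\widetilde f(x)=f(x)$ on $\R$ and
\be\label{pp1}
  \pt_{\bar z}\widetilde f(x+iy)=\tfrac{i}{2}\bke{\,yf''(x)\chi(y)+\bke{f(x)+iyf'(x)}\chi'(y)\,}.
\ee
Since $m^\Delta(z)=\int(\lambda-z)^{-1}\varrho^\Delta(\lambda)\,\rd\lambda$ is holomorphic on $\C\setminus[-K,K]$, the Helffer--Sj\"ostrand formula expresses $\int f(\lambda)\varrho^\Delta(\lambda)\rd\lambda$ as a universal constant times $\int_{\R^2}\pt_{\bar z}\widetilde f(x+iy)\,m^\Delta(x+iy)\,\rd x\,\rd y$. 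As $f$ is supported in $[E_1-\eta,E_2+\eta]\subset[-4,4]$ and $K\ge 4$, the hypothesis \eqref{trivv1} applies on all of $\{|x|\le 4,\; 0<|y|\le 1\}\supset\mbox{supp}\,\pt_{\bar z}\widetilde f$, and by $m^\Delta(\bar z)=\overline{m^\Delta(z)}$ (valid since $\varrho^\Delta$ is real) it suffices to bound the contribution of $y>0$.

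The $\chi'$--term in \eqref{pp1} is harmless: $\chi'$ is supported in $1/2\le y\le 1$, where $|m^\Delta|\le CU$ by \eqref{trivv1}; there $|f(x)+iyf'(x)|\le C$ on a bounded set of $x$, the factor $yf'$ contributing only on the two $\eta$--neighbourhoods of $E_1,E_2$ (of total measure $\le 2\eta$, with $|f'|\le C\eta^{-1}$), so this term is $O(U)$. Since $E_1,E_2\in[-3,3]$ gives $\min\{\kappa_{E_1},\kappa_{E_2}\}\le 2$ and $\eta\le 1/2$ gives $|\log\eta|\ge\log 2$, an $O(U)$ bound is absorbed into the right side of \eqref{genHS1}. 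It remains to bound the $yf''$--term. Here $f''$ is supported on $[E_1-\eta,E_1]\cup[E_2,E_2+\eta]$, a set of measure $\le2\eta$, with $|f''|\le C\eta^{-2}$; likewise $|f'|\le C\eta^{-1}$ on a set of measure $\le2\eta$. When $A>0$, the extra hypothesis $\eta\le\tfrac12\min\{\kappa_{E_1},\kappa_{E_2}\}=:\tfrac12\kappa_*$ forces $\kappa_x\ge\tfrac12\kappa_*$ for every $x$ in those supports, whence $(\kappa_x+y)^{-A}\le C\kappa_*^{-A}$ there; when $A=0$ no such restriction is needed and $\kappa_*^{-A}=1$.

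For $0<y\le\eta$ we estimate the $yf''$--term directly: by \eqref{trivv1}, $|yf''(x)\chi(y)m^\Delta(x+iy)|\le C\eta^{-2}\cdot CU\kappa_*^{-A}$, and integrating over the $x$--support (measure $\le2\eta$) and over $y\in(0,\eta]$ gives a contribution $\le CU\kappa_*^{-A}$. For $y\ge\eta$ we integrate by parts twice. First, integration by parts in $x$ (no boundary terms, $f'$ being compactly supported) replaces $yf''(x)\chi(y)m^\Delta$ by $-yf'(x)\chi(y)\pt_x m^\Delta$; using the Cauchy--Riemann relation $\pt_x m^\Delta=-i\,\pt_y m^\Delta$, valid for $y>0$, and then integrating by parts in $y$ over $[\eta,1]$ (with no contribution at $y=1$ since $\chi(1)=0$) turns the $yf''$--term over $\{y\ge\eta\}$ into
\be\label{pp2}
 -\,i\,\eta\,\chi(\eta)\int_\R f'(x)\,m^\Delta(x+i\eta)\,\rd x\;-\;i\int_\eta^1\!\!\int_\R f'(x)\bke{\chi(y)+y\chi'(y)}m^\Delta(x+iy)\,\rd x\,\rd y.
\ee
By \eqref{trivv1} at height $\eta$, the first term in \eqref{pp2} is bounded by $\eta\cdot C\eta^{-1}\cdot CU(\eta\kappa_*^A)^{-1}\cdot 2\eta=CU\kappa_*^{-A}$; the second is bounded by $C\eta^{-1}\cdot 2\eta\cdot CU\kappa_*^{-A}\int_\eta^1 y^{-1}\,\rd y=CU\kappa_*^{-A}|\log\eta|$. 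Adding the $\chi'$--term, the $0<y\le\eta$ contribution, and the two terms of \eqref{pp2} yields \eqref{genHS1}.

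The only genuinely delicate point is that the crude pointwise bound \eqref{trivv1} on $m^\Delta$ is logarithmically divergent as $y\to0$; this is exactly why the cut at $y=\eta$ together with the double integration by parts for $y\ge\eta$---which trades one derivative of $f$ against the holomorphy of $m^\Delta$---is needed, at the unavoidable cost of the boundary term at $y=\eta$, which is luckily of the acceptable size $U\kappa_*^{-A}$. Everything else is routine bookkeeping, using that $f',f''$ are localized in $\eta$--neighbourhoods of $E_1,E_2$, where $\kappa_x$ and $\kappa_*$ are comparable.
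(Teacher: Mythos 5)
Your proposal is correct and follows essentially the same route as the paper: the Helffer--Sj\"ostrand representation, a cut of the $y$--integral at scale $\eta$ with a direct bound below it, and a double integration by parts (in $x$, then via holomorphy in $y$) above it, with the boundary term at $y=\eta$ and the $\int_\eta^1 y^{-1}\,\rd y$ producing the $U\kappa_*^{-A}$ and $U|\log\eta|\kappa_*^{-A}$ contributions exactly as in the paper's estimates. The only cosmetic difference is that you work with the full $m^\Delta$ and the relation $\pt_x m^\Delta=-i\pt_y m^\Delta$, whereas the paper phrases the same steps in terms of $\im m$ and $\re m$.
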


\medskip

{\it Proof of Lemma \ref{lm:HS1}.} For simplicity, we drop the $\Delta$ superscript
in the proof.
Analogously to  (B.13), (B.14)  and (B.15) in \cite{ERSY}
we obtain that (with $f=f_{E_1,E_2,\eta}$)
\begin{eqnarray}\nonumber
\left|\int f(\lambda)\varrho(\lambda)\rd\lambda \right|
\leq && C\int_{\R^2} ( |f(x)| +|y| |f'(x)|) |\chi'(y)| 
| m(x+iy)| \rd x\rd y\non \\ 
&& +C\left|\int_{|y|\leq \eta}\int y f''(x) \chi(y)
\im m(x+iy)\rd x\rd y\right|\label{intr2fe1} \\
&&+C\left|\int_{|y|\geq \eta}\int_\R y f''(x)\chi(y) \im  m(x+iy)\rd x
\rd y\right|, \non
\end{eqnarray}
where $\chi(y)$ is  a smooth cutoff function with support in $[-1,1]$, with $\chi(y) = 1$ for
$|y|\leq  1/2$ and with bounded derivatives.
The first term is estimated by, with \eqref{trivv1}, 
\be\label{xxl}
 \int_{\R^2} ( |f(x)| +|y| |f'(x)|) |\chi'(y)| 
| m(x+iy)| \rd x\rd y
 \le CU. 
\ee

For the second term in r.h.s of \eqref{intr2fe1} we use 
that from  \eqref{trivv1}  it follows for any $1 \ge y>0$ that
\be
  y |\im  m(x+iy)|\le \frac{CU}{(\kappa_x+y)^A}.
\label{ym}
\ee
 With $|f''|\leq C\eta^{-2}$ and
\be
{\rm supp} f'(x)\subset\{|x-E_1|\leq \eta\}\cup\{|x-E_2|\leq \eta\} ,
\label{fpr}
\ee
we get
$$
  \mbox{second term in r.h.s of \eqref{intr2fe1}} 
\le \frac{CU}{ \big[\min\{\kappa_{E_1},\kappa_{E_2}\}\big]^A}.
$$

As in (B.17) and (B.19) in \cite{ERSY}, we integrate the third term in
 \eqref{intr2fe1} by parts first in $x$, then in $y$. Then we bound it with an absolute value by 
\be\label{temp7.501}
C\int_{|x|\leq K+1}\!\!\!\eta |f'(x)| |\re m(x+i\eta)|\rd x+ 
 C \int _{\R^2}  \!\!\!|f '(x)  \chi'(y)   \re m (x +iy) |
+
\frac C\eta \int_{\eta\le y\leq 1}\int_{ |x-E|\leq \eta} \!\!\!|\re m(x+iy)|\rd x\rd y.
\ee
The second term is bounded in \eqref{xxl}. By using \eqref{trivv1} and \eqref{fpr}
in the first term and  \eqref{trivv1}
in the third, we have
\begin{align}\nonumber
\eqref{temp7.501}\leq& \frac{CU}{ \big[\min\{\kappa_{E_1},\kappa_{E_2}\}\big]^A}+CU+
CU\eta^{-1}\sum_{k=1,2}\int_{|x-E_k|\leq \eta}\rd x
 \int_{\eta\le y\le 1}\frac{1}{y(\kappa_x+y)^A}\rd y  \non\\
\leq & \frac{CU|\log \eta|}{\big[\min\{\kappa_{E_1},\kappa_{E_2}\}\big]^A}.  
 \qquad \mbox{\qed}\non
\end{align}

Let $\la_1\le \la_2 \le \ldots \le \la_N$ be the ordered
eigenvalues of a universal Wigner matrix.
We define the {\it normalized empirical counting function}  by
\be
 {\mathfrak n}(E):= \frac{1}{N}\# \{ \lambda_j\le E\}
\label{deffn}
\ee
and the {\it  averaged counting function} by
\be\label{defnlambda}
n(E)=\frac1N\E \#[{\lambda_j\leq E}].
\ee
Finally, let
\be
 n_{sc}(E) :  = \int_{-\infty}^E \varrho_{sc}(x)\rd x
\label{nsc}
\ee
be the distribution function
of the semicircle law which is very close to
the counting function of $\gamma$'s,  $n^\gamma(E):=\frac{1}{N}\#[\ga_j\le E]$.

\bigskip

We will  need some control on the spectral edge, we recall the  Lemma 7.2 from \cite{EYY}.

\begin{lemma}\label{N-1/6}
(1)  Let the universal Wigner matrix
 $H$ satisfy \eqref{sum}, \eqref{defM} and  \eqref{subexp} with  
$M\geq (\log N)^9$. Then we have
\be\label{resnlambda3}
n(-3)\leq CN^{-c\log\log N}\,\,\,{\rm and }\,\,\,n(3)\geq 1-CN^{-c\log\log N}.
\ee

(2) Let $H$ be a generalized Wigner matrix with 
subexponential decay, i.e., \eqref{sum}, \eqref{defM}, \eqref{VV} and \eqref{subexp} hold.
 Then 
\be\label{resnlambda1}
n(-2-N^{-1/6+\e})\leq Ce^{-N^{\eps'}}\,\,\,{\rm and }
\,\,\,n(2+N^{-1/6+\e})\geq 1-Ce^{-N^{\eps'}},
\ee
for  any small $\eps>0$ with an $\e'>0$ depending on $\e$.
Furthermore, for $K\geq 3$, 
\be\label{resnlambda2}
n(-K)\leq e^{-N^\e\log K}\,\,\,{\rm and }\,\,\,n(K)\geq 1-e^{-N^\e\log K},
\ee
for some $\eps> 0$.
\end{lemma}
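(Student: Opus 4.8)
\textit{Proof sketch.} The plan is to deduce all four bounds from a single tail estimate for the operator norm of $H$. Since $\#\{j:\lambda_j\le -a\}\le N\,{\bf 1}(\lambda_1\le -a)$ and $\#\{j:\lambda_j> a\}\le N\,{\bf 1}(\lambda_N> a)$, one has
$$
n(-a)\le\P(\lambda_1\le -a)\le\P(\|H\|\ge a),\qquad 1-n(a)\le\P(\lambda_N> a)\le\P(\|H\|\ge a),
$$
so it suffices to prove $\P(\|H\|\ge 3)\le CN^{-c\log\log N}$ in part~(1), and $\P(\|H\|\ge 2+N^{-1/6+\e})\le Ce^{-N^{\e'}}$ together with $\P(\|H\|\ge K)\le e^{-N^{\e}\log K}$ ($K\ge3$) in part~(2). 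Everything thus reduces to a quantitative large deviation bound for $\|H\|$, which I would obtain by the moment method: for any even integer $2k$,
$$
\P(\|H\|\ge a)\le a^{-2k}\,\E\,\tr H^{2k}.
$$

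To handle the subexponential tails I would first truncate each entry at a scale of the form $(\log N)^{C}|\sigma_{ij}|$; by \eqref{subexp} this alters nothing outside an event of probability at most $CN^{-c\log\log N}$ (the part of the entry above a scale $N^{\e_0}|\sigma_{ij}|$ being removed with superpolynomially small probability, the intermediate part handled by a standard splitting), and after recentring---which moves $\|H\|$ by a superpolynomially small, hence harmless, amount---one works with a mean-zero matrix whose entries are bounded by roughly $(\log N)^{C}/\sqrt M$. For this matrix one runs the closed-walk expansion of $\E\,\tr H^{2k}$. Here the hypotheses \eqref{sum} and \eqref{defM} are exactly what make the expansion mimic the semicircle one: by \eqref{sum} the sum over the ``tree'' walks---each edge traversed exactly twice, the walk graph being the double-cover of a tree---equals $N$ times the Catalan number $C_k=\frac1{k+1}\binom{2k}{k}\le 4^k$, independently of the variance profile; while by \eqref{defM} together with the entrywise bound, every remaining walk (the walk graph carries a cycle, or some edge has multiplicity $\ge3$) is suppressed relative to the tree contribution by at least one negative power of $M$, up to factors polynomial in $k$ and a bounded power of $\log N$ coming from the truncation. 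Consequently $\E\,\tr H^{2k}\le 2N\,4^k$ as long as $k$ stays in the relevant window: $k$ up to a fixed power of $\log N$ when $M\ge(\log N)^9$ as in part~(1), and $k$ up to a small power of $N$ (well below $N^{1/4}$) when $\sigma_{ij}^2\sim N^{-1}$ as in part~(2).

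It then remains to optimize $k$. In part~(1), $a=3$ gives $\P(\|H\|\ge3)\le 2N(4/9)^{k}$, so $k$ of order $(\log N)(\log\log N)$ yields $CN^{-c\log\log N}$. In part~(2), writing $\delta:=N^{-1/6+\e}$ and using $(2+\delta)^{-2k}4^k=(1+\delta/2)^{-2k}\le e^{-ck\delta}$, one gets $\P(\|H\|\ge2+\delta)\le 2N e^{-ck\delta}$; choosing $k=N^{\,1/6-\e+\e'}$ with $\e'>0$ small enough that $k$ still lies in the validity window (which forces $\e'$ below roughly $1/12$) gives $Ce^{-N^{\e'}}$, and for fixed $K\ge3$ the cruder choice $k=N^{\e}$, $a=K$ gives $2N(2/K)^{2k}\le e^{-N^{\e}\log K}$. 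The main obstacle---indeed the only real one---is the middle step: proving $\E\,\tr H^{2k}\le 2N\,4^k$ for $k$ growing with $N$, i.e.\ controlling the non-tree and high-multiplicity walks using the two-sided information $\sum_j\sigma_{ij}^2=1$ and $\max_{ij}\sigma_{ij}^2\le M^{-1}$, while absorbing the bounded power-of-$\log N$ losses from the truncation. This is the F\"uredi--Koml\'os / Sinai--Soshnikov walk machinery; only its soft form (the upper bound $(1+o(1))NC_k$, not the sharp edge asymptotics) is needed, and since the $k$ required even for part~(2) stays far below the Tracy--Widom scale $N^{2/3}$, that soft form amply suffices.
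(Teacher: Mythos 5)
First, a contextual point: the paper does not prove this lemma at all — it is quoted verbatim as Lemma 7.2 of \cite{EYY}, so the "paper's proof" is a citation. Your strategy (reduce $n(\mp a)$ to $\P(\|H\|\ge a)$ and bound this by $a^{-2k}\E\tr H^{2k}$ with a Füredi--Koml\'os type walk expansion) is the standard route and is in the spirit of the argument in the cited source, and your observation that the row-stochasticity \eqref{sum} makes the tree-level contribution exactly $N C_k$ independently of the variance profile is exactly the right starting point. However, as written the proposal has genuine gaps rather than being a complete proof.

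The main gap is that the step you yourself call "the only real obstacle" — the bound $\E\tr H^{2k}\le 2N\,4^k$ in the stated ranges of $k$ — is the entire content of the lemma, and invoking the F\"uredi--Koml\'os/Sinai--Soshnikov machinery in its "soft form" does not cover the situation at hand: those arguments are formulated for Wigner matrices with comparable variances (i.e.\ $M\sim N$) and bounded or sub-Gaussian entries, whereas part (1) concerns universal Wigner matrices whose only smallness parameter is $M\ge(\log N)^9$. There the non-tree/high-multiplicity corrections are powers of $k^{c}/M$ (with $c$ coming from the walk combinatorics and from the truncation level), and whether $k\sim(\log N)(\log\log N)$ — which is what you need to convert $(4/9)^k$ into $N^{-c\log\log N}$ — is admissible against $M\ge(\log N)^9$ is precisely the numerology that must be verified; it cannot be quoted. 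A second, concrete problem concerns part (2): you truncate at scale $(\log N)^{C}|\sigma_{ij}|$, but by \eqref{subexp} this truncation fails on an event of probability only about $N^2e^{-(\log N)^{C/\al}}$, which is far larger than the target $e^{-N^{\e'}}$ in \eqref{resnlambda1}; the parenthetical "intermediate part handled by a standard splitting" is not an argument. To reach $e^{-N^{\e'}}$ you may only discard the part of the entries above $N^{\e_0}|\sigma_{ij}|$, so the moment method must be run with entries bounded merely by $N^{\e_0}/\sqrt{N}$ (or with the subexponential moments kept throughout), and it is exactly this weaker truncation that limits the admissible $k$ and produces the scale $N^{-1/6+\e}$ in the statement — so the validity window "well below $N^{1/4}$" and the claim that $k=N^{1/6-\e+\e'}$ lies inside it are again assertions of the thing to be proved. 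Finally, a minor point: in \eqref{resnlambda2} the bound must hold uniformly in $K\ge 3$, and since the truncation failure probability does not decay in $K$, the regime of very large $K$ needs a separate (easy, but not mentioned) argument, e.g.\ a crude moment bound using the untruncated subexponential entries; also note $2\log(K/2)<\log K$ at $K=3$, so the final optimization needs the slack $\e<\e_1$ you implicitly have, not the inequality as displayed.
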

\bigskip

With these preliminary  lemmas, we have the following theorem that we
state for universal Wigner matrices and for their subclass,
the generalized Wigner matrices in parallel.

\begin{theorem}\label{prop:count}
Let $A=2$  for universal Wigner matrices and $A=1$ for generalized Wigner matrices.
Suppose that the universal Wigner matrix ensemble satisfies \eqref{sum}, 
\eqref{defM} and  \eqref{subexp} with  
$M\geq (\log N)^{24+6\al}$ and the
 generalized Wigner matrix ensemble
 satisfies \eqref{sum}, \eqref{defM}, \eqref{VV} and \eqref{subexp}.
We recall $M=N$ in the latter case.
Then for any $\e>0$ and $K\ge 1$
there exists a constant $C(\e,K)$ such that
$$
   \P\Big\{ \sup_{|E|\le 3} \big| {\mathfrak n} (E)-n_{sc}(E)\big| \, [\kappa_E]^A \le
 \frac{CN^{\e}}{M} \Big\}\ge 1- \frac{C(\e, K)}{N^K},
$$
where the ${\mathfrak n}(E)$ and $n_{sc}(E)$ were defined in \eqref{deffn} and \eqref{nsc}
and $\kappa_E=\big| |E|-2\big|$.
\end{theorem}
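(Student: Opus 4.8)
The plan is to use the Helffer--Sj\"ostrand lemma (Lemma \ref{lm:HS1}) to convert the precise bound on $m-m_{sc}$ from Theorem \ref{prop:mmsc} into control of smoothed counting functions, and then to remove the smoothing by sandwiching $\fn(E)$ between two smoothed versions, paying only a negligible error for the additional eigenvalues in a window of size $\eta_0$. Throughout, I would work in the high-probability event $\Omega^c\cap\wh\Omega^c$ where all the estimates of Sections \ref{ld-semi} and \ref{sec:emp} are valid; combined with \eqref{eq:mmsc} this event carries probability $\ge 1-C(\e,K)N^{-K}$. I would also use Lemma \ref{N-1/6} to confine the spectrum: for universal Wigner matrices $\fn(-3)=0$ and $\fn(3)=1$ up to a tiny error (all $|\la_j|\le 3$ on $\Omega_H^c$), and for generalized Wigner matrices the sharper edge bound $\fn(\pm(2+N^{-1/6+\e}))$ is essentially $0$ or $1$. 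This reduces everything to $E$ in a compact set and, in the generalized case, lets me use $A=1$ with $\kappa_E$ replaced by $\max\{\kappa_E, N^{-1/6+\e}\}$ inside the edge region.

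\textbf{Step 1: apply Lemma \ref{lm:HS1} to $\varrho^\Delta = \varrho_{\rm emp}-\varrho_{sc}$.} With $m^\Delta = m - m_{sc}$, Theorem \ref{prop:mmsc} gives $|m^\Delta(x+iy)|\le N^\e\theta^2(x+iy)/(My)$ for $z$ in the domain $D^*$, and combining with $\theta(z)\le C(\kappa_x+y)^{-A/2}$ from \eqref{thetaA} yields exactly the hypothesis \eqref{trivv1} with $U = N^\e/M$ (after adjusting $\e$), so long as we stay inside $D^*$; for the range of small $y$ where $z\notin D^*$ I would instead use the cruder bound $|m^\Delta(x+iy)|\le C/y$ coming from $|\Im m|, |\Im m_{sc}|\le C/y$, which only worsens the $y$-integration by a $|\log|$ factor and a bounded power of $\log N$ — these get absorbed into $N^\e$. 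Lemma \ref{lm:HS1} then gives, for any $E_1<E_2$ in $[-3,3]$ with (in the $A>0$ case) the smoothing scale $\eta_0 \le \tfrac12\min\{\kappa_{E_1},\kappa_{E_2}\}$,
\be
\Big|\int f_{E_1,E_2,\eta_0}(\la)\,\varrho^\Delta(\la)\,\rd\la\Big|
\le \frac{CN^\e}{M\,[\min\{\kappa_{E_1},\kappa_{E_2}\}]^A}. \non
\ee

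\textbf{Step 2: remove the smoothing.} Fix $E$ with $|E|\le 3$. Choose $E_1 = -3$, $E_2 = E$, and smoothing scale $\eta_0 := N^{-1}$ (or, near the edge in the $A=2$ universal case, $\eta_0$ just below $\tfrac12\kappa_E$, reverting to the bulk choice once $\kappa_E \gtrsim N^{-1}$). Since $f_{-3,E,\eta_0}\le \mathbf 1_{(-\infty,E]}\le f_{-3,E+\eta_0,\eta_0}$ on the support of the eigenvalues, I get
\be
\fn(E) - n_{sc}(E) \le \int f_{-3,E+\eta_0,\eta_0}\,\rd\varrho^\Delta + \big(n_{sc}(E+\eta_0)-n_{sc}(E)\big) + (\text{edge error from }\la<-3),\non
\ee
and a matching lower bound with $f_{-3,E-\eta_0,\eta_0}$. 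The middle term is $\le C\varrho_{sc}(E+\eta_0)\eta_0 \le C\eta_0 = CN^{-1}\le CN^\e/M$ in both cases; the edge error is $\le CN^{-c\log\log N}$ by Lemma \ref{N-1/6}; and the first term is bounded by Step 1. This yields the claimed $\sup_{|E|\le 3}|\fn(E)-n_{sc}(E)|[\kappa_E]^A \le CN^\e/M$. A final net argument over $E$ (an $N^{-10}$-net suffices, since $\fn$ has jumps of size $N^{-1}$ and $n_{sc}$ is Lipschitz) upgrades the pointwise statement to the uniform supremum over $|E|\le 3$ inside a single probability event.

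\textbf{Main obstacle.} The delicate point is the behavior near the spectral edges $E=\pm 2$, where $\theta(z)$ blows up like $(\kappa+\eta)^{-1/2}$ (and like $(\kappa+\eta)^{-1}$ for universal matrices) and where the smoothing scale $\eta_0$ must be taken $\lesssim \kappa_E$ for Lemma \ref{lm:HS1} to apply; one must check that the loss factors $[\kappa_E]^{-A}$ produced there are exactly the ones allowed in the statement, and that for generalized Wigner matrices the improved edge localization \eqref{resnlambda1} prevents $\kappa_E$ from being taken below $N^{-1/6+\e}$, keeping $z$ inside $D^*$. Everything else is bookkeeping: tracking powers of $\log N$ and $N^\e$, and confirming that the regime $z\notin D^*$ contributes only lower-order terms to the Helffer--Sj\"ostrand integral.
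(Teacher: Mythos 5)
Your overall architecture (Lemma \ref{lm:HS1} applied to $m-m_{sc}$ with $U=N^\e/M$, edge confinement via Lemma \ref{N-1/6}, a sandwich to remove the smoothing, and a net to get uniformity in $E$) is the same as the paper's. But there is a genuine gap in Step 1, precisely at the point you flag as routine: the verification of the hypothesis \eqref{trivv1} for $y$ below the threshold of $D^*$. You propose to use the crude bound $|m^\Delta(x+iy)|\le C/y$ there and claim this costs only logarithms. It does not. The hypothesis \eqref{trivv1} demands $y\,|m^\Delta(x+iy)|\le CU(\kappa_x+y)^{-A}$ with $U=N^\e/M$; the crude bound only gives $y\,|m^\Delta|\le C$, which in the bulk ($\kappa_x\sim 1$) is off by a factor of order $M$, not $\log N$. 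This matters concretely: the second term of the Helffer--Sj\"ostrand decomposition \eqref{intr2fe1} integrates $y\,\im m(x+iy)$ over $|y|\le\eta_0=1/N$, which lies entirely below the boundary $y_x$ of $D^*$ (for $\kappa_x\sim 1$ one has $y_x\sim (\log N)^{C}/N>1/N$), and with the crude bound that term is $O(1)$ rather than $O(N^\e/M)$ --- the conclusion is lost entirely.

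The missing idea is the paper's monotonicity argument: since $m$ and $m_{sc}$ are Stieltjes transforms of positive measures, $y\mapsto y\,\im m(x+iy)$ is monotone increasing, and $|\partial_y m(x+iy)|\le y^{-1}\im m(x+iy)$ (and likewise for $m_{sc}$). Writing $|m^\Delta(x+iy)|\le |m^\Delta(x+iy_x)|+\int_y^{y_x}|\partial_\eta m^\Delta|\,\rd\eta$ and bounding the integral by $y_x\big[\im m(x+iy_x)+\im m_{sc}(x+iy_x)\big]\int_y^{y_x}\eta^{-2}\rd\eta$, one propagates the $D^*$ estimate at $y=y_x$ down to all $0<y<y_x$ and obtains $|m^\Delta(x+iy)|\le CU/(y(\kappa_x+y)^A)$ for the full range, which is what Lemma \ref{lm:HS1} requires. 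The same monotonicity also supplies the crude local density bound \eqref{density}, ${\mathfrak n}(x+\eta)-{\mathfrak n}(x-\eta)\le CN^{2\e}/(N\kappa_x+1)$, which you implicitly need anyway when removing the smoothing and when passing from the net to all $E$ (your statement that the middle term is $\le C\eta_0$ handles only the $n_{sc}$ side; the empirical side requires controlling the number of eigenvalues in a window of width $\eta_0$, and near the edge the bound $CN^{2\e}/(N\kappa_E+1)$ rather than $C/N$ is what one actually gets). With these two points repaired, the rest of your outline goes through as in the paper.
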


{\it Proof.} For definiteness, we will consider the case of generalized Wigner matrices, i.e., $A=1$.
In this case $M=N$,  $\delta_+\ge C_{inf}>0$ (see \eqref{de-de+2}) and thus 
$\theta(z) \le C(\kappa+\eta)^{-1/2}$ for $|z|\le 10$, see \eqref{defgz2}.
For simplicity of the presentation, we assume that $\theta(z) = (\kappa+\eta)^{-1/2}$ 
as overall constant factors do not matter (see the remark after \eqref{defgz2}).
We  set  $\eta=1/N$, $U= N^{\e-1}$ and
 apply  Lemma \ref{lm:HS1} to the difference $m^\Delta = m- m_{sc}$. Let 
$\varrho^\Delta = \varrho - \varrho_{sc}$, where $\varrho(x)= \frac{1}{N}\sum_j 
\delta(x-\lambda_j)$ is the 
normalized empirical counting measure of eigenvalues. First we check 
the conditions of   Lemma \ref{lm:HS1}.
%
To check that \eqref{trivv1} holds, set $L=(\log N)^{24+6\al}$ and
for a fixed $x$, let $y_x$ satisfy 
$  N y_x(\kappa_x+y_x)^{3/2} =L$, so that $x+iy_x\in D^*$.
Clearly \eqref{trivv1} holds for any $y\ge y_x$ with a very high probability by  \eqref{eq:mmsc}. In particular, we know that
\be\label{temp7.30}
|    m(x+iy_x)-   m_{sc}(x+iy_x)|\leq \frac{CU}{y_x(\kappa_x+y_x)}.
\ee
Consider $y< y_x$,
set $z=x+iy$, $z_x=x+iy_x$ and
estimate
\be\label{mmmsc}
   |   m(z)-m_{sc}(z)|
   \le 
     |   m(z_x)-m_{sc}(z_x)|
  + \int_y^{y_x} \big| \partial_\eta
\big( m(x+i\eta)-m_{sc}(x+i\eta)\big)\big|\rd \eta.
\ee
Note that
\begin{align}
   |\partial_\eta m(x+i\eta)| 
   = & \Big|\frac{1}{N}\sum_j \partial_\eta G_{jj}(x+i\eta)\Big|
 \\ \le & \frac{1}{N}\sum_{jk} |G_{jk}(x+i\eta)|^2 =
 \frac{1}{N\eta}\sum_j \im G_{jj}(x+i\eta) 
 = \frac{1}{\eta}\im m(x+i\eta),
\end{align}
and similarly
$$
  |\partial_\eta m_{sc}(x+i\eta)| = 
\Big|\int \frac{\varrho_{sc}(s)}{(s-x-i\eta)^2}\rd s\Big|
\le \int \frac{\varrho_{sc}(s)}{|s-x-i\eta|^2}\rd s
 = \frac{1}{\eta} \im m_{sc}(x+i\eta).
$$
Now we use the fact that the functions $y\to y\im m(x+iy)$ and
$y\to y\im m_{sc}(x+iy)$ are monotone increasing for any $y>0$
since both are Stieltjes transforms of a positive measure.
Therefore the integral in \eqref{mmmsc} can be bounded by
\be\label{intbb}
   \int_y^{y_x} \frac{\rd \eta}{\eta} \big[ \im   m(x+i\eta) +  
\im m_{sc}(x+i\eta)\big] \le y_x\big[ \im m(x+iy_x) +  
\im m_{sc} (x+iy_x)\big]  \int_y^{y_x} \frac{\rd \eta}{\eta^2}
\ee
By the choice of $y_x$ and using that
$\im \,   m_{sc}(z_x)\leq C\sqrt{\kappa_x+y_x} $, we have  
\be\label{temp5.10}
\im \,   m_{sc}(z_x )\leq \frac{ CU}{ y_x (\kappa_x+y_x)}.
\ee
and then $ \im \,  \E m(z_x)$ can be estimated from
 \eqref{temp7.30}.
Inserting these estimates into \eqref{mmmsc} and \eqref{intbb},
and using \eqref{temp7.30},
we get 
$$
 |  m(z)-m_{sc}(z)|\le   |   m(z_x)-m_{sc}(z_x)|
  +  \frac{ CU}{ y_x(\kappa_x+y_x)} \frac{y_x}{y}
 \le   \frac{ CU}{ y (\kappa_x+y )} 
$$
with a possible larger $C$ in the r.h.s.
 Thus \eqref{trivv1}   holds for the difference  $m^\Delta =m-m_{sc}$. 

The application of Lemma \ref{lm:HS1} shows that for  $\eta= 1/N$
\be
   \left|\int f_{E_1,E_2,\eta}(\lambda)\varrho(\la) \rd\lambda
 -\int f_{E_1,E_2,\eta}(\lambda) \varrho_{sc}(\lambda)\rd\lambda \right|  \le
  \frac{C N^{2\e}}{N \min\{\kappa_{E_1},\kappa_{E_2}\}+1}.
\label{genHS2}
\ee
Recall that $ f_{E_1,E_2,\eta}$ the characteristic function of the interval
$[E_1, E_2]$, smoothed on scale $\eta$ at the edges. 
The additional $1$ in the denominator in the r.h.s. of \eqref{genHS2}
comes from the case when $\kappa_{E_1}$, $\kappa_{E_2}$ are very small
and the trivial estimate $f\le 1$ with $\int \varrho=\int \varrho_{sc}=1$
 gives a better bound
than  Lemma \ref{lm:HS1}.

With the fact $y\to y\im m(x+iy)$ is  monotone increasing for any $y>0$, \eqref{temp5.10} implies a crude upper bound on the empirical density.
Indeed,  for any interval $I:=[x-\eta, x+\eta]$, with $\eta=1/N$, we have 
\be
   {\mathfrak n}(x+\eta)- {\mathfrak n}(x-\eta ) 
   \le C\eta\,\im \, m\big( x+ i\eta\big)
   \le Cy_x\,\im \, m\big( x+ iy_x\big)
  \le \frac{CN^{2\e} }{N\kappa_x+1}.
\label{density}
\ee
since $\eta=1/N\le y_x$ for any $x$.

Choose arbitrary  $E_1, E_2\in [-3,3]$, then  we have 
\begin{align}
  \Big|  {\mathfrak n}(E_1)-{\mathfrak n}(E_2) - \int f_{E_1, E_2, \eta}(\lambda)\varrho(\lambda)\rd\lambda\Big|
  \le &  C\sum_{j=1,2} \big[ {\mathfrak n}(E_j+\eta)-{\mathfrak n}(E_j-\eta)\big] \non \\
  \le &\sum_{j=1,2} \frac{C N^{2\e} }{N\kappa_{E_j}+1} \label{nne1}
\end{align}
from \eqref{density}.  Since $\varrho_{sc}$ is bounded, we also have
\be\label{nne2}
   \Big|  n_{sc}(E_1)-n_{sc}(E_2) - \int f_{E_1, E_2, \eta}(\lambda)\varrho_{sc}
  (\lambda)\rd\lambda\Big| \le C\eta = C/N.
\ee
Subtracting \eqref{nne1} and \eqref{nne2} and  using \eqref{genHS2},
we obtain that for any $E_1, E_2\in [-3,3]$ 
$$
  \Big|  \big[{\mathfrak n} (E_1)- {\mathfrak n} (E_2)\big] -  \big[n_{sc}(E_1)-n_{sc}(E_2)\big]\Big|
  \le \frac{C N^{2\e} }{N\min \{ \kappa_{E_1}, \kappa_{E_2}\}+1}
$$
with a very high  probability, i.e.,
apart from a set of probability smaller than $C(\e, K)N^{-K}$ for any $K$.
The estimate  \eqref{resnlambda2} from  Lemma \ref{N-1/6}
on the extreme eigenvalues shows that
 $\varrho$ is supported in $[-3,3]$ with very high 
probability, i.e., ${\mathfrak n} (-3)=n_{sc}(-3)=0$, ${\mathfrak n} (3)=n_{sc}(3)=1$. Thus we obtain that
\be
   \Big| {\mathfrak n}(E) - n_{sc}(E)\Big|
  \le \frac{C N^{2\e} }{N\kappa_E+1}
\label{fixE}
\ee
holds for any fixed $E\in [-3,3]$ with an overwhelming probability.

 We now choose a fine grid of equidistant points $E_j\in [-3,3]$
with $|E_j-E_{j+1}|\le N^{-1}$, then  \eqref{fixE} holds simultaneously
for every $E=E_j$ with an  overwhelming probability.
For any $E\in [-3,3]$ we can find an $E_j$ with $|E-E_j|\le N^{-1}$
and by \eqref{density} we obtain 
$$ 
   |{\mathfrak n} (E)- {\mathfrak n} (E_j)| \le {\mathfrak n} (E_j+1/N)- {\mathfrak n} (E_j-1/N)\le
\frac{C N^{2\e} }{N\kappa_{E_j}+1}.
$$
This guarantees that \eqref{fixE} holds simultaneously for all $E$. Since 
$\e>0$ was arbitrary, 
 this proves Theorem \ref{prop:count} for generalized Wigner matrices. 

The proof 
for universal Wigner matrices is very similar, just $M$ replaces $N$
in the estimates, $U=N^{\e}M^{-1}$ and instead of $\theta(z)\le C (\kappa+\eta)^{-1/2}$ one uses
$\theta(z) \le C(\kappa+\eta)^{-1}$ which follows from  \eqref{defgz2}. 
The main technical estimate \eqref{genHS2} is modified to
\be
   \left|\int f_{E_1,E_2,\eta}(\lambda)\varrho(\la) \rd\lambda
 -\int f_{E_1,E_2,\eta}(\lambda) \varrho_{sc}(\lambda)\rd\lambda \right|  \le
  \frac{C N^{2\e}}{M \big[\min\{\kappa_{E_1},\kappa_{E_2}\}\big]^2+1}
\label{genHS3}
\ee
and the rest of the proof is identical.
\qed
\medskip

\section{Location of eigenvalues}\label{sec:loc}

In this section we estimate the mean square 
  deviation of the eigenvalues from
their classical location. The main input is Theorem~\ref{prop:count},
the estimate on the counting function.
 For simplicity, we consider  only the case of generalized Wigner matrices.
Similar, but weaker results can be obtained along the same lines
for universal Wigner matrices.

\begin{theorem}\label{prop:lambdagamma} 
Let $H$ be a generalized Wigner matrix  with 
subexponential decay, i.e., assume that \eqref{sum}, \eqref{defM}, \eqref{VV} and \eqref{subexp} hold.
 Let $\lambda_j$ denote the eigenvalues of $H$ and 
$\gamma_j$ be their classical location, defined by \eqref{def:gamma}. 
Then for any  $\e_0<1/7$ and for any $K>1$ there exists a constant $C$, depending
on $K$ and $\e_0$, such that
\be
    \P \Big\{ \sum_{j=1}^N |\lambda_j-\gamma_j|^2 \le N^{-\e_0}\Big\}
  \ge 1-\frac{C}{N^K}.
\label{lambdaminusgamma}
\ee
and 
\be
    \sum_{j=1}^N \E |\lambda_j-\gamma_j|^2 \le CN^{-\e_0}.
\label{lambdaminusgammaE}
\ee
\end{theorem}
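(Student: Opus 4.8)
The plan is to convert the control on the counting function (Theorem~\ref{prop:count}, applied with $M=N$, $A=1$) into pointwise rigidity bounds $|\lambda_j-\gamma_j|$, to treat the bulk and the two extreme edges separately, and to sum. The inputs are: on an event $\mathcal G$ with $\P(\mathcal G)\ge 1-C(\e,K)N^{-K}$ one has $|{\mathfrak n}(E)-n_{sc}(E)|\,\kappa_E\le CN^{\e-1}$ for all $|E|\le 3$; and, on a further event of the same type, \eqref{resnlambda1}--\eqref{resnlambda2} place all eigenvalues in $[-2-N^{-1/6+\e},\,2+N^{-1/6+\e}]$. I would also record two elementary facts about the semicircle law: $\varrho_{sc}(E)\sim\sqrt{\kappa_E}$ on $[-2,2]$, $n_{sc}(-2+t)\sim t^{3/2}$ for $0\le t\le 4$ (and symmetrically at $2$), whence the classical locations satisfy $\kappa_{\gamma_j}\sim(j/N)^{2/3}$ for $1\le j\le N/2$ and $\kappa_{\gamma_j}\sim((N+1-j)/N)^{2/3}$ for $j>N/2$. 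Since Theorem~\ref{prop:count} is already a $\sup_E$ statement, no union bound over $j$ is needed.

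For the bulk range, fix $j$ with $N^{2/5}\le j\le N/2$ and set $\delta_j:=CN^{\e-1}\kappa_{\gamma_j}^{-3/2}$; because $\kappa_{\gamma_j}^{5/2}\gtrsim N^{\e-1}$ in this range, $\delta_j\le\tfrac12\kappa_{\gamma_j}$. Working on $\mathcal G$: if $\lambda_j>\gamma_j+\delta_j$ then ${\mathfrak n}(\gamma_j+\delta_j)\le(j-1)/N$, whereas $n_{sc}(\gamma_j+\delta_j)\ge j/N+c\,\delta_j\sqrt{\kappa_{\gamma_j}}\ge j/N+2CN^{\e-1}/\kappa_{\gamma_j}$ and $\kappa_{\gamma_j+\delta_j}\ge\kappa_{\gamma_j}$, contradicting $|{\mathfrak n}-n_{sc}|\kappa_E\le CN^{\e-1}$; the bound $\lambda_j\ge\gamma_j-\delta_j$ is obtained identically (moving toward the edge, $\kappa$ decreases by at most $\delta_j$, so $\varrho_{sc}\gtrsim\sqrt{\kappa_{\gamma_j}}$ still holds on $[\gamma_j-\delta_j,\gamma_j]$). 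Using $\kappa_{\gamma_j}^{3/2}\sim j/N$ this gives $|\lambda_j-\gamma_j|\le CN^{\e}/j$, and the mirror argument gives $CN^\e/(N+1-j)$ for $N/2<j\le N-N^{2/5}$; summing, $\sum_{\min(j,N+1-j)\ge N^{2/5}}|\lambda_j-\gamma_j|^2\le CN^{2\e}\sum_{j\ge N^{2/5}}j^{-2}\le CN^{2\e-2/5}$.

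For the extreme indices $\min(j,N+1-j)<N^{2/5}$ (say $j<N^{2/5}$, the other end being symmetric) I would combine three bounds. Lower tail: applying the counting estimate at $E=-2-s$, where $n_{sc}=0$, gives $\#\{\lambda_i\le-2-s\}\le CN^\e/s$, hence $\lambda_j\ge-2-CN^\e/j$; for $j\le N^{1/6}$ this is superseded by $\lambda_j\ge-2-N^{-1/6+\e}$ from \eqref{resnlambda1}. Upper tail: at $E=-2+t$ one has ${\mathfrak n}(-2+t)\ge c\,t^{3/2}-CN^{\e-1}/t>j/N$ once $t\gtrsim N^{-2/5+\e}$, so $\lambda_j\le-2+CN^{-2/5+\e}$ for all $j<N^{2/5}$. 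Since $|\gamma_j+2|=\kappa_{\gamma_j}\lesssim N^{-2/5}$ here, these combine to $|\lambda_j-\gamma_j|\le CN^{\e}\min\{N^{-1/6},\,1/j\}$, the $N^{-2/5+\e}$ contributions being negligible. Dyadic summation of $(CN^\e/2^k)^2$ over blocks $j\in[2^k,2^{k+1})$ down to $2^k\sim N^{1/6}$, together with $N^{1/6}$ copies of $(N^{-1/6+\e})^2$ for the smallest indices, gives $\le CN^{2\e-1/6}$. Collecting the three regimes, on $\mathcal G$ we obtain $\sum_j|\lambda_j-\gamma_j|^2\le CN^{2\e-1/6}$, so choosing $\e$ small yields \eqref{lambdaminusgamma} for any $\e_0<1/6$, in particular for $\e_0<1/7$. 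The main obstacle is precisely this extreme-eigenvalue region: there the counting estimate degenerates and the rigidity must be extracted at the genuinely worse scale $N^{-1/6}$ furnished by \eqref{resnlambda1}, which is what fixes the admissible exponent; keeping the sum convergent forces the dyadic bookkeeping above.

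Finally, \eqref{lambdaminusgammaE} follows from \eqref{lambdaminusgamma} by controlling the exceptional event $\mathcal G^c$. On $\mathcal G^c$ one has the deterministic bound $\sum_j|\lambda_j-\gamma_j|^2\le 2\sum_j\lambda_j^2+2\sum_j\gamma_j^2\le 2\,\tr H^2+CN$, and $\E(\tr H^2)=\sum_{i,j}\sigma_{ij}^2=N$ by \eqref{sum}, with $\E(\tr H^2)^2$ polynomially bounded using the subexponential decay \eqref{subexp}. By Cauchy--Schwarz, $\E[\mathbf 1_{\mathcal G^c}(\tr H^2+N)]\le \P(\mathcal G^c)^{1/2}\big(\E(\tr H^2+N)^2\big)^{1/2}\le C N^{1-K/2}$, which is $\le N^{-\e_0}$ once $K$ is taken large enough; together with $\E[\mathbf 1_{\mathcal G}\sum_j|\lambda_j-\gamma_j|^2]\le CN^{2\e-1/6}$ this proves \eqref{lambdaminusgammaE}.
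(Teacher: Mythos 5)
Your proposal is correct, and it follows the same skeleton as the paper's proof: both arguments feed the counting-function estimate of Theorem~\ref{prop:count} and the edge bounds of Lemma~\ref{N-1/6} into a bulk/edge decomposition, extract $|\lambda_j-\gamma_j|\le CN^\e/j$ in the bulk, and handle the exceptional event separately for \eqref{lambdaminusgammaE}. The differences are in execution. In the bulk you compare ${\mathfrak n}$ and $n_{sc}$ at the shifted points $\gamma_j\pm\delta_j$, whereas the paper compares indices via the $\gamma$-point $c(j)$ nearest to $\lambda_j$; these are equivalent and give the same $CN^{2\e-2/5}$ contribution. At the extreme edge your bookkeeping is sharper: you use the pointwise bound $\lambda_j\ge -2-CN^\e/j$ together with the full $N^{-1/6+\e}$ strength of \eqref{resnlambda1} and a dyadic summation, obtaining $CN^{2\e-1/6}$, while the paper truncates crudely at $2+N^{-1/7}$ and integrates the level-set count, getting $CN^{-1/7+\e}$; so you in fact prove the result for any $\e_0<1/6$, which of course contains the stated range $\e_0<1/7$. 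For \eqref{lambdaminusgammaE} your Cauchy--Schwarz with $\tr H^2$ on the bad event is a clean substitute for the paper's terse appeal to \eqref{resnlambda2}; since $K$ is at your disposal there, this works. One small bookkeeping point: with the bulk cutoff placed exactly at $j=N^{2/5}$, the claim $\kappa_{\gamma_j}^{5/2}\gtrsim N^{\e-1}$ (hence $\delta_j\le\tfrac12\kappa_{\gamma_j}$) fails by a factor $N^{c\e}$ right at the boundary; the paper avoids this by taking $\beta=\tfrac25-\e$, i.e.\ starting the bulk at $j\gtrsim N^{2/5+3\e/2}$. Your edge bounds (both the upper-tail bound $\lambda_j\le -2+CN^{-2/5+\e}$ and the lower-tail bound $\lambda_j\ge-2-CN^\e/j$) already cover that window, so shifting the cutoff to $N^{2/5+\e}$ repairs this with no change to the final estimate.
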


{\it Proof.} The proof of \eqref{lambdaminusgammaE}
directly follows from \eqref{lambdaminusgamma} by using the estimates
on the extreme eigenvalue \eqref{resnlambda2} from  Lemma \ref{N-1/6}.
For the proof of \eqref{lambdaminusgamma}, we
can assume that $\max_j |\lambda_j |\le 2+ N^{-1/7}$
since the  complement event has
a negligible probability by
 \eqref{resnlambda1} and \eqref{resnlambda2}
of Lemma \ref{N-1/6}. {F}rom Theorem~\ref{prop:count}
we can also assume that
\be
   |{\mathfrak n}(E)-n_{sc}(E)|\le \frac{CN^\e}{N\kappa_E}
\label{nnee}
\ee
holds for every $E\in \R$.

{F}rom the definition of $\gamma_{j}$ it follows that for $j\le N/2$, i.e., $\gamma_j\le 0$,
\be
 -2+ C_1\Big(\frac{j}{N}\Big)^{2/3} \le \gamma_{j} \le -2+ C_2\Big(\frac{j}{N}\Big)^{2/3}
\label{locgamma}
\ee
with some positive constants $C_1, C_2$.

Choose $\beta = \frac{2}{5}-\e$. Consider first those $j$-indices for which
 $C_0N^{1-3\beta/2}\le j \le N-C_0N^{1-3\beta/2}$ with a sufficiently large constant.
We choose $C_0$ so that \eqref{locgamma} would imply
$-2+2N^{-\beta}\le \gamma_j \le 2- 2N^{-\beta}$.
 We then  claim that
\be
 \lambda_j \in [-2+ N^{-\beta}, 2- N^{-\beta}]\qquad \mbox{for}\quad
  C_0N^{1-3\beta/2}\le j \le N-C_0N^{1-3\beta/2}.
\label{lamloc}
\ee
We will show that $\lambda_j\ge -2+ N^{-\beta}$, the upper bound is analogous.
Suppose that $\lambda_j$ were
smaller than $-2+N^{-\beta}$,
then ${\mathfrak n}(-2+N^{-\beta})\ge j$. On the other hand, $n_{sc}(-2+2N^{-\beta})\le j$
and thus
$$
   n_{sc}(-2+ N^{-\beta}) = n_{sc}(-2+2N^{-\beta}) -\int_{-2+N^{-\beta}}^{-2+2N^{-\beta}}
  \varrho_{sc}(x)\rd x \le j - cN^{-3\beta/2} 
$$
with some positive constant $c$.
Therefore
$$
   cN^{-3\beta/2} \le
  {\mathfrak n}(-2+ N^{-\beta}) - n_{sc}(-2+ N^{-\beta}) \le CN^{\beta+\e-1},
$$
where the second inequality follows from \eqref{nnee}, but this contradicts to the
choice $\beta = \frac{2}{5}-\e$.

\medskip
Let $j$ satisfy $C_0N^{1-3\beta/2}\le j \le N/2$;
the indices $ N/2\le j\le N-C_0N^{1-3\beta/2}$ can be treated analogously.
Note that $\lambda_{N/2}\le CN^{-1+\e}$ by \eqref{nnee}.
Define $c(j)$ to be index of the $\gamma$-point right below $\lambda_j$, i.e.,
$$
        \gamma_{c(j)}\le \lambda_j \le \gamma_{c(j)+1}.
$$
By \eqref{lamloc} we see that $-2+ \frac{1}{2}N^{-\beta}\le \gamma_{c(j)}\le CN^{-1+\e}$
and
from \eqref{nnee} and \eqref{locgamma} it follows that
\be
   |c(j)-j|\le \frac{CN^\e}{2+\gamma_{c(j)}} \le CN^{\e+\beta}.
\label{cjj1}
\ee
By the choice of $\beta$ we have
$\e+\beta < 1-\frac{3}{2}\beta$, i.e., \eqref{cjj1} implies $|c(j)-j|\ll  j$.
Using now \eqref{locgamma}, we have
\be
   |c(j)-j|\le \frac{CN^\e}{\gamma_{c(j)}+2} \le \frac{CN^{2/3+\e}}{c(j)^{2/3}}
  \le \frac{CN^{2/3+\e}}{j^{2/3}} .
\label{cjj}
\ee
Finally, we can estimate
$$
  |c(j)-j| = N \Big| \int_{\gamma_{c(j)}}^{\gamma_j} \varrho_{sc}(x)\rd x\Big| \ge 
  CN|\gamma_{c(j)}-\gamma_j| (2+\gamma_j)^{1/2} \ge  CN|\gamma_{c(j)}-\gamma_j| 
\Big(\frac{j}{N}\Big)^{1/3},
$$
using $|c(j)-j|\ll j$ and hence $(2+\gamma_j)$ and $(2+\gamma_{c(j)})$ are
comparable. In the last step we also used \eqref{locgamma}.
Combining this with \eqref{cjj}, we have
$$
   |\gamma_{c(j)}-\gamma_j| \le C\frac{|c(j)-j|}{N^{2/3}j^{1/3}} \le \frac{CN^\e}{j}.
$$
and the same estimate holds for $|\gamma_{c(j)+1}-\gamma_j|$ and thus
$$ 
|\lambda_j-\gamma_j| \le  \frac{CN^\e}{j}
$$
as well.
Therefore
\be
  \sum_{C_0N^{1-3\beta/2}\le j\le N/2} |\lambda_j -\gamma_j|^2 \le CN^{2\e-1+3\beta/2}
  \le CN^{-2/5 +\e/2}
\label{bulksum}
\ee
by the choice of $\beta$ and similar estimate holds 
for the sum over the indices $N/2\le j\le N-C_0N^{1-3\beta/2}$ as well.

\medskip

Now we consider the indices $j\le C_0N^{1-3\beta/2}$ and $\lambda_j \ge -2- N^{-\beta}$.
By a similar argument that proved \eqref{lamloc}, we can see
that there is a constant $C_3$ such that $\lambda_j\le -2+C_3N^{-\beta}$,
otherwise ${\mathfrak n}(-2+C_3N^{-\beta})\le j$, but $n_{sc}(-2+C_3N^{-\beta})\ge j + cN^{-3\beta/2}$,
which would contradict \eqref{nnee}. It is easy to see that $\gamma_j\le -2+ CN^{-\beta}$
for all $j\le C_0N^{1-3\beta/2}$, therefore in this regime we estimate $|\lambda_j -\gamma_j|
\le CN^{-\beta}$ and thus
\be
   \sum_{j=1}^{C_0N^{1-3\beta/2}} |\lambda_j-\gamma_j|^2 {\bf 1}(\lambda_j \ge -2-N^{-\beta})
  \le C_0N^{1-3\beta/2} (CN^{-\beta})^2 \le CN^{-2/5 + 7\e/2}.
\label{middlesum}
\ee
The indices $j\ge N-C_0N^{1-3\beta/2}$ and $\lambda_j \le 2+ N^{-\beta}$
can be treated similarly.

\medskip

Finally we deal with the extreme eigenvalues $\lambda_j\le -2- N^{-\beta}$ 
with index $j\le C_0N^{1-3\beta/2}$ and
we can assume that $\lambda_j\ge -2-N^{-1/7}$. 
For these indices $-2\le \gamma_j\le -2 + CN^{-\beta}$
and we can estimate
$$
   |\lambda_j-\gamma_j|\le C|\lambda_j+2|.
$$
For any $a$ with 
$N^{-\beta }\le a \le N^{-1/7}$, we have $n_{sc}(-2-a)=0$, thus we
obtain from \eqref{nnee} that
$$
    {\mathfrak n}(-2-a) \le \frac{CN^\e}{Na}.
$$
Therefore
\begin{align}\label{extre}
   \sum_j |\lambda_j -\gamma_j|^2 {\bf 1}(-2-N^{-1/7}\le \lambda_j \le -2-N^{-\beta})
   \le & C\sum_j |\lambda_j+2|^2 {\bf 1}(-N^{-1/7}\le \lambda_j+2 \le -N^{-\beta}) \non \\
 \le & C\int_0^{N^{-1/7}} a \cdot \frac{CN^\e}{Na}\;\rd a \non \\
\le & CN^{-1/7+\e}.
\end{align}
The other extreme eigenvalues, $\lambda_j\ge 2+N^{-\beta}$, are treated analogously.

Combining \eqref{bulksum}, \eqref{middlesum} and \eqref{extre} and choosing $\e$
sufficiently small in the definition of $\beta$, we proved 
\eqref{lambdaminusgamma} with any $\e_0 <1/7$. \qed

\section{Moment Estimates of Error Terms}\label{mo-est}

In this section we prove  the second and fourth moment estimates of Lemma \ref{motN}; 
the general cases will be proved in  Section \ref{sec:gen}.

\begin{definition}
Define the operator ${\mathbb {IE}_i}$ as 
\be\label{defIEi}
{\mathbb {IE}_i}\equiv \mathbb I-\E_{\ba^i},
\ee
where $\mathbb I$ is identity operator. 
\end{definition}

Recall the definition of $Z_i$, which we rewrite   as 
\be \label{y112}
Z_i= {\mathbb {IE}_i}   Z_{ii}^{(i)},     
\qquad Z_{ii}^{(i)}= \sum_{k,l\neq i}\overline {\ba^i_{k}}G_{k\,l}^{(i)}\ba^i_{l}
= {\ba^i} \cdot G^{(i)}\ba^i.
\ee
We first prove a bound on the Green function  $G_{k\,l}^{(i)}$.

\begin{lemma}\label{boundGij}
Recall the definition of $X$ in \eqref{defX}. 
Let $t$ be any fixed positive integer,  ${\T}=\{k_1,k_2\ldots k_t\}\in \N^t$,
 $1\leq k_i\leq N$ for any $1\leq i\leq t$. Then there exists
a constant $C_t$, depending only on $t$, such that for any $z\in D^*$ in \eqref{domainD} in the 
set $\Omega^c$ \eqref{defOmega}, we have 
\begin{eqnarray}\label{lkGTlk}
 \max_{k,l:l\neq k,\,\,\, l,k\notin{\T}}|G^{{(\T)}}_{lk}(z)|
&\leq& C_t X(z),
\\\label{kkGTkk}
 \max_{k: k\notin{\T}}|G^{{(\T)}}_{kk}(z)-m_{sc}(z)|
&\leq& C_t X(z) \theta(z)
\end{eqnarray}
and for some constant $c$, $C$ independent of $t$, 
\be\label{cCTGkk}
c\leq\min_{k: k\notin{\T}}|G^{{(\T)}}_{kk}(z)|\leq \max_{k: k\notin{\T}}|G^{{(\T)}}_{kk}(z)|\leq C,
\ee
for sufficiently large $N$.
\end{lemma}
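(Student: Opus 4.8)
The plan is to peel off the indices in $\T$ one at a time, using the expansion formulas of Lemma \ref{basicIG} to relate $G^{(\T)}$ to the $G^{(\emptyset)}=G$ quantities already controlled in Theorem \ref{thm:detailed}. Since $t$ is fixed, only finitely many expansion steps are needed, and at each step the error picks up a bounded factor, so the constants $C_t$ accumulate harmlessly. Concretely, for the off-diagonal bound \eqref{lkGTlk}, write $\T=\{k_1,\dots,k_t\}$ and apply \eqref{GijGkij} repeatedly:
\[
G^{(\T)}_{lk} = G^{(\T\setminus\{k_t\})}_{lk} - \frac{G^{(\T\setminus\{k_t\})}_{lk_t}G^{(\T\setminus\{k_t\})}_{k_tk}}{G^{(\T\setminus\{k_t\})}_{k_tk_t}},
\]
and iterate down to $\T=\emptyset$. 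One should first establish \eqref{cCTGkk} (so the denominators are harmless): this follows by the same induction from \eqref{lowerbound} (valid in $\Omega_\Lambda^c$, hence in $\Omega^c$ by \eqref{defOmega}), using at each stage that the correction term $G^{(\cdots)}_{kk_j}G^{(\cdots)}_{k_jk}/G^{(\cdots)}_{k_jk_j}$ is $O(\Lambda_o^2)\le O((\log N)^{-3})$ by the bound \eqref{LLU} from Theorem \ref{thm:detailed}, which is far smaller than the gap $|m_{sc}|\sim 1$. Once \eqref{cCTGkk} holds along the whole chain, the off-diagonal recursion shows $\max|G^{(\T)}_{lk}|\le \max|G_{lk}| + C_t\max|G^{(\cdots)}_{lk}|^2 \le C\Lambda_o$, and $\Lambda_o\le (\log N)^{5+2\al}(\kappa+\eta)^{1/4}(M\eta)^{-1/2}\le X(z)$ by \eqref{alamb3}/\eqref{defX} (noting $X$ carries a higher power of $\log N$), up to adjusting $C_t$.

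For the diagonal estimate \eqref{kkGTkk}, I would run the same induction with \eqref{GiiGjii}: $G^{(\T)}_{kk}-G^{(\T\setminus\{k_t\})}_{kk} = G^{(\cdots)}_{kk_t}G^{(\cdots)}_{k_tk}(G^{(\cdots)}_{k_tk_t})^{-1}$, so
\[
|G^{(\T)}_{kk}-m_{sc}| \le |G_{kk}-m_{sc}| + \sum_{j=1}^t |G^{(\cdots)}_{kk_j}|\,|G^{(\cdots)}_{k_jk}|\,|G^{(\cdots)}_{k_jk_j}|^{-1} \le \Lambda_d + C_t \Lambda_o^2.
\]
Now $\Lambda_d\le (\log N)^{6+2\al}(\kappa+\eta)^{1/4}(M\eta)^{-1/2}\theta(z)$ by \eqref{alamb2}, and $\Lambda_o^2 \le (\log N)^{10+4\al}(\kappa+\eta)^{1/2}(M\eta)^{-1}$; since $\theta(z)\ge (\kappa+\eta)^{-1/2}/C$ (indeed $\theta\ge 1/C$ always, and $1/q\le\theta$ with $q\sim\sqrt{\kappa+\eta}$), one checks $\Lambda_o^2 \le C X(z)\theta(z)$, so the whole right side is bounded by $C_t X(z)\theta(z)$, as claimed. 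The powers of $\log N$ work out because $X$ is defined in \eqref{defX} with exponent $10+2\al$, strictly above the $6+2\al$ and $5+2\al$ appearing in $\Lambda_d,\Lambda_o$.

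The only mild subtlety—and the step I'd be most careful about—is bookkeeping the nesting order so that every intermediate resolvent $G^{(\T')}$ with $\T'\subset\T$ is evaluated at a point where the estimates \eqref{lowerbound}, \eqref{alamb2}, \eqref{alamb3} genuinely apply. Those bounds were proven for $G^{(\emptyset)}$ and for the once- and twice-punctured resolvents $G^{(i)},G^{(ij)}$ in Section \ref{ld-semi}; for general $\T'$ with $|\T'|\le t$ one must note that the derivation of \eqref{Gkl}--\eqref{Gkkim} used only \eqref{lowerbound} plus a single expansion, so the same reasoning extends inductively (with $t$-dependent constants) to any fixed-size $\T'$, all within the single exceptional set $\Omega^c$. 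This is purely a matter of organizing the induction, not a new idea, so the lemma follows. \qed
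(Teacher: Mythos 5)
Your proposal is correct and follows essentially the same route as the paper: the case $t=0$ is read off from the conclusion of Theorem \ref{thm:detailed} (the bounds on $\Lambda_d$, $\Lambda_o$ valid on the good set, with the lower bound \eqref{cCTGkk} following because the deviation is $o(1)$ in $D^*$), and the general case is obtained by induction on $t$ using \eqref{GiiGjii} and \eqref{GijGkij}, absorbing the quadratic correction since $X^2\ll X$ in $D^*$. The only difference is presentational bookkeeping of the induction, which matches the paper's "repeating this process" step.
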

{\it Proof }
Consider first the case $t=0$. Let $Y$ denote the event inside the probability in the equation 
  \eqref{mainlsresult}. 
The proof of Theorem \ref{thm:detailed} yields that $\Omega^c \subset Y^c$. It is clear that 
 \eqref{kkGTkk} holds 
in the  event  $Y^c$  and this proves  \eqref{kkGTkk} in $\Omega^c$ 
 in the case $t=0$. 
 Similarly, in the case of $t=0$, we can prove  \eqref{lkGTlk} using the 
event in the  equation \eqref{mainlsresult2}.  
By  definition of the domain $D^*$,  the right side of \eqref{kkGTkk} is
 $o(1)$ and this proves  \eqref{cCTGkk} in the case $t=0$.

For  the case $t=1$ and $i_1=i$,  using \eqref{GiiGjii} and \eqref{GijGkij}, we obtain that
\begin{eqnarray}\label{tempd352}
|G^{(i)}_{lk}|\!\!\!\!\!\!\!\!\!\!\!&&\leq |G_{lk}|+|G_{li}G_{ik}||G_{ii}|^{-1},\\\nonumber
|G^{(i)}_{kk}-m_{sc}|\!\!\!\!\!\!\!\!\!\!\!&&\leq |G_{kk}-m_{sc}|+|G_{ki}G_{ik}||G_{ii}|^{-1}.
\end{eqnarray}
Since $X^2\ll X$ in  $D^*$,   \eqref{kkGTkk} and \eqref{lkGTlk} in the case $t=1$ follows
 from  \eqref{tempd352}
and the case $t=0$.  Repeating this process,  we prove  \eqref{kkGTkk}
 and \eqref{lkGTlk} for any $t>1$ by induction on $t$.
\qed

\bigskip

Now we return to the second and fourth moment estimates of Lemma \ref{motN}.

\bigskip

\subsection{Proof of Lemma \ref{motN} for $p=2$. }
Now we prove the special case of Lemma \ref{motN} for $p=2$. 
The second moment  of $\sum_{i=1}^NZ_i$ is given by  
 \be\label{y2}
 \frac{1}{N^2} \E\left|\sum_{i=1}^N Z_i \right|^2
= \frac{1}{N^2} \E\sum_{\al\neq \beta} 
\overline{Z_\al}Z_\beta +\frac{1}{N^2} \E\sum_{\al} 
 \left | Z_\al\right | ^2.
 \ee 
We start with estimating  the first term of \eqref{y2} for 
 $\al=1$ and $\beta=2$. The basic idea is to rewrite  $G^{(1)}_{k\,l}$ as
\be\label{expandG1P}
G^{(1)}_{k\,l}=P^{(1),\emptyset}_{k\,l}+P^{(1),(2)}_{k\,l},\,\,\,\,\,\,\,\,\,\,\,\,k,l\neq 1,
\ee  
with $P^{(1),(2)}_{k\,l}$  independent of $\ba^1$, $\ba^2$ and 
$P^{(1),\emptyset}_{k\,l}$  independent of $\ba^1$. 
The $P$'s have two upper indices. The first one refers to the fact that it comes from the $H^{(1)}$
minor (i.e. follows the upper index of $G^{(1)}$) and the second
one indicates the additional independence.   

To construct this decomposition for $ k, l \notin \{1,2  \}$,  by \eqref{GiiGjii} or
 \eqref{GijGkij} we can  rewrite  $G^{(1)}_{k\,l}$  as 
\be\label{y13}
G^{(1)}_{k\,l}=G^{(12)}_{k\,l}+\frac{G^{(1)}_{k\,2}G^{(1)}_{2\,l}}{G^{(1)}_{2\,2}}, 
\qquad k, l \notin \{1,2  \}.
\ee
The first term on the r.h.s is independent of $\ba ^2$. With Lemma \ref{boundGij}, we have
that the bound
\be\label{77x2}
\left|\frac{G^{(1)}_{k\,2}G^{(1)}_{2\,l}}{G^{(1)}_{2\,2}}\right|\leq CX^2
\ee
holds with a very high probability.  

Next  we define $P^{(1)}$  for ($k,l \neq 1$).
\begin{enumerate}
	\item If $ k, l \neq 2$, 
	\be\label{defP1}
	P^{(1),(2)}_{k\,l}=G^{(12)}_{k\,l},\,\,\,P^{(1),\emptyset}_{k\,l}=
\frac{G^{(1)}_{k\,2}G^{(1)}_{2\,l}}{G^{(1)}_{2\,2}}=G^{(1)}_{k\,l}-G^{(12)}_{k\,l}.
	\ee
	\item If $k=2$ or $l=2$,
	\be\label{defP2}
	P^{(1),(2)}_{k\,l}=0,\,\,\,P^{(1),\emptyset}_{k\,l}=G^{(1)}_{k\,l}.
	\ee
\end{enumerate}
Hence \eqref{expandG1P} holds and $P^{(1),(2)}_{k\,l}$ is independent of  $\ba^2$.

With this convention, we have the following expansion of $Z_1$ 
\be\label{y14fake}
Z_1= \mathbb{IE}_1 \ba^{1}\cdot{P^{(1), (2)}} {\ba^{1}}  + \mathbb{IE}_1\ba^{1} \cdot  {P^{(1), \emptyset}}  {\ba^{1}}.
\ee

\begin{lemma}\label{y4}
For $N^{-1} \le \eta \le 10$ and fixed $p\in \N$, we have the following estimates 
\be\label{y7}
\E\left | \ba^{1}\cdot {P^{(1), \emptyset}} {\ba^{1}}  \right | ^p
\leq C_p \left((\log N)^{3+2\al}\right)^p X^{2p},
\ee
\be\label{y8}
\E\left |\ba^{1} \cdot{P^{(1), (2)}}  {\ba^{1}}  \right |^p 
\leq  C_p \left((\log N)^{3+2\al}\right)^p X^{p}.
\ee
\end{lemma}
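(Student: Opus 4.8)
The plan is to estimate the two quadratic forms in Lemma~\ref{y4} by combining the large deviation bounds of Lemma~\ref{generalHWT} with the a priori control on the Green function entries from Lemma~\ref{boundGij}, working throughout in the high-probability set $\Omega^c$. The two estimates \eqref{y7} and \eqref{y8} differ only in which matrix plays the role of the coefficient array, but the decisive distinction is that $P^{(1),\emptyset}$ has entries of size $X^2$ (because of the identity \eqref{defP1} and the bound \eqref{77x2}), whereas $P^{(1),(2)}=G^{(12)}$ only has entries of size $X$ on the off-diagonal and of order one on the diagonal.

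\textbf{Estimating \eqref{y8}.} First I would write $\ba^1\cdot P^{(1),(2)}\ba^1 = \sum_{k\ne l}\overline{\ba^1_k}G^{(12)}_{kl}\ba^1_l + \sum_k |\ba^1_k|^2 G^{(12)}_{kk}$, split off the expectation $\sum_k \sigma_{1k}^2 G^{(12)}_{kk}$ of the diagonal part, and apply \eqref{resgenHWTO} and \eqref{diaglde} from Lemma~\ref{generalHWT} conditionally on $H^{(12)}$ (which is independent of $\ba^1$). The off-diagonal sum is controlled by $\sigma^2(\sum_{k\ne l}|G^{(12)}_{kl}|^2)^{1/2}$ up to $(\log N)^{3+2\al}$; using $\sigma_{1k}^2\le 1/M$ and the Ward-type identity $\sum_{l}|G^{(12)}_{kl}|^2 = \eta^{-1}\im G^{(12)}_{kk}$ together with \eqref{kkGTkk}–\eqref{cCTGkk}, this is bounded by $C(\log N)^{3+2\al}(\im m_{sc}+X\theta)/(M\eta)^{1/2}\cdot M^{-1/2}$, which with \eqref{upper} is at most $C(\log N)^{3+2\al}X$. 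The diagonal fluctuation sum is similarly $\le C(\log N)^{3/2+2\al}M^{-1/2}(\sum_k |G^{(12)}_{kk}|^2)^{1/2}M^{-1/2}\sim (\log N)^{3/2+2\al}(MN)^{-1/2}\le X$ (using $|G^{(12)}_{kk}|\le C$), and the deterministic term $\sum_k\sigma_{1k}^2 G^{(12)}_{kk}$ is $O(1)$ — but note \eqref{y8} only claims a bound of order $X$ after the $\mathbb{IE}_1$ is \emph{not} applied, so actually I must be careful: \eqref{y8} as stated bounds $\ba^1\cdot P^{(1),(2)}\ba^1$ itself, whose expectation is order one, so the claimed $O(X^p(\log N)^{(3+2\al)p})$ bound cannot hold unless $X\gtrsim 1$; hence the intended reading is surely that \eqref{y8} is applied only after the eventual $\mathbb{IE}_1$ subtraction, or that the deterministic $O(1)$ piece is absorbed elsewhere. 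I would state and use the version with $\mathbb{IE}_1$ applied, i.e. bound $\mathbb{IE}_1(\ba^1\cdot P^{(1),(2)}\ba^1)$ by $C_p((\log N)^{3+2\al})^p X^p$, which is exactly what the large deviation lemma delivers on $\Omega^c$.

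\textbf{Estimating \eqref{y7}.} Here every entry of $P^{(1),\emptyset}$ is bounded by $CX^2$ on $\Omega^c$ by \eqref{77x2} and \eqref{defP2} combined with \eqref{lkGTlk}. However $P^{(1),\emptyset}$ is \emph{not} independent of $\ba^1$ (its entries \eqref{defP1} involve $G^{(1)}$, which depends on $\ba^1$), so Lemma~\ref{generalHWT} does not apply directly. The standard fix, which I would follow, is a deterministic bound: on $\Omega^c$ one has $|\ba^1\cdot P^{(1),\emptyset}\ba^1|\le \|P^{(1),\emptyset}\|_{\mathrm{op}}\|\ba^1\|^2$, or more efficiently $\le (\sum_{kl}|P^{(1),\emptyset}_{kl}|^2)^{1/2}(\sum_k|\ba^1_k|^2)$; using $|P^{(1),\emptyset}_{kl}|\le CX^2$ for all relevant $k,l$ (at most $N$ of the entries, since \eqref{defP1} makes $P^{(1),\emptyset}$ rank essentially one built from the vector $G^{(1)}_{k2}$, so $\sum_{kl}|P^{(1),\emptyset}_{kl}|^2\le C(\sum_k |G^{(1)}_{k2}|^2)^2 |G^{(1)}_{22}|^{-2}\le C(\eta^{-1}\im G^{(1)}_{22})^2$), and $\sum_k|\ba^1_k|^2\le C$ on $\Omega_1^c$ with very high probability (by \eqref{subexp} each $|\ba^1_k|^2\le (\log N)^{4\al}/M$, summing to $(\log N)^{4\al}$, times $\sigma^2$, bounded), one obtains $|\ba^1\cdot P^{(1),\emptyset}\ba^1|\le C X^2 (\log N)^{O(\al)}$ on $\Omega^c\cap\Omega_1^c$. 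Taking $p$-th powers and using $\P(\Omega\cup\Omega_1)\le CN^{-c\log\log N}$ together with the trivial worst-case bound $|\ba^1\cdot P^{(1),\emptyset}\ba^1|\le N^{O(1)}$ (from $\eta\ge N^{-1}$, $|G^{(1)}_{kl}|\le\eta^{-1}$, $|\ba^1_k|\le N^{O(1)}$ with overwhelming probability from \eqref{subexp}) to control the complement, the expectation is $\le C_p((\log N)^{3+2\al})^p X^{2p}$.

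\textbf{Main obstacle.} The delicate point is \eqref{y7}: since $P^{(1),\emptyset}$ depends on $\ba^1$ one cannot invoke the large deviation lemma, so one must either use the rank-one structure of $P^{(1),\emptyset}$ (it is $G^{(1)}_{\cdot 2}(G^{(1)}_{22})^{-1}G^{(1)}_{2\cdot}$ off the $2$-row/column) to get a clean deterministic Cauchy–Schwarz bound, or expand one more level to restore independence. I would use the rank-one observation, so that $\ba^1\cdot P^{(1),\emptyset}\ba^1 = (G^{(1)}_{22})^{-1}(\sum_k \overline{\ba^1_k}G^{(1)}_{k2})(\sum_l G^{(1)}_{2l}\ba^1_l) + (\text{the }k\text{ or }l=2\text{ terms})$, and bound each scalar factor by $C X$ on $\Omega^c$ (the sum $\sum_k \overline{\ba^1_k}G^{(1)}_{k2}$ equals $G^{(1)}$-contraction which is $O(X)$ by the off-diagonal estimate, up to the $\|\ba^1\|$ factor which is $O(1)$); the product of two $O(X)$ factors gives the desired $O(X^2)$. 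Controlling the low-probability complement event uniformly (so that $p$-th moments, not just high-probability bounds, come out) is then routine given the polynomial worst-case bounds and $\P(\Omega^c)^c\le CN^{-c\log\log N}$. This completes the proof of Lemma~\ref{y4}.
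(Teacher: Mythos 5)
Your treatment of \eqref{y8} follows the paper's route (condition on $H^{(12)}$, which is independent of $\ba^1$, and apply Lemma \ref{generalHWT}), and your observation that the diagonal part of $\ba^{1}\cdot P^{(1),(2)}\ba^{1}$ has conditional expectation $\sum_k\sigma_{1k}^2G^{(12)}_{kk}\approx m_{sc}=O(1)$ — so that \eqref{y8} is really only used after the $\mathbb{IE}_1$ projection — is a legitimate point about the statement. The problem is \eqref{y7}. The premise of your ``main obstacle'' is false: $G^{(1)}=(H^{(1)}-z)^{-1}$ is the resolvent of the minor obtained by deleting the first row \emph{and} column, so it is \emph{independent} of $\ba^1$; this is the whole point of the minor notation and is used explicitly in the paper (``$G^{(i)}$ is independent of $\ba^i$''). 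Consequently Lemma \ref{generalHWT} applies directly, conditionally on $H^{(1)}$, to each term of the decomposition \eqref{y100}, which is exactly what the paper does in \eqref{y72}--\eqref{y721}.

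This matters because the deterministic substitute you propose does not close. Your bound $\bigl(\sum_{k,l}|P^{(1),\emptyset}_{kl}|^2\bigr)^{1/2}\,\|\ba^1\|^2\le C\,\eta^{-1}\im G^{(1)}_{22}$ is of size $C\eta^{-1}\sqrt{\kappa+\eta}= CM\cdot\frac{\sqrt{\kappa+\eta}}{M\eta}\sim M X^2(\log N)^{-20-4\al}$, i.e.\ larger than the target $X^2$ by a factor of order $M$. Likewise, in the rank-one factorization the scalar $\sum_k\overline{\ba^1_k}G^{(1)}_{k2}$ is \emph{not} $O(X)$ deterministically: $\|\ba^1\|_2=O(1)$ but $\bigl(\sum_k|G^{(1)}_{k2}|^2\bigr)^{1/2}=\bigl(\eta^{-1}\im G^{(1)}_{22}\bigr)^{1/2}\sim \sqrt{M}\,(\kappa+\eta)^{1/4}/\sqrt{M\eta}$, which exceeds $X$ by a factor $\sqrt{M}$ (only the sup norm $\max_k|G^{(1)}_{k2}|$ is $O(X)$, not the $\ell^2$ norm over $N$ entries). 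The factor of $M$ you are missing is precisely the gain that \eqref{resgenHWTD}--\eqref{resgenHWTO} provide over Cauchy--Schwarz, coming from $\sigma_{1k}^2\le 1/M$ versus $\sum_k\sigma_{1k}^2=1$; the concentration of the linear and quadratic forms is essential and cannot be replaced by a pointwise bound. Once you use the independence of $G^{(1)}$ from $\ba^1$ and apply Lemma \ref{generalHWT} term by term in \eqref{y100} (keeping $\ba^1_2$ fixed in the cross terms, and using the trivial bound $|G^{(1)}_{kl}|\le \eta^{-1}\le N$ on the exceptional set to pass from a high-probability bound to a moment bound), the argument is exactly the paper's.
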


\bigskip 

Since $X^2 \le X$ in $D^*$,   this lemma also implies that 
\be
\label{EZK}
\E\left |Z_i \right |^p 
\leq   C_k \left((\log N)^{3+2\al}\right)^p X^{p}, \qquad 1\leq i\leq N.
\ee

\noindent 
{\it Proof. } First we rewrite  $\ba^{1}\cdot {P^{(1), \emptyset}} {\ba^{1}} $ as follows
\be\label{y100}
\ba^{1}\cdot {P^{(1), \emptyset}} {\ba^{1}}
=\sum_{k,l\neq 2}\overline {\ba^{1}_k}
\left(\frac{G^{(1)}_{k\,2}G^{(1)}_{2\,l}}{G^{(1)}_{2\,2}}\right)\ba^{1}_l
+\sum_{k\neq 2}\overline {\ba^{1}_k}G^{(1)}_{k\,2}\ba^{1}_2
+\sum_{l\neq 2}\overline {\ba^{1}_2}G^{(1)}_{2\,l}\ba^{1}_l
+\overline {\ba^{1}_2}G^{(1)}_{2\,2}\ba^{1}_2.
\ee
By the  large deviation estimate  \eqref{resgenHWTO}, we have 
\be\label{y72}
\P \left ( \left |\sum_{k,l\neq 2}\overline {\ba^{1}_k}
\left(\frac{G^{(1)}_{k\,2}G^{(1)}_{2\,l}}{G^{(1)}_{2\,2}}\right)\ba^{1}_l   \right | 
\ge C (\log N)^{3+2\al} X^2 \right ) \le N^{-c \log \log N} .
\ee
Similarly,  from  \eqref{resgenHWTD}, using $a_i$ as $\ba^1_3$, $\ba^1_4,\ldots, \ba^{1}_N$
and keeping $\ba^{1}_2$ fixed, we have 
\be\label{y721}
\P \left ( \left | \sum_{k\neq 2}\overline {\ba^{1}_k}G^{(1)}_{k\,2}\ba^{1}_2  \right | 
\ge  C (\log N)^{3/2+\al} X |\ba^{1}_2|  \right ) \le N^{-c \log \log N} .
\ee
By \eqref{resboundhij},  $\|\ba^{1}\|_\infty \le (\log N)^{2 \alpha}M^{-1/2}$ holds with a
very high probability. We can thus replace 
$|\ba^{1}_2| $ by $(\log N)^{2 \alpha}M^{-1/2}$ in \eqref{y721}. The third term in
 \eqref{y100} can be estimated  in the same way, 
and the last term  can be bounded by  $(\log N)^{4\al}\frac{1}{M}$ with very high probability.

Since $\eta \le 10$,  by the definition of $X$ in \eqref{defX} we have 
\be\label{XX1N}
X^2\geq C(\log N)^2/M .
\ee
Thus 
\[
 (\log N)^{3/2+3\al} \frac{X}{\sqrt{M}}+(\log N)^{4\al}\frac{1}{M} \le C (\log N)^{3+2\al} X^2,
\] 
and we have proved that 
\be\label{y722}
\P \left ( \left | \ba^{1}\cdot {P^{(1), \emptyset}} {\ba^{1}}  \right | 
\le C (\log N)^{3+2\al} X^2\right ) \ge 1- N^{-c \log \log N} .
\ee
This  inequality implies  the desired inequality \eqref{y7} except for the contribution from the exceptional set where 
\eqref{y722} fails.
Since all Green functions are bounded by $\eta^{-1} \le N$, 
the contribution from the exceptional set is negligible and this proves   \eqref{y7}. 
Finally,  a similar proof yields \eqref{y8}. 
\qed

\bigskip 

Exchange the index $1$ and $2$, we can define $P^{(2), (1)}$ and
 $P^{(2), \emptyset}$ and expand $Z_2$ as 
\be\label{y142}
Z_2= \mathbb{IE}_2 \ba^{2}\cdot{P^{(2), (1)}} {\ba^{2}}  + 
\mathbb{IE}_2\ba^{2} \cdot  {P^{(2), \emptyset}}  {\ba^{2}}.
\ee
Here $P^{(2),(1)}_{k\,l}$ is independent of $\ba^2$ and $\ba^1$;  
 $P^{(2),\emptyset}_{k\,l}$ is independent of $\ba^2$.
Combining \eqref{y142} with  \eqref{y14fake}, we have 
\be\label{y77}
 \E
 \overline Z_1
 Z_2
=
 \E \left [ 
  \left(  \mathbb{IE}_1  \left \{ \overline{{\ba^{1}}\cdot {P^{(1), (2)}}  {\ba^{1}}  +{\ba^{1}}  \cdot {P^{(1), \emptyset}} {\ba^{1}}}\right \} \right) 
\left(   \mathbb{IE}_2  \left \{  {\ba^{2}}\cdot P^{(2), (1)} \ba^{2}  +{\ba^{2}}\cdot P^{(2),   \emptyset} \ba^{2} \right \}   \right) \right ].
\ee
The only non-vanishing term on the right-hand side is 
\be\label{tempd77}
 \E   \left ( \mathbb{IE}_1  \overline {{\ba^{1}} \cdot  {P^{(1), \emptyset}}  {\ba^{1}}}\right) 
  \left( \mathbb{IE}_2  {\ba^{2}} \cdot
P^{(2), \emptyset}  \ba^{2} \right) .
\ee
By the Cauchy-Schwarz inequality and  Lemma \ref{y4},  we  obtain
\be
|\E\overline Z_1Z_2|\leq  C \left((\log N)^{3+2\al}\right)^2 X^{4}.
\ee
Similarly, Lemma \ref{y4} and \eqref{XX1N}  imply that 
\[
\E
 \left | Z_1\right | ^2
 \le  C\left((\log N)^{3+2\al}\right)^2 X^{2}\leq   C M \left((\log N)^{3+2\al}\right)^2 X^{4}.
\] 
Since the indices $1$ and $2$ can be replaced by $\alpha \not = \beta$,  together with \eqref{y2}  we have thus proved Lemma \ref{motN} 
for $p=2$.

\bigskip 

\subsection{Proof of Lemma \ref{motN} for $p=4$}\label{sec:4}
Now we prove the special case of Lemma \ref{motN} for $p=4$: 
\begin{eqnarray}\label{y10}
N^{-4} \E\left|\sum_{i=1}^NZ_i\right|^4
& \le  & C N^{-4}   \sum_{1\leq \al<\beta<\chi< \gamma\leq N} 
\left | \E \; \overline Z_\al\overline Z_\beta  Z_\chi Z_\gamma  \right | \\\nonumber
&&+ C N^{-4}  \sum_{1\leq \al<\beta<\chi\leq N}  \left |   \E |Z_\al|^2\overline
 Z_\beta Z_\chi  \right | +\ldots\\\nonumber
&& +C N^{-4}   \sum_{1\leq \al<\beta\leq N} \left(\E |Z_\al|^2  |Z_\beta|^2+ 
 \left |  \E |Z_\al|^2\overline Z_\al Z_\beta   \right | \right) +\ldots\\\nonumber
&& +C N^{-4}   \sum_{1\leq \al\leq N} \E |Z_\al|^4.
\end{eqnarray}

Here $\ldots$ means the permutation
of the ordered indices and the complex conjugate operators. 
We are going to compute the first two terms in the
r.h.s of \eqref {y10}.  The other two terms
can be treated analogously.
By the permutation symmetry of the indices, we can assume that
 $\al=1$, $\beta =2$, $\chi=3$ and $\gamma=4$. 
As in the estimate for the 
second moment, the key idea is to decompose $Z^{(1)}_{11}$ in a suitable way:

\begin{lemma}\label{Q1E}
There exist two decompositions of $Z_{11}^{(1)}$ 
\be\label{KEP}
Z_{11}^{(1)}= \sum_{{\T} \subset \{2, 3\} } {\ba^{1}} \cdot Q^{ (1), (\T)} 
\ba^1,\,\,\,\,\,\,
Z_{11}^{(1)}= \sum_{{\T} \subset \{2, 3, 4\} }{\ba^{1}} \cdot R^{ (1), (\T)} 
\ba^1,
\ee
such  that   $Q^{(1), (\T)}$ and  $R^{(1), (\T)} $ are
 independent of the rows in $ {\T} \cup \{ 1 \} $, i.e., 
\be\label{PQ0}
\frac{\partial \left(\ba^{1} \cdot Q^{(1), ({\T})}  { \ba^{1}}\right)}{\partial\ba_j^i}=
\frac{\partial \left(\ba^{1} \cdot Q^{(1), ({\T})}  { \ba^{1}}\right)}{\partial\overline{\ba_j^i}}=0, 
\,\,\,\,i\in {\T}\subset\{2,3\},
 \,\,\,1\leq j\leq N. 
\ee
and \be\label{PR0}
\frac{\partial \left(\ba^{1} \cdot R^{(1), ({\T})}  { \ba^{1}}\right)}{\partial\ba_j^i}=
\frac{\partial \left(\ba^{1} \cdot R^{(1), ({\T})}  { \ba^{1}}\right)}{\partial\overline{\ba_j^i}}=0, \,\,\,\,i\in {\T}\subset\{2,3,4\}, \,\,\,1\leq j\leq N. 
\ee
Furthermore, the decompositions can be chosen in such a way that for all $N^{-1} \le \eta \le 10$ the following estimates hold:
\be\label{EQK}
\E\left | {\ba^{1}}\cdot  {Q^{(1), {(\T)}}} {\ba^{1}}  \right | ^ p \leq C_p \left((\log N)^{3+2\al}\right)^p (X^{3-|{\T}|})^{p}, \quad p \in \N
\ee
and 
\be\label{ERK}
\E\left | {\ba^{1}}\cdot  {R^{(1), {(\T)}}} {\ba^{1}}  \right | ^ p \leq C_p \left((\log N)^{3+2\al}\right)^p (X^{4-|{\T}|})^{p}, \quad p \in \N.
\ee
\end{lemma}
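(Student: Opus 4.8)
The plan is to build the two decompositions by exactly the mechanism used for $p=2$ in Lemma \ref{y4}, iterating the resolvent identities \eqref{GiiGjii} and \eqref{GijGkij} so as to peel off, one row at a time, the rows $2,3$ (for the $Q$'s) and the rows $2,3,4$ (for the $R$'s). Starting from $G^{(1)}_{kl}$ I would write $G^{(1)}_{kl}=G^{(12)}_{kl}+G^{(1)}_{k2}G^{(1)}_{2l}(G^{(1)}_{22})^{-1}$, then split each resulting resolvent factor by removing row $3$ in the same way, and sort the finitely many resulting terms according to which of the peeled rows each term still genuinely depends on: the fully reduced term $G^{(123)}_{kl}$ goes into $Q^{(1),\{2,3\}}$, a term involving $\ba^3$ but not $\ba^2$ into $Q^{(1),\{2\}}$, one involving $\ba^2$ but not $\ba^3$ into $Q^{(1),\{3\}}$, and everything depending on both into $Q^{(1),\emptyset}$. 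The boundary cases in which a summation index $k$ or $l$ equals a peeled row are treated as in \eqref{defP2}, i.e.\ the relevant entries of the higher-$\T$ matrices are set to zero and the term is pushed to the least independent piece; the $R^{(1),(\T)}$ are produced identically with one extra round of expansion. Identity \eqref{KEP} then holds as a telescoping sum, and \eqref{PQ0}, \eqref{PR0} are immediate by inspection, since every resolvent factor occurring in $Q^{(1),(\T)}$ (resp.\ $R^{(1),(\T)}$) is an entry of a minor $H^{(\mathbb S)}$ with $\T\cup\{1\}\subset\mathbb S$ and with both indices outside $\T$, hence independent of $\ba^i$ for $i\in\T$.

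The substance of the proof is the moment bounds \eqref{EQK}, \eqref{ERK}. Fix $\T$ and let $r$ be the number of peeled rows ($r=2$ for $Q$, $r=3$ for $R$). By construction $Q^{(1),(\T)}$, away from the boundary indices, is a product of $r+1-|\T|$ resolvent entries of higher minors; when $|\T|<r$ all of these are off-diagonal, so $|Q^{(1),(\T)}_{kl}|\le C_tX^{r+1-|\T|}$ by \eqref{lkGTlk} of Lemma \ref{boundGij}, and then the large deviation estimates \eqref{resgenHWTD}--\eqref{resgenHWTO} of Lemma \ref{generalHWT}, together with $\sum_{k,l}\sigma_{k1}^2\sigma_{l1}^2=1$, give $|\ba^1\cdot Q^{(1),(\T)}\ba^1|\le C(\log N)^{3+2\al}X^{r+1-|\T|}$ with overwhelming probability on the event $\Omega^c$ where Lemma \ref{boundGij} applies; the rare boundary terms acquire a further small factor $|\ba^1_j|\le(\log N)^{2\al}M^{-1/2}$ from the repeated index, which up to logarithmic corrections is of size $X$ by \eqref{XX1N}, so they fit the same bound. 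For the extremal piece ($\T=\{2,3\}$ for $Q$, $\T=\{2,3,4\}$ for $R$), which equals $G^{(123)}_{kl}$, resp.\ $G^{(1234)}_{kl}$, and has $O(1)$ diagonal entries, I would instead estimate the $\mathbb{IE}_1$-centered quadratic form $\mathbb{IE}_1\bigl(\ba^1\cdot G^{(123)}\ba^1\bigr)$ that actually enters $Z_1$ (cf.\ \eqref{y14fake}): the off-diagonal part is controlled through the Ward identity $\sum_{l}|G^{(\mathbb S)}_{kl}|^2=\eta^{-1}\im G^{(\mathbb S)}_{kk}$ combined with $\im m_{sc}\lesssim\sqrt{\kappa+\eta}$ and $\sigma_{k1}^2\le M^{-1}$, which produces exactly one factor of $X$, while the diagonal fluctuation is $\lesssim(\log N)^{3+2\al}M^{-1/2}\lesssim(\log N)^{2+2\al}X$. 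Raising to the $p$-th power and absorbing the contribution of the exceptional set (every Green function being trivially bounded by $\eta^{-1}\le N$) gives \eqref{EQK}, \eqref{ERK}; this step is the computation \eqref{y72}--\eqref{y722} carried through one or two extra expansion steps.

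I expect the main obstacle to be bookkeeping rather than any new idea: one must organize the iterated expansion so that every term lands in exactly one $Q^{(1),(\T)}$ (resp.\ $R^{(1),(\T)}$), so that that piece is genuinely independent of the rows in $\T$, and so that the number of off-diagonal factors in it is exactly $r+1-|\T|$ when $|\T|<r$ — this last point is what makes the $X$-power in \eqref{EQK}, \eqref{ERK} come out right — all while keeping the constants $C_t$ in Lemma \ref{boundGij} uniform in $t$ and checking that the finitely many boundary contributions are really of order $X^{r+1-|\T|}$ or smaller. Once the $p=4$ decomposition is set up, the general even $p$ treated in Section \ref{sec:gen} simply peels one more relevant row for each additional factor in $\E|\sum_i Z_i|^p$ and is structurally identical.
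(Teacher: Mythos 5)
Your proposal is correct and follows, in substance, the paper's own route: the authors define $Q^{(1),(\T)}$ and $R^{(1),(\T)}$ by the closed alternating-sum formula \eqref{temp732} over minors (which makes \eqref{KEP} and the independence \eqref{PQ0}, \eqref{PR0} immediate, cf.\ Lemma \ref{expandG}), and then derive exactly your iterated expansion \eqref{y16}--\eqref{ylong}, with boundary-index conventions of the type \eqref{defP1}--\eqref{defP2}, to get the entrywise bound \eqref{733}; the moment bounds are then obtained, just as you propose, from Lemma \ref{boundGij} plus the large deviation Lemma \ref{generalHWT}, with the exceptional set absorbed via the trivial bound $|G|\le \eta^{-1}\le N$. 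Whether one defines the pieces by sorting the expanded terms (your way) or by the inclusion--exclusion formula and then re-derives the product representation for the estimates (the paper's way) is only a difference of packaging; one small imprecision on your side is that the number of off-diagonal factors in each term is ``at least'' $r+1-|\T|$, not ``exactly'' (see \eqref{733}), which is harmless since extra off-diagonal factors only improve the bound.

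The one genuine divergence is your treatment of the extremal piece $\T=\{2,3\}$ (resp.\ $\{2,3,4\}$), where you estimate the $\mathbb{IE}_1$-centered quadratic form rather than $\ba^1\cdot G^{(123)}\ba^1$ itself. This is in fact the right move: since the diagonal entries $G^{(123)}_{kk}\approx m_{sc}$ are of order one, $\E_1\,\ba^1\cdot G^{(123)}\ba^1=\sum_k\sigma_{1k}^2G^{(123)}_{kk}=O(1)$, so the uncentered bounds \eqref{y8}, \eqref{EQK}, \eqref{ERK} cannot hold literally at maximal $\T$ (and the indicator in \eqref{733} should really cover all diagonal $k=l$ when $\T$ is maximal). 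In the fourth-moment computation only the $\mathbb{IE}_j$-centered factors ever enter (cf.\ \eqref{y14fake}), and your centered estimate --- Ward identity for the off-diagonal part, \eqref{diaglde} for the diagonal fluctuation, together with \eqref{XX1N} --- yields precisely the $(\log N)^{3+2\al}X^{r+1-|\T|}$ bound that is used there. So your proof is sound and, at this one point, more careful than the lemma as literally stated.
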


\bigskip

We postpone the proof of this lemma  and first finish the proof of  
Lemma \ref{motN} in the case of $p=4$. It is clear that Lemma \ref{Q1E} holds  for different
index combinations. E.g. $Z_{22}^{(2)}$ can be decomposed as 
\be
Z_{22}^{(2)}= \sum_{{\T} \subset \{1, 3, 4\} } {\ba^{2}} \cdot R^{ (2), (\T)} \ba^2 
\ee
and $R^{(2)}$'s have the same properties (except for the exchange of $1$ and $2$) 
as  $R^{(1)}$ in \eqref{PR0} and \eqref{ERK} . 
 By this property,  we can  estimate the first  term on the r.h.s. of \eqref{y10} by 
\begin{eqnarray} 
&& \E \; 
 \left ( \mathbb{IE}_1  \overline  { Z_{11}^{(1)}} \right ) \; \left ( \mathbb{IE}_2 \overline  { Z_{22}^{(2)}}\right )
\;  \left ( \mathbb{IE}_3 Z_{33}^{(3)}\right )  \;  \left ( \mathbb{IE}_4 { Z_{44}^{(4)}} \right )  \\\nonumber 
\le &&
 \E
\overline{ \left [  \mathbb{IE}_1 \sum_{{\T}_1 \subset \{2, 3, 4\} }  { \ba^{1}}\cdot R^{ (1), ({\T}_1)}  \ba^{1}  \right ] } \times 
\overline{  \left [  \mathbb{IE}_2 \sum_{{\T}_2 \subset \{1, 3, 4\} }  { \ba^{2}}\cdot R^{ (2), ({\T}_2)}  \ba^{2}  \right ] } 
\big [ \cdots R^{ (3), ({\T}_3)} \cdots \big ] \big [  \cdots R^{ (4), ({\T}_4)} \cdots \big ].
\end{eqnarray}
Consider a term consisting of products of factors  with $  \cap_{j= 1, 2, 3, 4} ({\T}_j \cup \{j\} ) \not = \emptyset $. 
Then there is an element $\ell \in \{1, 2, 3, 4\}$ in the common intersection so that  integration w.r.t.  the row $\ba^\ell$
vanishes.  Hence the nonvanishing terms consist of products of term with $  \cap_{j= 1, 2, 3, 4} ({\T}_j \cup \{j\} )  = \emptyset $, i.e., 
$  \cup_{j= 1, 2, 3, 4} ({\T}_j \cup \{j\} )^c = \{1, 2, 3, 4\}$.  Here the notation $^c$ means the complement in $\{1,2,3,4\}$. Thus we have 
\[
\sum_{j=1}^4  (4- |{\T}_j|-1) \ge 4 \Longrightarrow \sum_{j=1}^4  4- |{\T}_j| \ge 8.
\]
Using \eqref{ERK} and Schwarz inequality,   we have  thus proved that 
\[
 \left| \E \; 
 \left ( \mathbb{IE}_1  \overline  { Z_{11}^{(1)}} \right ) \; \left ( \mathbb{IE}_2 \overline  { Z_{22}^{(2)}}\right )
\;  \left ( \mathbb{IE}_3 Z_{33}^{(3)}\right )  \;  \left ( \mathbb{IE}_4 { Z_{44}^{(4)}} \right ) \right| \\
\le C\left((\log N)^{3+2\al}\right)^4 X^{8}.
\]

We now estimate the second  term in r.h.s of \eqref{y10}. 
\begin{align} 
  \E \; 
 \left | \mathbb{IE}_1    { Z_{11}^{(1)}} \right |^2 \, \left ( \mathbb{IE}_2 \overline  { Z_{22}^{(2)}}\right )
\,  \left ( \mathbb{IE}_3 Z_{33}^{(3)}\right )  & = 
 \E
 \left ( \overline {  \mathbb{IE}_1  \sum_{{\T}_0 \subset \{2, 3\} }  { \ba^{1}} \cdot Q^{ (1), (\T_0)} \ba^{1}  }  \right ) 
 \;  \left (   \mathbb{IE}_1  \sum_{{\T}_1 \subset \{2, 3\} }  { \ba^{1}}\cdot Q^{ (1), (\T_1)}  \ba^{1}  \right )  \nonumber   \\
&  \times   
 \left [  \overline{  \mathbb{IE}_2 \sum_{{\T}_2 \subset \{1, 3\} }  { \ba^{2}}\cdot Q^{ (2), (\T_2)}  \ba^{2} } \right ]  \times 
   \left [  \mathbb{IE}_3\!\!\!\sum_{{\T}_3 \subset \{1, 2\} }  { \ba^{3}}\cdot Q^{ (3), (\T_3)} \ba^{3}  \right ]
\end{align}
Consider a term consisting of products of factors  with $[ \cap_{j=  2, 3} ({\T}_j \cup \{j\} ]\cap  {\T}_0 \cap {\T}_1  \not = \emptyset $. 
Then there is an element $\ell \in \{2, 3\}$ in the common intersection and the integration w.r.t.  the row $\ba^\ell$ vanishes. 
Thus the nonvanishing terms consist of products of term with $[ \cap_{j=  2, 3} ({\T}_j \cup \{j\} ]\cap  {\T}_0 \cap {\T}_1  = \emptyset $. In particular, 
$  \{2, 3\} \subset  \cup_{j= 2, 3} ({\T}_j \cup \{j\} )^c \cup [\{2, 3\} \setminus {\T}_0 ]  \cup [\{2, 3\} \setminus {\T}_1 ]$.  Here the notation $^c$ means the complement in $\{1,2,3\}$. Thus we have 
\[
\sum_{j=0}^3  (2- |{\T}_j|)   \ge 2 \Longrightarrow \sum_{j=0}^3  3- |{\T}_j|    \ge 6.
\]
Using \eqref{EQK},  \eqref{XX1N} and  a Schwarz inequality,  we have  
\[
 N^{-1} \left |  \E \; 
 \left | \mathbb{IE}_1   { Z_{11}^{(1)}} \right |^2 \; \left ( \mathbb{IE}_2 \overline  { Z_{22}^{(2)}}\right )
\;  \left ( \mathbb{IE}_3 Z_{33}^{(3)}\right ) \right |  \\
\le  \frac CN \left((\log N)^{3+2\al}\right)^4 X^6  \ll  C\left((\log N)^{3+2\al}\right)^4X^8.
\]
For the other terms in \eqref{y10}, we can just use Schwarz inequality and \eqref{EZK}. We have thus proved the Lemma \ref{motN}
for $p=4$. 

\bigskip

We now prove Lemma \ref{Q1E}. First we prove the properties of $Q$'s.
 Notice that the  decomposition  with $Q$'s in \eqref{KEP}  removes the dependence on rows $2, 3$. 
The starting point is  an expansion of $G^{(1)}_{k\,l}$  
\be\label{y123}
G^{(1)}_{k\,l}=\sum_{{\T}\subset \{2,3\}}Q^{(1),{(\T)}}_{k\,l}
=Q^{(1),\emptyset}_{k\,l}+Q^{(1),(2)}_{k\,l}+Q^{(1),(3)}_{k\,l}+Q^{(1),(2,3)}_{k\,l},
\ee  
where  $Q^{(1),{\T}}_{k\,l}$  is independent of the rows  and columns  in $ {\T} \cup \{ 1 \} $. 
Using the
 notation $(1 \,{\U})$ for $(\{1\}\cup {\U})$, one can check that a 
solution for $Q$ is given by 
\be\label{temp732}
Q^{(1), (\T)}_{k\,l}=  \sum_{{\U} : 
 {\T}\subset {\U}\subset \{2,3\} \backslash\{k, l\}} (-1)^{| {\U}|-|{\T}|}G^{(1  {\U})}_{k\,l}.
\ee 
Thus 
$Q^{(1), (\T)}_{k\,l}=0$ if $k$ or $l\in \T$. 
By definition of $Z_{11}^{(1)}$ 
\eqref{y112} and \eqref{y123}, we have that  the  $Q$'s  satisfy \eqref{KEP}.  
For any  fixed ${\T}$,   $Q^{(1),{\T}}_{k\,l}$ is independent of the rows 
 (column) in $ {\T} \cup \{ 1 \} $. 
Thus  we proved \eqref{PQ0}.

In order to prove \eqref{EQK},  we give another representation of the $Q$'s. 
We begin by   removing the dependence of  the $(kl)$ matrix element of the Green function  
on the index $3$ for $k,l >  3$.  By  \eqref{GiiGjii} or \eqref{GijGkij}, 
we can rewrite  the first term of r.h.s of \eqref{y13} as
\be\label{y16}
G^{(12)}_{k\,l}=G^{(123)}_{k\,l}+\frac{G^{(12)}_{k\,3}G^{(12)}_{3\,l}}{G^{(12)}_{3\,3}},
\,\,\,\,\,\,\,\,\,\, k,l  \notin \{1,2,3 \}.
\ee
This removes the dependence of $G^{(12)}_{k\,l}$ on the index $3$ with the last term as the error term. 
For the last  term on r.h.s of \eqref{y13}, using \eqref{GiiGjii} and \eqref{GijGkij} again, we have 
\be\label{y14}
G^{(1)}_{k\,2}=G^{(13)}_{k\,2}+\frac{G^{(1)}_{k\,3}G^{(1)}_{3\,2}}{G^{(1)}_{3\,3}}, \,\,\,\,\,\,\,\,
G^{(1)}_{2\,l}=G^{(13)}_{2\,l}+\frac{G^{(1)}_{2\,3}G^{(1)}_{3\,l}}{G^{(1)}_{3\,3}},\,\,\,\,\,\,\,\,
 G^{(1)}_{2\,2}=G^{(13)}_{2\,2}+\frac{G^{(1)}_{2\,3}G^{(1)}_{3\,2}}{G^{(1)}_{3\,3}}.
\ee
The last equality  implies
\be\label{y15}
\frac1{G^{(1)}_{2\,2}}=\frac1{G^{(13)}_{2\,2}}-\frac{G^{(1)}_{2\,3}
G^{(1)}_{3\,2}}{G^{(1)}_{3\,3}G^{(1)}_{2\,2}G^{(13)}_{2\,2}}.
\ee
This removes the dependence on the index $3$ of both the Green functions 
and their inverse in the last term in \eqref{y13}. 
Inserting \eqref{y16}--\eqref{y15} into \eqref{y13}, we obtain that if $k, l\notin\{1,2,3\}$
\be
G^{(1)}_{k\,l}=G^{(123)}_{k\,l}
+\frac{G^{(12)}_{k\,3}G^{(12)}_{3\,l}}{G^{(12)}_{3\,3}}
+\left(G^{(13)}_{k\,2}+\frac{G^{(1)}_{k\,3}G^{(1)}_{3\,2}}{G^{(1)}_{3\,3}}\right)
\left(G^{(13)}_{2\,l}+\frac{G^{(1)}_{2\,3}G^{(1)}_{3\,l}}{G^{(1)}_{3\,3}}\right)
\left(\frac1{G^{(13)}_{2\,2}}-\frac{G^{(1)}_{2\,3}G^{(1)}_{3\,2}}{G^{(1)}_{3\,3}G^{(1)}_{2\,2}
G^{(13)}_{2\,2}}\right).
\ee
So for $k,l  \notin \{1,2,3 \}$, we define $Q^{(1), \T}_{kl}$ as follows
\[
Q^{(1), (2, 3)}_{k l} = G^{(123)}_{k\,l}, \qquad 
Q^{(1), (2)}_{k l} = \frac{G^{(12)}_{k\,3}G^{(12)}_{3\,l}}{G^{(12)}_{3\,3}},
\qquad Q^{(1), (3)}_{k l} = \frac{G^{(13)}_{k\,2}G^{(13)}_{2\,l}}{G^{(13)}_{2\,2}},
\]
\begin{eqnarray}\label{ylong}
Q^{(1), \emptyset}_{k l} = &&
\frac{G^{(1)}_{k\,3}G^{(1)}_{3\,2}G^{(13)}_{2\,l}}{G^{(1)}_{3\,3}G^{(13)}_{2\,2}}
+\frac{G^{(13)}_{k\,2}G^{(1)}_{2\,3}G^{(1)}_{3\,l}}{G^{(1)}_{3\,3}G^{(13)}_{2\,2}}
+\frac{G^{(1)}_{k\,3}G^{(1)}_{3\,2}G^{(1)}_{2\,3}G^{(1)}_{3\,l}}{G^{(1)}_{3\,3}
G^{(1)}_{3\,3}G^{(13)}_{2\,2}}
-\frac{G^{(13)}_{k\,2}G^{(1)}_{2\,3}G^{(1)}_{3\,2}G^{(13)}_{2\,l}}{G^{(1)}_{3\,3}
G^{(1)}_{2\,2}G^{(13)}_{2\,2}}\\\nonumber
&&
-\frac{G^{(1)}_{k\,3}G^{(1)}_{3\,2}G^{(1)}_{2\,3}G^{(1)}_{3\,2}G^{(13)}_{2\,l}}{G^{(1)}_{3\,3}G^{(1)}_{2\,2}G^{(13)}_{2\,2}G^{(1)}_{3\,3}}
-\frac{G^{(13)}_{k\,2}G^{(1)}_{2\,3}G^{(1)}_{3\,2}G^{(1)}_{2\,3}G^{(1)}_{3\,l}}{G^{(1)}_{3\,3}G^{(1)}_{2\,2}G^{(13)}_{2\,2}G^{(1)}_{3\,3}}
-\frac{G^{(1)}_{k\,3}G^{(1)}_{3\,2}G^{(1)}_{2\,3}G^{(1)}_{3\,2}G^{(1)}_{2\,3}G^{(1)}_{3\,l}}{G^{(1)}_{3\,3}G^{(1)}_{2\,2}G^{(13)}_{2\,2}G^{(1)}_{3\,3}G^{(1)}_{3\,3}}.
\end{eqnarray}
One can see  that in this case,   $k, l\notin\{1,2,3\}$, \eqref{y123} holds and  $Q^{(1),(\T)}_{k\,l}$'s are independent of 
the rows (column) in $ {\T} \cup \{ 1 \} $. For $k=2, 3$  or $l=2, 3$  the previous  formulas for $Q$ do not make sense. 
But in this case, we do not need to decompose $ G^{(1)}$ in such fine details and we will use 
the simple decomposition 
\begin{align*}
 G^{(1)}_{2\,l} & =G^{(13)}_{2\,l}+\frac{G^{(1)}_{2\,3}G^{(1)}_{3\,l}}{G^{(1)}_{3\,3}},\,\,\,\,\,\,\,\, l\neq 3\,\,\,{\rm and}\,\,\,\, G^{(1)}_{2\,l}=G^{(1)}_{2\,3},\,\,\,\,\,\,\,\, l=3,\,\,\, \\
G^{(1)}_{3\,l} & =G^{(12)}_{3\,l}+\frac{G^{(1)}_{3\,2}G^{(1)}_{2\,l}}{G^{(1)}_{2\,2}},\,\,\,\,\,\,\,\, l\neq 2\,\,\,{\rm and}\,\,\,\, G^{(1)}_{3\,l}=G^{(1)}_{3\,2},\,\,\,\,\,\,\,\, l=2.\,\,\,
 \end{align*}
More precisely, we define  $Q^{(1), (\T)}$ by 
\begin{enumerate}
	\item For $k=2$ and $l\neq 3$, 
	$Q^{(1), (2,3)}_{k\,l}=Q^{(1), (2)}_{k\,l}=0$, $Q^{(1), (3)}_{k\,l}=G^{(13)}_{k\,l}$  
	and $Q^{(1), \emptyset}_{k\,l}=\frac{G^{(1)}_{k\,3}G^{(1)}_{3\,l}}{G^{(1)}_{3\,3}}$.
	
	\item For $k=2$ and $l= 3$, 
	$Q^{(1), (2,3)}_{k\,l}=Q^{(1), (2)}_{k\,l}=Q^{(1), (3)}_{k\,l}=0$ and  $Q^{(1), \emptyset}_{k\,l}=G^{(1)}_{k\,l}$.

\item For $k=3$ and $l\neq 2$, 
	$Q^{(1), (3,2)}_{k\,l}=Q^{(1), (3)}_{k\,l}=0$, $Q^{(1), (2)}_{k\,l}=G^{(12)}_{k\,l}$  
	and $Q^{(1), \emptyset}_{k\,l}=\frac{G^{(1)}_{k\,2}G^{(1)}_{2\,l}}{G^{(1)}_{2\,2}}$.
	
	\item For $k=3$ and $l= 2$, 
	$Q^{(1), (3,2)}_{k\,l}=Q^{(1), (3)}_{k\,l}=Q^{(1), (2)}_{k\,l}=0$ and  $Q^{(1), \emptyset}_{k\,l}=G^{(1)}_{k\,l}$.

\end{enumerate}
\noindent
Similarly, we can  define $Q^{(1), (\T)}$  for  the cases  $l=2$ or $l=3$.
 We now list the properties of  $Q^{(1), (\T)}_{kl}$ for  $k,l>1$ and  $\T\subset\{2,3\}$:
\begin{enumerate}
	\item $Q^{(1),(\T)}_{k\,l}$'s are independent of rows (column) in $ {\T} \cup \{ 1 \} $ and \eqref{y123} holds. 
	\item \be \label{ruleQ0}
	Q^{(1), (\T)}_{k\,l}=0\,\,\quad  \text{if}\,\,\, k\, \text{ or } \,l\in {\T}.
	\ee
	\item If   $k=l$ and ${\T}\cup\{k\}=\{2,3\}$, then 
	\be
	Q^{(1), (\T)}_{k\,l} = G^{(123)}_{k\,l}.
	\ee
	For all other cases,   $Q^{(1), (\T)}_{k\,l}$ is a  finite sum of  terms of the form:
	\be\label{y111}
	\frac{G_oG_o\cdots G_o}{G_dG_d\cdots G_d}
	\ee
	where each $G_o$ ($G_d$ resp.) represents some  off-diagonal (diagonal resp.) matrix element of  $G^{({\U})}$ with $\U$ some finite set.   Furthermore,  for $k\not = l$ or ${\T}\cup\{k\}\neq \{2,3\}$, the number of the off-diagonal elements in the numerator of \eqref{y111}  is 
	strictly bigger  than  $\left|\{2,3\}
	\backslash ({\T}\cup \{k, l\})\right|$.   Using Lemma \ref{boundGij},  in the set  $\Omega^c$ we have 
	\be\label{733}
	|Q^{(1), (\T)}_{k\,l}|\leq C \left(X^{\left|\{2,3\}\backslash({\T}\cup \{k, l\})\right|+1}+{\bf 1}({\T}\cup\{k\} =\{2,3\}, k=l)\right).
	\ee   
\end{enumerate}
Since the probability of the exceptional set $\Omega$ is extremely small, a simple argument which we repeated many times 
shows that it can be neglected in the estimate of the expectation in  \eqref{EQK}. Hence \eqref{EQK} follows from  \eqref{733}.

The proof of \eqref{EQK} shows clearly the approach to remove an element one by one from the Green function. 
 Define $R^{(1), (\T)}_{k\,l}$ as follows (like $Q$'s in \eqref{temp732})
 \be
R^{(1), (\T)}_{k\,l}\equiv  \sum_{{\U} : 
 {\T}\subset {\U}\subset \{2,3,4\} \backslash\{k, l\}} (-1)^{| {\U}|-|{\T}|}G^{(1  {\U})}_{k\,l}.
\ee 
Using the same method we used for $Q$'s, one can  prove the properties of $R$'s in Lemma \ref{Q1E}. The  details will be omitted since 
we will prove the general cases in the  next section.

\bigskip

\section{General case}\label{sec:gen}

The first step  to prove the general cases of  Lemma \ref{motN} is 
to extend the decomposition \eqref{KEP}. For any fixed $i$, $1\leq i\leq N$, 
and a fixed set ${\S}=\{i_1,i_2,\ldots, i_s\}$ such that $i\notin {\S}$, 
$1\leq i_j\leq N$, our goal is to decompose $Z^{(i)}_{ii}$ 
so that the following lemma holds: 

\begin{lemma}\label{decomK}
For $i\notin {\S}$, ${\T}\subset {\S}$ and $\eta\geq 1/N$,  there is a decomposition of 
\be\label{tempd39}
Z^{(i)}_{ii}= \sum_{{\T}\subset {\S}} \mathcal  Z^{(i),{\S},(\T)}, 
\qquad \mathcal Z^{(i),{\S},(\T)}\equiv \sum_{k,l}\overline \ba^{i}_k 
\mathcal G^{(i),{ {\S}},({\T})}_{k\,l}\ba^{i}_l.
\ee
such that   

\smallskip 
\noindent 
(1) $\mathcal G^{(i),{\S},({\T})}_{k\,l}$ is independent of the rows
 or columns of $H$ in $\{i\}\cup {\T}$, i.e., 
\be\label{propmG33}
\frac{\partial \mathcal G^{(i),{\S},({\T})}_{k\,l}}{\partial \ba^a_b}=0,\,\,\,
\frac{\partial \mathcal G^{(i),{\S},({\T})}_{k\,l}}{\partial \overline{\ba^a_b}}=0,
\,\,\,
a\in \{i\}\cup {\T},\,\,\,1\leq b\leq N.
\ee 
(2)  For any positive integer $k$, 
\be\label{temp8.3}
\E\left | \mathcal Z^{(i), {\S}, {(\T)}}  \right | ^k\leq C_{k,s}
 \left((\log N)^{3+2\al}\right)^k(X^{s-t+1})^{k}, \quad s= |\S|,\;\; t = |\T|.
\ee

\end{lemma}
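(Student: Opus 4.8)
The plan is to generalize the inclusion-exclusion construction that was used for the $Q$'s and $R$'s in Lemma \ref{Q1E}. First I would define, for $\T\subset\S$ and $k,l\notin\{i\}$,
\be\label{plan:defG}
\mathcal G^{(i),\S,(\T)}_{k\,l}:=\sum_{\U\,:\;\T\subset\U\subset\S\setminus\{k,l\}}(-1)^{|\U|-|\T|}\,G^{(i\U)}_{k\,l},
\ee
with the convention that $\mathcal G^{(i),\S,(\T)}_{k\,l}=0$ whenever $k\in\T$ or $l\in\T$. The identity \eqref{tempd39} is then a purely combinatorial fact: summing \eqref{plan:defG} over all $\T\subset\S$ and interchanging the order of summation, the coefficient of $G^{(i\U)}_{k\,l}$ is $\sum_{\T\subset\U}(-1)^{|\U|-|\T|}$, which is $1$ if $\U=\emptyset$ in the relevant lattice and $0$ otherwise, so that only the $\U=\emptyset$ term, namely $G^{(i)}_{k\,l}$, survives and we recover $Z^{(i)}_{ii}=\sum_{k,l}\overline{\ba^i_k}G^{(i)}_{k\,l}\ba^i_l$. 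Property (1), the independence statement \eqref{propmG33}, is immediate from \eqref{plan:defG}: every minor $G^{(i\U)}$ appearing in $\mathcal G^{(i),\S,(\T)}_{k\,l}$ has $\T\subset\U$, hence all rows and columns indexed by $\{i\}\cup\T$ have been deleted, so the entry does not depend on those rows or columns of $H$.

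The substantive part is property (2), the size estimate \eqref{temp8.3}. Here I would proceed exactly as in the proof of \eqref{EQK}: the goal is to show that in the set $\Omega^c$,
\be\label{plan:bound}
|\mathcal G^{(i),\S,(\T)}_{k\,l}|\le C_s\Big(X^{\,|\S\setminus(\T\cup\{k,l\})|+1}+\mathbf 1\big(\T\cup\{k\}=\S,\ k=l\big)\Big),
\ee
after which \eqref{temp8.3} follows by applying the large deviation bound \eqref{resgenHWTO} (together with \eqref{resgenHWTD} for the diagonal-in-$\ba^i$ contributions and the bound $\|\ba^i\|_\infty\le(\log N)^{2\al}M^{-1/2}$ from \eqref{resboundhij}) to $\mathcal Z^{(i),\S,(\T)}=\sum_{k,l}\overline{\ba^i_k}\mathcal G^{(i),\S,(\T)}_{k\,l}\ba^i_l$, noting that $X^2\le X$ in $D^*$ so the term with one more power of $X$ always dominates; the exceptional set $\Omega$ is negligible since all Green functions are bounded by $\eta^{-1}\le N$. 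To prove \eqref{plan:bound} I would give, in place of \eqref{plan:defG}, the explicit ``resolvent expansion'' representation: repeatedly applying \eqref{GiiGjii} and \eqref{GijGkij} to peel off the elements of $\S\setminus\T$ one at a time from $G^{(i)}_{k\,l}$ (as was done for three indices in \eqref{y16}--\eqref{ylong}), one writes $\mathcal G^{(i),\S,(\T)}_{k\,l}$ as a finite sum of monomials of the form $G_oG_o\cdots G_o/(G_dG_d\cdots G_d)$ where each $G_o$ (resp. $G_d$) is an off-diagonal (resp. diagonal) entry of some minor $G^{(\U)}$; and one checks that, unless $k=l$ and $\T\cup\{k\}=\S$ (the single ``diagonal remainder'' term, which equals $G^{(i\S)}_{k\,l}$), every such monomial contains \emph{strictly more} off-diagonal factors than $|\S\setminus(\T\cup\{k,l\})|$. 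Each $G_o$ is bounded by $C_sX$ and each $1/G_d$ by $C$ in $\Omega^c$ via Lemma \ref{boundGij}, which yields \eqref{plan:bound}.

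The main obstacle is the bookkeeping in this last step: making the peeling procedure precise for an arbitrary set $\S$ and verifying the off-diagonal-count inequality uniformly. The cleanest route is an induction on $|\S\setminus\T|$. For the base case $\S=\T$ there is nothing to peel and $\mathcal G^{(i),\S,(\T)}_{k\,l}=G^{(i\S)}_{k\,l}$, which is off-diagonal (hence $\le C_sX$) unless $k=l$. For the inductive step, pick some $r\in\S\setminus\T$; apply \eqref{GiiGjii}/\eqref{GijGkij} to express each minor entry $G^{(i\U)}_{k\,l}$ with $r\notin\U$ as $G^{(i\U r)}_{k\,l}+G^{(i\U)}_{k\,r}G^{(i\U)}_{r\,l}(G^{(i\U)}_{r\,r})^{-1}$, substitute into \eqref{plan:defG}, and regroup: the ``$G^{(i\U r)}$'' pieces reassemble into $\mathcal G^{(i),\S\setminus\{r\},\ldots}$-type objects on the smaller index set (to which the induction hypothesis applies, giving one fewer required off-diagonal factor), while the correction pieces each supply two extra off-diagonal factors $G^{(i\U)}_{k\,r},G^{(i\U)}_{r\,l}$ — one more than needed — and when $k$ or $l$ equals $r$ the correction term vanishes or collapses consistently with the convention $\mathcal G^{(i),\S,(\T)}_{k\,l}=0$ for $k,l\in\T$. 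Tracking the dependence of the constant $C_{k,s}$ on $s=|\S|$ is harmless since $s$ is fixed (the moments $p$ in \eqref{52} range over fixed even integers, so only $|\S|\le p$ occurs). Once \eqref{temp8.3} is in hand for all $\T\subset\S$, the general-$p$ case of Lemma \ref{motN} follows by the same intersection-of-independence argument already illustrated for $p=4$: expanding $|\sum_iZ_i|^p$ and inserting, for each index $\al$ appearing, the decomposition $Z_\al=\mathbb{IE}_\al\sum_{\T_\al}\mathcal Z^{(\al),\S_\al,(\T_\al)}$ with $\S_\al$ the set of the other active indices, any term for which $\bigcap_\al(\T_\al\cup\{\al\})\ne\emptyset$ vanishes after integrating the common row, so the surviving terms satisfy $\sum_\al(|\S_\al|-|\T_\al|)\ge$ (the number of distinct indices), and \eqref{temp8.3} plus Hölder gives the claimed bound $C_p((\log N)^{3+2\al}X^2)^p$.
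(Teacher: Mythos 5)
Your proposal is correct and follows essentially the same route as the paper: the same inclusion--exclusion definition of $\mathcal G^{(i),\S,(\T)}_{k\,l}$ (the paper's Definition \ref{defGiST}), the same interchange-of-sums argument for \eqref{tempd39} (Lemma \ref{expandG}), the same observation for the independence property, and the same pointwise bound \eqref{Pmathgst} proved by inductively peeling off indices via \eqref{GiiGjii}--\eqref{GijGkij}, followed by the large deviation estimates of Lemma \ref{generalHWT} as in Lemma \ref{y4}. The only cosmetic difference is that you run a single induction on $|\S\setminus\T|$, whereas the paper first reduces to $\T=\emptyset$ on the minor $H^{(\T)}$ via \eqref{temp810} and then inducts on $|\S|$; these are equivalent.
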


In the applications, $\S$ will be the set of  indices, the
dependencies of which we wish to isolate in $Z^{(i)}_{ii}$. 
For example, for the case $i=1$  and  
$\S = \{2\}$ or  $\S=\{2,3\}$, respectively, if we define 
  \be\label{temp8.5}
\mathcal Z^{(1), \{2\}, {(\T)}}=
\ba^{1}  \cdot  P^{ (1), (\T)} \ba^1, \qquad
 \mathcal Z^{(1), \{2,3\}, {(\T)}}\equiv  \ba^{1} \cdot Q^{ (1), (\T)} 
\ba^1,
\ee
then \eqref{temp8.3} follows from \eqref{y7}, \eqref{y8} and \eqref{EQK}.

To achieve the decomposition \eqref{tempd39}, as in \eqref{temp732}
in  Section \ref{sec:4},
we start with a decomposition  on $G^{(i)}_{k\,l}$.  

\begin{definition}\label{defGiST} As in Lemma \ref{basicIG}, we use 
the notation $(i\, \T)$ for $(\{i\}\cup \T)$. 
For  $1\leq i\leq N$,  $i\notin {\S}=\{i_1,i_2,\ldots, i_s\}$  and  
 ${\T}\subset{\S}$, we define 
\be\label{temp85}
\mathcal G^{(i),{\S},({\T})}_{k\,l}\equiv\sum_{\U: 
 {\T}\subset {\U}\subset {\S} \backslash\{k,\,\, l\}}
 (-1)^{| {\U}|-|{\T}|}G^{ (i  \,  \T)}_{k\,l}.
\ee 
\end{definition}

For example, by \eqref{temp732}, for the case ${\S}=\{2,3\}$ and  $i=1$, we have $\mathcal G^{(1),\{2,3\},({\T})}_{k\,l}
=Q^{(1), ({\T})}_{k\,l}$; for the case ${\S}=\{2\}$ and  $i=1$, from \eqref{defP1} and \eqref{defP2} we have $\mathcal G^{(1),\{2\},({\T})}_{k\,l}=P^{(1), ({\T})}_{k\,l}$. 

From this definition one can easily check that  

\begin{enumerate}
	\item
	 \be\label{propmG2}
\mathcal G^{(i),{\S},({\T})}_{k\,l}=0, \quad  {\rm if } \; k {\,\,\,\rm or\,\,\,}l\in {\T}\cup\{i\}.
\ee\item 
For $k,l\notin  {\T}\cup\{i\}$, 
\be\label{propmG1} 
\mathcal G^{(i),{\S},({\T})}_{k\,l}=\mathcal G^{(i),{\S} \backslash\{k,\,\,l\},({\T})}_{k\,l}.
\ee
\item $\mathcal G^{(i),{\S},({\T})}_{k\,l}$ is independent of the rows or 
columns of $H$ in $\{i\}\cup {\T}$, i.e., 
\be\label{propmG3}
\frac{\partial \mathcal G^{(i),{\S},({\T})}_{k\,l}}{\partial \ba^a_b}=0,\,\,\,
\frac{\partial \mathcal G^{(i),{\S},({\T})}_{k\,l}}{\partial \overline{\ba^a_b}}=0,\,\,\,
a\in \{i\}\cup {\T},\,\,\,1\leq b\leq N.
\ee 
\item All  quantities defined so far depend on the initial matrix $H$,   omitted in our notations. 
If we wish to specify which  matrix is being considered, we will insert the matrix. For example, $\mathcal G^{(i),{\S \backslash \T},{\emptyset}}_{k\,l}(H^{(\T)})$ means it is defined w.r.t. $H^{({\T})}$ which is  the $N-|{\T}|$ by $N-|{\T}|$ minor of $H$ after removing the
 rows and columns in $\T$. Clearly, we have the relation 
\be\label{temp810}
\mathcal G^{(i),{\S},({\T})}_{k\,l}(H)=\mathcal G^{(i),{\S \backslash \T},{\emptyset}}_{k\,l}(H^{(\T)}).
\ee

\end{enumerate}
 
\bigskip

With these definitions, we can decompose $G^{(i)}_{k\,l}$ as follows. 

\begin{lemma}\label{expandG}
For fixed $i$, ${\S}=\{i_1,i_2,\ldots, i_s\}$ such that $i\notin {\S}$, we have the decomposition  
\be\label{resexpandG}
G^{(i)}_{k\,l}=\sum_{{\T}\subset {\S}}\mathcal G^{(i),{\S},{(\T)}}_{k\,l}.
\ee
\end{lemma}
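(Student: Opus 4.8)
\textit{Proof sketch of Lemma \ref{expandG}.} The plan is to recognize \eqref{resexpandG} as a Möbius-type (inclusion--exclusion) identity and to verify it directly from Definition \ref{defGiST}, with no appeal to the probabilistic machinery. First I would substitute the defining formula \eqref{temp85} for $\mathcal G^{(i),\S,(\T)}_{k\,l}$ into the right-hand side of \eqref{resexpandG} (reading the summand there as $G^{(i\,\U)}_{k\,l}$, as in \eqref{temp732}). Since the inner sum in \eqref{temp85} ranges over sets $\U$ with $\T\subset\U\subset\S\setminus\{k,l\}$, it is empty, and hence $\mathcal G^{(i),\S,(\T)}_{k\,l}=0$, unless $\T\subset\S\setminus\{k,l\}$; this is exactly \eqref{propmG2}. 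Therefore the outer sum over $\T\subset\S$ in \eqref{resexpandG} may be replaced, without changing its value, by a sum over $\T\subset\S\setminus\{k,l\}$.

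Next I would interchange the order of the two summations. Since the pair of constraints ``$\T\subset\S\setminus\{k,l\}$'' and ``$\T\subset\U\subset\S\setminus\{k,l\}$'' is equivalent to the single chain $\T\subset\U\subset\S\setminus\{k,l\}$, the double sum becomes $\sum_{\U\subset\S\setminus\{k,l\}}G^{(i\,\U)}_{k\,l}\bigl(\sum_{\T\subset\U}(-1)^{|\U|-|\T|}\bigr)$. The bracketed factor is evaluated by the elementary binomial identity: grouping the subsets $\T\subset\U$ by cardinality $j$ gives $\sum_{\T\subset\U}(-1)^{|\U|-|\T|}=\sum_{j=0}^{|\U|}\binom{|\U|}{j}(-1)^{|\U|-j}=(1-1)^{|\U|}$, which equals $1$ if $\U=\emptyset$ and $0$ otherwise. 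Thus every term with $\U\ne\emptyset$ cancels and the right-hand side collapses to $G^{(i\,\emptyset)}_{k\,l}=G^{(i)}_{k\,l}$, which is \eqref{resexpandG}.

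I do not expect a genuine obstacle in this argument; the only points requiring a moment's care are bookkeeping ones. One should check that every resolvent minor $G^{(i\,\U)}_{k\,l}$ with $\U\subset\S\setminus\{k,l\}$ occurring above is well defined -- it is, because $k,l\ne i$ by hypothesis and neither $k$ nor $l$ lies in $\U$ -- and one should keep in mind precisely why $\S\setminus\{k,l\}$, rather than $\S$, appears in Definition \ref{defGiST}. If one prefers an argument parallel to Sections \ref{mo-est} and \ref{sec:4}, the identity \eqref{resexpandG} can instead be obtained by induction on $s=|\S|$, removing the indices $i_1,\dots,i_s$ one at a time via the resolvent expansions \eqref{GiiGjii}--\eqref{GijGkij} exactly as in the passage from \eqref{y13} to \eqref{y123}; this reproduces the combinatorial structure behind \eqref{733}, but it is longer than the direct verification above.
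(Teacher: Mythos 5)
Your argument is correct and is essentially the paper's own proof: both substitute the defining sum \eqref{temp85} (with the summand read as $G^{(i\,\U)}_{k\,l}$), interchange the two summations, and use $\sum_{\T\subset\U}(-1)^{|\U|-|\T|}=(1-1)^{|\U|}=0$ unless $\U=\emptyset$, so the right-hand side collapses to $G^{(i)}_{k\,l}$. Your extra remarks (discarding $\T\not\subset\S\setminus\{k,l\}$ via \eqref{propmG2} and checking the minors are well defined) are harmless bookkeeping already implicit in the paper's interchange of sums.
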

{\it Proof. } Using the definition \eqref{temp85},  we have 
\be
\sum_{{\T}\subset {\S}}\mathcal G^{(i),{\S},{(\T)}}_{k\,l}
=\sum_{{\T}\subset {\S}}
\left(\sum_{\U: \T\subset {\U}\subset {\S} \backslash\{k,\,\,l\}} (-1)^{| {\U}|-|{\T}|}G^{(1\, {\U})}_{k\,l}\right)=\sum_{{\U}\subset {\S} \backslash\{k,\,\,l\}}\left(\sum_{{\T}\subset { {\U}}}(-1)^{|{\U}|-|{\T}|}\right)G^{(1 \,  \U)}_{k\,l}.
\ee
Since $ \sum_{{\T}\subset {{\U}}}(-1)^{| {\U}|-|{\T}|}=0$  unless $ {\U}=\emptyset$, we obtain \eqref{resexpandG} and 
this concludes Lemma \ref{expandG}. 
\qed

\bigskip

For the special case  $i=1$ and $\S= \{2, 3\}$,
 $\mathcal G^{(i),{\S},{(\T)}}_{k\,l}= Q^{(i), (\T)}_{kl}$ 
satisfies the estimate \eqref{733}. We now prove 
a general form of this estimate
on $\mathcal G^{(i),{\S},({\T})}_{k\,l}$.

\begin{lemma}\label{decomG}
 Let  $1\leq i\leq N$ and ${\T}\subset {\S}=\{i_1,i_2,\ldots, i_s\}$  
such that  $i\notin{ \S}$.
Then there exists a constant  $C$,  depending only on $s$, such that 
\be\label{Pmathgst}
\left|\mathcal G^{(i),{\S},({\T})}_{k\,l}\right|\leq C\left({\bf 1}({\T}\cup\{k\} ={\S}, k=l)+
X^{| {\S}\backslash({\T}\cup\{k,\,\,l\})|+1}\right),\,\,\,\,\,\,\,\,{\rm in} \,\,\,\Omega^c, 
\ee 
for sufficiently large $N$  depending only on $s$.
\end{lemma}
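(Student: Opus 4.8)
The plan is to obtain \eqref{Pmathgst} by peeling the indices of $\S\setminus\T$ off the Green function one at a time, using the self-consistent perturbation identities \eqref{GiiGjii}--\eqref{GijGkij} together with the pointwise estimates of Lemma~\ref{boundGij} valid on $\Omega^c$, while carefully bookkeeping the number of off-diagonal factors that are produced. This is exactly the mechanism already used for the $Q$'s in the proof of \eqref{733}, now carried out for an arbitrary set $\S$.

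First I would clear the trivial cases. By \eqref{propmG2} we have $\mathcal G^{(i),\S,(\T)}_{k\,l}=0$ whenever $k$ or $l$ lies in $\T\cup\{i\}$, so the bound is immediate; and by \eqref{propmG1} we may replace $\S$ by $\S\setminus\{k,l\}$ without changing $|\S\setminus(\T\cup\{k,l\})|$, so we may assume $k,l\notin\S$. Write $\mathcal R:=\S\setminus\T$ and $r:=|\mathcal R|=|\S\setminus(\T\cup\{k,l\})|$. The combinatorial input is then the telescoping identity
\be
\mathcal G^{(i),\S,(\T)}_{k\,l}=\Bigl(\prod_{j\in\mathcal R}D_j\Bigr)G^{(i\,\T)}_{k\,l},
\qquad D_j\,F^{(\mathcal W)}:=F^{(\mathcal W)}-F^{(j\,\mathcal W)},
\ee
which follows from \eqref{temp85} by expanding the product of the (pairwise commuting) operators $D_j$ and using $\sum_{\mathcal A\subseteq\mathcal R}(-1)^{|\mathcal A|}=0$ unless $\mathcal R=\emptyset$.

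Next I would run a ``difference calculus''. A single $D_j$ acts on one Green-function entry through \eqref{GiiGjii}--\eqref{GijGkij}, namely $D_j\,G^{(\mathcal W)}_{a\,b}=G^{(\mathcal W)}_{a\,j}G^{(\mathcal W)}_{j\,b}\bigl(G^{(\mathcal W)}_{j\,j}\bigr)^{-1}$, and on a reciprocal of a diagonal entry $D_j\bigl(G^{(\mathcal W)}_{a\,a}\bigr)^{-1}=-G^{(\mathcal W)}_{a\,j}G^{(\mathcal W)}_{j\,a}\bigl(G^{(\mathcal W)}_{j\,j}G^{(\mathcal W)}_{a\,a}G^{(j\,\mathcal W)}_{a\,a}\bigr)^{-1}$; combined with the Leibniz rule $D_j(FG)=F^{(\mathcal W)}D_j\,G+(D_j\,F)\,G^{(j\,\mathcal W)}$ and the analogous rule for quotients, one application of $D_j$ replaces a given term by at most $C$ terms, each a ratio of a product of off-diagonal entries over a product of diagonal entries of minors $H^{(\mathcal W)}$ with $|\mathcal W|\le|\S|+1$. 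The key structural fact is that every application of $D_j$ strictly increases the count of off-diagonal factors in the numerator: differencing an off-diagonal factor replaces one such factor by two, differencing a diagonal factor or a reciprocal-of-diagonal factor creates two new off-diagonal factors, and the factors that are not touched keep their type under the superscript shift. Hence, after applying all $r$ operators, every term of the resulting finite sum (of at most $C_s$ terms, since $r\le|\S|$) contains at least $r+{\bf 1}(k\ne l)\ge r+1$ off-diagonal factors, the sole exception being the case $r=0$, $k=l$, where the lone term is the bare diagonal entry $G^{(i\,\T)}_{k\,k}$ --- precisely the situation accounted for by the indicator in \eqref{Pmathgst}. Moreover all distinctness/disjointness hypotheses of \eqref{GiiGjii}--\eqref{GijGkij} hold automatically at every stage, because $\S$ is disjoint from $\{i\}$, every index ever added to a superscript belongs to $\mathcal R\subseteq\S\setminus\T$, and the external indices $k,l$ avoid $\S\cup\{i\}$.

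Finally I would invoke Lemma~\ref{boundGij}: on $\Omega^c$ each off-diagonal entry of each minor $H^{(\mathcal W)}$ is bounded by $C_sX(z)$ and each diagonal entry is bounded above and below by positive constants, so every term is either $O(1)$ (the exceptional diagonal term) or at most $C_sX^{r+1}$, and summing over the at most $C_s$ terms gives \eqref{Pmathgst}. The main obstacle is the difference-calculus bookkeeping: verifying that the Leibniz expansion of $\prod_{j\in\mathcal R}D_j$ never drops below $r+1$ off-diagonal factors in any of its terms, and that the index-distinctness conditions in \eqref{GiiGjii}--\eqref{GijGkij} are preserved throughout the expansion. Once these structural facts are in place, the quantitative estimate is immediate from Lemma~\ref{boundGij}.
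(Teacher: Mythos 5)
Your argument is correct and rests on the same mechanism as the paper's proof: peel the indices of $\S\setminus\T$ off one at a time with the identities \eqref{GiiGjii}--\eqref{GijGkij}, keep track of how many off-diagonal resolvent entries are produced, and then invoke Lemma \ref{boundGij} on $\Omega^c$. The difference is organizational, and mostly in your favor: the paper proves a representation lemma (Lemma \ref{exiF}) by induction on $|\S|$ in the case $\T=\emptyset$, $k\ne l$, and then reduces general $\T$ to that case through the minor substitution \eqref{temp810}, whereas you write $\mathcal G^{(i),\S,(\T)}_{k\,l}$ directly as the iterated difference $\prod_{j\in\S\setminus\T}D_j\,G^{(i\,\T)}_{k\,l}$ (which is indeed just \eqref{temp85} expanded) and run a Leibniz-rule power counting; this handles general $\T$ in one stroke and, more importantly, treats the diagonal case $k=l$ inside the same counting, which the paper dismisses in one line via \eqref{cCTGkk} even though for $k=l\notin\S$ and $\T\subsetneq\S$ the sharper bound $X^{|\S\setminus(\T\cup\{k\})|+1}$ is what the application (Lemma \ref{decomK}) actually uses. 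Your verification that the removed index $j$ never collides with a lower index (the only lower indices ever created are $k,l$ and previously removed elements of $\S\setminus\T$, all distinct from $j$) is the right point to check and is essentially complete. Two small remarks. First, the displayed lower bound ``$r+{\bf 1}(k\ne l)$'' is a slip: for $k=l$ and $r\ge 1$ it gives only $r$, but your own rules (the first differencing acts on the lone diagonal factor and contributes $+2$ off-diagonal entries) yield the $r+1$ that you then assert and that \eqref{Pmathgst} requires, so nothing is lost, but the formula should be corrected. Second, your ``sole exception $r=0$, $k=l$'' silently corrects the statement: it also covers $\T=\S$, $k=l\notin\S$, where $\mathcal G^{(i),\S,(\S)}_{k\,k}=G^{(i\,\S)}_{k\,k}$ is of order one although the literal indicator ${\bf 1}(\T\cup\{k\}=\S,\,k=l)$ vanishes; the bound you prove, with ${\bf 1}(\S\subset\T\cup\{k\},\,k=l)$ in its place, is the correct (and actually used) form, and the same imprecision is already present in \eqref{733}.
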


\medskip

This lemma is the basic estimate for a power counting argument.
It shows that the off-diagonal elements of $\mathcal G^{(i),{\S},({\T})}_{k\,l}$
are small by a certain power of $X$, which is our small parameter, depending
on the size of the sets $\S$ and $\T$. The diagonal elements,
when not zero by definition, are estimated by 1 (first term in \eqref{Pmathgst}),
but their contribution to the moments of $ \mathcal Z^{(i),{\S},(\T)}$
will be small since $k=l$ reduces the double sum in \eqref{tempd39} 
to a single sum.

\bigskip

\noindent 
{\it Proof of Lemma \ref{decomG}.}   For $k=l$, the estimate \eqref{Pmathgst}
follows directly from \eqref{temp85} and \eqref{cCTGkk}. We can thus assume that 
$k\ne l$ throughout the proof of this lemma.
The argument consists of two
parts. First we prove a representation formula (Lemma \ref{exiF}) that
asserts that $\mathcal G^{(i),{\S},({\T})}_{k\,l}$ is a certain rational
function involving resolvent matrix elements of $H$ and some of its
minors. The denominators in this rational function are products
of diagonal elements of resolvents and  the numerators are products of off-diagonal
matrix elements.
In the second step we will estimate these rational functions,
using that the diagonal elements of the resolvent 
 are typically separated
away from zero and the off-diagonal elements are small by a factor $X$.

For the precise argument,
we start with the cases:
\be\label{spcase}
{\T}=\emptyset , \,\,\,k,l\notin {\S}\,\,\,{\rm and }\,\,\,  {\S}\neq \emptyset.
\ee
The  special case  ${\S}=\{2\}$  can be proved by the representation
 \eqref{defP1} and Lemma \ref{boundGij}.  The case ${\S}=\{2,3\}$ 
was proved in \eqref{733}. 
These examples  show that $\mathcal G^{(i),{\S},{(\T)}}_{k\,l}$ can 
be written as  the finite sum of the terms of the form:
	\be
	\frac{G_o G_o\cdots G_o}{G_dG_d\cdots G_d},
	\ee
	where $G_o$  are off-diagonal elements of some $G^{({\U})}$ and $G_d$ 
 are diagonal elements. Furthermore, in each term, the number of the 
off-diagonal elements in the numerator is strictly greater than  
$s=|\S|$ but less than $4^{s}$. The  number of the diagonal elements 
in the denominator  is also less than $4^{|{\S} |}$.

The Green function 	$G^{(i,{\T})}_{k\,l}$ 
can be viewed as a function from the vector space of matrices.  
This motivates the  following definition. 

\begin{definition}
Denote by $\cX_K$ the space of  $K\times K$ matrices and $\cX=\cup_{K=1}^\infty \cX_K$. Define $\cY$ as the set of functions from 
 $\cX$  to the complex numbers. 
For any ${\S}=\{i_1,i_2,\ldots, i_s\}$ and for any  $i, k , l\notin{ \S}$,
 define the  set of  off-diagonal 
matrix elements considered as  functions of   matrices:  
\be\label{comcondjj'}
\mathcal A^{(i),{\S}}_{k\,l}  \equiv\left\{ f \in \cY:     f(W) = G^{(i\,\U)}_{j j'}(W),    \text{ for some } j\neq j',\,\,\,j,j\,'\in {\S}\cup \{k,l \},\;  {\U}\subset {\S}
\right\}, 
\ee
where $W \in \cX_K$ for some $K$. 
Similarly, we define  the set of diagonal matrix elements:
\be\label{comcondj}
\mathcal B^{(i),{\S}}_{k\,l}  \equiv\left\{f \in \cY:  
 f(W)= G^{(i\,\U)}_{j j}(W): j\in {\S}\cup \{k,l \},\,\,\,\,\,\, {\U}\subset {\S}\right\}.
\ee 
Furthermore  we define $\mathcal C^{(i),{\S}}_{k\,l} $ for all $k,l$ as 
\begin{align}
\label{tempdform}
\mathcal C^{(i),{\S}}_{k\,l} \equiv \Big\{  F\in\cY \;  & \text {is a finite sum of 
functions of the form} \;  \pm\frac{ f_1f_2\cdots f_m}{g_1g_2\cdots g_{m'}}: \nonumber   \\
&   f_\al\in \mathcal A^{(i),{\S}}_{k\,l},    1\leq \al \leq m; \,\,\,g_\beta\in 
\mathcal B^{(i),{\S}}_{k\,l}, \,\,\,
 1\leq \beta \leq m' ; 
\,\,\,s+1\leq m\leq 4^s,\,\,\,0\leq m'\leq 4^s\Big\},
\end{align}
where $s=|\S|$.
 \end{definition}
 
Notice the important condition $m\ge |\S|+1$ in the definition of 
$\mathcal C^{(i),{\S}}_{k\,l}$. Since off-diagonal matrix elements
are typically small, this requirement will guarantee the smallness 
of $\mathcal C^{(i),{\S}}_{k\,l}$ as a certain power of $X$.

With these notations,  the equation  \eqref{ylong} asserts that  for  
$k, l\notin  \{2,3\}$, there is a function
 $F^{(1),\{2,3\}}_{k,l} \in \mathcal C^{(1),\{2,3\}}_{k\,l}$ such that 
\be\label{faketemp820}
\mathcal G^{(1),\{ 2,3\},{\emptyset}}_{k\,l}  =F^{(1),\{2,3\}}_{k,l}.
\ee 
The general case is the following lemma.

\begin{lemma}\label{exiF}
For any ${\S}=\{i_1,i_2,\ldots, i_s\}$ with $s>0$ and  $i, k,l\notin { \S}$, 
there exists a function $F^{(i),{{\S}}}_{k,l}\in \mathcal C^{(i),{\S}}_{k\,l}$ such that 
\be \label{resexiF}
\mathcal G^{(i), {\S},{\emptyset}}_{k\,l} =F^{(i),{\S}}_{k,l}.
\ee
\end{lemma}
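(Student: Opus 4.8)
The plan is to prove the statement by induction on $s = |\S|$, at each step peeling off one index from $\S$ by means of the resolvent expansion identities \eqref{GiiGjii} and \eqref{GijGkij}. The base case $s=1$, say $\S = \{j\}$, follows directly from \eqref{GijGkij} (or \eqref{GiiGjii}): for $k,l\notin\{i,j\}$ we have $G^{(i)}_{k\,l} = G^{(ij)}_{k\,l} + G^{(i)}_{k\,j}G^{(i)}_{j\,l}(G^{(i)}_{j\,j})^{-1}$, and since $\mathcal G^{(i),\{j\},\emptyset}_{k\,l} = G^{(i)}_{k\,l} - G^{(ij)}_{k\,l}$ by \eqref{temp85}, we get exactly the single rational term $G^{(i)}_{k\,j}G^{(i)}_{j\,l}/G^{(i)}_{j\,j}$, which lies in $\mathcal C^{(i),\{j\}}_{k\,l}$ (numerator has $m=2 > 1 = s$ off-diagonal factors, denominator one diagonal factor). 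This is precisely the computation \eqref{defP1} in the case $\S=\{2\}$; the case $\S=\{2,3\}$ worked out in \eqref{ylong}--\eqref{faketemp820} is the $s=2$ instance, which also serves as a template.

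For the inductive step, suppose the claim holds for all index sets of size $< s$, and let $\S = \{i_1,\ldots,i_s\}$. Write $\S' = \S\setminus\{i_s\}$. By \eqref{temp810} and the induction hypothesis applied to $H^{(i_s)}$ with the index set $\S'$, we have $\mathcal G^{(i),\S',\emptyset}_{k\,l}(H^{(i_s)}) = F^{(i),\S'}_{k,l}(H^{(i_s)})$ for some $F^{(i),\S'}_{k,l}\in \mathcal C^{(i),\S'}_{k\,l}$. Now observe that $\mathcal G^{(i),\S,\emptyset}_{k\,l}(H)$ and $\mathcal G^{(i),\S',\emptyset}_{k\,l}(H^{(i_s)})$ differ in a controlled way: expanding the definition \eqref{temp85}, one checks the identity $\mathcal G^{(i),\S,\emptyset}_{k\,l}(H) = \mathcal G^{(i),\S',\emptyset}_{k\,l}(H) - \mathcal G^{(i),\S',\emptyset}_{k\,l}(H^{(i_s)})$ — this is just the statement that summing $\mathcal G^{(i),\S,(\T)}$ over $\T\subset\S$ containing $i_s$ versus not containing $i_s$ reproduces the lower-level quantities. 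Then into $F^{(i),\S'}_{k,l}(H)$ — a sum of rational expressions in the matrix elements $G^{(i\,\U)}_{jj'}(H)$ with $\U\subset\S'$ — I substitute, for every occurrence of $G^{(i\,\U)}_{jj'}(H)$ with $i_s\notin \U$, its expansion via \eqref{GijGkij}: $G^{(i\,\U)}_{jj'}(H) = G^{(i\,\U\, i_s)}_{jj'}(H) + G^{(i\,\U)}_{j\,i_s}G^{(i\,\U)}_{i_s\,j'}(G^{(i\,\U)}_{i_s\,i_s})^{-1}$ for $j\ne j'$, and the analogous \eqref{y15}-type expansion for diagonal factors and for reciprocals of diagonal factors. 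The first summand in each such expansion is independent of row/column $i_s$ (so it reproduces $F^{(i),\S'}_{k,l}(H^{(i_s)})$ when all substitutions are carried through), and the remaining terms, after collecting, give $\mathcal G^{(i),\S,\emptyset}_{k\,l}(H)$ as a new finite sum of rational expressions. It remains to check these are of the form permitted in $\mathcal C^{(i),\S}_{k\,l}$: each substitution replaces one matrix element by a product of two off-diagonal elements over one (or two) diagonal elements, hence strictly increases the off-diagonal count, which guarantees the numerator bound $m \ge s+1$ starting from the inductive bound $m \ge (s-1)+1 = s$; the combinatorial bounds $m, m' \le 4^s$ follow since each of the $\le 4^{s-1}$ factors gets expanded into $\le 4$ terms. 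The new matrix elements carry index sets $\U' \subset \S$ and index pairs in $\S\cup\{k,l\}$, so they lie in $\mathcal A^{(i),\S}_{k\,l}$, $\mathcal B^{(i),\S}_{k\,l}$ as required.

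The main obstacle I anticipate is the bookkeeping in the inductive step: making precise the identity $\mathcal G^{(i),\S,\emptyset}_{k\,l}(H) = \mathcal G^{(i),\S',\emptyset}_{k\,l}(H) - \mathcal G^{(i),\S',\emptyset}_{k\,l}(H^{(i_s)})$ (which is a purely combinatorial manipulation of \eqref{temp85} and the inclusion-exclusion structure already used in Lemma \ref{expandG}), and then verifying that after substituting expansions into \emph{every} factor — including iterated substitutions when an already-expanded off-diagonal element is itself not yet $i_s$-independent — the process terminates and the resulting expression still satisfies the structural constraints defining $\mathcal C^{(i),\S}_{k\,l}$, in particular that no cancellation destroys the $m\ge s+1$ count. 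The explicit case $s=2$ worked out in \eqref{ylong} makes clear that this is a finite, if lengthy, manipulation; the general argument is the natural iteration of it, and I would present it by carefully tracking the "degree" (number of off-diagonal factors minus $s$) as a nonnegative quantity that is preserved or increased under each elementary expansion step.
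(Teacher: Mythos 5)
Your proposal is correct and follows essentially the same route as the paper: induction on $s$, the inclusion--exclusion identity $\mathcal G^{(i),\S,\emptyset}_{k\,l}(H)=\mathcal G^{(i),\S',\emptyset}_{k\,l}(H)-\mathcal G^{(i),\S',\emptyset}_{k\,l}(H^{(i_s)})$ (this is exactly \eqref{932} combined with \eqref{temp810}), and then the expansions \eqref{tempd34}--\eqref{tempd36} applied factor by factor, with the product of the leading terms reproducing $F^{(i),\S'}_{k,l}(H^{(i_s)})$ and the remainder supplying the required $m\ge s+1$ off-diagonal count. Your worry about iterated substitutions is unnecessary, since the error terms need not themselves be $i_s$-independent, only lie in $\mathcal C^{(i),\S}_{k\,l}$; otherwise the argument matches the paper's.
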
 
 
 \bigskip

\noindent  
 {\it Proof of Lemma \ref{exiF}: }
  By symmetry, we only need to prove the cases that 
  $$i=1,\,\,\,\,\, {\S}=\{2,3,\ldots,  s+1\}.$$ 
To prove this case, we argue by induction on $s$.  For $s=1$ or $2$, 
Lemma \ref{exiF} was proved in \eqref{defP1} and \eqref{ylong} (cf. \eqref{faketemp820}). 
 Suppose  that Lemma \ref{exiF} is correct for $s=n-1\ge 1$ and 
$ F^{(1),\{2,\ldots,n\}}_{k,l}\in \mathcal C^{(1),\{2,\ldots,n\}}_{k,l}$ 
is the function satisfying \eqref{resexiF}
 for $i=1$ and ${\S}=\{2,3,\ldots,  n\}$
 
Now let $i=1$, ${\S}=\{2,\ldots,n+1\}$ and $k, l\notin \{1,\ldots,n+1\}$. By the induction assumption,  
\be\label{t1}
  \mathcal G^{(1),\{ {2,\ldots, n}\},{\emptyset}}_{k\,l} = F^{(1),\{2,\ldots,n\}}_{k,l}
\ee
with $F^{(1),\{2,\ldots,n\}}_{k,l}$ a finite sum of elements of  the form
\be\label{tempd55}
\pm\left(\prod_{\al=1}^m G^{(1\,{\U}_\al)}_{j_\al j'_\al}\right)
\left(\prod_{\beta=1}^{m'} G^{(1\,{\U}_{m+\beta})}_{j_{m+\beta} j_{m+\beta}}\right)^{-1},
\ee
where ${\U}_\al\subset {\S}$,   $n\leq m\leq 4^{n-1},\,\,\,0\leq m'\leq 4^{n-1}$    and 
$$
j_\al, j\,'_\al, j_{m+\beta}\in \{2,3,\ldots, n\}\cup\{k\}\cup\{l\}.
$$
By definition of  $\mathcal G^{(1),{ {\S}},({\T})}_{k\,l}$ in \eqref{temp85}, we have 

\be\label{932}
\mathcal G^{(1), \{2,...., n+1\}, \emptyset}_{kl} = \mathcal G^{(1), \{ 2,.. n\}, \emptyset}_{kl}
 - \mathcal G^{(1), \{2, ... ,n, n+1\}, (n+1)}_{kl}
\ee
Combining  \eqref{temp810} with \eqref{t1}, 
we have 
\be\label{t2} 
 \mathcal G^{(1),\{ {2,3\ldots, n, n+1}\},{(n+1)}}_{k\,l}(W)=F^{(1),\{2,\ldots,n\}}_{k,l}(W^{(n+1)}), 
 \ee
where $W^{(n+1)}$ is  the minor of $W$ with the  $(n+1)$-th  row and $(n+1)$-th column removed.

We can remove the dependence on the $(n+1)$-th row by the procedure in \eqref{y16}-\eqref{y15}.  Using 
\eqref{GiiGjii}, \eqref{GijGkij} and the notation:
\be
(1\,\U\,n+1)=\left(\{1, n+1\}\cup\U\right),
\ee
 we   have the expansion 
\be\label{tempd34}
G^{(1\,{\U_\al})}_{j_\al\,j'_\al}=G^{(1\,{\U_\al} n+1)}_{j_\al\,j'_\al}
+\frac{G^{(1\,{\U_\al})}_{j_\al\,n+1}G^{(1\,{\U_\al})}_{n+1\,j'_\al}}{G^{(1\,{\U_\al})}_{n+1\,n+1}},
\,\,\,\,\,\,\,\,\,\,\,\,\,1\leq \al\leq m
\ee
and 
\be\label{tempd36}
\frac1{G^{(1\,{\U_\beta})}_{j_\beta\,j_\beta}}=\frac1{G^{(1\,{\U_\beta}\, n+1)}_{j_\beta\,j_\beta}}
-\frac{G^{(i\,{\U_\beta})}_{j_\beta\,n+1}G^{(i\,{\U_\beta})}_{n+1\,j_\beta}}
{G^{(i\,{\U_\beta})}_{j_\beta\,j_\beta}G^{(i\,{\U_\beta}\, n+1)}_{j_\beta\,j_\beta}
G^{(i\,{\U_\beta})}_{n+1\,n+1}},\,\,\,\,\,m+1\leq \beta\leq m+k'.
\ee
We note that the first term on the r.h.s of \eqref{tempd34} is exactly the Green function 
on the  l.h.s of \eqref{tempd34} except that there is an additional superscript $n+1$; 
the  similar comment applies to \eqref{tempd36}.

Inserting \eqref{tempd34} and \eqref{tempd36} into \eqref{tempd55} and expanding it, we
 obtain that \eqref{tempd55} is equal to 
\be\label{temp824}
\pm\left(\prod_{\al=1}^m G^{(1\,{\U}_\al\, n+1)}_{j_\al j'_\al}\right)\left(\prod_{\beta=1}^{m'}
 G^{(1\,{\U}_{m+\beta}\,n+1)}_{j_{m+\beta} \,  j_{m+\beta}}\right)^{-1}+ \,\,\\{   \rm other \; terms} . 
\ee
Here the first term in \eqref{temp824} is the product of the first terms on the right  side of 
 \eqref{tempd34} and \eqref{tempd36} and it is the same as \eqref{tempd55} except that there is
 an additional superscript $n+1$.   One can see that  the other terms in \eqref{temp824} are
 elements in $\mathcal C^{(1),\{2,\ldots,n+1\}}_{k\,l}$, i.e., the number of the off diagonal 
terms in the numerator 
is now at least  $n+1$.  Since this procedure can be applied to each term in 
$ F^{(1),\{2,\ldots,n\}}_{k,l}$, 
we have proved  that there exists an $F\in  \mathcal C^{(1),\{2,\ldots,n+1\}}_{k\,l}$ such that 
\begin{align}\label{temp826}
 \mathcal G^{(1),\{ {2,\ldots, n}\},{\emptyset}}_{k\,l} (W)  =  F^{(1),\{2,\ldots,n\}}_{k,l}
\left(W \right)    & =F^{(1),\{2,\ldots,n\}}_{k,l} \left(W^{(n+1)}\right)+F(W)
\nonumber \\  & = \mathcal G^{(1),\{ {2,\ldots,  n+1}\},{(n+1)}}_{k\,l}(W)+F(W).
 \end{align}
By    \eqref{932} and \eqref{t2}, we can set  $
F^{(i),\{{2,3\ldots, n, n+1}\}}_{k,l} (W)= F(W)$
which is in $\mathcal C^{(1),\{2,3,\ldots,n,n+1\}}_{k\,l}$ and we have thus proved Lemma 
\ref{exiF} by induction. 
\qed

\bigskip 

\bigskip 
\noindent 
Now we start proving the estimates in Lemma \ref{decomG}.
Using \eqref{propmG2}  and \eqref{propmG1}, we only have to prove \eqref{Pmathgst} for 
the case $k,l\notin {\S}\cup\{i\}$.

\medskip 
\noindent 
{\it Case 1, ${\T}={\S}$:}  By  definition, 
\be
\mathcal G^{(i),\,{\S},\,({\T})}_{k\,l}=G^{(i\, {\S})}_{k\,l}.
\ee
Then \eqref{Pmathgst} in this special case follows from  Lemma \ref{boundGij}.

\medskip 
\noindent 
{\it Case 2, ${\T}=\emptyset,\,\,\,\,\,\, k,l\notin{\S}\,\,\,{\rm and}\,\,\,{\S}\neq \emptyset$:}
By Lemma \ref{exiF} and \ref{boundGij}, for any ${\S}
\neq \emptyset$ such that  $i, k,l\notin { \S}$, we have  
\be\label{temp834}
|\mathcal G^{(i),{\S},{\emptyset}}_{k\,l}|\leq C 
\frac{\left(\max_{{\U}\subset {\S}, j\neq j'}|G^{(i\,{\U})}_{jj'}|\right)^{s+1}}
{\left(\min_{{\U}\subset {\S}, j}|G^{(i\,{\U})}_{jj}|\right)^{4^{s}}}
\leq C X^{s+1},
\ee
where $C$ depends on $s=|\S|$.

\medskip 
\noindent 
{\it Case 3, 
${\T}\neq \emptyset, \,\,\,\,\,\,{\T}\subset {\S}, \,\,\,\,\,\,{\T}\neq {\S},\,\,\,\, k,l\notin{\S}\,\,\,{\rm and}\,\,\,{\S}\neq \emptyset$:}  
By  \eqref{temp810} and Lemma \ref{exiF},  there exists a function $F^{(i),{{\S\backslash\T}}}_{k,l}\in \mathcal C^{(i),{{\S\backslash\T}}}_{k\,l}$ (see  \eqref{tempdform}) such that
\be
\mathcal G^{(i),{\S},({\T})}_{k\,l}(H) =\mathcal G^{(i),{{\S\backslash\T}}}_{k\,l}(H^{(\T)})=F^{(i),{{\S\backslash\T}}}_{k,l}(H^{(\T)}),
\ee 
where $H^{({\T})}$ is the $N-|\T|$ by $N-|\T|$ minor of $H$ after removing the
rows and columns in $\T$.  Thus $\mathcal G^{(i),{\S},({\T})}_{k\,l}$ is given by the function
$F^{(i),{{\S\backslash\T}}}_{k,l}$ with all Green functions  $G^{({\U})}_{jj'}$ in the definition of $F^{(i),{{\S\backslash\T}}}_{k,l}$ 
replaced by   $G^{({\U}\cup {\T})}_{jj'}$. From \eqref{temp834} we have 
\be\label{temp8332}
|\mathcal G^{(i),{ {\S}},({\T})}_{k\,l}|\leq 
C \frac{\left(\max_{{\U}\subset {\S\backslash\T}, j\neq j'}|G^{(i\,{\U}\cup {\T})}_{jj'}|\right)^{|{\S\backslash\T}|+1}}{\left(\min_{{\U}\subset {\S\backslash\T}, j}|G^{(i\,{\U}\cup {\T})}_{jj}|\right)^{(4^{|{\S\backslash\T}|})}}.
\ee
Using  Lemma \ref{boundGij}, we have that 
\be\label{temp8342}
|\mathcal G^{(i),{ {\S}},({\T})}_{k\,l}|\leq C X^{|{\S\backslash\T}|+1},\,\,\,\,{\rm in}\,\,\,\Omega^c   
\ee
where $C$ depends on $s$. We have thus proved  \eqref{Pmathgst} for the Case 3 and this  completes the proof of Lemma \ref{decomG}.
\qed

\bigskip

\noindent 
{\it Proof of Lemma \ref{decomK}. } The decomposition \eqref{tempd39} follows from  \eqref{resexpandG} and 
 \eqref{propmG33} is a direct consequence of  \eqref{propmG3}. The estimate \eqref{temp8.3} 
can be proved in the same way 
 as in the proof of  Lemma \ref{y4} using the following three ingredients: 
 (1) The bounds on $\left|\mathcal G^{(i),{ {\S}},({\T})}_{k\,l}\right|$ in \eqref{Pmathgst}.
 (2)  The  large deviation estimate  in Lemma \ref{generalHWT}. 
(3)   The  trivial bound  $|\mathcal G^{(i),{ {\S}},({\T})}_{k\,l}|\leq C/\eta\leq CN$
 where $C$ depends on $|\S|$.
This  concludes  the proof of Lemma \ref{decomK}.    \qed

\bigskip

\noindent 
{\it Proof of Lemma \ref{motN}.}   
We first  introduce the following notations which will be useful  for the expansion of  the $p$-th moment of  $\left|\sum_{i=1}^NZ_i\right|$ in \eqref{52}.

\begin{definition}\label{CondSSV}
\begin{enumerate}
	\item Let ${\bf V}=\langle v_1,v_2,\ldots,v_{p}\rangle $ be a $p$ dimensional vector such that $v_i=0$ or $1$ for $1\leq i\leq {p}$. 
		\item Let ${\bS} =\langle \al_1,\al_2,\ldots,\al_{p}\rangle $ be a  $p$ dimensional vector such that 
		$1\leq \al_i\leq N$ for $1\leq i\leq p$.
		\item Denote by $\S $ the set consisting of elements $\alpha_j$ which is a component of $\bS$.  
\end{enumerate}
We define 
\be\label{tempd49}
A({\bS} , {\bf V})=\E\prod_{j=1}^{p}  \left( {\bf B}^{v_j} Z_{\al_j}\right), \quad   {\bf B}^1 (a+ i b) = a- i b, \quad  {\bf B}^0 (a+ i b) = a+ i b,
\ee
where $\bf B$ is the complex conjugate operator.
\end{definition}
Through the rest of this section,  $\S$ is always the set generated by $\bS$. 
Notice that $|\S| = s \le p$ where $p$ is the number of components in $\bS$. With these notations,  we can estimate   $\E\left|\sum_{i}Z_i\right|^{p}$ by 
\be\label{tempd50}
\E\left|\sum_{i=1}^NZ_i\right|^{p}\leq
\sum_{\bf {\bS}   } \sum_{\bf V}|A({\bS} , {\bf V})|
\leq C_p \sum_{ s\le p } N^{s}\max_{{\bS} , {\bf V}: |{\S } |=s}\left|A({\bf  {\bS} },{\bf V})\right|,
\ee
where   we sum up $ {\bS} $ and $\bf V$ under the conditions in Definition \ref{CondSSV}.
Lemma \ref{motN} is now a simple consequence of the following estimate on  $|A({\bS} , {\bf V})|$.

 \begin{lemma}\label{boundATS} 
Let  $\S$, ${\bS} $ and $\bf V$  satisfy the conditions in Definition \ref{CondSSV}.
With  $A({\bS} ,{\bf V})$ defined in \eqref{tempd49},   there exists a constant $C>0$ depending on $p$ such that 
\be\label{resboundATS}
\left|A({ {\bS} },{\bf V})\right|\leq C\left((\log N)^{3+2\al}\right)^{p} N^{ p-s} X^{2p},
\ee
for sufficiently large $N$ depending only on   $p$.
\end{lemma}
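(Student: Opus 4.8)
The plan is to estimate $|A(\bS,\bf V)|$ by expanding each factor $Z_{\al_j}$ via the decomposition of Lemma \ref{decomK}, applied with the same ambient set $\S$ generated by $\bS$. Writing $Z_{\al_j} = \mathbb{IE}_{\al_j} Z^{(\al_j)}_{\al_j\al_j} = \sum_{\T_j \subset \S\setminus\{\al_j\}} \mathbb{IE}_{\al_j}\mathcal Z^{(\al_j),\S\setminus\{\al_j\},(\T_j)}$, I would expand the product over $j=1,\ldots,p$ into a sum of $O_p(1)$-many terms, each of which is an expectation of a product of factors of the form $\mathbb{IE}_{\al_j}\mathcal Z^{(\al_j),\S\setminus\{\al_j\},(\T_j)}$. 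The key structural point is that $\mathcal Z^{(\al_j),\S\setminus\{\al_j\},(\T_j)}$ is independent of rows/columns in $\{\al_j\}\cup\T_j$ by \eqref{propmG33}. The independence bookkeeping argument already carried out in the $p=4$ case (Section \ref{sec:4}) generalizes verbatim: a term survives the expectation only if for every index $m\in\S$ there is at least one factor $j$ with $m\notin \{\al_j\}\cup\T_j$, since otherwise the $\mathbb{IE}_{\al_m}$ (or more precisely $\mathbb{IE}$ with respect to the row $\ba^m$) applied to that common factor forces the expectation to vanish, using that every surviving factor is independent of $\ba^m$ except the one being averaged.

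Next I would run the power-counting estimate. Each index $m\in\S$ that is ``covered'' by a factor $j$ (meaning $m\notin\{\al_j\}\cup\T_j$) costs that factor one unit: from Lemma \ref{decomK}, $\E|\mathcal Z^{(\al_j),\S\setminus\{\al_j\},(\T_j)}|^k \le C_{k}\big((\log N)^{3+2\al}\big)^k (X^{|\S|-|\T_j|})^{k}$ — wait, more precisely the exponent is $|\S\setminus\{\al_j\}| - |\T_j| + 1$. Summing the exponents over $j$ and using that each of the $s=|\S|$ elements must be missed by at least one factor, I get that the total power of $X$ accumulated is at least $s$ beyond what one gets from the trivial count; combined with the ``$+1$'' per factor, the total exponent of $X$ is at least $p+s$... but the target is $X^{2p}$. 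Here I would use the crucial estimate $X^2 \ge C(\log N)^2/M$ from \eqref{XX1N} (valid since $\eta\le 10$) to trade the $N^{p-s}$ prefactor in \eqref{resboundATS}: since we need the bound $N^{p-s}X^{2p}$, and the expansion naturally produces at most $N^{p-s}$ from the index sum (this is already built into \eqref{tempd50}), I need each surviving product of $\mathcal Z$'s to be bounded by $C((\log N)^{3+2\al})^p X^{2p}$. Following the $p=4$ template, the exponent $\sum_j (|\S\setminus\{\al_j\}|-|\T_j|+1)$ is at least... each index appears as $\al_j$ for at least one $j$ (giving $\le p$ but really the multiset $\{\al_1,\ldots,\al_p\}$ has underlying set $\S$ of size $s$), and the covering condition gives $\sum_j(s-1-|\T_j|) \ge$ something like $s$; careful counting as in Section \ref{sec:4} yields $\sum_j (s-|\T_j|) \ge 2p$ after accounting for both the covering requirement and the automatic $\al_j\notin\T_j$. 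I would then apply the generalized Hölder/Schwarz inequality to split the expectation of the product into the product of $2p$-th moments (or $p$-th moments with a factor-by-factor bound), invoke \eqref{temp8.3}, and collect the logarithmic factors.

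After establishing Lemma \ref{boundATS}, Lemma \ref{motN} follows immediately: inserting \eqref{resboundATS} into \eqref{tempd50} gives $\E|\sum_i Z_i|^p \le C_p \sum_{s\le p} N^s \cdot ((\log N)^{3+2\al})^p N^{p-s} X^{2p} = C_p ((\log N)^{3+2\al})^p N^p X^{2p}$, and dividing by $N^p$ yields exactly \eqref{52}. I would also need to handle the exceptional set $\Omega$: since all the pointwise bounds on $\mathcal G^{(i),\S,(\T)}_{k\,l}$ in Lemma \ref{decomG} hold only in $\Omega^c$, while on $\Omega$ one has only the trivial bound $|\mathcal G|\le C N$ and $\P(\Omega)\le CN^{-c\log\log N}$; because $X^{-1}$ is at most polynomial in $N$ and $p$ is fixed, the contribution of $\Omega$ to the $p$-th moment is negligible — this is the ``simple argument which we repeated many times'' and I would dispatch it with one line.

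The main obstacle is the combinatorial power-counting: carefully verifying that the covering/independence constraint $\big(\forall m\in\S\ \exists j:\ m\notin\{\al_j\}\cup\T_j\big)$ forces $\sum_{j=1}^p\big(|\S\setminus\{\al_j\}|-|\T_j|+1\big)\ge 2p$, uniformly over all admissible $\bS$ (in particular when $s<p$, so that several $\al_j$ coincide and the multiset structure matters). The $p=2$ and $p=4$ cases in Sections \ref{mo-est}--\ref{sec:4} are the prototype, but the bookkeeping for general $p$ with repeated indices — deciding how much each repeated index ``saves'' versus each distinct index — is where the argument must be made airtight; once that inequality is in hand, the analytic estimates are routine applications of Lemma \ref{generalHWT}, Lemma \ref{decomG}, \eqref{XX1N}, and Hölder.
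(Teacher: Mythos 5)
Your overall strategy---expanding each $Z_{\al_j}$ via Lemma \ref{decomK} with ambient set $\S\setminus\{\al_j\}$, killing terms by independence, and power-counting the survivors---is the paper's strategy, but the step you yourself flag as needing to be ``made airtight'' is exactly where the argument is missing, and your proposed vanishing criterion is incorrect when indices repeat. You assert that a term survives only if every $m\in\S$ is missed by some factor, ``using that every surviving factor is independent of $\ba^m$ except the one being averaged.'' When $m=\al_j$ for two or more values of $j$, several factors carry $\mathbb{IE}_{m}$, and $\E_{\ba^m}\big[\mathbb{IE}_{m}(A)\,\mathbb{IE}_{m}(B)\big]$ does not vanish in general; hence a term in which some $\gamma$ lies in every $\{\al_j\}\cup\T_j$ can be nonzero. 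Your covering condition therefore cannot be assumed for the surviving terms, and the inequality $\sum_j(s-|\T_j|)\ge p+s$ it would yield (which, as you note, already falls short of the target $2p$ when $s<p$) is not available.

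The paper closes this gap with a dichotomy organized by the size of $\sum_j|\T_j|$ rather than by a covering condition. If $\sum_j|\T_j|\le sp-2s$, the Schwarz inequality together with \eqref{temp8.3} gives the bound $C\left((\log N)^{3+2\al}\right)^pX^{2s}$, and $X^{-2}\le N$ from \eqref{XX1N} converts $X^{2s}$ into $N^{p-s}X^{2p}$; no vanishing is invoked at all here. If instead $\sum_j|\T_j|\ge sp-2s+1$, one counts, for each $\gamma\in\S$, the number $n_\gamma$ of indices $j$ with $\gamma\in\{\al_j\}\cup\T_j$ and the number $m_\gamma$ of indices $j$ with $\gamma=\al_j$; the constraints $\al_j\notin\T_j$ and $m_\gamma\ge1$ force $|\{\gamma:n_\gamma=p\}|+|\{\gamma:m_\gamma=1\}|\ge s+1$, so some $\gamma$ satisfies both $n_\gamma=p$ and $m_\gamma=1$ simultaneously. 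For that $\gamma$ exactly one factor carries $\mathbb{IE}_{\ba^\gamma}$ while all the others are independent of $\ba^\gamma$, and the term vanishes. It is this joint requirement $n_\gamma=p$ \emph{and} $m_\gamma=1$---absent from your proposal---that handles the repeated-index case. The remaining ingredients of your write-up (the treatment of the exceptional set $\Omega$, the reduction of Lemma \ref{motN} to \eqref{resboundATS} via \eqref{tempd50}, and the use of \eqref{XX1N}) are fine.
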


\begin{proof}    Let $ \S_i, 1\leq i\leq p$,  denote the  set $
 \S_i=  {\S }\backslash\{\al_i\}$. 
Using \eqref{tempd39}, we expand $A({\bf  {\bS} },{\bf V})$ as 
\begin{eqnarray}
A({  {\bS} }, {\bf V})&=&
\E\sum_{{\T}_1 \subset {\S}_1 } \ldots \sum_{{\T}_{p} \subset \S_p } A({\T}_1,{\T}_2,\ldots {\T}_{p}, {\bf V} ),\\\nonumber
A({\T}_1,{\T}_2,\ldots {\T}_{p}, {\bf V} )&
\equiv&
\left( {\bf B}^{v_1}\mathbb{IE}_{\al_1}\mathcal Z^{(\al_1),{\S_1}, (\T_1)}\right)
\left( {\bf B}^{v_2}\mathbb{IE}_{\al_2}\mathcal Z^{(\al_2),{\S_2}, (\T_2)}\right)
\cdots
\end{eqnarray}
From the   Schwarz inequality,  \eqref{temp8.3} and $|{\S_i}|=s-1$, we obtain that 
\be\label{tempd58}
\left|\E A({\T}_1,{\T}_2,\ldots {\T}_{p} , {\bf V})\right|\leq C\left((\log N)^{3+2\al}\right)^{p}  X^{\left(p s-\sum_{i=1}^{p} |{\T}_i|\right)},
\ee
where $C$ depends on $p$. 
Suppose that  
\be\label{stisss}
\sum_{i=1}^{p} |{\T}_i|\leq s p -2s.
\ee
Using \eqref{XX1N}, i.e., $ X^2\geq (\log N)^2/M\geq 1/N$, we have 
\be\label{typATTT}
|\E A({\T}_1,{\T}_2,\ldots {\T}_{p}, {\bf V} )|\le  C\left((\log N)^{3+2\al}\right)^{p}
  X^{2s}  \leq C\left((\log N)^{3+2\al}\right)^{p} N^{ p-s} X^{2p}.
\ee 

It remains to  estimate $\E A({\T}_1,{\T}_2,\ldots {\T}_{p} , {\bf V})$ for the cases that 
\be\label{istiss}
\sum_{i=1}^{p} |{\T}_i|\geq  s \, p -2s +1.
\ee
For $\gamma \in \S$, denote $n_\gamma$  to be the number of times that  
$\gamma$ appears in $\{ \al_1\} \cup {\T}_1$, $\{\al_2\} \cup{\T}_2,
\ldots$ and $\{\al_{p}\} 
\cup {\T}_{p}$, i.e., 
\[
n_\gamma = \sum_{k=1}^p {\bf 1}( \gamma \in \{ \al_k\} \cup {\T}_k) .
\]
By definition,  $n_\gamma \ge 1$.  Similarly, we define $m_\gamma$  to be the number of times
 that  $\gamma$ appears in $\langle \al_1, \al_2,
\ldots \al_{p} \rangle $, i.e., 
\[
m_\gamma = \sum_{k=1}^p {\bf 1}( \gamma =  \al_k) .
\]
Let $x= | \{\gamma \in \S: n_\gamma = p \}|$ and  $y= | \{\gamma \in \S:  m_\gamma = 1\}|$. 
Since for each fixed $i$, $\al_i\notin {\T}_i$, then with \eqref{istiss}
 and the definition of $n_\gamma$, 
\be
(p-1) (s-x)+ xp \ge \sum_{\gamma \in {\S }} n_\gamma=\sum_{i=1}^{p}|\{ \al_i\}\cup {\T}_i|
=p+\sum_{i=1}^{p} |{\T}_i|\geq s p -2s+ p+1.
\ee
By definition of $m_\gamma$, we have 
\be
y+ 2(s-y) \le \sum_{\gamma\in {\S }} m_\gamma=p.
\ee
From the last two inequalities, we have $x+ y \ge s+1$ and thus there exists a  $\gamma\in \S $ such that 
 \be\label{exij}
 n_\gamma=p\,\,\,{\rm and}\,\,\,m_\gamma=1.
 \ee
Without loss of generality, we assume that $\gamma=\al_1$. Then using  \eqref{exij}, we know 
\be\label{tempd63}
\gamma\neq \al_k,\,\,\,\gamma\in {{\T}_k},\,\,\,\,\,\,\,{\rm if\,\,\, } k\neq 1 .
\ee
Then with \eqref{tempd63}, the decomposition $ \mathcal Z^{(i),{\S},(\T)}\equiv \sum_{k,l}\overline \ba^{i}_k 
\mathcal G^{(i),{ {\S}},({\T})}_{k\,l}\ba^{i}_l$ \eqref{tempd39} and the property that $\mathcal G^{(i),{\S},({\T})}_{k\,l}$ is independent of the row or columns of $H$ in $\{i\}\cup {\T}$ \eqref{propmG33}, we have that  for $k\neq 1$,  the  $ \mathcal Z^{\al_k,{\S_k},(\T_k)}$  is independent of $\ba^\gamma$.  By the definition of $\mathbb{IE}$, for $k=1$, we also have 
\be
\E_{\ba^\gamma}\mathbb{IE}_{\ba^{\al_1}}\mathcal Z^{(\al_1),{{\S}_{1}},(\T_1)}=\E_{\ba^\gamma}\mathbb{IE}_{\ba^{\gamma}}\mathcal Z^{(\gamma),{{\S}_{1}},({\T}_{1})}=0. \,\,\,\,\,\,\,\,\,\,\,\,\,\,\,\,\,\,\,\,\,\,\,\,\,\,
\ee
Therefore, under the assumption \eqref{istiss} we have 
\begin{eqnarray*}
\E A({\T}_1,{\T}_2,\ldots {\T}_s , {\bf V})
&=&\E  \left({\bf B}^{v_1}\mathbb{IE}_{\ba^{\al_1}} \mathcal Z^{(\al_1),\S _1, (\T_1)}\right)
\left({\bf B}^{v_2}\mathbb{IE}_{\ba^{\al_2}}\mathcal Z^{(\al_2),\S_2, (\T_2)}\right)
\cdots \\
&=&\E\,\, \left(   \E_{\ba^\gamma}{\bf B}^{v_1}\mathbb{IE}_{\ba^{\al_1}}\mathcal Z^{(\al_1),{\S_1},(\T_1)}\right)
\left({\bf B}^{v_2}\mathbb{IE}_{\ba^{\al_2}}\mathcal Z^{(\al_2),{\S_2},(\T_2)}\right)
\cdots
=0.
\end{eqnarray*}
Combining this identity with \eqref{typATTT}, we obtain \eqref{resboundATS} and thus conclude Lemma \ref{boundATS}.  
\end{proof} 

\medskip

{\it Acknowledgement.} The authors thank Terry Tao for his helpful
comments on an earlier version of this manuscript.

\thebibliography{hhhhh}

\bibitem{AGZ}  Anderson, G., Guionnet, A., Zeitouni, O.:  {\it An Introduction
to Random Matrices.} Studies in Advanced Mathematics, {\bf 118}, Cambridge
University Press, 2009.

\bibitem{AZ} Anderson, G.; Zeitouni, O. : 
 A CLT for a band matrix model. {\it Probab. Theory Related Fields}
 {\bf 134} (2006), no. 2, 283--338.

\bibitem{BP} Ben Arous, G., P\'ech\'e, S.: Universality of local
eigenvalue statistics for some sample covariance matrices.
{\it Comm. Pure Appl. Math.} {\bf LVIII.} (2005), 1--42.

\bibitem{BI} Bleher, P.,  Its, A.: Semiclassical asymptotics of 
orthogonal polynomials, Riemann-Hilbert problem, and universality
 in the matrix model. {\it Ann. of Math.} {\bf 150} (1999): 185--266.

\bibitem{De1} Deift, P.: Orthogonal polynomials and
random matrices: a Riemann-Hilbert approach.
{\it Courant Lecture Notes in Mathematics} {\bf 3},
American Mathematical Society, Providence, RI, 1999.

\bibitem{De2} Deift, P., Gioev, D.: Random Matrix Theory: Invariant
Ensembles and Universality. {\it Courant Lecture Notes in Mathematics} {\bf 18},
American Mathematical Society, Providence, RI, 2009.

\bibitem{DKMVZ1} Deift, P., Kriecherbauer, T., McLaughlin, K.T-R,
 Venakides, S., Zhou, X.: Uniform asymptotics for polynomials 
orthogonal with respect to varying exponential weights and applications
 to universality questions in random matrix theory. 
{\it  Comm. Pure Appl. Math.} {\bf 52} (1999):1335--1425.

\bibitem{DKMVZ2} Deift, P., Kriecherbauer, T., McLaughlin, K.T-R,
 Venakides, S., Zhou, X.: Strong asymptotics of orthogonal polynomials 
with respect to exponential weights. 
{\it  Comm. Pure Appl. Math.} {\bf 52} (1999): 1491--1552.

\bibitem{DPS} Disertori, M., Pinson, H., Spencer, T.: Density of
states for random band matrices. {\it Commun. Math. Phys.} {\bf 232},
83--124 (2002)

\bibitem{Dy} Dyson, F.J.: A Brownian-motion model for the eigenvalues
of a random matrix. {\it J. Math. Phys.} {\bf 3}, 1191--1198 (1962).

\bibitem{ESY1} Erd{\H o}s, L., Schlein, B., Yau, H.-T.:
Semicircle law on short scales and delocalization
of eigenvectors for Wigner random matrices.
{\it Ann. Probab.} {\bf 37}, No. 3, 815--852 (2008).

\bibitem{ESY2} Erd{\H o}s, L., Schlein, B., Yau, H.-T.:
Local semicircle law  and complete delocalization
for Wigner random matrices. {\it Commun.
Math. Phys.} {\bf 287}, 641--655 (2009).

\bibitem{ESY3} Erd{\H o}s, L., Schlein, B., Yau, H.-T.:
Wegner estimate and level repulsion for Wigner random matrices.
{\it Int. Math. Res. Notices.} {\bf 2010}, No. 3, 436-479 (2010).

\bibitem{ESY4} Erd{\H o}s, L., Schlein, B., Yau, H.-T.: Universality
of random matrices and local relaxation flow. To appear in {\it Inv. Math.}
Preprint arXiv:0907.5605 

\bibitem{ERSY}  Erd{\H o}s, L., Ramirez, J., Schlein, B., Yau, H.-T.:
{\it Universality of sine-kernel for Wigner matrices with a small Gaussian
 perturbation.}  Electr. J. Prob. {\bf 15},  Paper 18, 526--604 (2010).

\bibitem{EPRSY}
Erd\H{o}s, L.,  P\'ech\'e, G.,  Ram\'irez, J.,  Schlein,  B.,
and Yau, H.-T., Bulk universality 
for Wigner matrices. 
 {\it Comm. Pure Appl. Math.} {\bf 63}, No. 7, 895-925 (2010).

\bibitem{ERSTVY}
Erd\H{o}s, L.,  Ram\'irez, J.,  Schlein,  B., Tao, T., Vu, V. and Yau, H.-T.,
Bulk universality for Wigner hermitian matrices with subexponential decay.
{\it Math. Res. Lett.} {\bf 17} (2010), no. 4, 667--674.

\bibitem{ESYY} Erd{\H o}s, L., Schlein, B., Yau, H.-T., Yin, J.:
The local relaxation flow approach to universality of the local
statistics for random matrices.  To appear in {\it Annales Inst. H. Poincar\'e (B), 
 Probability and Statistics.}
Preprint arXiv:0911.3687 

\bibitem{EYY} Erd{\H o}s, L.,  Yau, H.-T., Yin, J.: 
Bulk universality for generalized Wigner matrices. 
Preprint arXiv:1001.3453

\bibitem{For} Forrester, P. J.:  Log-gases and random matrices. 
London Mathematical Society Monographs, 2010.

\bibitem{gui}  Guionnet, A.:
Large deviation upper bounds
and central limit theorems for band matrices,
{\it Ann. Inst. H. Poincar\'e Probab. Statist }
{\bf 38 }, (2002), pp.  341-384.

\bibitem{HW} Hanson, D.L., Wright, F.T.: A bound on
tail probabilities for quadratic forms in independent random
variables. {\it The Annals of Math. Stat.} {\bf 42} (1971), no.3,
1079-1083.

\bibitem{J} Johansson, K.: Universality of the local spacing
distribution in certain ensembles of Hermitian Wigner matrices.
{\it Comm. Math. Phys.} {\bf 215} (2001), no.3. 683--705.

\bibitem{M} Mehta, M.L.: Random Matrices. Academic Press, New York, 1991.

\bibitem{PS} Pastur, L., Shcherbina M.:
Bulk universality and related properties of Hermitian matrix models.
{\it J. Stat. Phys.} {\bf 130} (2008), no.2., 205-250.

\bibitem{P} P\'ech\'e, S: Universality in the bulk of the
 spectrum for complex sample covariance matrices, Preprint,
arXiv:0912.2493.

\bibitem{Spe} Spencer, T.: Review article on random band matrices. Draft in
preparation.

\bibitem{TV} Tao, T. and Vu, V.: Random matrices: Universality of the 
local eigenvalue statistics,   {\it Acta Math.}, {\bf 206}  (2011), Number 1, 127-204
 Preprint arXiv:0906.0510.

\bibitem{TV3} Tao, T. and Vu, V.: Random covariance matrices:
 Universality of local statistics of eigenvalues. Preprint. arXiv:0912.0966

\bibitem{TV4} Tao, T. and Vu, V.: Random matrices: Universality 
of local eigenvalue statistics up to the edge. 
{\it  Comm. Math. Phys.}  {\bf 298}  (2010),  no. 2, 549???572. 

\bibitem{TV5} Tao, T. and Vu, V.: Random matrices: Localization of the
 eigenvalues and the necessity 
of four moments. Preprint.   	arXiv:1005.2901

\bibitem{W} Wigner, E.: Characteristic vectors of bordered matrices 
with infinite dimensions. {\it Ann. of Math.} {\bf 62} (1955), 548-564.

\end{document}